\newif\ifdraftpaper
	\definecolor{GreenYellow}   {cmyk}{0.15,0,0.69,0}
\definecolor{Yellow}        {cmyk}{0,0,1,0}
\definecolor{myYellow}      {cmyk}{0,0,0.15,0}
\definecolor{Goldenrod}     {cmyk}{0,0.10,0.84,0}
\definecolor{Dandelion}     {cmyk}{0,0.29,0.84,0}
\definecolor{Apricot}       {cmyk}{0,0.32,0.52,0}
\definecolor{Peach}         {cmyk}{0,0.50,0.70,0}
\definecolor{Melon}         {cmyk}{0,0.46,0.50,0}
\definecolor{myMelon}       {cmyk}{0,0.1,0.1,0}
\definecolor{YellowOrange}  {cmyk}{0,0.42,1,0}
\definecolor{Orange}        {cmyk}{0,0.61,0.87,0}
\definecolor{BurntOrange}   {cmyk}{0,0.51,1,0}
\definecolor{Bittersweet}   {cmyk}{0,0.75,1,0.24}
\definecolor{RedOrange}     {cmyk}{0,0.77,0.87,0}
\definecolor{Mahogany}      {cmyk}{0,0.85,0.87,0.35}
\definecolor{Maroon}        {cmyk}{0,0.87,0.68,0.32}
\definecolor{BrickRed}      {cmyk}{0,0.89,0.94,0.28}
\definecolor{Red}           {cmyk}{0,1,1,0}
\definecolor{OrangeRed}     {cmyk}{0,1,0.50,0}
\definecolor{RubineRed}     {cmyk}{0,1,0.13,0}
\definecolor{WildStrawberry}{cmyk}{0,0.96,0.39,0}
\definecolor{Salmon}        {cmyk}{0,0.53,0.38,0}
\definecolor{CarnationPink} {cmyk}{0,0.63,0,0}
\definecolor{Magenta}       {cmyk}{0,1,0,0}
\definecolor{VioletRed}     {cmyk}{0,0.81,0,0}
\definecolor{Rhodamine}     {cmyk}{0,0.82,0,0}
\definecolor{Mulberry}      {cmyk}{0.34,0.90,0,0.02}
\definecolor{RedViolet}     {cmyk}{0.07,0.90,0,0.34}
\definecolor{Fuchsia}       {cmyk}{0.47,0.91,0,0.08}
\definecolor{myFuchsia}     {cmyk}{0.17,0.18,0,0.04}
\definecolor{Lavender}      {cmyk}{0,0.48,0,0}
\definecolor{myLavender}      {cmyk}{0,0.1,0,0}
\definecolor{Thistle}       {cmyk}{0.12,0.59,0,0}
\definecolor{Orchid}        {cmyk}{0.32,0.64,0,0}
\definecolor{DarkOrchid}    {cmyk}{0.40,0.80,0.20,0}
\definecolor{Purple}        {cmyk}{0.45,0.86,0,0}
\definecolor{Plum}          {cmyk}{0.50,1,0,0}
\definecolor{Violet}        {cmyk}{0.79,0.88,0,0}
\definecolor{RoyalPurple}   {cmyk}{0.75,0.90,0,0}
\definecolor{BlueViolet}    {cmyk}{0.86,0.91,0,0.04}
\definecolor{Periwinkle}    {cmyk}{0.57,0.55,0,0}
\definecolor{CadetBlue}     {cmyk}{0.62,0.57,0.23,0}
\definecolor{CornflowerBlue}{cmyk}{0.65,0.13,0,0}
\definecolor{MidnightBlue}  {cmyk}{0.98,0.13,0,0.43}
\definecolor{NavyBlue}      {cmyk}{0.94,0.54,0,0}
\definecolor{RoyalBlue}     {cmyk}{1,0.50,0,0}
\definecolor{Blue}          {cmyk}{1,1,0,0}
\definecolor{Cerulean}      {cmyk}{0.94,0.11,0,0}
\definecolor{Cyan}          {cmyk}{1,0,0,0}
\definecolor{ProcessBlue}   {cmyk}{0.96,0,0,0}
\definecolor{SkyBlue}       {cmyk}{0.62,0,0.12,0}
\definecolor{Turquoise}     {cmyk}{0.85,0,0.20,0}
\definecolor{TealBlue}      {cmyk}{0.86,0,0.34,0.02}
\definecolor{Aquamarine}    {cmyk}{0.82,0,0.30,0}
\definecolor{BlueGreen}     {cmyk}{0.85,0,0.33,0}
\definecolor{Emerald}       {cmyk}{1,0,0.50,0}
\definecolor{JungleGreen}   {cmyk}{0.99,0,0.52,0}
\definecolor{SeaGreen}      {cmyk}{0.69,0,0.50,0}
\definecolor{Green}         {cmyk}{1,0,1,0}
\definecolor{myGreen}         {cmyk}{0.50,0,0.50,0}
\definecolor{ForestGreen}   {cmyk}{0.91,0,0.88,0.12}
\definecolor{PineGreen}     {cmyk}{0.92,0,0.59,0.25}
\definecolor{LimeGreen}     {cmyk}{0.50,0,1,0}
\definecolor{YellowGreen}   {cmyk}{0.44,0,0.74,0}
\definecolor{SpringGreen}   {cmyk}{0.26,0,0.76,0}
\definecolor{OliveGreen}    {cmyk}{0.64,0,0.95,0.40}
\definecolor{RawSienna}     {cmyk}{0,0.72,1,0.45}
\definecolor{Sepia}         {cmyk}{0,0.83,1,0.70}
\definecolor{Brown}         {cmyk}{0,0.81,1,0.60}
\definecolor{Tan}           {cmyk}{0.14,0.42,0.56,0}
\definecolor{Gray}          {cmyk}{0,0,0,0.50}
\definecolor{myGray}          {cmyk}{0,0,0,0.30}
\definecolor{Black}         {cmyk}{0,0,0,1}
\definecolor{White}         {cmyk}{0,0,0,0}
\newcommand{\fA}{\mathfrak{A}}
\newcommand{\fB}{\mathfrak{B}}%
\newcommand{\fG}{\mathfrak{G}}%
\newcommand{\cC}{\mathcal{C}}%
\renewcommand{\phi}{\varphi} 
\renewcommand{\cal}{\mathcal} 
\newcommand{\set}[1]{\ensuremath{\{ #1 \}}}  
\newcommand{\sizeof}[1]{|\!|#1|\!|}
\newcommand{\Sat}{\ensuremath{\textit{Sat}}}
\newcommand{\FinSat}{\ensuremath{\textit{FinSat}}}
\newcommand{\FO}{\mbox{FO}}
\newcommand{\FOt}{\mbox{$\mbox{\rm FO}^2$}}
\newcommand{\GFt}{\mbox{$\mbox{\rm GF}^2$}}
\newcommand{\GF}{\mbox{$\mbox{\rm GF}$}}
\newcommand{\FLt}{{\mathcal{FL}}^2}
\newcommand{\FL}{{\mathcal{ FL}}}
\newcommand{\FLotrans}{{\mathcal{FL}1\mbox{\rm T}}}
\newcommand{\FLotranstMinus}{{\mathcal{FL}^{2}1\mbox{\rm T}^{\/u}}} 
\newcommand{\FLotranstEq}{{\mathcal{FL}^2_=1\mbox{\rm T}}} 
\newcommand{\FLotranstEqMinus}{{\mathcal{FL}^{2}_=1\mbox{\rm T}^{\/u}}} 
\newcommand{\FLtwotrans}{{\mathcal{FL}^22\mbox{\rm T}}} 	
\newcommand{\FLtwotranstEq}{{\mathcal{FL}^2_=2\mbox{\rm T}}} 	
\newcommand{\FLthreetrans}{{\mathcal{FL}^23\mbox{\rm T}}}	
\newcommand{\Fl}{{\mathcal{FL}}} 
\newcommand{\FlEq}{{\mathcal{FL}_=}} 
\newcommand{\FlVars}[1]{{\mathcal{FL}^{#1}}} 
\newcommand{\FlEqVars}[1]{{\mathcal{FL}^{#1}_=}} 
\newcommand{\FlTrans}[1]{{\mathcal{FL}{#1}\mbox{\rm T}}} 
\newcommand{\FlEqTrans}[1]{{\mathcal{FL}_={#1}\mbox{\rm T}}} 
\newcommand{\FlTransVars}[2]{{\mathcal{FL}^{#2}{#1}\mbox{\rm T}}} 
\newcommand{\FlEqTransVars}[2]{{\mathcal{FL}^{#2}_={#1}\mbox{\rm T}}} 
\newcommand{\NPTime}{\textsc{NPTime}}
\newcommand{\PSpace}{\textsc{PSpace}}
\newcommand{\ExpTime}{\textsc{ExpTime}}
\newcommand{\ExpSpace}{\textsc{ExpSpace}}
\newcommand{\NExpTime}{\textsc{NExpTime}}
\newcommand{\TwoExpTime}{2\textsc{-ExpTime}}
\newcommand{\TwoNExpTime}{2\textsc{-NExpTime}}
\newcommand{\ThreeNExpTime}{3\textsc{-NExpTime}}
\newcommand{\Tower}{\textsc{Tower}}
\newcommand{\N}{{\mathbb N}}
\newcommand{\Z}{{\mathbb Z}}
\newcommand{\ftp}{\ensuremath{\mbox{\rm ftp}}}
\newcommand{\ctp}{\ensuremath{\mbox{\rm ctp}}} 
\newcommand{\cstp}{\ensuremath{\mbox{\rm cstp}}} 
 \newcommand{\nb}[1]{$|$\marginpar{\scriptsize\raggedright\textcolor{red}{#1}}}
\newcommand{\nb}[1]{}
\newtheorem{theorem}{Theorem}
\newtheorem{example}[theorem]{Example}
\newtheorem{lemma}[theorem]{Lemma}
\newtheorem{corollary}[theorem]{Corollary}
\newtheorem{proposition}[theorem]{Proposition}
\newcommand{\modu}[2]{\ensuremath{\lfloor #2 \rfloor_{#1}}}
\newcommand{\moduSix}[1]{\ensuremath{\lfloor#1\rfloor}}
\begin{document}

\begin{frontmatter}
\bibliographystyle{plain}

\title{The Fluted Fragment with Transitive Relations\tnoteref{t1}}
\tnotetext[t1]{This is a revised and substantially extended version of the MFCS 2019 paper \cite{P-HT19}.}

\author[1]{Ian Pratt-Hartmann}
\ead{ipratt@cs.man.ac.uk}
\address[1]{University of Opole, Poland/University of Manchester, UK}


\author[2]{Lidia Tendera\corref{cor1}%
}
\ead{tendera@uni.opole.pl}

\address[2]{University of Opole, Institute of Computer Science, Oleska 48, 45-052 Opole, Poland}
\cortext[cor1]{Corresponding author}

	\begin{abstract}
		We study the satisfiability problem for the fluted fragment extended with transitive relations. The logic enjoys the finite model property when only one transitive relation is available and the finite model property is lost when additionally either equality or a second transitive relation is allowed. 
		We show that the satisfiability problem for the fluted fragment with one transitive relation and equality remains decidable. 
		On the other hand we show that the satisfiability problem is undecidable already for  the two-variable fragment of the logic in the presence of three transitive relations (or two transitive relations and equality). 
	\end{abstract}
	
\begin{keyword}
fluted logic \sep transitivity  \sep satisfiability \sep decidability \MSC{03B25, 03B70}
\end{keyword}

\end{frontmatter}


\section{Introduction}
\label{sec:intro}
The \textit{fluted fragment}, here denoted $\Fl$, is a fragment of first-order logic in which, roughly speaking, the order of quantification of variables coincides with the order in which those variables appear as arguments of predicates. The allusion is presumably architectural: we are invited to think of arguments of predicates as being `lined up' in columns. The following formulas are sentences of $\Fl$
\begin{align}
& \mbox{
	\begin{minipage}{10cm}
	\begin{tabbing}
	No student admires every professor\\
	$\forall x_1 (\mbox{student}(x_1) \rightarrow \neg \forall x_2 (\mbox{prof}(x_2) \rightarrow \mbox{admires}(x_1, x_2)))$
	\end{tabbing}
	\end{minipage}
}
\label{eq:eg1}\\
& \mbox{
	\begin{minipage}{8.5cm}
	\begin{tabbing}
	No lecturer introduces any professor to every student\\
	$\forall x_1 ($\=$\mbox{lecturer}(x_1) \rightarrow$\\
	\qquad $\neg \exists x_2 ($\=$\mbox{prof}(x_2)
	\wedge \forall x_3 (\mbox{student}(x_3) \rightarrow \mbox{intro}(x_1,x_2,x_3))))$,
	\end{tabbing}
	\end{minipage}
}
\label{eq:eg2}
\end{align}
with the `lining up' of variables illustrated in Fig.~\ref{fig:lining}. By contrast, none of the formulas
%
\begin{align*}
& \forall x_1 . r(x_1, x_1)\\
& \forall x_1 \forall x_2 (r(x_1, x_2) \rightarrow r(x_2,x_1))\\
& \forall x_1 \forall x_2 \forall x_3 (r(x_1,x_2) \wedge r(x_2,x_3) \rightarrow r(x_1,x_3)),
\end{align*}
expressing, respectively, the reflexivity, symmetry and transitivity of the relation $r$, is fluted, as the atoms involved cannot be arranged so that their argument sequences `line up' in the fashion of Fig.~\ref{fig:lining}. 
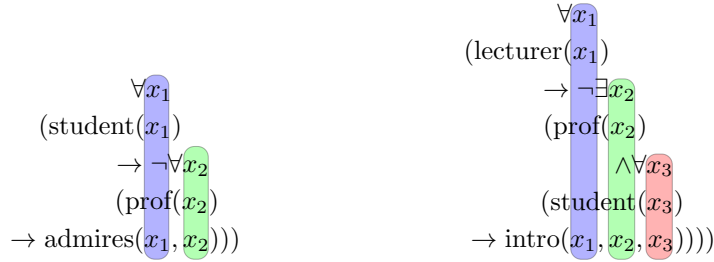
\begin{figure}
	\begin{center}
		\begin{tikzpicture}[scale=0.5]
		\filldraw[rounded corners,opacity=0.3,fill=blue] (3.1,7.4) rectangle (3.75,2.5);
		\filldraw[rounded corners,opacity=0.3,fill=green] (4.15,5.5) rectangle (4.8,2.5);
		\draw (3.3,7) node {${\forall} x_1$};
		\draw (2.2,6) node {$(\mbox{student}(x_1)$};
		\draw (3.6,5) node {$\rightarrow \neg {\forall} x_2$};
		\draw (3.7,4) node {$(\mbox{prof}(x_2)$};
		\draw (2.6,3) node {$ \rightarrow \mbox{admires}(x_1, x_2)))$};
		\end{tikzpicture}
		\hspace{2.5cm}
		\begin{tikzpicture}[scale=0.5]
		\filldraw[rounded corners,opacity=0.3,fill=blue] (3.0,7.3) rectangle (3.7,0.5);		
		\filldraw[rounded corners,opacity=0.3,fill=green] (4,5.3) rectangle (4.7,0.5);	
		\filldraw[rounded corners,opacity=0.3,fill=red] (5.0,3.3) rectangle (5.7,0.5);
		\draw (3.2,7) node {$\forall x_1$};
		\draw (2.1,6) node {$(\mbox{lecturer}(x_1)$};
		\draw (3.5,5) node {$ \rightarrow \neg \exists x_2$};
		\draw (3.65,4) node {$(\mbox{prof}(x_2)$};
		\draw (4.95,3) node {$ \wedge \forall x_3$};
		\draw (4.1,2) node {$(\mbox{student}(x_3)$};
		\draw (3.6,1) node {$\rightarrow \mbox{intro}(x_1,x_2,x_3))))$};
		\end{tikzpicture}
	\end{center}
	\caption{The `lining up' of variables in the fluted formulas~\eqref{eq:eg1} and~\eqref{eq:eg2}; all quantification is executed on the right-most available column.}
	\label{fig:lining}
\end{figure}

The history of this fragment is somewhat tortuous.
The basic idea of {\em fluted logic} can be traced to a paper given by W.V.~Quine to the 1968 {\em International Congress of Philosophy}~\cite{purdyTrans:quine69}, in which the author defined the {\em homogeneous $m$-adic formulas}. Quine later relaxed this fragment, in the context of a discussion of predicate-functor logic, to what he called `fluted' quantificational schemata~\cite{purdyTrans:quine76b}, 
claiming that the satisfiability problem for the relaxed fragment is decidable.
The viability of the proof strategy sketched by Quine was explicitly called into question by Noah \cite{purdyTrans:noah80}, and the subject then taken up by 
W.C.~Purdy~\cite{purdyTrans:purdy96a}, who gave his own definition of `fluted formulas', proving decidability.
It is questionable whether Purdy's reconstruction is faithful to Quine's intentions: the matter is clouded by differences between the definitions of predicate functors in Noah's and Quine's respective papers~\cite{purdyTrans:noah80}
and~\cite{purdyTrans:quine76b}, both of which Purdy cites. In fact, Quine's original definition of 
`fluted' quantificational schemata appears to coincide with a logic introduced---apparently independently---by A.~Herzig~\cite{purdyTrans:herzig90}.
Rightly or wrongly, however, the name `fluted fragment' 
has now attached itself to Purdy's definition in~\cite{purdyTrans:purdy96a}; and we shall continue to use it in that way in the present article. See Sec.~\ref{sec:prelim} for a formal definition. 

To complicate matters further, Purdy claimed in~\cite{purdyTrans:purdy02} that  
$\Fl$ (i.e.~the fluted fragment, in our sense, and his) has the exponential-sized model property: if a fluted formula $\phi$ is satisfiable, then it is satisfiable over a domain of size bounded by an exponential function of the number of symbols in $\phi$. Purdy concluded that the satisfiability problem for $\Fl$ is \NExpTime-complete. These latter claims are false. It was shown in~\cite{purdyTrans:P-HST16} that, although
$\Fl$ has the finite model property, there
is no elementary bound on the sizes of the models required, and the satisfiability problem for $\Fl$ is non-elementary.
More precisely, define $\FlVars{m}$ to be the subfragment of
$\Fl$ in which at most $m$ variables (free or bound) appear. Then the satisfiability problem 
for $\FlVars{m}$ is $\lfloor m/2 \rfloor$-\NExpTime-hard for all $m \geq 2$ and in
$(m-2)$-\NExpTime{} for all $m \geq 3$~\cite{P-HST-FLrev}.
It follows that the satisfiability problem 
for $\Fl$ is \Tower-complete, in the framework of~\cite{purdyTrans:schmitz16}.
These results fix  the exact complexity 
of satisfiability of $\FlVars{m}$ for small values of $m$.
Indeed, the satisfiability problem for $\FOt$, 
the two-variable fragment of first-order logic, is known to be \NExpTime-complete~\cite{purdyTrans:gkv97}, whence the corresponding problem for 
$\FlVars{2}$ is certainly in \NExpTime. Moreover, 
for $0 \leq m \leq 1$, $\FlVars{m}$ coincides with the $m$-variable fragment of first-order logic,
whence its satisfiability problem is \NPTime-complete. Thus,
taking $0$-\NExpTime{} to mean \NPTime, we see that the satisfiability problem 
for $\FlVars{m}$ is $\lfloor m/2 \rfloor$-\NExpTime-complete, at least for $m\leq 4$.

The focus of the present paper is what happens when we add to the fluted fragment the 
ability to stipulate that certain designated binary relations are \textit{transitive}, or are
{\em equivalence relations}. The motivation comes from analogous results obtained for other
decidable fragments of first-order logic.  Consider basic propositional modal logic K.
Under the standard translation into first-order logic (yielded by Kripke semantics), we can regard K
as a fragment of first-order logic---indeed as a fragment of $\FlVars{2}$.
From basic modal logic K, we obtain the logic K4 under the supposition that the accessibility relation
on possible worlds is transitive, and the logic S5 under the supposition that it is an equivalence relation: it is well-known that the satisfiability problems for K and K4 are \PSpace-complete, whereas that for
S5 is \NPTime-complete~\cite{purdyTrans:ladner77}. (For analogous results on {\em graded} modal logic, see~\cite{purdyTrans:kph09}.)
Closely related are also description logics (cf.~\cite{purdyTrans:2003handbook}) with {\em role hierarchies} and {\em transitive roles}. In particular, the description logic  $\mathcal{SH}$, which has the finite model property, is an $\ExpTime$-complete fragment of $\Fl$ with transitivity. 
Similar investigations have been carried out in respect of \FOt{}, which has the finite model property and
whose satisfiability problem, as just mentioned, is \NExpTime-complete. The finite model property is lost when one transitive relation or two equivalence relations are allowed. 
For equivalence, everything is known: the (finite) satisfiability problem for \FOt{} in the presence of a single equivalence relation 
remains \NExpTime-complete, but this increases to \TwoNExpTime-complete in the presence of two equivalence relations~\cite{purdyTrans:kmp-ht,purdyTrans:KO12}, and becomes undecidable with three. For transitivity, we
have an incomplete picture:
the {\em finite} satisfiability problem for $\FOt$ in the presence with a single transitive relation is decidable in \ThreeNExpTime~\cite{purdyTrans:ph18}, while the decidability of the satisfiability problem remains open (cf.~\cite{ST-FO2T}); the corresponding problems with two transitive relations 
are both undecidable~\cite{purdyTrans:KT09}. 

Adding equivalence relations to the fluted fragment poses no new problems. Existing results on
of $\FOt$ with two equivalence relations can be used to show that the satisfiability and finite
satisfiability problems for $\Fl$ (not just $\FlVars{2}$) with \textit{two} equivalence relations are decidable. Furthermore,
the proof that the corresponding problems for
$\FOt$ in the presence of \textit{three} equivalence relations are undecidable can easily be seen to apply also to $\FlVars{2}$. 
On the other hand, the situation with transitivity is less straightforward. 
We 
show in the sequel that the satisfiability and finite satisfiability problems for 
$\Fl$ remain decidable in the presence of a single transitive relation and equality. (This logic lacks the finite model property.) On the other hand, the satisfiability and the finite satisfiability problems for 
$\Fl$ in the presence of two transitive relations and equality, or indeed, 
in the presence of three transitive relations (but without equality) are all undecidable.
For the fluted fragment with two transitive relations but \textit{without} equality, the situation is not fully resolved. 
We show in the sequel that this fragment lacks the finite model property; this contrasts with the situation in description logics, where not only $\mathcal{SH}$ but also its extension $\mathcal{SHI}$ 
retain the finite model property, independently of the number of transitive relations \cite{DucL10}. However, the decidability of both satisfiability and finite satisfiability for this fragment remain open. Table~\ref{tab:overview} gives an overview of these results in comparison with known results on $\FOt$.\nb{I: sentenced moved from end of next para.}

Some indication that flutedness interacts in interesting ways with transitivity is given by known complexity results on various extensions of
guarded two-variable fragment with transitive relations. 
The {\em guarded fragment}, denoted \GF{}, is that fragment
of first-order logic in which all quantification
is of either of the forms 
$\forall \bar{v}(\alpha \rightarrow \psi)$ or $\exists \bar{v}(\alpha \wedge \psi)$,
where $\alpha$ is an atomic formula (a so-called {\em guard}) 
featuring all free variables of $\psi$. The
{\em guarded two-variable fragment}, denoted \GFt, is the intersection of
\GF{} and \FOt. It is straightforward to show that the addition
of two transitive relations to \GFt{} yields a logic whose satisfiability 
problem is undecidable. However, as long as the distinguished 
transitive relations appear only in guards, we can extend the whole
of \GF{} with any number of transitive relations, yielding the so-called 
{\em guarded fragment with transitive guards}, whose satisfiability 
problem is in \TwoExpTime~\cite{purdyTrans:st04}. Intriguingly, in the
two-variable case, we obtain a reduction in complexity if we
require transitive relations in guards to point {\em forward}---i.e.~allowing only $\forall v(t(u,v) \rightarrow \psi)$
rather than $\forall v(t(v,u) \rightarrow \psi)$, and similarly
for existential quantification. These restrictions resemble flutedness, of course,
except that they prescribe the order of variables only in \textit{guards}, rather than in the whole formula.
Thus, the extension of
\GFt{} with (any number of) transitive guards has a \TwoExpTime-complete
satisfiability problem; however, the corresponding problem under 
the  restriction to one-way transitive guards is \ExpSpace-complete~\cite{purdyTrans:kieronski06}. Since the above-mentioned extensions of \GFt{} lack the finite model property, their satisfiability and the finite satisfiability problems do not coincide. 
Decidability and complexity bounds for the finite satisfiability problems are established in \cite{purdyTrans:KT09,purdyTrans:KT18}. 

\begin{center}
	\begin{table}[htb]
		\begin{center}
			{
				{
					\begin{tabular}{|c|c|c|}
						\hline
						{\bf Special symbols} & \multicolumn{2}{c|}{
							{\bf Decidability and Complexity}}\\
						& $\FlVars{m}$ ($m\geq 2$) & $\FOt$ \\
						\hline\hline
no transitive r.  &  $\lfloor m/2\rfloor$-\NExpTime-hard  &FMP \\
& in $(m-2)$-\NExpTime $^{*)}$ &  \NExpTime{}-compl. \\
& \cite{purdyTrans:P-HST16,P-HST-FLrev}	& \cite{purdyTrans:gkv97}\\

\hline
1 transitive r.   & FMP \cite{P-HT19} & \Sat{}: ?  \\
\hline
1 transitive r. & \Sat: in $m$-\NExpTime & \Sat{}: ? \\
with =		&  {\bf Theorem~\ref{theo:FLotransmEqUpper}}	& \FinSat: \\
& \FinSat: in $(m+1)$-\NExpTime &  in \ThreeNExpTime\\
& \bf{Corollary \ref{cor:finsat1T}} &  \cite{purdyTrans:ph18}\\
						\hline
						2 transitive r.  &  \Sat{}: ? & undecidable  \\
						& \FinSat{}: ? & \cite{Kie05,Kaz06}\\
						\hline
						2 transitive r.  & undecidable &  \\
						with =				&  {\bf Theorem~\ref{theo:twoEqUndecidable}} & \color{gray}{undecidable}\\
						\hline
						1 trans.\& 1 equiv. & undecidable &  \\
						with =					& {\bf  Corollary~\ref{cor:twoEqUndecidable}} &  \color{gray}{undecidable}\\
						\hline
						3 transitive r. &  undecidable & \\
						& \Sat: {\bf Theorem~\ref{th:three}} & \color{gray}{undecidable}\\
						& \FinSat{}: {\bf Theorem~\ref{th:threeFinsat}} & \\
						\hline
						3 equivalence r. & undecidable & \\
						& {\bf Corollary~\ref{cor:three}} & \color{gray}{undecidable} \\
						\hline
						
						%
					\end{tabular} 
			}}
			
			\caption{Overview of $\FlVars{m}$ and $\FOt$ over restricted classes of structures. $^{*)}$ in case $m>2$, and $\NExpTime$-complete for $\FlVars{2}$. 
				Undecidability of extensions of \FOt{} shown in grey were known earlier, but now can be inherited from remaining results of the Table.} 
			\label{tab:overview}
		\end{center}
	\end{table}
\end{center}

\section{Preliminaries}
\label{sec:prelim}
All signatures in this paper are purely relational, i.e.,~ there are no individual constants or function symbols. We
do, however, allow 0-ary relations (proposition letters). We use the notation $\phi \dot{\vee} \psi$ 
to denote the exclusive disjunction of $\phi$ and $\psi$.

Let $\bar{x}_\omega= x_1, x_2, \ldots$ be a fixed sequence of variables.
%
%
%
We define the sets of formulas $\Fl^{[m]}$ (for $m \geq 0$) by structural induction as follows:
(i) any non-equality
atom $\alpha(x_\ell, \ldots, x_m)$, where $x_\ell, \dots, x_m$ is a contiguous (possibly empty) subsequence of $\bar{x}_\omega$,
is in $\Fl^{[m]}$;
(ii) $\Fl^{[m]}$ is closed under boolean combinations;
(iii) if $\phi$ is in $\Fl^{[m+1]}$, then $\exists x_{m+1} \phi$ and $\forall x_{m+1} \phi$
are in $\Fl^{[m]}$.
The set of \textit{fluted formulas} is defined as \smash{$\FL = \bigcup_{m\geq 0} \Fl^{[m]}$}. A \textit{fluted sentence} is a fluted formula with no free variables.
Thus, when forming Boolean combinations in the fluted fragment, all the combined formulas must have as
their free variables some suffix of some prefix $x_1, \dots, x_m$ of $\bar{x}_\omega$; and, when quantifying, only the last variable in this prefix may be bound. 
Note also that proposition letters (0-ary predicates)
may, according to the above definitions, be combined freely with formulas: if $\phi$ is in 
$\Fl^{[m]}$, then so, for example, is $\phi \wedge P$, where $P$ is a proposition letter. 
For 
$m \geq 0$, denote by $\FlVars{m}$  the \textit{$m$-variable sub-fragment} of $\FL$, i.e.~the set of 
formulas of $\Fl$ featuring at most $m$ variables, free or bound.
Do not confuse $\FlVars{m}$ with $\Fl^{[m]}$.
For example, \eqref{eq:eg1} is in $\FlVars{m}$ just in case
$m \geq 2$, and \eqref{eq:eg2} is in $\FlVars{m}$ just in case $m \geq 3$; but they  
are both in $\Fl^{[0]}$. Note that $\FlVars{m}$-formulas cannot, by force of syntax, feature predicates of arity greater than $m$. 
The fragments $\FlEqVars{[m]}$, $\FlEq$ and $\FlEqVars{m}$ are defined analogously, except that equality atoms $x_{m-1} = x_m$ are
allowed in $\FlEqVars{[m]}$ for $m \geq 2$.

We denote by $\FlTrans{k}$ the extension of $\Fl$ with $k$ distinguished binary predicates 
assumed to be interpreted as transitive relations; and we denote by $\FlEqTrans{k}$ the corresponding
extension of $\FlEq$. We denote
their $m$-variable sub-fragments ($m \geq 2$) by $\FlTransVars{k}{m}$, respectively $\FlEqTransVars{k}{m}$.
A predicate is called {\em ordinary} if it is neither the equality predicate nor one of the distinguished predicates. 
Finally, we denote by $\FlEqTransVars{1}{2}^u$
the sub-fragment of
$\FlEqTransVars{1}{2}$ in which no binary predicates occur except equality and the distinguished predicate, i.e.,
where the non-logical signature consists purely of nullary and unary predicates, together with one distinguished binary predicate.

If ${\cal L}$ is any logic, we denote its satisfiability problem by $\Sat({\cal L})$ and its finite 
satisfiability problem by $\FinSat({\cal L})$, understood in the usual way.

\subsection{Variable-free syntax for fluted formulas}
Assuming,  as we shall, that the arity of every predicate is fixed in advance, variables in fluted formulas carry no
information, and therefore can be omitted. Thus, for example, sentences~\eqref{eq:eg1} and~\eqref{eq:eg2} can be written as follows
\begin{align}
& \mbox{
	\begin{minipage}{10cm}
	\begin{tabbing}
	No student admires every professor\\
	$\forall (\mbox{student}\rightarrow \neg \forall (\mbox{prof} \rightarrow \mbox{admires}))$
	\end{tabbing}
	\end{minipage}
}
\label{eq:eg1:free}\\
& \mbox{
	\begin{minipage}{8.5cm}
	\begin{tabbing}
	No lecturer introduces any professor to every student\\
	$\forall ($\=$\mbox{lecturer} \rightarrow$
	$\neg \exists ($\=$\mbox{prof}
	\wedge \forall  (\mbox{student} \rightarrow \mbox{intro})))$,
	\end{tabbing}
	\end{minipage}
}
\label{eq:eg2:free}
\end{align}
As an exercise, try converting \eqref{eq:eg2:free}
back into~\eqref{eq:eg2}.
The only ambiguity here comes from the choice of the highest-indexed variable; for example, the notation
$\forall (\mbox{prof} \rightarrow \mbox{admires})$ can mean $\forall x_{m+1} (\mbox{prof}(x_{m+1}) \rightarrow \mbox{admires}(x_m,x_{m+1}))$ for any $m \geq 1$. However, such ambiguity is perfectly harmless, and in 
fact---as the present authors have found---rather convenient. 
Variable-free syntax for fluted formulas takes a little getting used to, but makes for a compact presentation; 
we shall standardly employ it in the sequel. We write $\forall^m$ to denote a block of $m$ universal quantifiers; 
thus, if $\phi \in \Fl^{[m]}$, then $\forall^m  \phi \in \Fl^{[0]}$.
The elimination of variables seems to have been part of Quine's original motivation for introducing the fluted fragment
(or at least one of its close relatives).

\subsection{Loss of the finite model property}
The logic $\FLotrans$ possesses the finite model property (see Table~\ref{tab:overview}).  
However, this is no longer true if we add either equality or a second transitive relation, as shown by the examples below.
\begin{example}
	Consider the $\FLotranstEq$-sentence $\phi_1=\forall \exists . T_1 \wedge \forall \forall (T_1 \rightarrow \neg =)$,
	where $T_1$ is a distinguished binary predicate denoting a transitive relation. This sentence is satisfiable, but not finitely satisfiable. 
\end{example} 
\begin{proof}
	In standard first-order syntax, $\phi_1$ reads as follows:
	\begin{equation*}
	\phi_1=\forall x \exists y . T_1(x,y) \wedge \forall x \forall y (T_1(x,y) \rightarrow x\neq y).
	\end{equation*}
	It is obvious that $\phi_1$ is satisfiable 	
	(for example by the structure $\N$ with $T_1$ interpreted as $<$), but not finitely satisfiable.
\end{proof}

\begin{example}	Consider the $\FLtwotrans$-sentence
\begin{multline*}
\phi_2 =	 \exists p_0 \wedge  \forall (p_0 \dot{\vee} p_1 \dot{\vee} p_2) \wedge	\forall\forall \neg(T_1\wedge T_2) \wedge \\
\bigwedge_{i=0,1,2} \forall \big(p_i \rightarrow (\exists (p_{i+1} \wedge \neg (T_1 \vee T_2)) \; \wedge\;
	\forall (p_{i+2}\rightarrow T_1\vee T_2))\big), 
\end{multline*}
where the $p_i$ \textup{(}$0\leq i\leq 2$\textup{)} are unary predicates \textup{(}addition in subscripts interpreted modulo $3$\textup{)}, and
$T_1$, $T_2$ are distinguished binary predicates denoting transitive relations. This sentence is satisfiable, but not finitely satisfiable. 
\label{ex:two} 
\end{example}
\begin{proof}
For readers still getting used to variable-free notation, we again restore the variables in $\phi_2$:
\begin{multline*}
\exists x_1. p_0(x_1) \wedge  \forall x_1 (p_0(x_1) \dot{\vee} p_1(x_1) \dot{\vee} p_2(x_1)) \wedge	\forall x_1 \forall _2 \neg(T_1(x_1, x_2) \wedge T_2(x_1, x_2)) \wedge \\
\bigwedge_{i=0,1,2} \forall x_1 \big(p_i(x_1) \rightarrow (\exists x_2 (p_{i+1}(x_2) \wedge \neg (T_1(x_1, x_2) \vee T_2(x_1, x_2))) \; \wedge\; \\
	\forall x_1 (p_{i+2}(x_1) \rightarrow T_1(x_1, x_2)\vee T_2(x_1, x_2)))\big).
\end{multline*}
One can easily check that the structure $\N$ with the following interpretation of the predicate letters
	\begin{align*}
	p_i(n) \quad & \text{iff} \quad n \mod 3 =i\\
	T_1(n,m) \quad & \text{iff} \quad  n + 1 < m\\
		T_2(n,m) \quad & \text{iff}  \quad n > m
	\end{align*}
	is a model of $\phi_2$. 
	
	%
	
	\begin{figure}[hbt]
		\begin{center}
			
			\resizebox{12.2cm}{!}{
				\begin{tikzpicture}
				
				\clip (-0.8,-0.8) rectangle (8.7,1.3);
				
				\foreach \x in {0,1,2,3,4,5,6,7}
				\coordinate [label=center:$a_\x$] (A) at (\x, -0.5);
				
				\coordinate [label=center:$\ldots$] (A) at (7.8, 0);
				
				\foreach \x in {0,3,6}
				{
					\filldraw[fill=black] (\x, 0) circle (0.1); 
				}
				
				\foreach \x in {1,4,7}
				{
					\filldraw[fill=gray] (\x, 0) circle (0.1); 
				}
				
				\foreach \x in {2,5}
				{
					\filldraw[fill=white] (\x, 0) circle (0.1); 
				}

				\foreach \x in {0,2,4} \foreach \y in {0}
				\foreach \s in {0.15}
				{
					\draw [->, rounded corners,  thick, blue]  (\x+0.05,\y+\s) to [out=30,in=120] (\x+1.95,\y+\s);
					\draw [->, rounded corners,  thick, blue]  (\x+1+\s,\y-\s) to [out=-10,in=-170] (\x+3-\s,\y-\s);
				}
				
				\foreach \x in {0,1,2,3,4,5,6} \foreach \y in {0}
				\foreach \s in {0.1}
				{
					\draw [->, rounded corners, thick, red]  (\x+1-\s,\y+\s) to [out=150,in=20] (\x+\s,\y+\s);
					\draw [-, rounded corners, thick, dotted, black]  (\x+\s,\y) -- (\x+1-\s,\y);
				}
				
				\end{tikzpicture}
			}
		\end{center}
		
		\caption{Infinite chain in models of $\phi_2$ from Example~\ref{ex:two}. Pairs $(a_{i},a_{i+1})$ are neither in $\color{blue}{T_1}$ nor in $\color{red}{T_2}$; depicted by dotted lines. Blue and red arrows depict pairs belonging to the transitive relations $\color{blue}{T_1}$ and $\color{red}{T_2}$.}
		\label{fig:example2}
	\end{figure}
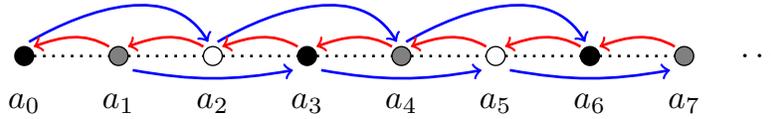
	
	To see that $\phi_2$ is not finitely satisfiable, suppose $\fA\models \Psi$. By the existential conjuncts of $\phi_2$, there exist distinct elements  $a_0,a_1,a_2\in A$ such that $a_i\in p_i$ and $(a_0,a_1), (a_1,a_2)\not\in T_1 \cup T_2$  (cf.~Figure~\ref{fig:example2}). 
	The universal  conjuncts of $\phi_2$ imply that $(a_0,a_2)$, $(a_1,a_0)$ and $(a_2,a_1)$ belong to $T_1\cup T_2$ but not to $T_1\cap T_2$. One can check that with transitive $T_1$ and $T_2$  this allows for only two options: 
	(i) $(a_1,a_0), (a_2,a_1) \in T_1$ and $(a_0,a_2)\in T_2$, or 
	(ii)    $(a_1,a_0), (a_2,a_1) \in T_2$ and $(a_0,a_2)\in T_1$. 
	In both cases applying transitivity of $T_1$ or of $T_2$ we have $(a_2,a_0)\in T_1\cup T_2$. But then the existential conjuncts require a new witness, say $a_3$, for $a_2$ such that $(a_2,a_3)\not\in T_1\cup T_2$.
	Again, taking the universal conjuncts into considerations, we get $(a_3,a_1)\in T_1 \cup T_2$. So, the situation repeats, and indeed $\fA$ embeds an infinite chain of elements such that, for each consecutive pair, $(a_i,a_{i+1})\not\in T_1\cup T_2$.
\end{proof}

\subsection{Fluted types and cliques}
\label{sec:ftc}
Suppose $\fA$ is a structure interpreting the distinguished binary predicate $T$ as a transitive relation.
A {\em clique} of $\fA$ is a maximal subset $B \subseteq A$
with the property that, for all distinct $a, b \in B$, $\fA \models T[a,b]$. Every element $a \in A$ is a member of exactly one clique,
and if that clique has size greater than 1, then, necessarily $\fA \models T[a,a]$.
Furthermore, if $B_1$ and $B_2$ are cliques, then
either every element of $B_1$ is related to every element of $B_2$ by $T$, or no element of $B_1$ is related to any element of $B_2$ by $T$. In this way, $T^\fA$ induces a strict partial order on the set of cliques.
If a singleton $\set{a}$ is a clique, then it may or may not be the case that $\fA \models T[a,a]$.
If $\fA \models \neg T[a,a]$, then we call $a$ (or sometimes $\set{a}$)  a {\em soliton}. 

In this paper, we adapt the familiar notions of atom, literal, $m$-type and clause to the fluted environment. A \textit{fluted $m$-atom} is an 
atomic formula of $\FL_=^{[m]}$. 
Remembering that we are using variable-free syntax, we see that 
a fluted $m$-atom is simply a predicate $p$ having arity at most $m$. A \textit{fluted $m$-literal} is a fluted $m$-atom or its negation;  a \textit{fluted $m$-type} is a maximal consistent conjunction of fluted $m$-literals. If $\bar{a}= a_1, \dots, a_m$ is a tuple of 
elements in some structure $\fA$, then $\bar{a}$ satisfies a unique fluted $m$-type over $\Sigma$, denoted $\ftp^\fA[\bar{a}]$.  We silently identify fluted $m$-types with their conjunctions where appropriate; thus, 
any fluted $m$-type may be regarded as a (quantifer-free) $\Fl^{[m]}$-formula.
Finally, 
a \textit{fluted $m$-clause}
is a disjunction of fluted $m$-literals. We allow the
empty clause $\bot$. We silently identify a finite set of clauses
$\Gamma$ 
with its conjunction where convenient, thus writing $\Gamma$ in formulas instead of the (technically more correct)
$\bigwedge \Gamma$.
A fluted $m$-atom/literal/clause is automatically a fluted $m'$-atom/literal/clause for all $m' > m$; the same is not true of 
fluted $m$-types for signatures containing predicates of arity greater than $m$. In any case, reference to $m$ is suppressed if inessential or clear from context.

At various points in Sec.~\ref{sec:flTransEq}, it will be convenient to appeal to the technique of resolution theorem-proving
in order to simplify formulas. 
If $\gamma= \gamma' \vee A$ and $\delta= \delta' \vee \neg A$ are both fluted $m$-clauses, where $A$ is a fluted atom, then so is the clause $\gamma' \vee \delta'$, called a {\em fluted resolvent} of $\gamma$ and $\delta$. 
If the predicate in $A$ is ordinary and has maximum arity both among the predicates
of $\gamma$ and among those of $\delta$, then we say that $\gamma' \vee \delta'$
is the {\em maximal ordinary resolvent} (or {\em mo-resolvent}) of $\gamma$ and $\delta$.
(Recall that a\nb{I: 'a' inserted.} predicate is called {\em ordinary} if it is neither the equality predicate nor one of the distinguished predicates.\nb{I: $T_k$ removed})
Thus mo-resolution is simply a restricted version of resolution.
By regarding fluted $m$-clauses as shorthand for their universal closures,
{resolution}---and in particular mo-resolution---can be seen as a valid inference rule: 
from $\forall^m(\gamma' \vee A)$ and $\forall^m(\delta' \vee \neg A)$, infer
$\forall^m(\gamma' \vee \delta')$. 
We remark that, if $A$ is the only literal of $\gamma$ involving an $m$-ary predicate, and similarly for $\neg A$ in $\delta$, then
the mo-resolvent $\gamma' \vee \delta'$ will be a fluted $(m-1)$-clause (and therefore also a fluted $m$-clause). This is will prove
important when dealing with the fragments $\FlEqTransVars{k}{m}$ for $m > 2$.\nb{I: extra explanation added.}


If $\Gamma$ is a set of fluted clauses, denote by $\Gamma^*$ the smallest
set of fluted clauses including $\Gamma$ and closed under mo-resolution, in the sense that if
$\gamma, \delta \in \Gamma^*$ \mbox{mo-resolve} to form $\epsilon$, then $\epsilon \in \Gamma^*$. Clearly, $\Gamma^*$ is finite if $\Gamma$ is. Further, if
$\Gamma$ is a set of $m$-clauses, for $m \geq 2$, and taking $m$ to be clear from context, we denote 
by $\Gamma^\circ$ the result of removing from $\Gamma^*$ any
clauses involving any ordinary predicates of arity $m$. If $m > 2$, then
$\Gamma^\circ$ is necessarily a set of fluted ($m-1$)-clauses; and if $m=2$,
then $\Gamma^\circ$ is a set of fluted $2$-clauses involving no binary predicates other than (possibly) $=$ or 
the distinguished predicates $T_k$. 

The following lemma is, in effect, nothing more than the familiar completeness theorem for (ordered) propositional resolution.
\begin{lemma}
	Let $\Gamma$ be a set of fluted $m$-clauses, and $\tau$ a fluted $m$-type over the signature of $\Gamma^\circ$.  If $\Gamma^\circ 
	\cup \{\tau\}$ is consistent, then there exists a fluted type $\tau^+$ 
    over the signature of $\Gamma$ such that
	$\tau^+ \supseteq \tau$ and $\Gamma \cup \{\tau^+\}$ is consistent.
	\label{lma:resolution}
\end{lemma}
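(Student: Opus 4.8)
The plan is to read the statement propositionally, treating each fluted $m$-atom as a propositional variable, so that a fluted $m$-type is a complete truth assignment and \emph{consistent} means \emph{propositionally satisfiable}. I split the atoms into the \emph{high} atoms---the ordinary predicates of arity $m$, which are precisely the atoms absent from the signature of $\Gamma^\circ$---and the \emph{low} atoms, which constitute the signature of $\Gamma^\circ$. Since $\tau$ is a complete type over the low atoms, consistency of $\Gamma^\circ \cup \{\tau\}$ says exactly that the assignment $\tau$ satisfies every clause of $\Gamma^\circ$. The goal is then to extend $\tau$ by an assignment to the high atoms so that all of $\Gamma$---equivalently, all of $\Gamma^* \supseteq \Gamma$---is satisfied.

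First I would dispose of the clauses that $\tau$ already handles: any clause of $\Gamma^*$ whose low part is satisfied by $\tau$ is true under every extension of $\tau$, so only clauses with $\tau$-falsified low part matter. For each such clause $C = \mathrm{low}(C) \vee \mathrm{high}(C)$ of $\Gamma^*$, I collect its high part $\mathrm{high}(C)$ into a set $H$ of clauses over the high atoms alone. It then suffices to find a truth assignment to the high atoms satisfying $H$: combining it with $\tau$ yields the desired $\tau^+$, since a clause of $\Gamma$ whose low part $\tau$ satisfies is already true, while one with $\tau$-falsified low part is satisfied because its high part lies in $H$.

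The heart of the matter is the satisfiability of $H$, and here I would invoke refutation-completeness of propositional resolution: it is enough to verify that $\bot \notin H$ and that $H$ is closed under resolution on the high atoms. For the first point, a clause of $\Gamma^*$ with empty high part contains no ordinary $m$-ary predicate, hence belongs to $\Gamma^\circ$; as $\tau$ satisfies $\Gamma^\circ$, its low part is not falsified, so it never contributes $\bot$ to $H$. For closure, suppose $h_1, h_2 \in H$ resolve on a high atom $P$, arising from $C_1, C_2 \in \Gamma^*$ with $\tau$-falsified low parts. Because $P$ is ordinary of arity $m$, it has maximal arity among the predicates of both $C_1$ and $C_2$, so these admit an mo-resolvent on $P$, which lies in $\Gamma^*$; its high part is exactly the resolvent of $h_1$ and $h_2$, and its low part, being the disjunction of the two falsified low parts, is again falsified by $\tau$. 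Hence the resolvent of $h_1$ and $h_2$ belongs to $H$.

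The main obstacle is precisely this closure step: one must see that the restricted rule defining $\Gamma^*$---resolving only on ordinary atoms of maximal arity---supplies exactly the resolvents needed to saturate $H$ under propositional resolution on the high atoms, and that the property ``low part falsified by $\tau$'' is preserved along these inferences. This is where the ordering built into mo-resolution does its work, matching the (ordered) propositional resolution completeness theorem with the high atoms playing the role of the maximal literals. Once $\bot \notin H$ and closure are established, completeness delivers a model $\sigma$ of $H$, and $\tau^+ := \tau \cup \sigma$ is the required extension of $\tau$ to the signature of $\Gamma$.
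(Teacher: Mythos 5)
Your proof is correct. It rests on the same two observations as the paper's: that a clause of $\Gamma^*$ with no ordinary $m$-ary atom lies in $\Gamma^\circ$ and hence is already satisfied by $\tau$, and that whenever two clauses of $\Gamma^*$ clash on an ordinary $m$-ary atom, that atom automatically has maximal arity in both, so their mo-resolvent exists and lies in $\Gamma^*$. The packaging differs, though. The paper argues directly on the binary tree of partial extensions of $\tau$ by the atoms $p_1,\dots,p_n$: if both children of a node violate $\Gamma^*$, the mo-resolvent of the two offending clauses is violated at the parent, so some leaf (a full $m$-type) survives. You instead project every clause whose low part $\tau$ falsifies onto its high literals, check that the resulting set $H$ is closed under resolution and omits $\bot$, and invoke refutation-completeness of propositional resolution as a black box to get an assignment to the high atoms. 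What your route buys is a cleaner separation of concerns --- the role of mo-resolution is isolated in the single closure check for $H$ --- at the cost of importing the completeness theorem rather than proving the needed instance of it inline, which is essentially what the paper's tree induction does. Both arguments are sound; the one small bookkeeping point common to both (and harmless) is the implicit identification of the signature of $\Gamma$ with the signature of $\Gamma^\circ$ augmented by the ordinary $m$-ary predicates.
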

\begin{proof}
	Enumerate the ordinary
    $m$-ary predicates occurring in $\Gamma$ as\linebreak
    $p_1, \dots, p_n$. Note that none of these predicates occurs in $\tau$.
	Define a \textit{level-$i$ extension}
	of $\tau$ inductively as follows: (i) $\tau$ is an level-0 extension of $\tau$; (ii) if $\tau'$ is a level-$i$ extension of $\tau$ ($0 \leq i < n$),
    then $\tau' \cup \{p_{i+1}\}$  and $\tau' \cup \{\neg p_{i+1}\}$ are 
	level-$(i+1)$ extensions of $\tau$. Thus, the level-$n$ extensions of $\tau$ are exactly 
	the fluted $m$-types over the signature of $\Gamma$ extending $\tau$. 
	If $\tau'$ is a level-$i$ extension of $\tau$ ($0 \leq i < n$), we say that $\tau'$ {\em violates} a clause $\delta$
	if, for every literal in $\gamma$, the opposite literal is in $\tau'$; we say that $\tau'$ {\em violates} a set of clauses
	$\Delta$ if $\tau'$ violates some $\delta \in \Delta$.
	Suppose now that $\tau'$ is a level-$i$ extension of $\tau$ ($0 \leq i < n$). We claim that, if both 
	$\tau' \cup \{p_{i+1}\}$  and $\tau' \cup \{\neg p_{i+1}\}$ violate $\Gamma^*$, then so does $\tau'$. For suppose
    otherwise. In that case, there must be a clause $\neg p_{i+1} \vee \gamma' \in \Gamma^*$ violated by $\tau' \cup \{p_{i+1}\}$ and a clause $p_{i+1} \vee \gamma' \in \Gamma^*$ violated by $\tau' \cup \{\neg p_{i+1}\}$. But then 
$\tau'$ violates the mo-resolvent $\gamma' \vee \delta'$,  contradicting the
	supposition that $\tau'$ does not violate $\Gamma^*$. This proves the claim. Now, since $\tau$ 
	is by hypothesis consistent with $\Gamma^\circ$, it certainly does not violate $\Gamma^\circ$. Moreover, since it involves no ordinary predicates of
	arity $m$, $\tau$ does not violate $\Gamma^*$ either. By the above claim, then,
	there must be at least one level-$n$ extension of $\tau$ which does not
	violate $\Gamma^* \supseteq \Gamma$. Since $\tau^+$ is a fluted $m$-type, this proves the lemma.
\end{proof}
\section{The decidability of fluted logic with one transitive relation and equality}\label{sec:onetrans}
\label{sec:flTransEq}
In this section, we study the logic $\FlEqTrans{1}$, the fluted fragment with equality and a single distinguished transitive relation; we also consider 
its $m$-variable sub-fragment, $\FlEqTransVars{1}{m}= \FlEqTrans{1} \cap \FO^m$, for all $m \geq 2$.
As already mentioned, even the smallest of these fragments lacks the finite model property. Nevertheless, we show that the satisfiability problem for $\FlEqTrans{1}$ is decidable; indeed, 
$\Sat(\FlEqTransVars{1}{m})$ is in $(m+1)$-$\NExpTime$ for $m \geq 2$. Given known results on the fluted fragment, it follows that 
$\Sat(\FlEqTrans{1})$ is \Tower-complete, according to the framework of super-elementary complexity classes developed in Schmitz~\cite{purdyTrans:schmitz16}.
The structure of the proof is as follows.
Recall that $\FLotranstEqMinus$ is the sub-fragment of $\FlEqTransVars{1}{2}$ in which no binary predicates appear other than $T$ and $=$. In Sec.~\ref{sec:FLotranstEqMinus}, we prove an upper complexity bound of $\TwoNExpTime$ for $\Sat(\FLotranstEqMinus)$; in Sec.~\ref{sec:FLotranstEq}, we show that $\Sat(\FlEqTransVars{1}{2})$ is also in \TwoNExpTime, via a reduction to $\Sat(\FLotranstEqMinus)$; and in Sec.~\ref{sec:FLotransmEq}, we show that $\Sat(\FlEqTransVars{1}{m})$ is in $m$-\NExpTime, via a series of exponential-sized reductions to $\Sat(\FlEqTransVars{1}{2})$. In all these reductions, we take particular care of the sizes both of the formulas produced, and of their signatures.

We will be dealing here with logics featuring a single distinguished transitive relation, and we use the letter $T$ for the corresponding binary predicate. Thus, if $\fA$ is a structure, we always assume that $T^\fA$ is a transitive relation on $A$.
A formula of $\FlEqTransVars{1}{m}$ is said to be in {\em normal form} if it has the shape
\begin{equation}
{\bigwedge_{i\in S} }\forall^{m-1} (\mu_i \rightarrow \exists (\kappa_i \wedge \Gamma_i)) \wedge
{\bigwedge_{j\in T}} \forall^{m-1} (\nu_j  \rightarrow \forall \Delta_j) \wedge \forall^m \Omega,
\label{eq:flMnf}
\end{equation}
where $S$ and $T$ are finite sets of indices, such that, for $i \in S$ and $j \in T$, 
$\mu_i$ and $\nu_j$ are quantifier-free fluted formulas of arity at most $(m-1)$,
$\kappa_i$ is a formula of any of the four forms 
$(T \wedge =)$,
$(T \wedge \neq)$,
$(\neg T \wedge =)$,
$(\neg T \wedge \neq)$,
and $\Gamma_i$, $\Delta_j$ and $\Omega$ are sets of fluted clauses  in $\FlEqTransVars{1}{m}$.
(Here, of course, we are making use of our convention that finite sets of clauses are identified with their
conjunctions.) 
We refer to the formulas $\kappa_i$ as {\em control formulas}; observe in this regard that the binary predicates $T$ and $=$ count as
atomic formulas of $\Fl^{[m]}$ for all $m \geq 2$.
The following lemma is slightly modified from~\cite[Lemma~4.1]{P-HST-FLrev}, where it was proved for the sub-fragment without
equality. The proof, however, is virtually identical, and we may simply state:
\begin{lemma}
Let $\phi$ be an $\FlEqTransVars{1}{2}$-sentence. We can compute, in time bounded by 
a polynomial function of $\sizeof{\phi}$, a normal-form $\FlEqTransVars{1}{2}$-formula
$\psi$
such that: (i) $\models \psi \rightarrow \phi$; and (ii) any model of $\phi$ can be expanded to a model 
of $\psi$.
\label{lma:flMnf}
\end{lemma}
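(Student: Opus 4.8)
The plan is to follow the structural renaming argument of \cite[Lemma~4.1]{P-HST-FLrev} almost verbatim, the sole novelty being the presence of the equality predicate, which enlarges the stock of control formulas from two to four. First I would rewrite $\phi$ in negation normal form, so that every quantified and every Boolean subformula occurs positively. Processing the maximal quantified subformulas bottom-up, I introduce for each such subformula $\chi$ a fresh predicate $p_\chi$ whose arity is the number of free variables of $\chi$; since $\chi$ is fluted, its free variables form a suffix of $x_1, x_2$, so $p_\chi$ applied to them is again fluted, and its arity is at most $1$. To keep the eventual matrices clausal without an exponential blow-up, I would additionally apply the same definitional (Tseitin-style) renaming to the Boolean structure of each matrix; the fresh predicates introduced here may be binary, but binary ordinary predicates are permitted in $\FlEqTransVars{1}{2}$, so this is harmless. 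Replacing each named subformula $\chi$ by $p_\chi$, conjoining the defining implication $p_\chi \rightarrow \chi$ (universally closed over the free variables of $\chi$), and asserting the nullary name of $\phi$ at top level produces a formula $\psi$ for which (i) and (ii) are routine: for (ii) one interprets each $p_\chi$ as the extension of $\chi$, turning every defining implication into a biconditional, while (i) follows by unfolding the names downward, using positivity (monotonicity) of their occurrences.

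After this renaming every defining axiom carries a single quantifier and a quantifier-free matrix. The only atoms requiring special treatment are the two \emph{special} binary atoms $T(x_1,x_2)$ and $x_1 = x_2$, since these are precisely the ones constrained by transitivity and by the identity of the witness; all remaining atoms---ordinary binary, unary, and nullary---may be left in the clause sets. For an existential axiom $\forall(\mu \rightarrow \exists\theta)$ I would rewrite the matrix in disjunctive normal form over the two special atoms alone, obtaining $\theta \equiv \bigvee_\kappa(\kappa \wedge \sigma_\kappa)$, where $\kappa$ ranges over the four control formulas $(T \wedge =)$, $(T \wedge \neq)$, $(\neg T \wedge =)$, $(\neg T \wedge \neq)$ and each residue $\sigma_\kappa$ is free of $T$ and $=$. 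This split is of constant size, independent of $\sizeof{\phi}$.

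To convert the resulting disjunction of existentials $\forall(\mu \rightarrow \bigvee_\kappa \exists(\kappa \wedge \sigma_\kappa))$ into conjuncts of the single-control shape demanded by \eqref{eq:flMnf}, I would absorb any literals of $\sigma_\kappa$ that depend only on the source into the guard, introduce for each $\kappa$ a fresh unary selector $r_\kappa$, conjoin the low-arity clause $\mu \rightarrow \bigvee_\kappa r_\kappa$ (which, being unary, is automatically a fluted $2$-clause and so joins $\Omega$), and emit one existential conjunct $\forall(r_\kappa \rightarrow \exists(\kappa \wedge \Gamma_\kappa))$ per $\kappa$, with $\Gamma_\kappa$ the clausal form of the witness-side residue. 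Universal axioms are handled dually: the matrix is turned into a clause set $\Delta$, giving a conjunct $\forall(\nu \rightarrow \forall\Delta)$ directly. Finally the top-level assertion and all residual clauses of arity at most $1$ are collected into $\Omega$, yielding exactly the shape \eqref{eq:flMnf}.

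For the complexity bound I would track that each quantified or Boolean subformula contributes one fresh predicate, that each existential axiom contributes only a constant number of selectors and control formulas, and that the definitional renaming never duplicates subformulas; hence $\psi$ is computable in polynomial time with $\sizeof{\psi}$ polynomial in $\sizeof{\phi}$. The only point requiring genuine care---and the sole departure from the equality-free argument of \cite{P-HST-FLrev}---is to check that, once the equality atom is absorbed into the control formulas alongside $T$, the residues $\sigma_\kappa$ are indeed free of both special predicates, so that the single-control existential conjuncts and the clause sets $\Gamma_i$, $\Delta_j$, $\Omega$ are well-formed fluted clauses of $\FlEqTransVars{1}{2}$; this is immediate from the disjunctive-normal-form split over $\{T(x_1,x_2), x_1=x_2\}$. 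I expect this bookkeeping, rather than any conceptual difficulty, to be the main obstacle.
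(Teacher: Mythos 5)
Your proposal is correct and follows essentially the same route as the paper, which simply defers to the structural-renaming proof of \cite[Lemma~4.1]{P-HST-FLrev} and notes that the only change needed for equality is the enlarged set of control formulas --- precisely the point you isolate. The details you supply (negation normal form, bottom-up definitional renaming with one-directional axioms, Shannon expansion over the two special atoms, selector predicates to split the disjunction of existentials, and Tseitin renaming to keep the clause sets polynomial) are the standard ingredients of that argument and are all sound.
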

We show in Lemmas~\ref{lma:FLotranstEqToFLotranstEqSpread} and~\ref{lma:FLotranstEqToFLotranstEqMinus} how, in the
two-variable case, normal form formulas can be further massaged into a collection of extremely simple formulas for which
the satisfiability problem is easy to analyse. Since that analysis forms the core of the whole proof, 
that is where we shall begin.
 
\subsection{Basic formulas in $\FLotranstEqMinus$}
\label{sec:FLotranstEqMinus}
In
the logic $\FLotranstEqMinus$, the only binary predicates available are equality and the distinguished predicate, $T$. These
suffice, however, to state that an element is related by $T$ to itself, for example, using the unary formula $\exists(= \wedge\; T)$.
We may therefore suppose that we have available a 
distinguished {\em unary} predicate $\hat{T}$, which
we take to be satisfied, in any structure, by precisely those elements related to themselves by $T$: 
i.e.~$\fA \models \hat{T}[a] \Leftrightarrow \fA \models T[a,a]$; this 
constitutes no essential increase in the expressive power of $\FLotranstEqMinus$.
In this section (\ref{sec:FLotranstEqMinus}), then, all signatures are implicitly assumed to 
contain both $T$ and $\hat{T}$, interpreted as described.
Under this assumption, a {soliton} is a clique consisting of a single element $a$ such that
$\fA \not \models \hat{T}[a]$.\nb{I: duplicated paragraph removed.} 

Our goal is to establish
that the satisfiability problem for this fragment 
is in \TwoNExpTime. In fact, it suffices to confine our attention to 
conjunctions of so-called {\em basic} formulas of this fragment (defined below). Our strategy is to 
show that any satisfiable, finite set $\Psi$ of basic formulas has a \textit{certificate}, of
size bounded by a doubly exponential function of $\sizeof{\Psi}$, which guarantees the existence of a (possibly infinite) model.



Let $\Sigma$ be a signature for $\FLotranstEqMinus$. Call an $\FLotranstMinus$-formula over $\Sigma$ 
{\em basic} if it is of one of the following forms, where $\pi$ and $\pi'$ are fluted 1-types over $\Sigma$ 
and $\mu$ a quantifier-free formula over $\Sigma$ of arity 1:
\begin{multicols}{2}
\begin{enumerate}[(B1)]
\item[(B1)] $\forall (\pi \rightarrow \exists (\mu \wedge T\; \wedge  \neq))$
\item[(B2)] $\forall (\pi \rightarrow \exists (\mu \wedge \neg T\; \wedge \neq))$
\item[(B3)] $\forall (\pi \rightarrow \forall (\pi' \rightarrow T))$ \qquad ($\pi \neq \pi'$)
\item[(B4)] $\forall (\pi \rightarrow \forall (\pi' \rightarrow \neg T))$ \ \quad ($\pi \neq \pi'$)
\item[(B5)] $\forall (\pi \rightarrow \forall (\pi \rightarrow (= \vee\; T))$
\item[(B6)] $\forall (\pi \rightarrow \forall (\pi \rightarrow (= \vee\; \neg T))$ 
\item[(B7)] $\forall \mu$
\item[(B8)] $\exists \mu$.
\end{enumerate}	
\end{multicols}

Suppose $\fA$ is a structure, $B$ a clique of $\fA$,
and $\pi$, $\pi'$ fluted 1-types. Say that $B$ is {\em determined} by the pair $\set{\pi, \pi'}$
if it is the unique clique of $\fA$ in which $\pi$ and $\pi'$ are both realized.
We call $\fA$ {\em quadratic} if, for any clique $B$ determined by some pair of fluted 1-types $\set{\pi, \pi'}$, 
there exists a fluted 1-type $\pi^*$ such that $B$ is the unique clique of $\fA$ in which $\pi^*$ is
realized. That is, in a quadratic structure, any
clique which can be uniquely identified as the only clique containing a given pair of fluted 1-types, $\pi$ and $\pi'$, can be 
uniquely identified as the only clique containing some (possibly different) fluted 1-type $\pi^*$.

Let $\Phi$ be a set of basic formulas over some signature $\Sigma$, and write 
$\ell= |\Sigma|$. Now let $\Sigma^*$ be $\Sigma$ 
together with the fresh unary predicates $p_0, \dots p_{2\ell-1}$, 
let $\bar{p}_0$ be the formula \mbox{$\neg p_0 \wedge \cdots \wedge \neg p_{2\ell-1}$}, and let
$\Phi^* = \set{\phi^* \mid \phi \in \Phi \cup \set{\exists \top}}$, where
\begin{equation*}
\phi^* :=
\begin{cases}
\forall(\pi \wedge \bar{p}_0 \rightarrow \exists(\mu \wedge \bar{p}_0 \wedge \chi) & \text{if $\phi= \forall(\pi \rightarrow \exists (\mu \wedge \chi))$}\\
\forall(\pi \wedge \bar{p}_0 \rightarrow \forall(\pi' \wedge \bar{p}_0 \rightarrow \chi) & \text{if $\phi= \forall(\pi \rightarrow \forall (\pi' \rightarrow \chi))$}\\
\forall(\bar{p}_0 \rightarrow \mu) & \text{if $\psi= \forall \mu$}\\
\exists(\mu \wedge \bar{p}_0) & \text{if $\psi= \exists \mu$.}\\
\end{cases}
\end{equation*}
Modulo trivial logical manipulation,  $\Phi^*$ is a set of basic formulas over $\Sigma^*$.
Call any fluted 1-type $\pi$ over $\Sigma^*$ such that $\models \pi \rightarrow \bar{p}_0$ {\em proper}. Clearly, the proper
fluted 1-types over $\Sigma^*$ are in natural 1--1 correspondence with the fluted 1-types over $\Sigma$. 
\begin{lemma}
	Suppose $\Phi$ is a set of basic formulas.
	The following are equivalent: \textup{(i)} $\Phi$ is satisfiable; \textup{(ii)} $\Phi^* \cup \set{\forall \bar{p}_0}$ is satisfiable; \textup{(iii)}
	$\Phi^*$ is satisfied in a quadratic structure; \textup{(iv)}
	$\Phi^*$ is satisfiable.
	\label{lma:quadratic}
\end{lemma}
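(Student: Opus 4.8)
The plan is to establish the four-way equivalence by a short cycle of implications. I would prove $(\text{i}) \Rightarrow (\text{ii}) \Rightarrow (\text{iv}) \Rightarrow (\text{i})$, which already makes (i), (ii), (iv) equivalent, and then fold in (iii) via $(\text{i}) \Rightarrow (\text{iii}) \Rightarrow (\text{iv})$. Two of these arrows are pure bookkeeping: $(\text{ii}) \Rightarrow (\text{iv})$ and $(\text{iii}) \Rightarrow (\text{iv})$ are immediate, since a model of $\Phi^* \cup \set{\forall \bar{p}_0}$, and likewise a quadratic model of $\Phi^*$, is in particular a model of $\Phi^*$. The genuine content lies in $(\text{iv}) \Rightarrow (\text{i})$ and, above all, in the construction of a quadratic model for $(\text{i}) \Rightarrow (\text{iii})$.

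For $(\text{i}) \Rightarrow (\text{ii})$ I would take a model $\fA \models \Phi$ over $\Sigma$ and expand it to $\Sigma^*$ by interpreting every fresh predicate $p_0, \dots, p_{2\ell-1}$ as empty. Then $\bar{p}_0$ holds everywhere, so $\forall \bar{p}_0$ is satisfied, and each relativised conjunct $\phi^*$ collapses to its unrelativised form $\phi$, all the $\bar{p}_0$-guards being vacuously true. For $(\text{iv}) \Rightarrow (\text{i})$ I would go the other way: given any $\fA^* \models \Phi^*$, pass to the substructure induced on the proper elements $\set{a : \fA^* \models \bar{p}_0[a]}$ and take its $\Sigma$-reduct. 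This set is nonempty because $(\exists\top)^* = \exists \bar{p}_0$ lies in $\Phi^*$. One then checks each basic form: the universal conjuncts (B3)--(B7) survive because their $\bar{p}_0$-relativised versions constrain exactly the proper pairs, while the existential conjuncts (B1), (B2), (B8) are preserved because the witnesses promised by $\phi^*$ themselves satisfy $\bar{p}_0$ and hence lie in the reduct; transitivity and the reading of $\hat{T}$ are automatically inherited by an induced substructure.

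The crux is $(\text{i}) \Rightarrow (\text{iii})$, where a genuinely quadratic model must be manufactured. Starting from the empty-expansion model of $\Phi^* \cup \set{\forall \bar{p}_0}$ (so that every realised $1$-type is proper, and there are at most $2^\ell$ of them), I would call a clique \emph{problematic} if it is determined by some pair $\set{\pi,\pi'}$ of distinct $1$-types but by no single type. Since a pair determines at most one clique, distinct problematic cliques correspond to distinct pairs, so even if $\fA$ is infinite there are fewer than $\binom{2^\ell}{2} < 2^{2\ell}$ of them; moreover every problematic clique has at least two elements (it realises two distinct types), so each of its members already carries a $T$-loop. To each problematic clique $B$ I would adjoin a single fresh element whose $1$-type $\pi^*$ uses a distinct nonempty pattern of the $p_i$ (there are $2^{2\ell}-1$ such patterns, hence enough), making it mutually $T$-related to the members of $B$, with a $T$-loop, and letting it inherit $B$'s position in the clique order. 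After this, $B$ is the unique clique realising the single type $\pi^*$; and since only fresh types are newly realised, no previously non-problematic clique changes status, so the result is quadratic.

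The main obstacle, and the step I would spell out most carefully, is verifying that these insertions preserve $\Phi^*$. The key point is that every conjunct of $\Phi^*$ is relativised to $\bar{p}_0$, so a fresh (non-proper) element neither fires any universal guard nor is admitted as a witness or as an inner argument; in particular the newly added $T$-edges cannot violate any (B4)-type constraint, and no basic formula even mentions the fresh type $\pi^*$. Because problematic cliques already have size at least two, adjoining a looped element does not alter the $\hat{T}$-status of any existing element, and the inherited clique order keeps $T$ transitive. Once this invisibility of the markers is established, preservation of each basic form (B1)--(B8) is routine, and the counting argument guarantees that the $2\ell$ fresh predicates supply enough distinct markers.
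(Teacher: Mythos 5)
Your proof is correct and follows essentially the same route as the paper's: the same expansion for (i)$\Rightarrow$(ii), the same restriction to the $\bar{p}_0$-elements for (iv)$\Rightarrow$(i), and the same key construction of a quadratic model by adjoining to each pair-determined clique a fresh element carrying an unused improper 1-type, with the $\bar{p}_0$-relativization making these additions invisible to $\Phi^*$. The only differences are cosmetic (you route the cycle through (i)$\Rightarrow$(iii) rather than (ii)$\Rightarrow$(iii), and you spell out the preservation and counting details slightly more explicitly).
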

\begin{proof}
	(i) $\Rightarrow$ (ii): If $\fA \models \Phi$, let $\fB$ be the expansion of $\fA$ obtained by taking every element of $A$
	to satisfy $\bar{p}_0$. It is obvious that $\fB \models \Phi^* \cup \set{\forall \bar{p}_0}$. 
	(ii) $\Rightarrow$ (iii): Suppose $\fA \models \Phi^* \cup \set{\forall \bar{p}_0}$. For each (unordered) pair,
	$\pi$, $\pi'$ of
	distinct, proper fluted 1-types (over $\Sigma^*$) such that there is exactly one clique, $u$ of $\fA$ in which both are realized,
	choose a fresh, {\em improper} fluted 1-type over $\Sigma^*$, and simply add a new element with that fluted 1-type to $u$. 
	Because there are
	certainly $2^{2|\sigma|}-1$ improper fluted 1-types, we never run out of fresh, improper fluted 1-types, so let $\fB$ be the resulting structure. 
	Since the new elements do not satisfy $\bar{p}_0$, we have $\fB \models \Phi^*$. And since all the
	newly realized fluted 1-types occur only in single cliques, $\fB$ is quadratic. 
	(iii) $\Rightarrow$ (iv) is trivial.
	(iv) $\Rightarrow$ (i): 
	Suppose $\fA \models \Phi^*$, and let $\fB$ be
	restriction of $\fA$ to the (necessarily non-empty) set of elements satisfying $\bar{p}_0$. It is obvious that $\fB \models \Phi$. 
\end{proof}
Lemma~\ref{lma:quadratic} tells us that any set $\Phi$ of basic formulas over $\Sigma$ can be transformed, in polynomial
time, to a set $\Phi^*$ of basic formulas over a larger signature $\Sigma^*$ such that $\Phi$ has a model if and only if $\Phi^*$ has a quadratic model. In the following lemmas, therefore, we may assume this conversion has been carried out, and concern ourselves with establishing conditions for a set of basic formulas $\Phi$ to have a \textit{quadratic} model.

For the remainder of Sec.~\ref{sec:FLotranstEqMinus},  we 
fix a signature $\Sigma$ of unary predicates.
All fluted 1-types are assumed to be over the signature $\Sigma$,
and are, as usual, identified with their conjunctions where convenient.
We denote by $\Pi_\Sigma$ the set of these fluted 1-types.
We always use the (possibly decorated) letters $\pi$ to range over fluted 1-types, 
and $\mu$ to range over quantifier-free formulas of arity 1 in the signature $\Sigma$. Thus, all such 
$\pi$ and $\mu$ are $\FLotranstEqMinus$-formulas. We use $\Pi$ to range over sets of fluted 1-types.

A \textit{clique-type} is a function $\xi: \Pi_\Sigma \rightarrow \set{0,1,2}$. 
If $\fA$ is a structure interpreting $\Sigma$,
$B$ is a clique of $\fA$, and $a \in B$, then the {\em clique-type of} $B$ is the function $\ctp^\fA[a]: \Pi_\Sigma \rightarrow \set{0,1,2}$ given by
\begin{equation*}
\ctp^\fA[a](\pi) = 
\begin{cases}
2 & \text{if $\pi$ is realized in $\fA$ by at least two elements of $B$}\\
1 & \text{if $\pi$ is realized in $\fA$ by exactly one element of $B$}\\
0 & \text{otherwise.}
\end{cases}
\end{equation*}
Intuitively, we should think of a clique type as a multi-set of fluted 1-types, with counting truncated at 2. 
We write $\pi \in \xi$ to mean that $\xi(\pi) \geq 1$, and treat $\xi$ as the set of fluted 1-types $\set{\pi \mid \pi \in \xi}$ 
where convenient, thus writing, for example $\xi \cup \Pi$ for  $\set{\pi \mid \pi \in \xi \mbox{ or } \pi \in \Pi}$,
and so on.  A {\em soliton clique-type} $\xi$ is one such that $\neg \hat{T} \in \bigcup \xi$. A  
\textit{clique-super-type} is a pair $(\xi, \Pi)$, where $\xi$ is a clique-type and $\Pi$ a set of fluted 1-types.
The {\em clique-super-type of} $a$ is the pair $\cstp^\fA[a]= (\ctp^\fA[a], \Pi)$, where
\begin{align*}
& \Pi = \set{\ftp^\fA[b] \mid \text{$\fA \models T[a,b]$ and $\fA \not \models T[b,a]$ for some $b \in A$}}.
\end{align*}
Intuitively, a clique-super-type is the type of some clique together with a specification of which 
fluted 1-types outside that clique can be reached via the predicate $T$.\nb{I: tiny change} 
If $B$ is a clique, then all elements of $B$ obviously have the same clique-type and the same clique-super-type,
denoted by $\ctp^\fA[B]$ and $\cstp^\fA[B]$, respectively.

We  now describe the principal data-structure used to test satisfiability of sets of basic $\FLotranstEqMinus$-formulas.
A {\em certificate} is a triple $\cC= \langle \Omega, \ll, V \rangle$, where $\Omega$ is a set of clique super-types,
$\ll$ a strict partial order on $\Pi_\Sigma$, and $V \subseteq \Pi_\Sigma$, subject to the following conditions:
\begin{enumerate}[(C1)]
\item [(C1)] if $\langle \xi, \Pi \rangle \in \Omega$ and $\pi' \in \Pi$, then there exists $\langle \xi', \Pi' \rangle \in \Omega$ 
such that\newline
(i) $\pi' \in \xi'$, (ii)  $\Pi' \cup \xi' \subseteq \Pi$, and (iii) $\xi \cap V \cap \Pi' = \emptyset$; 
\item [(C2)] if $\langle \xi, \Pi \rangle, \langle \xi', \Pi' \rangle \in \Omega$ are distinct, $\pi \in \xi$, $\pi' \in \xi'$ and $\pi \ll \pi'$, then
$\xi' \cup \Pi' \subseteq \Pi$;
\item [(C3)] if $\langle \xi, \Pi \rangle, \langle \xi', \Pi' \rangle \in \Omega$ and $\xi \cap \xi' \cap V \neq \emptyset$, then $\xi = \xi'$ and $\Pi = \Pi'$;
\item [(C4)] if $\langle \xi, \Pi \rangle \in \Omega$ and $\xi$ is a soliton clique-type, then there exists $\pi \in \Pi_\Sigma$ such that $\xi(\pi)= 1$ and
              $\xi(\pi') = 0$ for all $\pi' \in \Pi_\Sigma \setminus \set{\pi}$;
\item [(C5)] if $\langle \xi, \Pi \rangle \in \Omega$, $\pi' \in \xi$ and $\pi \ll \pi'$,  then $\pi \not \in \Pi$;
\item [(C6)] if $\langle \xi, \Pi \rangle \in \Omega$,  $\pi, \pi' \in \xi$ and $\pi \ll \pi'$ then $\xi \cap V \neq \emptyset$.
\end{enumerate}
If $\fA$ is a structure, then the {\em certificate of} $\fA$ is the tuple $\cC(\fA)= \langle \Omega, \ll, V \rangle$, 
where: $\Omega = \set{\cstp^\fA[a] \mid a \in A}$
is the set of clique-super-types realized in $\fA$; $\pi \ll \pi'$ if and only
if $\pi$ and $\pi'$ are realized in $\fA$, $\fA \models \forall(\pi \rightarrow \forall(\pi' \rightarrow T))$ and
$\fA \not \models \forall (\pi' \rightarrow \forall (\pi \rightarrow T))$; and
$V$ is the set of fluted 1-types realized in exactly one clique of $\fA$.
\begin{lemma}
	The relation $\ll$ in the construction of $\cC(\fA)$ is a strict partial order on $\Pi_\Sigma$.
	\label{lma:isaPOEq}
\end{lemma}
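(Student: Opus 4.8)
The plan is to verify directly the two defining conditions of a strict partial order --- irreflexivity and transitivity --- from which asymmetry follows automatically. First I would unwind the variable-free notation: the sentence $\forall(\pi \rightarrow \forall(\pi' \rightarrow T))$ asserts that $\fA \models T[a,b]$ for all $a,b$ with $\ftp^\fA[a]=\pi$ and $\ftp^\fA[b]=\pi'$. Hence $\pi \ll \pi'$ amounts to three facts: both $\pi$ and $\pi'$ are realized; every element of type $\pi$ is $T$-related to every element of type $\pi'$; and some element of type $\pi'$ fails to be $T$-related to some element of type $\pi$. Irreflexivity is then immediate, since $\pi \ll \pi$ would require $\fA$ both to satisfy and to falsify $\forall(\pi \rightarrow \forall(\pi \rightarrow T))$, which is absurd.

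For transitivity, assuming $\pi \ll \pi'$ and $\pi' \ll \pi''$, I would establish $\pi \ll \pi''$ by checking its three defining facts. Realization of $\pi$ and $\pi''$ is inherited from the hypotheses. For the ``positive'' clause --- that every element of type $\pi$ is $T$-related to every element of type $\pi''$ --- I would fix arbitrary $a$ of type $\pi$ and $b$ of type $\pi''$, interpose a bridging element $c$ of type $\pi'$ (which exists because $\pi'$ is realized), read off $\fA \models T[a,c]$ and $\fA \models T[c,b]$ from the positive clauses of the two hypotheses, and conclude $\fA \models T[a,b]$ by transitivity of $T^\fA$.

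The step I expect to be the crux is the ``strictness'' clause: that not every element of type $\pi''$ is $T$-related to every element of type $\pi$. I would argue by contradiction, assuming the opposite. From $\pi \ll \pi'$ I extract a witnessing pair $a_0$ (of type $\pi'$) and $b_0$ (of type $\pi$) with $\fA \not\models T[a_0,b_0]$; choosing any $c_0$ of type $\pi''$, the positive clause of $\pi' \ll \pi''$ gives $\fA \models T[a_0,c_0]$ and the contradiction hypothesis gives $\fA \models T[c_0,b_0]$, so transitivity of $T^\fA$ forces $\fA \models T[a_0,b_0]$ --- contradicting the choice of $a_0,b_0$. With both clauses in hand, $\pi \ll \pi''$ follows, and $\ll$ is a strict partial order on $\Pi_\Sigma$.

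The genuinely non-routine point, and the one I would flag, is the bookkeeping in this last step: the asymmetric witness must be drawn from the \emph{left} hypothesis $\pi \ll \pi'$ and routed through an arbitrary element of type $\pi''$, precisely so that transitivity of $T$ can close the loop back to the forbidden pair. Pulling the witness from $\pi' \ll \pi''$ instead would not compose with the transitivity step. Everything else --- the realization conditions, irreflexivity, and the positive direction --- is essentially syntactic once the variable-free notation has been expanded.
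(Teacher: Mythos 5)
Your proof is correct and follows essentially the same route as the paper's: the positive clause of $\pi \ll \pi''$ comes from interposing an element of type $\pi'$ and applying transitivity of $T^\fA$, and the strictness clause comes from assuming $\fA \models \forall(\pi'' \rightarrow \forall(\pi \rightarrow T))$ and composing with the positive clause of one hypothesis to contradict the strictness clause of the other (the paper contradicts $\pi' \ll \pi''$ where you contradict $\pi \ll \pi'$; both routings are valid). Your closing remark that drawing the asymmetry witness from $\pi' \ll \pi''$ ``would not compose with the transitivity step'' is mistaken --- that symmetric routing is precisely what the paper does, and it composes equally well --- but this side comment does not affect the correctness of the argument you actually give.
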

\begin{proof}
We need only check transitivity. Suppose, $\pi \ll \pi'$ and $\pi' \ll \pi''$. Trivially, $\fA \models \forall(\pi \rightarrow \forall(\pi'' \rightarrow T))$.
On the other hand, if we also have $\fA \models \forall(\pi'' \rightarrow \forall(\pi \rightarrow T))$, then $\fA \models \forall(\pi'' \rightarrow \forall(\pi' \rightarrow T))$,
contradicting $\pi' \ll \pi''$. Hence $\pi \ll \pi''$.
\end{proof}

\begin{lemma}
	If $\fA$ is any quadratic structure interpreting $\Sigma$, then $\cC(\fA)$ is a certificate.
	\label{lma:isaCerificateEq}
\end{lemma}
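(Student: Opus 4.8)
The plan is to verify the six conditions (C1)--(C6) one at a time, after setting up a dictionary that translates each clause about $\cC(\fA)=\langle\Omega,\ll,V\rangle$ into a statement about the cliques of $\fA$. The facts I will lean on throughout are those recorded in Section~\ref{sec:ftc}: every element lies in a unique clique; two distinct cliques are either completely $T$-related in one direction or completely unrelated, so $T^\fA$ induces a strict partial order on cliques; and a clique of size $>1$ consists entirely of elements $a$ with $\fA\models T[a,a]$. Writing $\cstp^\fA[B]=\langle\xi,\Pi\rangle$ for a clique $B$, I will repeatedly use that $\xi$ records exactly the $1$-types realized inside $B$ (with multiplicity capped at $2$) while $\Pi$ records exactly the $1$-types realized in cliques strictly above $B$; that $\pi\ll\pi'$ says every $\pi$-element $T$-points to every $\pi'$-element but not conversely (together with Lemma~\ref{lma:isaPOEq}, so $\ll$ is in particular irreflexive); and that $\pi\in V$ says $\pi$ is realized in precisely one clique.

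Conditions (C1)--(C3) are then essentially bookkeeping with transitivity. For (C1), given $\langle\xi,\Pi\rangle\in\Omega$ realized by a clique $B$ and a type $\pi'\in\Pi$, I take a clique $B'$ strictly above $B$ containing a witness of $\pi'$ and set $\langle\xi',\Pi'\rangle=\cstp^\fA[B']$; part~(i) is immediate, part~(ii) follows because everything in or above $B'$ is, by transitivity of $T$, also above $B$, and part~(iii) follows because a type in $\xi\cap V$ is realized only in the single clique $B$ and so cannot also be realized in a clique strictly above $B'$. For (C2), the edge witnessing $\pi\ll\pi'$ links the two cliques (necessarily distinct, since their super-types differ), and maximality of cliques upgrades this edge to strict domination of one clique by the other, after which $\xi'\cup\Pi'\subseteq\Pi$ again follows by transitivity. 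For (C3), a type in $\xi\cap\xi'\cap V$ is realized in a single clique yet lies in two cliques with super-types $\langle\xi,\Pi\rangle$ and $\langle\xi',\Pi'\rangle$, forcing those cliques---and hence the super-types---to coincide.

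Condition (C4) is a direct consequence of the structural fact above: a soliton clique-type $\xi$ contains a $1$-type carrying $\neg\hat{T}$, so its clique $B$ contains an element $a$ with $\fA\not\models T[a,a]$; since a clique of size $>1$ would force $\hat{T}$ on all its members, $B=\set{a}$ is a singleton and its clique-type has a single $1$-type of multiplicity $1$. Condition (C5) is a short contradiction argument: if $\pi\in\Pi$ then $\pi$ is realized by some $c$ in a clique strictly above $B$, so strict domination gives $\fA\not\models T[c,b]$ for a $\pi'$-witness $b\in B$, whereas $\pi\ll\pi'$ forces $\fA\models T[c,b]$.

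The real work, and the only place where quadraticity is used, is (C6). Here I must show that whenever a clique $B$ with $\cstp^\fA[B]=\langle\xi,\Pi\rangle$ realizes two $1$-types $\pi,\pi'\in\xi$ with $\pi\ll\pi'$, some $1$-type of $\xi$ lies in $V$. The key lemma I will prove first is that $B$ is \emph{determined} by the pair $\set{\pi,\pi'}$, i.e.\ it is the only clique realizing both. Suppose $B_1,B_2$ both realize $\pi$ and $\pi'$, with $\pi$-witnesses $a_1,a_2$ and $\pi'$-witnesses $b_1,b_2$ (and $a_i\neq b_i$, since $\pi\neq\pi'$ by irreflexivity of $\ll$). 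From $\pi\ll\pi'$ I get $\fA\models T[a_2,b_1]$ and $\fA\models T[a_1,b_2]$; combining these with the intra-clique edges $\fA\models T[b_1,a_1]$ and $\fA\models T[b_2,a_2]$ and using transitivity yields $\fA\models T[a_1,a_2]$ and $\fA\models T[a_2,a_1]$, so $a_1$ and $a_2$ lie in a common clique and $B_1=B_2$. Once $B$ is determined by $\set{\pi,\pi'}$, quadraticity supplies a single $1$-type $\pi^*$ such that $B$ is the unique clique realizing $\pi^*$; then $\pi^*\in\xi$ (it is realized in $B$) and $\pi^*\in V$ (it is realized in only one clique), so $\xi\cap V\neq\emptyset$, as required. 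I expect this determinacy argument, together with the correct invocation of quadraticity to convert ``$B$ is pinned down by a pair of types'' into ``$B$ is pinned down by a single type,'' to be the main obstacle; the remaining conditions are routine manipulations with the clique order and transitivity of $T$.
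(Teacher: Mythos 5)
Your proof is correct and follows essentially the same route as the paper's: each of (C1)--(C5) is verified by the same bookkeeping with the clique order and transitivity of $T$, and (C6) is the unique point where quadraticity enters. The only difference is that you explicitly prove the determinacy claim in (C6) (that the clique realizing both $\pi$ and $\pi'$ with $\pi\ll\pi'$ is unique, via the cross-edges and transitivity), a step the paper merely asserts with ``it follows''; your argument for it is valid.
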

\begin{proof}
	Write $\cC(\fA)= \langle \Omega, \ll, V \rangle$. By Lemma~\ref{lma:isaPOEq}, $\ll$ is  a strict partial order on $\Pi_\Sigma$. 
	We must check conditions ({C1})--({C6}). 
\medskip

\noindent
({C1}): Suppose $\langle \xi, \Pi \rangle \in \Omega$ and $\pi' \in \Pi$. Let $a$ be
such that $\cstp^\fA[a] = \langle \xi, \Pi \rangle$. Then there exists $b \in A$
such that $\ftp^\fA[b] = \pi'$ and $\fA \models T[a,b]$, but with 
$a$ and $b$ lying in different cliques. Let 
$\cstp^\fA[b] = \langle \xi', \Pi' \rangle$. Then: (i) $\langle \xi', \Pi' \rangle \in \Omega$ by construction of $\Omega$;
(ii) $\xi' \cup \Pi' \subseteq \Pi$ by transitivity of $T^\fA$; and (iii) if $\pi'' \in \xi \cap V$, then all
elements with fluted 1-type $\pi''$ lie in the same clique as $a$. Since $a$ and $b$ are not in the
same clique, $b$ cannot be related by $T$ to any of these elements, which is to say $\pi'' \not \in \Pi'$. 

\noindent
({C2}): Suppose  $\langle \pi, \Pi \rangle, \langle \pi', \Pi' \rangle \in \Omega$ are distinct, $\pi \in \xi$, $\pi' \in \xi'$ and $\pi \ll \pi'$.
Let $a, b \in A$ be such that 
$\cstp^\fA[a] = \langle \xi, \Pi \rangle$ and $\cstp^\fA[b] = \langle \xi', \Pi' \rangle$. If $\pi \ll \pi'$, then
$\fA \models T[a,b]$. Moreover, if $a$ and $b$ belong to different cliques, then $\xi' \cup \Pi' \subseteq \Pi$, by the transitivity of $T$.

\noindent
({C3}): 
Suppose $\langle \xi, \Pi \rangle, \langle \xi', \Pi' \rangle \in \Omega$ and $\xi \cap \xi' \cap V \neq \emptyset$. 
Let $a, b \in A$ be such that 
$\cstp^\fA[a] = \langle \xi, \Pi \rangle$ and $\cstp^\fA[b] = \langle \xi', \Pi' \rangle$. If there exists a fluted 1-type $\pi''$
realized both in the clique of $a$ and in the clique of $b$, and, moreover, in just one clique of $\fA$, then $a$ and $b$ are
in the same clique.

\noindent
({C4}):
Suppose $\langle \xi, \Pi \rangle \in \Omega$ and $\neg \hat{T} \in \bigcup \xi$. 
By construction, there exists $b \in A$ such that $\ctp^\fA[b] = \xi$, and $\fA \not \models \hat{T}[b]$.
But then $b$ is the only element of its clique, and we may set $\pi = \ftp^\fA[b]$.

\noindent
({C5}):
Suppose $\langle \xi, \Pi \rangle \in \Omega$, $\pi' \in \xi$ and $\pi \ll \pi'$. 
Let $a, a' \in A$ be such that 
$\cstp^\fA[a] = \langle \xi, \Pi \rangle$, $\ftp^\fA[a'] = \pi'$, and 
$a'$ is in the same clique as $a$. To show that $\pi \not \in \Pi$, 
we must show that, for all $b \in A$ such that $\ftp^\fA[b] = \pi$, either $\fA \not \models T[a,b]$
or $b$ is in the same clique as $a$. But this follows immediately from
$\pi \ll \pi'$.

\noindent
({C6}): 
Suppose $\langle \xi, \Pi \rangle \in \Omega$,  $\pi, \pi' \in \xi$ and $\pi \ll \pi'$. It follows that there is exactly one
clique of $\fA$, say $u$, in which $\pi$ and $\pi'$ are both realized, and that 
$\cstp^\fA[u] = \langle \xi, \Pi \rangle$. Since $\fA$ is, by assumption,  quadratic, there exists a fluted 1-type $\pi^* \in \xi$ realized
only in $u$. Thus $\xi \cap V \neq \emptyset$. 
\end{proof}

Now suppose $\cC= \langle \Omega, \ll, V \rangle$ is a certificate. We proceed to define a structure $\fA$.
As an aide to intuition, we give an informal sketch first. The domain $A$ is the disjoint union of sets $A_{\xi, \Pi}$,
where $(\xi, \Pi)$ ranges over $\Omega$; the elements of $A_{\xi, \Pi}$ will all be assigned the clique-super-type $(\xi, \Pi)$.
If $\xi$ contains no fluted 1-type $\pi$ such that $\pi \in V$, then $A_{\xi, \Pi}$ will consist\nb{I: tiny change} of infinitely many sets 
$A_{\xi, \Pi, i}$ ($i \geq 0$), referred to in the construction as `cells'. (It will later turn out that the cells are exactly the $T$-cliques.)
If, on the other hand, $\xi$ contains a fluted 1-type $\pi$ such that $\pi \in V$, then $A_{\xi, \Pi}$ will consist of a single cell $A_{\xi,\Pi,0}$.
Note that, in the latter case, there will only ever be a single pair $(\xi, \Pi) \in \Omega$  such that $\pi \in \xi$, by ({C3}). Each cell
$A_{\xi,\Pi,i}$ is in turn the disjoint union of sets $A_{\pi,\xi,\Pi,i}$, where $\pi$ ranges over the fluted 1-types in $\xi$. Each element of the
set $A_{\pi,\xi,\Pi,i}$ will be given fluted 1-type $\pi$, and this set has cardinality equal to $\xi(\pi)$ (i.e.~either 1 or 2).
Fig.~\ref{fig:modelConstruction} gives a schematic representation of the domain $A$, showing some representative sets $A_{\xi, \Pi}$;
here, $\xi$ contains the fluted 1-types $\pi_1$, $\pi_2$ and $\pi_3$ with the indicated multiplicities.\nb{: New paragraph break
	and new diagram.}
\begin{figure}
\begin{center}
\begin{tikzpicture}[scale=0.38]
\draw (-1,-1) rectangle (31,10);

\draw (0,0) rectangle (7,7);
\draw (2.5,8) node {$A_{\xi,\Pi}$};
\draw (1.5,0.5) node {$\pi_1$};
\draw (2.5,0.5) node {$\pi_2$};
\draw (3.5,0.5) node {$\pi_3$};
\draw (1,1) rectangle (4,2);
\draw (5.5,1.5) node {$A_{\xi,\Pi,0}$};
\filldraw (1.25,1.75) circle (0.15);
\filldraw (1.75,1.25) circle (0.15);
\draw (2,1) -- (2,2);
\filldraw (2.5,1.5) circle (0.15);
\draw (3,1) -- (3,2);
\filldraw (3.25,1.75) circle (0.15);
\filldraw (3.75,1.25) circle (0.15);
\draw (1,3) rectangle (4,4);
\draw (5.5,3.5) node {$A_{\xi,\Pi,1}$};
\filldraw (1.25,3.75) circle (0.15);
\filldraw (1.75,3.25) circle (0.15);
\draw (2,3) -- (2,4);
\filldraw (2.5,3.5) circle (0.15);
\draw (3,3) -- (3,4);
\filldraw (3.25,3.75) circle (0.15);
\filldraw (3.75,3.25) circle (0.15);
\draw (2.5,5.5) node {$\vdots$};

\draw (8.25,1.5) node {$\cdots$};

\draw (9.5,0) rectangle (15.5,3);
\draw (12.5,4) node {$A_{\xi',\Pi'}$};
\draw (10.5,1) rectangle (14.5,2);
\filldraw (10.75,1.75) circle (0.15);
\filldraw (11.25,1.25) circle (0.15);
\draw (11.5,1) -- (11.5,2);
\filldraw (11.75,1.75) circle (0.15);
\filldraw (12.25,1.25) circle (0.15);
\draw (12.5,1) -- (12.5,2);
\filldraw (12.75,1.75) circle (0.15);
\filldraw (13.25,1.25) circle (0.15);
\draw (13.5,1) -- (13.5,2);
\filldraw (14,1.5) circle (0.15);

\draw (17.5,1.5) node {$\cdots$};

\draw (20,0) rectangle (23,7);
\draw (21.5,8) node {$A_{\xi'',\Pi''}$};
\draw (21,1) rectangle (22,2);
\filldraw (21.5,1.5) circle (0.15);
\draw (21,3) rectangle (22,4);
\filldraw (21.5,3.5) circle (0.15);
\draw (21.5,5.5) node {$\vdots$};

\draw (25,1.5) node {$\cdots$};

\draw (27,0) rectangle (30,3);
\draw (28.5,4) node {$A_{\xi''',\Pi'''}$};
\draw (28,1) rectangle (29,2);
\filldraw (28.5,1.5) circle (0.15);
\end{tikzpicture}
\end{center}
\caption{Construction of the domain $A$ of $\fA(\cC)$ for $\cC$ a certificate.}
\label{fig:modelConstruction}
\end{figure}

The relation $T$
is defined as the transitive closure of the union of three relations, $t_0$, $t_1$ and $t_2$, each of which plays a specific role. The 
relation $t_0$ specifies $T$ within each cell, $A_{\xi,\Pi,i}$. As long as $\xi$ contains no fluted 1-type $\pi$ such that $\neg \hat{T} \in \pi$, we take
$t_0$ to be the total relation on $A_{\xi,\Pi,i}$. If, on the other hand, $\xi$ does contain a fluted 1-type $\pi$ such that $\neg \hat{T} \in \pi$, then we take
$t_0$ to be the empty relation on $A_{\xi,\Pi,i}$. Note that, in the latter case, $A_{\xi,\Pi,i}$ is in fact a singleton, by ({C4}). 
The relation $t_1$, in essence, secures the existential commitments required by the clique-super-types. Specifically, if
$a \in A_{\xi,\Pi,i}$ and $\pi' \in \Pi$, we select some $(\xi', \Pi') \in \Omega$ such that $\xi' \cup \Pi' \subseteq \Pi$ (possible by
({C1})), and choose cells included in $A_{\xi',\Pi'}$ whose elements will act as `witnesses' for the fact that $a$ has to be related 
by $T$ to something of type $\pi'$. We need to be careful which cells we choose, however, because there is a danger of creating loops in the
resulting graph of $t_1$-links, which would result in the merging of more than one cell into a single clique. To avoid such loops,
if $a \in A_{\xi,\Pi,i}$, we generally pick witnesses in $A_{\xi',\Pi',i+2}$ (so that, in particular, the last index increases). In one case, however,
we must break this rule: if $\xi'$ contains a fluted 1-type $\pi''$ such that $\pi'' \in V$, then $A_{\xi', \Pi',j}$ exists only for the value $j=0$, and we need to do some
work to ensure that unwanted loops do not arise here. Finally, the relation $t_2$ deals with the $T$-relations mandated by $\ll$. If $A_{\xi,\Pi,i}$ and $A_{\xi',\Pi',j}$ are distinct cells with $\pi \in \xi$ and $\pi' \in \xi'$, where $\pi \ll \pi'$, we take all elements of the former cell to be 
related by $t_2$ to all elements of the latter. Again, we need to do some work to ensure that this does not generate unwanted loops in the graph of
$t_1$- and $t_2$-links.

Turning to the formal definition of $\fA$, we begin with the construction of the domain, $A$. 
For all $(\xi,\Pi) \in \Omega$, 
all $\pi \in \xi$ and all $i \in \N$, let $a^+_{\pi, \xi, \Pi,i}$ and $a^-_{\pi, \xi, \Pi,i}$ be fresh objects. Set
\begin{align*}
A_{\pi, \xi, \Pi,i}  = & 
\begin{cases}
\set{a^+_{\pi, \xi, \Pi,i}, a^-_{\pi, \xi, \Pi,i}} & \text{if $\xi(\pi) = 2$}\\
\set{a^+_{\pi, \xi, \Pi,i}} & \text{otherwise (i.e. if $\xi(\pi) = 1$)}
\end{cases}\\
A_{\xi, \Pi,i}  = & \bigcup_{\pi \in \xi} A_{\pi, \xi, \Pi,i}\\
A_{\xi, \Pi}  = & 
\begin{cases}
\bigcup_{i \in \N} A_{\xi, \Pi,i} & \text{if $\xi \cap V = \emptyset$}\\
A_{\xi, \Pi,0}  & \text{otherwise}
\end{cases}\\
A  = &  \bigcup_{(\xi, \Pi) \in \Omega} A_{\xi, \Pi}.
\end{align*}
The sets $A_{\xi,\Pi,i}$ will be called {\em cells}. If
$\xi$ is a soliton clique-type, we call the cell $A_{\xi,\Pi,i}$ a {\em soliton-cell}.
It follows from ({C4}) that, in this case, $A_{\xi,\Pi,i} = \set{a^+_{\pi, \xi, \Pi,i}}$ for some
fluted 1-type $\pi$. 
Note that the converse does not hold: it is perfectly feasible for the cell $A_{\xi,\Pi,i}$ to consist of
the single element $a^+_{\pi, \xi, \Pi,i}$ even though $\hat{T} \in \pi$. 
Having defined $A$, we may set the extensions of all the ordinary (unary) predicates by
stipulating $\ftp^\fA[a]= \pi$ for all $a= a^p_{\pi, \xi, \Pi,i} \in A$, where $p \in \set{+,-}$. 

It remains only to set the extension of the distinguished predicate $T$. 
To this end, we define three binary relations, $t_0$, $t_1$ and $t_2$. 
Let $a= a^p_{\pi, \xi, \Pi,i}$ and $a'= a^{p'}_{\pi', \xi', \Pi',j}$; and let
$u= A_{\xi,\Pi,i}$ and $v= A_{\xi',\Pi',j}$ be the respective cells of $a$ and $a'$.
We declare
$t_0(a,a')$ if and only if $u=v$ (i.e.~$\xi= \xi'$, $\Pi = \Pi'$, and $i= j$), and $\xi$ is not a soliton clique-type. 
That is:
$t_0$ holds between pairs of elements in the same non-soliton cell. 
Now declare $t_1(a,a')$ if (a) $\xi' \cup \Pi' \subseteq \Pi$; (b) $\xi' \cap V = \emptyset \Rightarrow j \geq i+2$; and
(c) $\xi \cap V \cap \Pi' = \emptyset$. Note that the relation $t_1$ depends only on the \textit{cells} of its 
relata: that is to say,
if $b \in u$  and $b' \in v$,
then $t_1(a,a')$ implies $t_1(b,b')$. There being no ambiguity, we shall write, in this case, $t_1(u,v)$. 
Finally, declare
$t_2(a,a')$ if $u \neq v$ and, for some fluted 1-types $\pi \in \xi$ and $\pi' \in \xi'$, we have $\pi \ll \pi'$.
Again, we write in this case $t_2(u,u')$, since this relation depends only on the cells of its relata. Having defined the relations
$t_0$, $t_1$ and $t_2$, we let $T^\fA$ be the transitive closure of $t_0 \cup t_1 \cup t_2$.
We denote the structure $\fA$, constructed from the certificate $\cC$ as just described, by $\fA(\cC)$. Notice that $\fA(\cC)$ will in general be infinite.

We must check that $\fA(\cC)$ interprets the predicates $T$ and $\hat{T}$ consistently. Lemmas~\ref{lma:noLoopt1}--\ref{lma:tIsOK} do precisely this.
\begin{lemma}
If $t_1(a,a')$, then $a$ and $a'$ occupy different cells of $A$.
\label{lma:noLoopt1}
\end{lemma}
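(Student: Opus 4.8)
The plan is to argue by contradiction, assuming that $a$ and $a'$ occupy the same cell and deriving an inconsistency directly from the three defining conditions of $t_1$. Write $a = a^p_{\pi, \xi, \Pi,i}$ and $a' = a^{p'}_{\pi', \xi', \Pi',j}$, with cells $u = A_{\xi,\Pi,i}$ and $v = A_{\xi',\Pi',j}$. By the construction of the domain, $u = v$ holds precisely when $\xi = \xi'$, $\Pi = \Pi'$, and $i = j$. So I would suppose all three equalities hold simultaneously and show that this is incompatible with $t_1(a,a')$.

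First I would split on whether $\xi' \cap V = \emptyset$. If it is, then condition (b) in the definition of $t_1$ forces $j \geq i+2$; but the assumption $i = j$ then yields $i \geq i+2$, which is absurd. Hence this sub-case cannot coexist with $u = v$.

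The second sub-case is where $\xi' \cap V \neq \emptyset$, and here condition (c) does the work. Pick any $\pi^* \in \xi' \cap V$. Condition (a) gives $\xi' \subseteq \Pi$, and since we are assuming $\Pi = \Pi'$, we obtain $\pi^* \in \Pi'$. Combining this with $\pi^* \in \xi' = \xi$ and $\pi^* \in V$ produces $\pi^* \in \xi \cap V \cap \Pi'$, directly contradicting condition (c). Since both sub-cases are impossible, the assumption $u = v$ is untenable, and therefore $a$ and $a'$ lie in distinct cells.

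There is no real obstacle here: the argument is a short case analysis straight from the definitions. The one point worth flagging is the division of labour between conditions (b) and (c). Condition (b) only blocks a self-loop when $\xi' \cap V = \emptyset$ (the case in which cells $A_{\xi',\Pi',j}$ exist for infinitely many indices $j$); when $\xi' \cap V \neq \emptyset$ there is only the single cell with $j = 0$, so index-incrementing is unavailable and condition (c) must instead exclude the degenerate link. Recognizing that these two conditions jointly rule out same-cell $t_1$-edges is exactly the mechanism the construction uses to prevent cells from being merged into a single clique, as anticipated in the informal discussion of $t_1$.
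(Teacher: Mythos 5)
Your proof is correct and follows essentially the same route as the paper's: the paper derives $\xi\cap V\neq\emptyset$ directly from condition (b) (since $i=j$ falsifies $j\geq i+2$) and then combines $\xi\subseteq\Pi$ from (a) with condition (c) to get the contradiction, which is exactly your two-case analysis collapsed into one chain of implications. No gaps.
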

\begin{proof}
Suppose for contradiction that $t_1(a,a')$ with $a= a^p_{\pi, \xi, \Pi,i}$ and $a'= a^{p'}_{\pi', \xi, \Pi,i}$. 
By condition (a) in the definition of
$t_1$, we have $\xi \subseteq \Pi$, and, by condition (b), we have,
$\xi \cap V \neq \emptyset$, whence $\xi \cap V \cap \Pi \neq \emptyset$, contradicting condition (c).
\end{proof}
 
Now consider the directed graph on the set of cells of $A$ defined by the relation $t_1 \cup t_2$. We show that this graph is acyclic. It follows 
that the cells (both soliton and non-soliton) are the cliques of the relation $T^\fA$, and hence that
$T^\fA$ induces a strict partial order on these cells.
\begin{lemma}
Suppose $u_0, \dots, u_k$ \textup{(}$k \geq 1$\textup{)} is a sequence of cells such that, for all $h$ \textup{(}$0 \leq h < k$\textup{)} either $t_1(u_h,u_{h+1})$ or
$t_2(u_h,u_{h+1})$. Writing $u_h= A_{\xi_h,\Pi_h,i_h}$ for all $h$ \textup{(}$0 \leq h \leq k$\textup{)}, we have $\xi_k \cup \Pi_k \subseteq \Pi_0$.
\label{lma:accumulate}
\end{lemma}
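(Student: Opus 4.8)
The plan is to reduce the lemma to a single-edge statement and then chain the resulting inclusions. Concretely, I would first isolate the following one-step claim: whenever $u = A_{\xi,\Pi,i}$ and $u' = A_{\xi',\Pi',j}$ are cells with $t_1(u,u')$ or $t_2(u,u')$, we have $\xi' \cup \Pi' \subseteq \Pi$. Granting this claim, the lemma follows at once. Applying it to each edge $u_h, u_{h+1}$ of the path yields $\xi_{h+1} \cup \Pi_{h+1} \subseteq \Pi_h$ for every $h$ with $0 \leq h < k$, and in particular $\Pi_{h+1} \subseteq \Pi_h$; hence $\Pi_k \subseteq \Pi_{k-1} \subseteq \cdots \subseteq \Pi_0$, and combining this descending chain with the inclusion $\xi_k \cup \Pi_k \subseteq \Pi_{k-1}$ coming from the final edge gives $\xi_k \cup \Pi_k \subseteq \Pi_0$. (Equivalently, one may phrase the passage from the one-step claim to the general statement as a routine induction on $k$, the base case $k=1$ being the claim itself.)

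The substance of the argument is therefore the one-step claim, which splits according to the type of the edge. If $t_1(u,u')$, then condition (a) in the definition of $t_1$ states exactly that $\xi' \cup \Pi' \subseteq \Pi$, so there is nothing further to do; note that conditions (b) and (c) play no role here, as they serve only the later acyclicity arguments. If instead $t_2(u,u')$, then by definition $u \neq u'$ and there are fluted 1-types $\pi \in \xi$ and $\pi' \in \xi'$ with $\pi \ll \pi'$. Here I would invoke condition ({C2}), which delivers precisely $\xi' \cup \Pi' \subseteq \Pi$ --- but ({C2}) requires the clique-super-types $\langle \xi, \Pi \rangle$ and $\langle \xi', \Pi' \rangle$ to be \emph{distinct}, whereas the definition of $t_2$ guarantees only that the two \emph{cells} differ.

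The main obstacle is thus to rule out the degenerate $t_2$-case in which the cells differ yet $\langle \xi, \Pi \rangle = \langle \xi', \Pi' \rangle$. I would dispose of it as follows. Suppose $\xi = \xi'$ and $\Pi = \Pi'$. Then the witnessing types satisfy $\pi, \pi' \in \xi$ with $\pi \ll \pi'$, so condition ({C6}) forces $\xi \cap V \neq \emptyset$. But by the construction of the domain, any clique-super-type $\langle \xi, \Pi \rangle$ with $\xi \cap V \neq \emptyset$ contributes only the single cell $A_{\xi,\Pi,0}$; since $u$ and $u'$ both carry this clique-super-type, they would each have to be that one cell, contradicting $u \neq u'$. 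Hence in the $t_2$-case the clique-super-types are genuinely distinct, ({C2}) applies, and the one-step claim is established. In sum, only condition (a) of $t_1$, together with the certificate conditions ({C2}) and ({C6}) and the cell-counting feature of the construction, are needed.
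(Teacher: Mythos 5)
Your proof is correct and follows essentially the same route as the paper's: an induction on $k$ whose base case treats a single edge, handling $t_1$-edges via condition (a) and $t_2$-edges via ({C2}). In fact you are slightly more careful than the paper, which applies ({C2}) without checking its precondition that the two clique-super-types be distinct; your observation that the degenerate case $\langle \xi,\Pi\rangle=\langle \xi',\Pi'\rangle$ is ruled out by ({C6}) together with the fact that $\xi\cap V\neq\emptyset$ forces $A_{\xi,\Pi}$ to consist of the single cell $A_{\xi,\Pi,0}$ is exactly the right way to close that small gap.
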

\begin{proof}
We proceed by induction on $k$. For the base case ($k=1$) if $t_1(u_0,u_1)$, then the result is immediate by (a) in the definition of $t_1$. If $t_2(u_0,u_1)$, then there exist
$\pi_0 \in \xi_0$ and $\pi_1 \in \xi_1$ such that $\pi_0 \ll \pi_1$. The result then follows from ({C2}). For the inductive case ($k>1$), we have by inductive hypothesis,
$\xi_{k-1} \cup \Pi_{k-1} \subseteq \Pi_0$; and from the base case applied to the sequence $u_{k-1}, u_k$, we have $\xi_k \cup \Pi_k \subseteq \Pi_{k-1}$.
\end{proof}

\begin{lemma}
There exists no sequence of cells
$u_0, \dots, u_k= u_0$ \textup{(}$k \geq 2$\textup{)} such that, for all $h$ \textup{(}$0 \leq h < k$\textup{)} either $t_1(u_h,u_{h+1})$ or
$t_2(u_h,u_{h+1})$. 
\label{lma:noLoop12}
\end{lemma}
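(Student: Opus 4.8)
The plan is to argue by contradiction: suppose such a closed sequence $u_0, \dots, u_k = u_0$ ($k \geq 2$) exists, writing $u_h = A_{\xi_h, \Pi_h, i_h}$ as in Lemma~\ref{lma:accumulate}. The first move is to extract rigid structural information from the mere fact that the walk closes up. Applying the base case ($k=1$) of Lemma~\ref{lma:accumulate} to each individual edge $u_h \to u_{h+1}$ (which holds for both $t_1$- and $t_2$-edges) yields $\xi_{h+1} \cup \Pi_{h+1} \subseteq \Pi_h$, and in particular $\Pi_{h+1} \subseteq \Pi_h$. Hence the chain $\Pi_0 \supseteq \Pi_1 \supseteq \cdots \supseteq \Pi_k = \Pi_0$ must collapse, forcing a common value $\Pi_0 = \cdots = \Pi_k =: \Pi$. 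Moreover, applying the same base case to the edge entering each $u_h$ gives $\xi_h \subseteq \Pi$ for every $h$. Thus in a hypothetical cycle all cells share the single set $\Pi$ of outgoing $T$-types, and each clique-type is contained in it.

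Next I would show that a cycle can contain no $t_2$-edge, using condition (C5). Suppose $t_2(u_h, u_{h+1})$; by definition there are $\pi \in \xi_h$ and $\pi' \in \xi_{h+1}$ with $\pi \ll \pi'$. Applying (C5) to the clique-super-type $\langle \xi_{h+1}, \Pi \rangle \in \Omega$, the presence of $\pi' \in \xi_{h+1}$ together with $\pi \ll \pi'$ forces $\pi \notin \Pi$; but $\pi \in \xi_h \subseteq \Pi$, a contradiction. Hence every edge of the putative cycle is a $t_1$-edge.

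It then remains to rule out a purely-$t_1$ cycle, which is where conditions (b) and (c) in the definition of $t_1$ come in. For each edge $u_h \to u_{h+1}$, condition (c) reads $\xi_h \cap V \cap \Pi_{h+1} = \emptyset$; since $\Pi_{h+1} = \Pi$ and $\xi_h \subseteq \Pi$, this simplifies to $\xi_h \cap V = \emptyset$. As every cell in the cycle has an outgoing $t_1$-edge, we conclude $\xi_h \cap V = \emptyset$ for all $h$. Feeding this back into condition (b) — whose hypothesis $\xi_{h+1} \cap V = \emptyset$ is now met at every step — gives $i_{h+1} \geq i_h + 2$, so the indices strictly increase: $i_0 < i_1 < \cdots < i_k$. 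This contradicts $i_k = i_0$, completing the argument.

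The only genuinely delicate point is the opening bookkeeping: it is the closure of the walk that makes the nesting $\Pi_0 \supseteq \cdots \supseteq \Pi_k = \Pi_0$ collapse to a single $\Pi$, and only with this common $\Pi$ and the containments $\xi_h \subseteq \Pi$ in hand do conditions (C5) and (c) reduce to the clean statements exploited above. Once that is set up, the elimination of $t_2$-edges and the index-monotonicity of $t_1$-edges are each short.
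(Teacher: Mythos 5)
Your proof is correct and follows essentially the same route as the paper's: Lemma~\ref{lma:accumulate} collapses the $\Pi_h$ to a common $\Pi$ with $\xi_h \subseteq \Pi$, condition (C5) rules out $t_2$-edges, and conditions (b) and (c) of $t_1$ rule out the remaining all-$t_1$ cycle. The only (harmless, arguably cleaner) difference is in the last step: you first use (c) globally to get $\xi_h \cap V = \emptyset$ for every cell and then let (b) force the indices to increase strictly, whereas the paper rotates the cycle to a position where the index fails to increase and derives a local clash between (b) and (c).
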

\begin{proof}
Suppose for contradiction that such a sequence exists, again writing $u_h= A_{\xi_h,\Pi_h,i_h}$ for all $h$ ($0 \leq h \leq k$). By Lemma~\ref{lma:accumulate}, 
$\Pi_0 = \cdots = \Pi_{k} = \Pi$, say, and $\xi_h \in \Pi$ for all $h$ ($0 \leq h \leq k$). It follows that we cannot have $t_2(u_h,u_{h+1})$ for any $h$ ($0 \leq h < k$),
since, if there exist $\pi_h \in \xi_h$ and $\pi_{h+1} \in \xi_{h+1}$ with $\pi_h \ll \pi_{h+1}$, then, by ({C5}), $\pi_{h+1} \not \in \Pi_h = \Pi$, contradicting $\xi_{h+1} \subseteq \Pi$.
Thus, we may assume that $t_1(u_h,u_{h+1})$ for all $h$ ($0 \leq h < k$). Necessarily, $i_{h+1} \leq i_h$ for some $h$ in the same range; 
indeed, by rotating the original sequence if necessary, we may assume without loss of generality that $h < k-1$.
By (b) in the definition of $t_1$, $\xi_{h+1} \cap V \neq \emptyset$, and by (c), $\xi_{h+1} \cap V \cap \Pi_{h+2} = \emptyset$. But we have just argued that
$\xi_{h+1} \subseteq \Pi$ and  $\Pi_{h+2}=  \Pi$. This is a contradiction.
\end{proof}
\begin{lemma}
In the structure $\fA= \fA(\cC)$, we have $\hat{T}^\fA = \set{a \in A \mid \fA \models T[a,a]}$.
\label{lma:tIsOK}
\end{lemma}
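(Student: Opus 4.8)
The plan is to reduce the claim to the combinatorics of the three relations $t_0$, $t_1$, $t_2$ generating $T^\fA$, exploiting the acyclicity results already established. Fix $a \in A$, say $a = a^p_{\pi,\xi,\Pi,i}$, and recall that $\hat{T}^\fA[a]$ holds if and only if $\hat{T} \in \pi = \ftp^\fA[a]$, since $\hat{T}$ is a unary predicate whose extension is fixed by the assigned fluted 1-types. Hence it suffices to prove, for every $a$, the equivalence $\hat{T} \in \ftp^\fA[a] \Leftrightarrow \fA \models T[a,a]$. I would further recall from Lemmas~\ref{lma:noLoopt1}--\ref{lma:noLoop12} that the cells are precisely the cliques of $T^\fA$; in particular the directed graph on cells induced by $t_1 \cup t_2$ is acyclic, and both $t_1$ and $t_2$ relate only \emph{distinct} cells (Lemma~\ref{lma:noLoopt1} for $t_1$, and the definition of $t_2$).

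The heart of the argument is the intermediate claim that $\fA \models T[a,a]$ if and only if the cell of $a$ is a non-soliton cell. For the easy direction, if the cell of $a$ has a non-soliton clique-type then $t_0$ is the total relation on it, so in particular $t_0(a,a)$ holds and hence $\fA \models T[a,a]$ (this covers a non-soliton singleton cell as well). For the converse, suppose $\fA \models T[a,a]$; since $T^\fA$ is the transitive closure of $t_0 \cup t_1 \cup t_2$, there is a walk $a = b_0, b_1, \dots, b_n = a$ of length at least one along these relations. Reading off the cells visited, any step that stays inside one cell is a $t_0$-step, while any step that changes cell is a $t_1$- or $t_2$-step. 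Contracting the within-cell steps yields a closed cell-sequence $u_0, \dots, u_k = u_0$ in which consecutive cells are distinct and joined by $t_1$ or $t_2$. If the original walk used any $t_1$- or $t_2$-step, then $k \geq 2$, contradicting Lemma~\ref{lma:noLoop12}. Hence every step is a $t_0$-step; but $t_0$ is empty on soliton cells, so the cell of $a$ must be non-soliton.

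Finally, I would match this dichotomy against the occurrence of $\hat{T}$ in $\ftp^\fA[a]$. If the cell of $a$ is non-soliton, then by definition $\neg\hat{T} \notin \bigcup \xi$, so $\hat{T} \in \pi'$ for every $\pi' \in \xi$, and in particular $\hat{T} \in \ftp^\fA[a]$; combined with the intermediate claim this yields $\hat{T} \in \ftp^\fA[a]$ and $\fA \models T[a,a]$ together. If the cell of $a$ is a soliton cell, then $\xi$ is a soliton clique-type, so by (C4) it consists of a single fluted 1-type $\pi$ of multiplicity one; since $\bigcup \xi = \pi$ and $\xi$ is soliton, $\neg\hat{T} \in \pi = \ftp^\fA[a]$, so $\hat{T} \notin \ftp^\fA[a]$, while the intermediate claim gives $\fA \not\models T[a,a]$. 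In either case the desired equivalence holds, establishing the set equality $\hat{T}^\fA = \set{a \in A \mid \fA \models T[a,a]}$.

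I expect the main obstacle to be the converse direction of the intermediate claim: justifying that a $T^\fA$-self-loop cannot be produced by any detour through other cells. The delicate point is the bookkeeping of contracting the $t_0$-steps and confirming that the resulting cell-sequence is exactly of the form forbidden by Lemma~\ref{lma:noLoop12} (length at least two, consecutive pairs joined by $t_1$ or $t_2$), rather than degenerating to a single cell where no contradiction is available.
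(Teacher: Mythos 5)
Your proof is correct and follows essentially the same route as the paper's: both directions rest on (C4) (a soliton clique-type consists of a single fluted 1-type containing $\neg\hat{T}$), the definition of $t_0$ as total on non-soliton cells and empty on soliton ones, and Lemma~\ref{lma:noLoop12} to rule out a self-loop via other cells. Your only departure is organizational --- you factor the argument through the dichotomy ``$\fA \models T[a,a]$ iff the cell of $a$ is non-soliton'' and spell out the contraction of a $T$-witnessing walk into a closed cell-sequence of length at least two, a step the paper leaves implicit.
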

\begin{proof}
Fix $a \in A_{\pi,\xi,\Pi,i}$. If $\fA \models \hat{T}[a]$, then
$\hat{T} \in \pi$, whence, by ({C4}), $\xi$ is not a soliton clique type. Hence $t_0(a,a)$, and $\fA \models T[a,a]$.
Conversely, if $\fA \not \models \hat{T}[a]$, then $\neg \hat{T} \in \pi$, so that $\xi$ is certainly a soliton type, and $a$ is not related to itself by $t_0$. On the other hand, by Lemma~\ref{lma:noLoop12}, there is no sequence of cells
$u_0, \dots, u_k$ \textup{(}$k \geq 2$\textup{)} with $a \in u_0 = u_k$,
such that, for all $h$ \textup{(}$0 \leq h < k$\textup{)}, either $t_1(u_h,u_{h+1})$ or
$t_2(u_h,u_{h+1})$. Since $T^\fA$ is the transitive closure of $t_0 \cup t_1 \cup t_2$, we see that $\fA \not \models T[a,a]$, as required.
\end{proof}

Thus, from a quadratic structure $\fA$, we can define a certificate $\cC(\fA)$, and from a certificate $\cC$, we can define a structure $\fA(\cC)$. (It is easy to see that
$\fA$ will in fact be quadratic, though this is inessential.) 
Let $\cC = \langle \Omega, \ll, V \rangle$ be a certificate and $\psi$ a basic formula. We next
define a relation $\models$ of \textit{satisfaction} between these relata. In this definition,
for any fluted 1-type $\pi$, we say that $\pi$ {\em occurs} in $\cC$ if, there exists $(\xi, \Pi) \in \Omega$ such that $\pi \in \xi$.

\begin{enumerate}
\item $\psi$ is $\forall (\pi \rightarrow \exists(\mu \wedge T\; \wedge \neq))$: $\cC \models \psi$ if and only if, for all $(\xi,\Omega) \in \Omega$, with $\pi \in \xi$,
either (i) $\models \pi \rightarrow \mu$ and $\xi(\pi)=2$; or (ii) there exists 
$\pi' \in \xi$ such that $\pi' \neq \pi$ and $\models \pi' \rightarrow \mu$; or (iii) there exists $\pi' \in \Pi$ such that
$\models \pi' \rightarrow \mu$.
\item $\psi$ is $\forall (\pi \rightarrow \exists (\mu \wedge \neg T\; \wedge \neq))$: $\cC \models \psi$ if and only if, for all $\langle \xi, \Pi \rangle \in 
\Omega$ with $\pi \in \xi$, there exists $\langle \xi', \Pi' \rangle \in \Omega$ such that
(i) $\models \pi' \rightarrow \mu$;
(ii) there exist no $\pi'' \in \Pi$ and $\pi''' \in \xi'$ such that $\pi'' \ll  \pi'''$; (iii)
$\xi' \cap \Pi \cap V = \emptyset$; and (iv) $(\xi,\Pi) = (\xi',\Pi') \Rightarrow \xi \cap V = \emptyset$. 
\item $\psi$ is $\forall (\pi \rightarrow \forall (\pi' \rightarrow T))$, where $\pi \neq \pi'$:  $\cC \models \psi$ if and only if one of the following obtains:
(i) one of $\pi$ or $\pi'$ does not occur in $\cC$; (ii) 
$\pi \ll \pi'$; or (iii) for all $(\xi,\Pi), (\xi',\Pi') \in \Omega$ such that $\pi \in \Pi$ and $\pi' \in \xi'$, we have $\xi = \xi'$, $\Pi = \Pi'$ and $\xi \cap V \neq \emptyset$.
\item $\psi$ is $\forall (\pi \rightarrow \forall(\pi' \rightarrow \neg T))$, where $\pi \neq \pi'$: $\cC \models \psi$ if and only if
for all $\langle \xi, \Pi \rangle \in \Omega$ such that $\pi \in \xi$, $\pi' \not \in \xi \cup \Pi$.
\item $\psi$ is $\forall (\pi \rightarrow \forall(\pi \rightarrow (= \vee\; T)))$: $\cC \models \psi$ if and only if
there is at most one $\langle \xi, \Pi \rangle \in \Omega$ such that $\pi \in \xi$, and, 
if such a $\langle \xi, \Pi \rangle$ exists, then $\xi \cap V \neq \emptyset$. 
\item $\psi$ is $\forall (\pi \rightarrow \forall(\pi \rightarrow (= \vee\; \neg T)))$: $\cC \models \psi$ if and only if
for all $\langle \xi, \Pi \rangle \in \Omega$, $\pi \not \in \xi \cap \Pi$, and $\xi(\pi) \leq 1$.
\item $\psi$ is $\forall \mu$: $\cC \models \psi$ if and only if, for all
$\langle \xi, \Pi \rangle \in \Omega$ and $\pi \in \xi$, $\models \pi \rightarrow \mu$.
\item $\psi$ is $\exists \mu$: $\cC \models \psi$ if and only if there exist
$\langle \xi, \Pi \rangle \in \Omega$ and $\pi \in \xi$ such that $\models \pi \rightarrow \mu$.
\end{enumerate}

Finally, we show that satisfaction of formulas by certificates corresponds to satisfaction of formulas by structures in the sense captured
by the following two lemmas.
\begin{lemma}
Let $\psi$ be a basic formula, and suppose $\fA \models \psi$ for some quadratic structure $\fA$. Then $\cC(\fA) \models \psi$.
\label{lma:correctnessEq}
\end{lemma}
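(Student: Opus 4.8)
The plan is to argue by cases on the eight syntactic forms (B1)--(B8) of the basic formula $\psi$, in each case unfolding the first-order meaning of $\fA \models \psi$ and the matching clause in the definition of $\cC(\fA) \models \psi$, and then checking that the former implies the latter. Throughout I write $\cC(\fA) = \langle \Omega, \ll, V \rangle$, and I rely on three standing facts recorded earlier: elements of a common clique are pairwise $T$-related; $T^\fA$ relates cliques uniformly, so that $\fA \models T[a,b]$ holds or fails according only to the cliques of $a$ and $b$; and $T^\fA$ is transitive. The set $\Pi$ in a clique-super-type records exactly the $1$-types reachable by $T$ from outside the clique (with no back-edge), $\pi \ll \pi'$ records a strictly-upward $T$-link between the cliques of $\pi$- and $\pi'$-elements, and $V$ collects the $1$-types realized in a single clique; the standing quadraticity assumption on $\fA$ is available wherever a clique must be identified by a single $1$-type.

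I would dispatch the five non-existential, non-ordering cases first, as they are direct. For (B7) ($\forall \mu$) and (B8) ($\exists \mu$) one reads off the relevant $1$-type from, respectively, an arbitrary or a witnessing element, using that $\mu$ is determined by the $1$-type. For (B4) ($\forall(\pi \rightarrow \forall(\pi' \rightarrow \neg T))$) and (B6) ($\forall(\pi \rightarrow \forall(\pi \rightarrow (= \vee\; \neg T)))$) the certificate conditions are negative ($\pi' \notin \xi \cup \Pi$, resp.\ $\pi \notin \xi \cap \Pi$ and $\xi(\pi) \leq 1$); each is verified by assuming the forbidden membership, producing two elements of the relevant types that transitivity plus clique-uniformity force to be $T$-related, and contradicting the hypothesis. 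For (B5) ($\forall(\pi \rightarrow \forall(\pi \rightarrow (= \vee\; T)))$), the point is that any two distinct $\pi$-elements are mutually $T$-related, hence lie in one clique; so $\pi$ occurs in a unique clique, whence there is at most one clique-super-type with $\pi \in \xi$ and $\pi \in V$, giving $\xi \cap V \neq \emptyset$.

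For (B1) ($\forall(\pi \rightarrow \exists(\mu \wedge T\; \wedge \neq))$) I would take any $a$ with $\cstp^\fA[a] = \langle \xi, \Pi \rangle$ and $\pi \in \xi$, pick the guaranteed $T$-witness $b \neq a$ with $\ftp^\fA[b] \models \mu$, and classify it: if $b$ lies in $a$'s clique with $\ftp^\fA[b] = \pi$ then $\xi(\pi) = 2$, sub-case (i); if $b$ lies in $a$'s clique with a different type, sub-case (ii); and if $b$ lies outside the clique then, as $b$ cannot be $T$-related back to $a$, we have $\ftp^\fA[b] \in \Pi$, giving (iii). For the ordering case (B3) ($\forall(\pi \rightarrow \forall(\pi' \rightarrow T))$ with $\pi \neq \pi'$), I would split on whether the reverse inclusion $\fA \models \forall(\pi' \rightarrow \forall(\pi \rightarrow T))$ also holds. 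If $\pi$ or $\pi'$ is unrealized we land in (i); if the reverse fails we obtain exactly $\pi \ll \pi'$, i.e.\ (ii); and if the reverse holds, then all $\pi$- and $\pi'$-elements are mutually $T$-related, hence confined to a single common clique, so $\pi, \pi' \in V$ and the unique clique-super-type carrying them meets the requirements of (iii).

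The \emph{main obstacle} is (B2) ($\forall(\pi \rightarrow \exists(\mu \wedge \neg T\; \wedge \neq))$), whose certificate clause carries four conditions (i)--(iv). Given $a$ with $\pi \in \xi$, the first-order hypothesis supplies a witness $b \neq a$ with $\fA \not\models T[a,b]$ and $\ftp^\fA[b] \models \mu$; since $b$ is not $T$-related to $a$ it lies in a different clique, and I take $\langle \xi', \Pi' \rangle = \cstp^\fA[b]$, which at once gives (i). The crux is (ii): I would argue by contradiction, assuming $\pi'' \in \Pi$ and $\pi''' \in \xi'$ with $\pi'' \ll \pi'''$. Then $a$ $T$-reaches some $\pi''$-element $c$ outside its clique, the relation $\pi'' \ll \pi'''$ forces $c$ to be $T$-related to the $\pi'''$-element $d$ in $b$'s clique, and transitivity yields $\fA \models T[a,d]$; clique-uniformity then upgrades this to $\fA \models T[a,b]$, the desired contradiction. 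Conditions (iii) and (iv) follow the same template but exploit the defining property of $V$: a $1$-type realized in a single clique cannot be reached from $a$ without placing that clique $T$-above $a$ (again yielding $T[a,b]$), and it cannot be shared between $a$'s clique and $b$'s distinct clique. Assembling these eight verifications completes the argument.
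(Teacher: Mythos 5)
Your proposal is correct and follows essentially the same route as the paper's own proof: a case analysis over the eight basic forms, reading off the certificate data from a witnessing element's clique-super-type and, in the harder cases (B2)(ii)--(iv), deriving $\fA\models T[a,b]$ by transitivity and clique-uniformity to contradict the hypothesis. The sub-case structure and the use of $\Pi$, $\ll$ and $V$ match the paper's argument throughout.
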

\begin{proof}
Write $\cC(\fA) = \langle \Omega, \ll, V \rangle$. We consider the forms of $\psi$ in turn.
\begin{enumerate}[1.]
\item $\psi$ is $\forall(\pi \rightarrow \exists(\mu \wedge T\; \wedge \neq))$:  Suppose $\fA \models \psi$ and
$(\xi,\Pi) \in \Omega$ with $\pi \in \xi$. Let $a \in A$ be such that $\cstp^\fA[a]= (\xi,\Pi)$ and $\ftp^\fA[a]= \pi$.
Pick $b \in A \setminus \set{a}$ such that $\fA \models \mu[b]$ and $\fA \models T[a,b]$, and let
$\ftp^\fA[b] = \pi'$. Thus, $\models \pi' \rightarrow \mu$. (i) If $a$ and $b$ are in the
same clique of $\fA$ and $\pi = \pi'$, then $\models \pi \rightarrow \mu$, and $\xi(\pi)= 2$. 
(ii) If $a$ and $b$ are in the same clique, but $\pi' \neq \pi$, then $\pi' \in \xi$.
(iii) If $a$ and $b$ are not in the same clique, then $\pi \in \Pi$.  
\item $\psi$ is $\forall(\pi \rightarrow \exists(\mu \wedge \neg T\; \wedge \neq))$: 
Suppose $\fA \models \psi$ and
$(\xi,\Pi) \in \Omega$ with $\pi \in \xi$. Let $a \in A$ be such that $\cstp^\fA[a]= (\xi,\Pi)$ and $\ftp^\fA[a]= \pi$.
Pick $b \in A \setminus \set{a}$ such that $\fA \models \mu[b]$ and $\fA \not \models T[a,b]$, and let
$\cstp^\fA[b]= (\xi',\Pi')$, and $\ftp^\fA[b] = \pi'$. (i) Thus, $\models \pi' \rightarrow \mu$. (ii) Suppose, for contradiction, 
that there exist $\pi'' \in \Pi$ and $\pi''' \in \xi'$ such that $\pi'' \ll \pi'''$. Then there exist $b'', b''' \in A$ such that
$\fA \models T[a,b'']$, $\fA \models T[b'',b''']$, with $b'''$ in the same clique as $b$, contradicting the assumption that 
$\fA \not \models T[a,b]$. (iii) Suppose, for contradiction, 
that $\pi'' \in \xi' \cap \Pi \cap V$. Then there exists $b'' \in A$ with $\ftp^\fA[b'']= \pi''$, realized in just one
clique (namely, the clique of $b$) and an element $b'''$ with $\ftp^\fA[b''']= \pi''$ and $\fA \models T[a, b''']$. This
contradicts the supposition that $\fA \not \models T[a,b]$. (iv) Suppose, for contradiction, that $(\xi,\Pi)= (\xi',\Pi')$ and
$\pi'' \in \xi \cap  V$.
Then the cliques of both $a$ and $b$ contain
elements of fluted 1-type $\pi''$, with such elements realized in just one
clique. Thus $a$ and $b$ are in the same clique, which
contradicts the supposition that $\fA \not \models T[a,b]$.
\item $\psi$ is $\forall (\pi \rightarrow \forall (\pi' \rightarrow T))$, where $\pi \neq \pi'$: 
Suppose $\fA \models \psi$. (i) If $\pi$ and $\pi'$ are not both realized in $\fA$, then they
do not both occur in $\cC$.
If $\pi$ and $\pi'$ are both realized in $\fA$, and 
$\fA \not \models \forall (\pi' \rightarrow \forall (\pi \rightarrow T))$, then 
$\pi \ll \pi'$. (iii) Otherwise, $\pi$ and $\pi'$ are realized in $\fA$, but there is a  
clique, say $u$, containing all these realizing elements. 
Hence, if $(\xi, \Pi), (\xi', \Pi') \in \Omega$ with $\pi \in \xi$ and $\pi' \in \xi'$, 
then $(\xi, \Pi)$= $(\xi', \Pi')$, and $\pi \in V$, whence $\xi \cap V \neq \emptyset$. 
\item $\psi$ is $\forall (\pi \rightarrow \forall(\pi' \rightarrow \neg T))$, where $\pi \neq \pi'$: 
Suppose $\fA \models \psi$ and $(\xi, \Pi) \in \Omega$ with $\pi \in \xi$. Then there exist $a \in A$ such that
$\cstp^\fA[a]= (\xi, \Pi)$. By the definition of $\cstp^\fA[a]$, $\pi' \not \in \xi \cup \Pi$.
\item $\psi$ is $\forall (\pi \rightarrow \forall(\pi \rightarrow (= \vee\; T)))$: 
Suppose $\fA \models \psi$. Then all elements $a \in A$ such that $\ftp^\fA[a]= \pi$ lie
in a single clique, so let their common clique-super-type be $(\xi, \Pi)$. Thus, $(\xi, \Pi)$
is the only element of $\Omega$ such that $\pi \in \xi$; moreover, if this element exists,
we have $\pi \in V$, and hence $\xi \cap V \neq \emptyset$.
\item $\psi$ is $\forall (\pi \rightarrow \forall(\pi \rightarrow (= \vee\; \neg T)))$:
Suppose $\fA \models \psi$ and $(\xi, \Pi) \in \Omega$ with $\pi \in \xi$. Let $a \in A$ be such 
that $\cstp^\fA[a]= (\xi,\Pi)$ and $\ftp^\fA[a]= \pi$, and let $u$ be the clique of $a$ in $\fA$. Since $\fA \models \psi$, there is certainly
no element $b \in A \setminus u$ such that $\ftp^\fA[b]= \pi$ and $\fA \models T[b,a]$, whence
$\pi \not \in \Pi$. One the other hand, there is no element $b \in u \setminus \set{a}$ such that $\ftp^\fA[b]= \pi$, whence $\xi(\pi)= 1$.
\end{enumerate}
The cases $\forall \mu$ and $\exists \mu$ are routine.
\end{proof}

\begin{lemma}
Let $\psi$ be a basic formula, and suppose $\cC \models \psi$ for some certificate $\cC$. Then $\fA(\cC) \models \psi$.
\label{lma:completenessEq}
\end{lemma}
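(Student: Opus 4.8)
The plan is to prove the statement by a straightforward case analysis over the eight shapes (B1)--(B8) of the basic formula $\psi$: in each case I assume the relevant clause of the definition of $\cC \models \psi$ and verify that the first-order sentence $\psi$ holds in $\fA = \fA(\cC)$. Everything rests on first recording how faithfully $\fA$ mirrors the data of $\cC$. By construction $\ftp^\fA[a] = \pi$ for every $a \in A_{\pi,\xi,\Pi,i}$, and by Lemmas~\ref{lma:noLoopt1}--\ref{lma:tIsOK} the cells $A_{\xi,\Pi,i}$ are precisely the cliques of $T^\fA$; hence each cell has clique-type $\xi$, and for elements in distinct cells $T^\fA$ coincides with reachability in the directed graph $t_1 \cup t_2$. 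The key preliminary claim I would isolate is that, for $a \in A_{\xi,\Pi,i}$, the set of fluted $1$-types realized by $T$-successors of $a$ lying outside $a$'s clique is \emph{exactly} $\Pi$, so that $\cstp^\fA[a]=(\xi,\Pi)$. The inclusion $\subseteq$ is immediate from Lemma~\ref{lma:accumulate} (every cell reachable from $A_{\xi,\Pi,i}$ has clique-type contained in $\Pi$); the inclusion $\supseteq$ follows from (C1) and the definition of $t_1$, since for $\pi' \in \Pi$ condition (C1) yields $(\xi',\Pi') \in \Omega$ with $\pi' \in \xi'$, $\xi' \cup \Pi' \subseteq \Pi$ and $\xi \cap V \cap \Pi' = \emptyset$---exactly clauses (a) and (c) of $t_1$---while clause (b) is secured by taking the witnessing cell at index $i+2$ when $\xi' \cap V = \emptyset$. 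Every case leans on this claim.

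Granted faithfulness, the positive and dual-negative cases are short. For (B1) I take $a$ with $\ftp^\fA[a]=\pi$ and $\cstp^\fA[a]=(\xi,\Pi)$ and invoke the trichotomy (i)--(iii): in (i) and (ii) the witness lies in $a$'s own cell, which then contains at least two elements and is therefore non-soliton, so $t_0$ supplies the $T$-edge; in (iii) the witness is delivered by the $\supseteq$ half of the faithfulness claim. For (B3) and (B5) the required $T$-edges come either from a $t_2$-edge when $\pi \ll \pi'$ or from all realizing elements sharing a single non-soliton clique; in (B5) the clause $\xi \cap V \neq \emptyset$ collapses the block $A_{\xi,\Pi}$ to the single cell $A_{\xi,\Pi,0}$. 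The negative cases (B4) and (B6) are dual and use only the $\subseteq$ half of faithfulness together with $\xi(\pi) \leq 1$ in (B6): the forbidden type lies neither in $\xi$ nor in $\Pi$, hence is realized neither inside $a$'s clique nor among its $T$-successors. Finally (B7) $\forall \mu$ and (B8) $\exists \mu$ are immediate, since the realized $1$-types are exactly those occurring in $\cC$.

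I expect the main obstacle to be (B2), the clause $\forall(\pi \rightarrow \exists(\mu \wedge \neg T \wedge \neq))$, where for each $a$ of type $\pi$ I must \emph{exhibit} some $b \neq a$ with $\models \ftp^\fA[b] \rightarrow \mu$ and, crucially, $\fA \not\models T[a,b]$; proving non-reachability is the delicate part. The plan is to take the super-type $(\xi',\Pi')$ and type $\pi' \in \xi'$ furnished by the clause and to locate a cell $v \subseteq A_{\xi',\Pi'}$ that is unreachable from $a$'s cell $u = A_{\xi,\Pi,i}$ in the graph $t_1 \cup t_2$; any $b \in v$ of type $\pi'$ then works. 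I would verify unreachability by excluding each way a path from $u$ could enter $A_{\xi',\Pi'}$: a final $t_2$-edge from a cell already reachable from $u$ is barred by clause (ii), since that cell has clique-type inside $\Pi$ by Lemma~\ref{lma:accumulate} and (ii) forbids any $\ll$-link from $\Pi$ into $\xi'$; a final $t_1$-edge forces $\xi' \subseteq \Pi$, after which clause (iii) rules out the $V$-type coincidence that would drop $b$ into a reachable clique; and when $(\xi,\Pi)=(\xi',\Pi')$, clause (iv) gives $\xi \cap V = \emptyset$, so $A_{\xi,\Pi}$ splits into infinitely many cells, and, using (C6) to exclude intra-block $t_2$-edges and the index-monotonicity of $t_1$, a cell of low or fresh index is unreachable. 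The single point demanding the most care is a \emph{direct} $t_2$-edge from $u$ into $A_{\xi',\Pi'}$, i.e.\ an $\ll$-link $\pi_1 \ll \pi_2$ with $\pi_1 \in \xi$ and $\pi_2 \in \xi'$: such an edge would connect $u$ to \emph{every} cell of the block at once, leaving no unreachable witness. Here the set of types available at or below $a$ must be read as $\xi \cup \Pi$ rather than $\Pi$ alone, and I would trace through how clauses (ii)--(iv), combined with (C2) (which, were the link present, would force $\xi' \cup \Pi' \subseteq \Pi$) and (C5), exclude this remaining sub-case and pin down the witnessing cell. Discharging it is the crux of the lemma, after which assembling the eight cases completes the proof.
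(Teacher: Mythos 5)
Your overall route coincides with the paper's: a case analysis over (B1)--(B8), reading witnesses and non-edges off the cells of $\fA(\cC)$ and using Lemmas~\ref{lma:noLoopt1}--\ref{lma:accumulate} to control $T$-reachability between cells. The ``faithfulness'' claim you isolate, namely $\cstp^{\fA}[a]=(\xi,\Pi)$ for $a\in A_{\xi,\Pi,i}$, is a clean packaging of what the paper proves inline (cases 1(iii) and 4), and your treatments of (B1) and (B3)--(B8) match the paper's arguments essentially step for step.

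The difficulty is the sub-case you yourself call the crux of (B2): a direct $t_2$-edge from $a$'s cell $u$ into the block $A_{\xi',\Pi'}$, arising from an $\ll$-link $\pi_1\ll\pi_2$ with $\pi_1\in\xi$ and $\pi_2\in\xi'$. You announce that clauses (ii)--(iv) of the satisfaction relation, together with (C2) and (C5), exclude it, but you never carry this out --- and it cannot be carried out from those conditions as stated. Clause (ii) only forbids $\ll$-links whose source lies in $\Pi$; if $\pi_1\in\xi\setminus\Pi$, then (C2) yields $\xi'\cup\Pi'\subseteq\Pi$ and (C5) yields $\pi_1\notin\Pi'$, and none of (ii)--(iv) is contradicted. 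Indeed a certificate with $\Omega=\set{(\xi,\Pi),(\xi',\Pi')}$, $\xi=\set{\pi,\pi_1}$, $\xi'=\set{\pi_2}$, $\Pi=\set{\pi_2}$, $\Pi'=\emptyset$, $V=\emptyset$ and $\ll$ consisting of the single pair $\pi_1\ll\pi_2$ satisfies (C1)--(C6) and $\cC\models\psi$ for $\mu$ implied only by $\pi_2$, yet in $\fA(\cC)$ every $\mu$-element is a $T$-successor of every $\pi$-element via $t_2$. So the step would fail as planned. Your diagnosis --- that the source of the $\ll$-link must be taken from $\xi\cup\Pi$ rather than $\Pi$ alone --- is exactly the right one, but acting on it means strengthening clause (ii) of the satisfaction relation (and re-verifying Lemma~\ref{lma:correctnessEq}, which survives, since a $\pi_1$-realizer in $a$'s clique drags $a$ onto $b$'s clique by transitivity); it is not something derivable from the clauses as given. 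For what it is worth, the paper's own proof elides the same sub-case: it applies Lemma~\ref{lma:accumulate} to the penultimate cell of a putative path from $u$ to $u'$ to conclude $\pi_1\in\Pi$, which is legitimate only when that penultimate cell differs from $u$, i.e., when the path has length at least two. You located the right pressure point; you have not relieved the pressure.
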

\begin{proof}
Write $\cC = \langle \Omega, \ll, V \rangle$ and $\fA = \fA(\cC)$. 
We consider the forms of $\psi$ in turn.
\begin{enumerate}[1.]
	
\item $\psi$ is $\forall(\pi \rightarrow \exists(\mu \wedge T\; \wedge \neq))$: 
Suppose $\cC \models \psi$ and $a \in A$ with $\ftp^\fA[a]= \pi$. We may write
$a= a^p_{\pi,\xi,\Pi,i}$, for $(\xi, \Pi) \in \Omega$ with $\pi \in \xi$. 
We must show that there exists $b \in A \setminus \set{a}$ such that $\fA \models \mu[b]$ and $\fA \models T[a,b]$.
(i) If $\models \pi \rightarrow \mu$ and $\xi(\pi) =2$, then, by construction of $\fA$, there
exists $b= a^{p'}_{\pi,\xi,\Pi,i}$ with $p' \neq p$. Thus, $\ftp^\fA[b]= \pi$ and $t_0(a,b)$, whence $\fA \models T[a,b]$.
(ii) If there exists $\pi' \in \xi$ such that $\pi' \neq \pi$ and $\models \pi' \rightarrow \mu$,  there
exists $b= a^{p}_{\pi',\xi,\Pi,i}$. Thus, $\ftp^\fA[b]= \pi'$ and $t_0(a,b)$, whence $\fA \models T[a,b]$.
(iii) If there exists $\pi' \in \Pi$ such that $\models \pi' \rightarrow \mu$, then, by ({C1}), choose 
$(\xi', \Pi') \in \Omega$ with $\pi' \in \xi'$, $\xi' \cup \Pi' \subseteq \Pi$
and $\xi \cap \Pi' \cap V = \emptyset$.
Suppose on the one hand that $\xi' \cap V = \emptyset$. Then we may let $b= a^{+}_{\pi',\xi,\Pi,i+2}$.
Certainly, $\ftp^\fA[b]= \pi'$. It suffices to prove that $t_1(a,b)$, whence $\fA \models T[a,b]$. 
We consider conditions (a)--(c) in the definition of $t_1$. (a) 
We have already established that $\xi' \cup \Pi' \subseteq \Pi$.
(b) Trivially, $i+2 \geq i+2$. (c) {\em A fortiori}, $\xi' \cap V \cap \Pi= \emptyset$. 
Suppose on the other hand that $\xi' \cap V \neq \emptyset$.  Then we may let $b= a^{+}_{\pi',\xi,\Pi,0}$. Since 
$\xi \cap \Pi' \cap V = \emptyset$, we have $\xi \neq \xi'$, so that $b \neq a$. Again, consider 
conditions (b) and (c) in the definition of $t_1$. For (b), we are supposing anyway that 
 $\xi' \cap V \neq \emptyset$, and for (c), we have already established that $\xi \cap \Pi' \cap V = \emptyset$.
Thus, in all cases, we have $\fA \models \mu[b]$ and $\fA \models T[a,b]$, as required.
\item $\psi$ is $\forall(\pi \rightarrow \exists(\mu \wedge \neg T\; \wedge \neq))$: 
Suppose $\cC \models \psi$ and $a \in A$ with $\ftp^\fA[a]= \pi$. We may write
$a= a^p_{\pi,\xi,\Pi,i}$, for $(\xi, \Pi) \in \Omega$ with $\pi \in \xi$. 
Then we may select $(\xi',\Pi') \in \Omega$ with $\pi' \in \xi'$ such that:
(i) $\models \pi' \rightarrow \mu$; 
(ii) there exists no $\pi'' \in \Pi$ and $\pi''' \in \xi'$ such that $\pi'' \ll  \pi'''$; (iii)
$\xi' \cap \Pi \cap V = \emptyset$; and (iv) $(\xi,\Pi) = (\xi', \Pi') \Rightarrow \xi \cap V = \emptyset$. 
Suppose on the one hand that $(\xi',\Pi') \neq (\xi, \Pi)$. Let $b= a^{+}_{\pi',\xi',\Pi',0}$, so that, by
construction of $\fA$, $\ftp^\fA[b]= \pi'$. We must show that $a \neq b$ and $\fA \not \models T[a,b]$.
Let $u$ be the cell containing $a$ and $u'$ the cell containing $b$. Since
$(\xi',\Pi') \neq (\xi, \Pi)$, we have $u \neq u'$, whence, certainly $a \neq b$. So suppose
for contradiction that there is a sequence of $(t_1 \cup t_2)$-links from $u$ to $u'$. Let $u'' \in A_{\xi'', \Pi''}$, say, be
the penultimate element of this sequence. Certainly, there is no $t_2$-link from $u''$ to $u'$, since this would
require $\pi'' \in \xi''$ and $\pi''' \in \xi'$ with $\pi'' \ll \pi'''$. But 
by Lemma~\ref{lma:accumulate}, we would then have $\pi'' \in \Pi$, which is ruled out by (ii). On the other hand,
if there were a $t_1$-link from $u''$ to $u'$, then we would have $\xi' \cap V \neq \emptyset$, and again
by Lemma~\ref{lma:accumulate}, $\xi' \subseteq \Pi$, whence $\xi' \cap V \cap \Pi \neq \emptyset$, which is ruled out by (iii).
Suppose on the other hand that $(\xi',\Pi') = (\xi, \Pi)$. But then (iv) implies $\xi \cap V = \emptyset$, so that we may select
$b= a^{+}_{\pi',\xi,\Pi,j}$, where $j= 1$ if $i=0$ and $j= 0$ otherwise. Again, 
let $u$ be the cell containing $a$ and $u'$ the cell containing $b$. Thus $u \neq u'$, whence certainly $a \neq b$.
Moreover, $\fA \models \mu[b]$. Again, it remains to show that $\fA \not \models T[a,b]$. Suppose there is a chain
$u= u_0, \dots, u_k= u'$ of $(t_1, \cup t_2)$-links. By construction of $t_1$, we must have $t_2(u_{k-1},u_k)$, since
$j \leq 1$. Then there exists $\pi'' \in \xi_{k-1}$ and $\pi''' \in \xi_k$ such that $\pi'' \ll \pi'''$. Then, certainly,
$k >1$ since, otherwise, $\xi_0 = \xi_k = \xi$ contains both $\pi''$ and $\pi'''$ with $\pi'' \ll \pi'''$
and $\xi \cap V = \emptyset$, which contravenes ({C6}).  But if $k >1$,  then $\pi \in \Pi$ by Lemma~\ref{lma:accumulate},
which contravenes ({C5}). Thus, we have shown that $\fA \not \models T[a,b]$ as required.
\item $\psi$ is $\forall (\pi \rightarrow \forall (\pi' \rightarrow T))$, where $\pi \neq \pi'$: 
Suppose $\cC \models \psi$, and that $a, a' \in A$ with $\ftp^\fA[a]= \pi$ and $\ftp^\fA[a']= \pi'$.
Write $a= a^p_{\pi,\xi,\Pi,i}$ and $a'= a^{p'}_{\pi',\xi',\Pi',j}$.
We must show that $\fA \not \models T[a,a']$. We consider the three possibilities in the definition of $\cC \models \psi$.
(i) By construction of $\fA$, $\pi$ and $\pi'$ both occur in $\cC$, so the first possibility does not arise.
(ii) Suppose that $\pi \ll \pi'$. If $a$ and $a'$ are in different cells, then 
then we immediately have $t_2(a,a')$.
If, on the other hand, $a$ and $a'$ are in the same cell, then since $\pi \neq \pi'$, by ({C4}), 
$\neg \hat{T} \not \in \bigcup \xi$, whence $t_0(a,a')$. 
(iii) Suppose that there is a single clique-super-type $(\xi,\Pi) \in \Omega$ such that $\xi$ contains
either $\pi$ or $\pi'$ and that $\xi \cap V \neq \emptyset$. By the construction of $\fA$, $a$ and $a'$
belong to the same cell $A_{\xi,\Pi,0}$, and again by ({C4}), 
$\neg \hat{T} \not \in \bigcup \xi$, whence $t_0(a,a')$. In all cases, then, $\fA \models T[a,a']$, as required. 
\item $\psi$ is $\forall (\pi \rightarrow \forall(\pi' \rightarrow \neg T))$, where $\pi \neq \pi'$: 
Suppose $\cC \models \psi$, and that $a, a' \in A$ with $\ftp^\fA[a]= \pi$ and $\ftp^\fA[a']= \pi'$.
Write $a= a^p_{\pi,\xi,\Pi,i}$ and $a'= a^{p'}_{\pi',\xi',\Pi',j}$. From the definition of $\cC \models \psi$,
we have $\pi' \not \in \xi \cup \Pi$, whence $\xi \neq \xi'$. Thus, $a$ and $a'$ occupy different cells, say, $u$ and
$u'$, respectively. By Lemma~\ref{lma:accumulate}, there is no chain
$u= u_0, \dots, u_k= u'$ of $(t_1 \cup t_2)$-links. Therefore, $\fA \not \models T[a,a']$, as required.
\item $\psi$ is $\forall (\pi \rightarrow \forall(\pi \rightarrow (= \vee\; T)))$: 
Suppose $\cC \models \psi$, and that $a, a' \in A$ with $\ftp^\fA[a]= \ftp^\fA[a']= \pi$ and $a \neq a'$.
From the definition of $\cC \models \psi$ and the 
construction of $\fA$, $a$ and $a'$ belong to the same set $A_{\xi,\Pi}$ and, moreover,
$\xi \cap V \neq \emptyset$. It follows that 
$a$ and $a'$ belong to the cell $A_{\pi,\xi,\Pi,0}$.  
Since $a \neq a'$, by the construction of $A$, $\xi(\pi) = 2$, whence by ({C4}), 
$\xi$ is not a soliton clique-type,
whence $t_0(a,a')$. Thus $\fA \models T[a,a']$, as required.
\item $\psi$ is $\forall (\pi \rightarrow \forall(\pi \rightarrow (= \vee\; \neg T)))$: 
Suppose $\cC \models \psi$. It follows immediately by construction of $\fA$ that
no set $A_{\pi,\xi,\Pi,i}$ can have cardinality greater than 1.
Now suppose $a, a' \in A$ with $\ftp^\fA[a]= \ftp^\fA[a']= \pi$ and $a \neq a'$. Thus,
$a$ and $a'$ are not in the same cell, and hence by Lemma~\ref{lma:accumulate}, 
$\fA \models T[a,a']$ implies $\xi' \subseteq \Pi$, whence $\pi \in \Pi$, contradicting
the definition of  $\cC \models \psi$. Thus, $\fA \not \models T[a,a']$, as required.
\end{enumerate}
The cases $\forall \mu$ and $\exists \mu$ are routine.
\end{proof}

\begin{lemma}
There exists a non-deterministic procedure which, when given
a set $\Phi$ of basic $\FLotranstEqMinus$-formulas over a signature $\Sigma$, will terminate
in time bounded by $g(2^{2^{g(|\Sigma|)}} + \sizeof{\Phi})$, for some fixed polynomial $g$, and which
has an accepting run if and only if $\Phi$ is satisfiable.
\label{lma:FLotransmEqMinus}
\end{lemma}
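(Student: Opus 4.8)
The plan is to assemble the machinery developed so far into a single non-deterministic decision procedure whose correctness rests on the equivalence
\[
\text{$\Phi$ is satisfiable} \iff \text{there is a certificate $\cC$ over $\Sigma^*$ with $\cC \models \psi$ for every $\psi \in \Phi^*$.}
\]
Here $\Sigma^*$ and $\Phi^*$ are exactly as defined before Lemma~\ref{lma:quadratic}, so that $|\Sigma^*|$ is linear in $|\Sigma|$ and $\Phi^*$---which includes the formula $(\exists\top)^*$---can be computed from $\Phi$ in polynomial time. The procedure then proceeds as follows: first it computes $\Phi^*$; next it guesses a triple $\cC = \langle \Omega, \ll, V \rangle$, where $\Omega$ is a set of clique-super-types over $\Sigma^*$, $\ll$ is a binary relation on $\Pi_{\Sigma^*}$, and $V \subseteq \Pi_{\Sigma^*}$; then it verifies that $\ll$ is a strict partial order and that $\cC$ satisfies the certificate conditions (C1)--(C6); and finally it verifies that $\cC \models \psi$ for each $\psi \in \Phi^*$, according to the satisfaction clauses~1--8 above. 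The run accepts precisely when all of these checks succeed.

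Correctness of the procedure follows by chaining the lemmas already established. For the forward direction, suppose $\Phi$ is satisfiable. By Lemma~\ref{lma:quadratic}, $\Phi^*$ has a quadratic model $\fA$; by Lemma~\ref{lma:isaCerificateEq}, $\cC(\fA)$ is then a certificate; and by Lemma~\ref{lma:correctnessEq}, since $\fA \models \psi$ for every $\psi \in \Phi^*$, we have $\cC(\fA) \models \psi$ for every such $\psi$. Thus there is an accepting run in which the procedure guesses $\cC(\fA)$. For the converse, suppose the procedure accepts, having guessed a certificate $\cC$ with $\cC \models \psi$ for every $\psi \in \Phi^*$. Lemmas~\ref{lma:noLoopt1}--\ref{lma:tIsOK} guarantee that the structure $\fA(\cC)$ interprets $T$ and $\hat{T}$ consistently, and Lemma~\ref{lma:completenessEq} then yields $\fA(\cC) \models \psi$ for every $\psi \in \Phi^*$. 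Hence $\Phi^*$ is satisfiable, and the implication (iv)~$\Rightarrow$~(i) of Lemma~\ref{lma:quadratic}---whose reduct is non-empty precisely because $\Phi^*$ contains $(\exists\top)^*$---gives that $\Phi$ is satisfiable.

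It remains to bound the running time. Write $\ell = |\Sigma|$, so that $|\Sigma^*|$ is $O(\ell)$ and the number of fluted 1-types over $\Sigma^*$ is $|\Pi_{\Sigma^*}| = 2^{O(\ell)}$, singly exponential in $\ell$. A clique-type is a map $\Pi_{\Sigma^*} \to \set{0,1,2}$, of which there are $3^{2^{O(\ell)}}$, and a clique-super-type is a pair consisting of a clique-type together with a subset of $\Pi_{\Sigma^*}$; hence the number of clique-super-types, and \emph{a fortiori} $|\Omega|$, is bounded by a doubly-exponential function of $\ell$. Both $\ll$ and $V$ are objects over $\Pi_{\Sigma^*}$ and therefore have only singly-exponential size. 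Consequently a certificate $\cC$ can be written down, and guessed, in time and space $2^{2^{O(\ell)}}$. Each of the conditions (C1)--(C6) is a statement quantifying over boundedly many clique-super-types and 1-types of $\cC$, so all of them can be verified in time polynomial in $|\cC|$; likewise each satisfaction clause~1--8 for a basic formula $\psi$ quantifies only over the components of $\cC$ and can be checked in time polynomial in $|\cC| + \sizeof{\psi}$. Since $\Phi^*$ contains $|\Phi| + 1$ formulas, the total running time is polynomial in $|\cC| + \sizeof{\Phi} \le 2^{2^{O(\ell)}} + \sizeof{\Phi}$, which is of the required form $g(2^{2^{g(|\Sigma|)}} + \sizeof{\Phi})$ for a suitable polynomial $g$.

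The substantive content of the argument is all packaged in the correctness and completeness lemmas and in the model construction, so the only genuine work here is the size analysis. The one point to watch is that clique-super-types, being built from multisets of 1-types (with multiplicities truncated at $2$), are doubly exponential in $\ell$, and that this is the dominant term; the order $\ll$, the set $V$, and the individual basic formulas contribute only singly-exponential or polynomial factors. Everything else is routine bookkeeping.
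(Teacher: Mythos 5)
Your proposal is correct and follows essentially the same route as the paper's own proof: reduce via Lemma~\ref{lma:quadratic} to quadratic satisfiability of $\Phi^*$, characterize that by the existence of a certificate using Lemmas~\ref{lma:isaCerificateEq}, \ref{lma:correctnessEq} and~\ref{lma:completenessEq}, and then bound the guess-and-check cost by the doubly exponential size of certificates. Your write-up is in fact slightly more explicit than the paper's (naming Lemma~\ref{lma:isaCerificateEq} in the forward direction and the role of $(\exists\top)^*$ in the non-emptiness of the reduct), but the argument is the same.
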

\begin{proof}
Let $\Phi$ be given. By Lemma~\ref{lma:quadratic}, the following are equivalent: $\Phi$ is satisfiable; 
$\Phi^*$ is satisfied in a quadratic structure; $\Phi^*$ is satisfiable.
Observe that $\Phi^*$ (a set of basic formulas over some signature $\Sigma^* \supseteq \Sigma$) can be computed in time bounded by a polynomial function of $\Phi$.
By Lemma~\ref{lma:correctnessEq}, if $\Phi^*$ is satisfiable over a quadratic structure 
then there exists a certificate $\cC$, interpreting $\Sigma^*$, such that $\cC \models \Phi^*$. By Lemma~\ref{lma:completenessEq}, 
if there exists a certificate $\cC$ over $\Sigma^*$, 
such that $\cC \models \Phi^*$, then $\Phi^*$ is satisfiable. 
Evidently $\sizeof{\cC}$ is bounded by a doubly exponential function of $|\Sigma^*|$, and the condition
$\cC \models \Phi^*$ may be checked in time bounded by a polynomial function of $\sizeof{\Phi^*} + \sizeof{\cC}$.
\end{proof}

%

\subsection{The logic $\FlEqTransVars{1}{2}$}
\label{sec:FLotranstEq}
The next step is to allow arbitrary (non-distinguished) binary predicates; that is, we consider the logic
$\FlEqTransVars{1}{2}$, the 2-variable fluted fragment with equality and a single, distinguished, transitive relation $T$.

In the context of a structure interpreting a relational signature, a {\em king} is an element whose fluted 1-type is not realized by any other element 
in that structure. The fluted 1-types of kings are called {\em royal}.
We make use of the well-known fact that, in two-variable logic, parts of structures may be duplicated as long as they 
contain no king.\nb{I: Removed refs to KMP-HT and Kiero+Otto12; I'm not sure that these do mention this lemma}
We use the formulation appearing in~\cite[Lemma~4.1]{purdyTrans:ph18}. The proof given there 
concerns two-variable first-order logic with a single distinguished predicate interpreted as a {\em partial order}; however, the proof for the (present) case in which it is interpreted as a {\em transitive relation} is identical, and we need not repeat it here.
\begin{lemma}
Let $\fA_1$ be a structure over domain $A_1$,
$A_0$ the set of kings of $\fA_1$, $\fA_0$ the restriction of $\fA$ to $A_0$, and $B_1 = A_1 \setminus A_0$. 
There exists a family of sets $\set{B_i}_{i \geq 2}$, pairwise disjoint and disjoint from $A_1$, 
a family of bijections
$\set{f_i}_{i \geq 1}$, where $f_i: B_i \rightarrow B_1$,
and a sequence of structures $\set{\fA_i}_{i \geq 2}$, where
$\fA_i$ has domain $A_i = A_0 \cup B_1 \cup B_2 \cup \cdots \cup B_i$, such that, for all $i \geq 1$:
\begin{enumerate}[\textup{(}i\textup{)}]
\item $\fA_{i-1} \subseteq \fA_i$, and all 2-types realized in $\fA_i$ are realized in $\fA_{1}$;
\item for all $a \in B_i$ and all $b \in A_1$, if $f_i(a) \neq b$, then
$\ftp^{\fA_{i}}[a,b] = \ftp^{\fA_{1}}[f_i(a),b]$;
\item for all $a \in B_i$, all $j$ \textup{(}$1 \leq j \leq i$\textup{)} and all $b \in B_j$, if $f_i(a) \neq f_j(b)$, then 
$\ftp^{\fA_{i}}[a,b] = \ftp^{\fA_1}[f_i(a),f_j(b)]$;
\item $T^{\fA_i}$ is a transitive relation.
\end{enumerate}
\label{lma:theGreatInflate}
\end{lemma}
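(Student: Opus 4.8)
The plan is to build the structures $\fA_i$ by a straightforward induction on $i$, gluing at each stage a single fresh copy of the non-king part onto $\fA_{i-1}$. Having constructed $\fA_{i-1}$, I would take $B_i$ to be a fresh set disjoint from everything so far, fix a bijection $f_i\colon B_i \to B_1$, and declare every $a \in B_i$ to carry the same fluted $1$-type as $f_i(a)$. It is convenient to introduce the \emph{projection} $\pi\colon A_i \to A_1$ sending each king and each original non-king to itself and each copy $a \in B_j$ (for $j \geq 2$) to $f_j(a)$; the elements $a$ and $\pi(a)$ then always share a $1$-type. The \emph{fiber} of a non-king $c$ is $\pi^{-1}(c)$, which in $\fA_i$ consists of $c$ together with exactly one copy drawn from each of $B_2,\dots,B_i$. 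The embeddings $\fA_{i-1}\subseteq\fA_i$ demanded by clause~\textup{(i)} then hold for free, since the construction only ever adds elements and never revisits an old pair.

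Next I would fix the $2$-types, which are dictated almost entirely by $\pi$. For any pair $a,b$ with $\pi(a) \neq \pi(b)$ I set $\ftp^{\fA_i}[a,b] = \ftp^{\fA_1}[\pi(a),\pi(b)]$; specialising $b$ to lie in $A_1$ yields clause~\textup{(ii)} and specialising it to lie in $B_j$ yields clause~\textup{(iii)}. The only pairs left undetermined are the \emph{same-fiber} pairs, where $\pi(a)=\pi(b)=c$ for some non-king $c$. Because $c$ is not a king, its $1$-type is realised in $\fA_1$ by some $c^{\ast}\neq c$, and I would fix once and for all a template $2$-type witnessed by such a same-$1$-type pair, using a consistently oriented copy of it for every same-fiber pair in the fiber of $c$. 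Each $2$-type written down this way is, by construction, copied verbatim from one already realised in $\fA_1$, so the realisability half of clause~\textup{(i)} is secured for the pairs we set directly.

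The real work is clause~\textup{(iv)}: keeping $T^{\fA_i}$ transitive. I would verify this by a case analysis on a triple $x,y,z$ with $T(x,y)$ and $T(y,z)$, organised by the coincidence pattern of $\pi(x),\pi(y),\pi(z)$. When the three projections are distinct, the required edge $T(x,z)$ follows immediately from transitivity of $T^{\fA_1}$, since every edge among $x,y,z$ mirrors the corresponding edge downstairs. The hard part, and the step I expect to be the main obstacle, is the case where two of the three elements share a fiber: an edge routed through a king or through another clique-mate of $c$ can \emph{force} two same-fiber elements to be $T$-related in both directions, so the template for the fiber of $c$ cannot be chosen arbitrarily, and the forced edges produce same-$1$-type $2$-types that must themselves be realised in $\fA_1$. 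Guaranteeing a single template choice that meets both clause~\textup{(i)} and clause~\textup{(iv)} — equivalently, that copies of $c$ can be slotted faithfully into the clique of $c$ in $\fA_1$ — is the delicate heart of the argument, and it is the one respect in which the transitive case needs more care than the partial-order case of~\cite[Lemma~4.1]{purdyTrans:ph18}, where cliques are singletons and no such forcing can arise.
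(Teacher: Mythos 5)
You have reconstructed the skeleton of the construction correctly---the fresh copies $B_i$, the projection $\pi$, and the rule that every pair with distinct projections inherits its type from the corresponding pair of $\fA_1$---and you have put your finger on exactly the right difficulty. But the proposal does not close that difficulty: it ends by declaring the same-fiber case ``the delicate heart of the argument'' without actually giving the argument, so as it stands this is not a proof. Moreover, the specific device you propose for that case---a single template $2$-type drawn from some pair $c, c^{\ast}$ of $\fA_1$ sharing $c$'s $1$-type---cannot in general be made to work. If $c$ lies in a non-trivial clique, say with $d$, then for copies $x,y$ of $c$ the inherited edges give $\fA_i \models T[x,d]$ and $\fA_i \models T[d,y]$, forcing $T[x,y]$ and symmetrically $T[y,x]$; yet $\fA_1$ need contain no pair of distinct, mutually $T$-related elements both having $1$-type $\ftp^{\fA_1}[c]$ (take $c$'s only clique-mate $d$ to have a different $1$-type, and the only other realization $c^{\ast}$ of $\ftp^{\fA_1}[c]$ to be $T$-unrelated to $c$). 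So no admissible template of the kind you describe exists, and if clause (i) were read as a statement about \emph{full} $2$-types the forced type would not be realized in $\fA_1$ at all.

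The escape is that the types in this lemma are \emph{fluted} $2$-types: $\ftp^{\fA}[a,b]$ records binary atoms only in the order $(a,b)$ and unary atoms only of $b$. The type forced on a same-fiber pair $(x,y)$ over a clique-member $c$---namely $T$, $\neq$, and $c$'s $1$-type in the second coordinate---is therefore already realized in $\fA_1$ by the pair $(d,c)$ for $d$ a clique-mate of $c$, so one assigns $\ftp[x,y]=\ftp[y,x]=\ftp^{\fA_1}[d,c]$ (and, when $c$ is not $T$-related to itself, makes the copies pairwise $T$-unrelated, which no $T$-chain can then contradict, since any such chain would project to a $T$-cycle through $c$ in $\fA_1$). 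Your proposal never exploits the one-directedness of fluted types, which is why it gets stuck precisely where you say it does. For comparison: the paper itself offers no proof of this lemma, citing \cite[Lemma~4.1]{purdyTrans:ph18} and asserting that the partial-order argument given there carries over verbatim to a transitive relation; your observation that antisymmetry is exactly what blocks the forcing in the partial-order case shows that this ``verbatim'' claim deserves more scrutiny than either the paper or your proposal gives it.
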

Intuitively, the sets $B_2, \dots, B_i$ are copies of $B_1$. This 
copying process may be continued indefinitely (or even infinitely); we require only finitely many
iterations in this paper.

Recall the notion of normal form for $\FlEqTransVars{1}{m}$ given in~\eqref{eq:flMnf}. For $m=2$, we obtain the special case
\begin{equation}
{\bigwedge_{i \in S}}\forall (\mu_i \rightarrow \exists (\kappa_i \wedge \Gamma_i)) \wedge
{\bigwedge_{j \in T}} \forall (\nu_j  \rightarrow \forall \Delta_j) \wedge \forall \forall  \Omega,
\label{eq:flTWOnf}
\end{equation}
where $S$ and $T$ are finite sets of indices, such that, for $i \in S$ and $j \in T$,
$\mu_i$ and $\nu_j$ are quantifier-free formulas of arity 1,
$\kappa_i$ is a control formula, 
and $\Gamma_i$, $\Delta_j$, and
$\Omega$ are sets of fluted 2-clauses.\nb{I: added `fluted'}
Our strategy will be to reduce the satisfiability problem for formulas of the form~\eqref{eq:flTWOnf}
to that of sets of basic $\FLotranstEqMinus$-formulas.
We begin by introducing a variant of the  normal form
for $\FlEqTransVars{1}{2}$. A formula of this logic is in {\em spread normal form} if it has the shape
\begin{multline}
\bigwedge_{h \in R} \exists \lambda_h \wedge 
{\bigwedge_{i \in S}}\forall (\mu_i  \rightarrow \exists (o_i \wedge \kappa_i \wedge \Gamma_i)) \wedge\\
{\bigwedge_{j \in T}} \forall (\nu_j  \rightarrow \forall \Delta_j) \wedge
\forall \forall \Omega  \wedge
\bigwedge_{i,i' \in S}^{i \neq i'} \forall (o_i \rightarrow \neg o_{i'}),
\label{eq:flTWOnfSpread}
\end{multline}
where $R$ is an index set, the $\lambda_h$ ($h \in R$) are quantifier-free, unary formulas,
the $o_i$ ($i \in S$) are unary predicates, and
$S$, $T$, $\Omega$, $\mu_i$, $\nu_j$, $\kappa_i$, $\Gamma_i$, $\Delta_j$ are as before.
The essential change 
here is the insertion of the atoms $o_i$ into the conjuncts $\forall (\mu_i  \rightarrow \exists (\kappa_i \wedge \Gamma_i))$
of~\eqref{eq:flTWOnf} together with the addition of the conjuncts $\forall (o_i \rightarrow \neg o_{i'})$ for distinct
indices $i$ and $i'$. The point is that, if an object satisfies $\mu_i$ for several indices $i$, the
corresponding witnesses of the formula $\exists (o_i \wedge \kappa_i \wedge \Gamma_i)$ for that element are all distinct. As we might say, the 
witness requirements are `spread' over different objects. The other change in~\eqref{eq:flTWOnfSpread}
is the addition of the conjuncts $\exists \lambda_h$. These are required if we are going to convert normal-form
$\FlEqTransVars{1}{2}$-sentences into spread normal form without incurring an unacceptable inflation in the size of the
signature, as promised by the next lemma.
\begin{lemma}
Let $\phi$ be a normal-form $\FlEqTransVars{1}{2}$-formula and
$\Pi = \set{\pi_1, \dots, \pi_L}$ a set of fluted 1-types over the signature of $\phi$.  We can compute, in time bounded by 
an exponential function of $\sizeof{\phi}$, a formula $\psi$ in spread normal form,
such that: \textup{(}i\textup{)} the signature of $\psi$ is bounded in size by a polynomial function of $\sizeof{\phi}$;
\textup{(}ii\textup{)} $\models \psi \rightarrow \phi$; and \textup{(}iii\textup{)} if $\phi$ 
has a \textup{(}finite\textup{)} model in which $\Pi$ is the set of royal fluted 1-types, then $\psi$ has a \textup{(}finite\textup{)} model.
\label{lma:FLotranstEqToFLotranstEqSpread}
\end{lemma}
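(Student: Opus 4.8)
The plan is to obtain $\psi$ from $\phi$ by a purely local syntactic edit and then to verify the three clauses, with essentially all the work concentrated in clause (iii). For the construction, I would introduce one fresh unary predicate $o_i$ for each $i \in S$, replace every existential conjunct $\forall(\mu_i \rightarrow \exists(\kappa_i \wedge \Gamma_i))$ of~\eqref{eq:flTWOnf} by $\forall(\mu_i \rightarrow \exists(o_i \wedge \kappa_i \wedge \Gamma_i))$, adjoin the disjointness conjuncts $\bigwedge_{i \neq i'}\forall(o_i \rightarrow \neg o_{i'})$, and take the family $\set{\lambda_h}_{h \in R}$ to be the given royal 1-types, i.e.\ $R = \set{1, \dots, L}$ and $\lambda_h = \pi_h$, yielding the conjuncts $\exists \lambda_h$. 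The universal conjuncts $\forall(\nu_j \rightarrow \forall \Delta_j)$ and $\forall\forall\Omega$ are left untouched. Clause (ii) is then immediate by inspection: every conjunct of $\phi$ is entailed by $\psi$, since $\exists(o_i \wedge \kappa_i \wedge \Gamma_i) \models \exists(\kappa_i \wedge \Gamma_i)$ and the remaining conjuncts of $\phi$ appear verbatim in $\psi$. For clause (i) I would simply count: the only new predicates are the $o_i$ ($i \in S$), of which there are at most $\sizeof{\phi}$, so the signature grows only polynomially; the formula itself may grow exponentially, because there are up to $L$ conjuncts $\exists \lambda_h$ and up to $\sizeof{\phi}^2$ disjointness conjuncts, which is why the stated time bound is exponential rather than polynomial.

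The substance is clause (iii). Suppose $\fA \models \phi$ and $\Pi$ is exactly the set of royal 1-types of $\fA$; then each $\exists \lambda_h$ already holds in $\fA$, witnessed by the corresponding king. The goal is to interpret the $o_i$ and, where necessary, enlarge $\fA$, so that each $\mu_i$-element acquires a \emph{dedicated} witness carrying $o_i$, pairwise distinct across the requirements it triggers. I would first dispose of the easy requirements: when $\kappa_i$ contains $=$, the witness is the element itself, so no spreading is needed and $o_i$ may be set accordingly. For the genuine ($\neq$) requirements I would invoke Lemma~\ref{lma:theGreatInflate}. Since every element triggers at most $\sizeof{\phi}$ requirements, inflating the non-king part of $\fA$ sufficiently many times produces, for each original non-king witness, enough pairwise-disjoint copies that we can allocate a private copy to each (element, requirement) pair it serves, declare $o_i$ true exactly on the copy allocated to requirement $i$, and declare every $o_i$ false elsewhere. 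By clauses (i)--(iv) of Lemma~\ref{lma:theGreatInflate} the 2-types, and in particular the control formulas $\kappa_i$ and the clause-sets $\Gamma_i$, $\Delta_j$, $\Omega$, are preserved and $T$ remains transitive, so all of $\psi$'s universal conjuncts and the non-royal existential conjuncts survive.

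The main obstacle is precisely the witnesses of royal type, which Lemma~\ref{lma:theGreatInflate} is forbidden to duplicate. Here the role of the hypothesis that $\Pi$ is the royal set, and of the conjuncts $\exists \lambda_h$, becomes clear. For each royal type $\pi_h$ and each requirement $i$ whose witness in $\fA$ is the $\pi_h$-king, I would adjoin a fresh element realizing $\pi_h$ (over the original signature) and carrying $o_i$; this is legitimate because $\psi$ demands only that $\pi_h$ be \emph{realized}, not that it remain royal, and because over the enlarged signature the marked copies and the original king carry distinct 1-types. Each such copy is introduced as a twin of the $\pi_h$-king, inheriting all of the king's $T$-edges and binary 2-types to the other elements, and the delicate point---which is where I expect the real work to lie---is to fix the 2-type between the king and its twin and to check that the whole insertion respects transitivity and violates no clause of $\Omega$ or $\Delta_j$; I would try the 2-type obtained from the king's reflexive type by negating the equality literal, which is a candidate already licensed by $\Omega$. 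Crucially, because these royal witnesses are pinned down by the existing unary vocabulary defining $\pi_h$ (via the conjuncts $\exists \lambda_h$) rather than by fresh per-type predicates, the signature stays polynomial---exactly the inflation the $\exists \lambda_h$ were introduced to avoid. Finally I would observe that only finitely many copies are added, so finiteness of the model is preserved, completing the verification that the resulting structure satisfies $\psi$.
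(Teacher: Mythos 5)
Your construction of $\psi$ breaks down at exactly the point you flag as ``delicate'', and the breakdown is not repairable within your scheme: the formula $\psi$ you define does not satisfy clause (iii). The universal conjuncts of $\phi$ --- which you retain verbatim in $\psi$ --- may contain equality literals and can therefore \emph{force} a 1-type to be royal; e.g.\ a conjunct $\forall (p \rightarrow \forall(\neg p \vee\, =))$ permits at most one $p$-element. If the only witness for requirement $i$ at some element $a$ is a king whose royalty is forced in this way, your twin of that king realizes the same original-signature 1-type and immediately violates such a conjunct; the 2-type you propose (the king's reflexive type with the equality literal negated) is precisely what these clauses rule out, so it is not ``licensed by $\Omega$''. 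Worse, no repair of the model construction can help, because your $\psi$ is simply too strong: take $\mu_i = \mu_{i'} = s$, $\kappa_i = \kappa_{i'} = (T \wedge \neq)$, $\Gamma_i = \set{p, q}$, $\Gamma_{i'} = \set{p, r}$, together with a universal conjunct forcing at most one $p$-element. Then $\phi$ has a two-element model in which a single $p$-king witnesses both requirements, but your $\psi$ demands two \emph{distinct} non-identical witnesses (disjointly marked by $o_i$ and $o_{i'}$) both satisfying $p$, which is unsatisfiable. So the insertion of $o_i$ into \emph{every} existential conjunct cannot be made to work.

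The paper avoids this by never asking a king to carry an $o_i$. It introduces $O(\log(L\cdot|S|))$ extra unary predicates $w_0,\dots,w_{k-1}$ whose Boolean combinations $\bar{w}\langle i,\ell\rangle$ record, for each element $a$ and each requirement $i$, \emph{which kind of witness} $a$ has: the spread conjunct $\forall(\bar{w}\langle i,0\rangle \rightarrow \exists(o_i \wedge \kappa_i \wedge \Gamma_i))$ is imposed only when $a$ has a \emph{non-royal} witness (these can be freely multiplied by Lemma~\ref{lma:theGreatInflate}, one disjoint copy $B_i$ per requirement, with $o_i^{\fB^+} = B_i$); when the witness is the king of type $\pi_\ell$, the existential requirement is instead recast as the universal statement $\forall(\bar{w}\langle i,\ell\rangle \rightarrow \forall(\pi_\ell \rightarrow (\kappa_i \wedge \Gamma_i)))$, which demands no new element at all, and the conjuncts $\exists\pi_\ell$ guarantee that this universal reformulation still entails the original existential. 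Your instinct that the $\exists\lambda_h$ conjuncts and the hypothesis on $\Pi$ are ``about the royal witnesses'' is right, but the mechanism is this case split, not a duplication of kings.
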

\begin{proof}
Let $\phi$ have the shape
\begin{equation*}
{\bigwedge_{i \in S} }\forall (\mu_i \rightarrow \exists (\kappa_i \wedge \Gamma_i)) \wedge
{\bigwedge_{j \in T}} \forall (\nu_j  \rightarrow \forall \Delta_j) \wedge \forall\forall \Omega.
\end{equation*}
Let 
$o_i$ be a fresh unary predicate for each $i \in S$, let $k= \lceil \log ((L+1) \cdot |S|)\rceil$, 
and let $w_0, \dots, w_{k-1}$
be a  collection of fresh unary predicates. Observe that $k$ is polynomially bounded as a function of
$\sizeof{\phi}$.
For each $i \in S$ and each $\ell$ ($0 \leq \ell \leq L$), 
we take 
$\bar{w}\langle i, \ell \rangle$ to be a distinct formula of the form 
$\pm w_0 \wedge \cdots \wedge \pm w_{k}$. As a guide to intuition, 
read the (1-place) formula $\bar{w}\langle i, 0 \rangle$ as 
characterizing those elements $a$ such that there exists a non-royal $b$ with $a,b$ satisfying
$\Gamma_i$, read the formulas
$\bar{w}\langle i, \ell \rangle$ ($1 \leq \ell \leq L$) as 
characterizing those elements $a$ such that $a, b_\ell$ satisfies
$\Gamma_i$, where $b_\ell$ is the king with fluted 1-type $\pi_\ell$, and finally
take the predicates $o_i$ to pick out pairwise disjoint collections of non-royal elements (we will say more
presently about how these sets are chosen).

We define $\psi$ to be the conjunction of the following formulas.
\begin{align}
&\bigwedge_{\ell=1}^{L} \exists \pi_\ell
\label{eq:flTWO0}\\
& \bigwedge_{i \in S} 
        \forall (\bar{w}\langle i, 0\rangle \rightarrow 
            \exists(o_i \wedge \kappa_i \wedge \Gamma_i))
\label{eq:flTWO1}\\
& \bigwedge_{i \in S} \bigwedge_{\ell= 1}^{L}
        \forall (\bar{w}\langle i, \ell\rangle \rightarrow 
            \forall(\pi_\ell \rightarrow (\kappa_i \wedge \Gamma_i)))
\label{eq:flTWO4}\\
& \bigwedge_{i\in S} \forall \big(\mu_i \rightarrow \bigvee_{\ell = 0}^L \bar{w}\langle i,\ell \rangle\big)
\label{eq:flTWO6}\\
& \bigwedge_{j \in T} 
        \forall\left(\nu_j \rightarrow \forall \Delta_j \right)
\label{eq:flTWO3}\\
& \bigwedge_{i, i' \in S}^{i \neq i'} \forall \neg (o_i \wedge o_i') \wedge \forall \forall \Omega.
\label{eq:flTWO7}
\end{align}
The conjuncts~\eqref{eq:flTWO0}--\eqref{eq:flTWO6} clearly entail
$\bigwedge_{i \in S} \forall (\mu_i \rightarrow \exists (\kappa_i \wedge \Gamma_i))$. Thus, $\psi \rightarrow \phi$. Suppose,
on the other hand, $\fA_1 \models \phi$ with the set of royal types in $\fA_1$ equal to $\Pi$.
We may assume without loss of generality that $S = \set{1, \dots, s}$.
Let the set of kings in $\fA_1$ be $A_0$, and apply the construction  of
Lemma~\ref{lma:theGreatInflate} to obtain the structures
$\fA_2, \dots, \fA_s$.
Let
$\fB = \fA_s$, a model of $\phi$ with domain $B= A_0 \cup B_1 \cup \cdots \cup B_s$. 
For all $a \in B$, if there exists a non-royal element $b$ such that $\fB \models \Gamma_i[a,b]$, then
there exists such a $b$ in each of the sets $B_1, \dotsm B_s$. Now expand $\fB$ to a model $\fB^+$
by setting $o_i^{\fB^+} = B_i$, and interpreting the predicates $w_0, \dots, w_{k-1}$ so that 
the formulas
$\bar{w}\langle i, \ell \rangle$ ($0 \leq \ell \leq L$) have the interpretations suggested above. 
It is then simple to check that $\fB^+ \models \psi$. We  remark finally that the consequents $(\kappa_i \wedge \Gamma_i)$ occurring in~\eqref{eq:flTWO4} can of course be written as a set of fluted clauses since $=$, $\neq$, $T$ and $\neg T$
are fluted literals. Hence $\psi$ is in spread normal form.
\end{proof}

\begin{lemma}
Let
$\phi$ be a spread normal-form $\FlEqTransVars{1}{2}$-formula.
We can compute, in time bounded by 
an exponential function of $\sizeof{\phi}$, a set $\Phi$ of basic formulas, 
such that: \textup{(}i\textup{)} the signature of $\Phi$ consists of the unary predicates occurring in $\phi$ together with the
distinguished predicate $T$; \textup{(}ii\textup{)} 
$\models \phi \rightarrow \bigwedge \Phi$; and \textup{(}iii\textup{)} any model of 
$\Phi$ can be expanded to a model of $\phi$.
\label{lma:FLotranstEqToFLotranstEqMinus}
\end{lemma}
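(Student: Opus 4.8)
The plan is to strip $\phi$ of its ordinary binary predicates and re-express the surviving constraints, which then involve only the unary predicates together with $T$ and $=$, in the basic shapes (B1)--(B8). The engine is mo-resolution, and in particular its completeness, Lemma~\ref{lma:resolution}. I begin by enumerating the fluted $1$-types over the signature of $\phi$, of which there are exponentially many. For a source $1$-type $\pi$ let $\Theta_\pi$ be the set of universal clauses binding the second coordinate when the first has type $\pi$, namely $\Omega$ together with those $\Delta_j$ whose guard $\nu_j$ is satisfied by $\pi$; and let $\Theta_\pi^\circ$ be the mo-resolution closure of $\Theta_\pi$ with all clauses containing ordinary binary predicates deleted. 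Since the signature is fixed, $\Theta_\pi^\circ$ is computable, it mentions only unary predicates, $T$ and $=$, and by soundness of mo-resolution $\phi \models \forall(\pi \rightarrow \forall \Theta_\pi^\circ)$. For each target $1$-type $\pi'$ I then read off from $\Theta_\pi^\circ$ whether $\neg T$ is inconsistent with $\pi'$ (forcing $T$, giving a (B3)-formula $\forall(\pi \rightarrow \forall(\pi' \rightarrow T))$) or $T$ is inconsistent (giving the (B4)-formula); when $\pi = \pi'$ the analogous test against $\neq$ yields (B5) or (B6). The purely unary residue coming from $\Omega$, and the spreading conjuncts $\forall(o_i \rightarrow \neg o_{i'})$, are recorded directly as (B7)-formulas.

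For the existential part, fix $i \in S$ and a source $1$-type $\pi$ with $\pi \models \mu_i$. I define $M_i(\pi)$ to be the set of $1$-types $\pi'$ for which $(\Gamma_i \cup \Theta_\pi)^\circ$ is consistent with the reduced configuration given by $\pi'$ and $\kappa_i$. By Lemma~\ref{lma:resolution} this reduced consistency is \emph{exactly} the condition under which $\pi'$ admits a full $2$-type realizing $\Gamma_i$, $\kappa_i$ and all applicable universal constraints, so $M_i(\pi)$ is computable. If $\kappa_i$ is $(T \wedge \neq)$ I emit the (B1)-formula $\forall(\pi \rightarrow \exists(\mu \wedge T\; \wedge \neq))$ with $\mu = \bigvee_{\pi' \in M_i(\pi)} \pi'$, and if $\kappa_i$ is $(\neg T \wedge \neq)$ the corresponding (B2)-formula. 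The two control formulas containing $=$ describe a self-witness, so $\exists(\kappa_i \wedge \Gamma_i)$ collapses to a unary condition on the element itself, fixing $\hat{T}$ or $\neg \hat{T}$; this I record as (B7), together with the side check, again via Lemma~\ref{lma:resolution}, that the diagonal instance of $\Gamma_i$ can be realized. Finally the conjuncts $\exists \lambda_h$ pass verbatim to (B8).

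Direction (ii) is then immediate: the (B3)--(B7) formulas are consequences of the $\forall(\pi \rightarrow \forall \Theta_\pi^\circ)$, each (B1)/(B2)-formula holds because any genuine witness furnished by $\phi$ has a $1$-type lying in the relevant $M_i(\pi)$ (its full $2$-type witnesses the required consistency), the self-witness (B7)-formulas hold for the same reason, and $\exists \lambda_h$ is inherited.

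The hard part is direction (iii), where the spread normal form earns its keep. Given $\fA \models \Phi$ over the reduced signature, I expand $\fA$ by interpreting the ordinary binary predicates one ordered pair at a time. For each $a$ with $\mu_i(a)$ I use the (B1)/(B2)-witness to pick a $b$ carrying $o_i$ with the correct reduced type, and extend $\ftp^\fA[a,b]$ to a full $2$-type realizing $\Gamma_i$; such an extension exists by Lemma~\ref{lma:resolution}, because membership of $b$'s type in $M_i(\pi)$ is precisely reduced consistency with $(\Gamma_i \cup \Theta_\pi)^\circ$. Every remaining ordered pair is given some full $2$-type consistent with the applicable $\Theta_\pi$, again by Lemma~\ref{lma:resolution}. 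The main obstacle is ensuring these pairwise choices never conflict, and this is exactly what the conjuncts $\forall(o_i \rightarrow \neg o_{i'})$ secure: each $b$ carries at most one label $o_i$, so no pair is ever asked to realize two incompatible $\Gamma_i$'s, and the forward and backward directions of a pair are assigned independently. Since $T$ and all unary predicates are untouched, transitivity of $T$ and every reduced constraint are preserved, so the expanded structure satisfies the universal conjuncts (each pair is consistent with its $\Theta_\pi$) and the existential conjuncts (the chosen witness pairs realize $o_i \wedge \kappa_i \wedge \Gamma_i$), hence models $\phi$. The exponential time bound follows from the exponential number of fluted $1$-types and the bounded cost of the fixed-signature mo-resolution closures.
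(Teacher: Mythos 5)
Your proposal follows essentially the same route as the paper's proof: project the clause sets onto the signature $\{$unary predicates$, T, =\}$ via mo-resolution closure and deletion of clauses with ordinary binary predicates, read off basic formulas of the forms (B1)--(B8) by a case analysis over fluted $1$-types, and use the completeness of mo-resolution (Lemma~\ref{lma:resolution}) together with the disjointness of the $o_i$ to expand a model of $\Phi$ back to a model of $\phi$ pair by pair. The only difference is bookkeeping: the paper first builds an intermediate formula whose conjuncts are indexed by subsets $J \subseteq T$ (its displays \eqref{eq:flT1}--\eqref{eq:flT4}) and only afterwards splits antecedents into $1$-types, whereas you index by the source $1$-type $\pi$ from the outset via $\Theta_\pi$; these are equivalent. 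One point needs repair: your $M_i(\pi)$ is defined by consistency with $(\Gamma_i \cup \Theta_\pi)^\circ$, omitting the unit clause $o_i$ (the paper's corresponding clause set is $\kappa_i \cup \set{o_i} \cup \Gamma_i \cup \bigcup_{j \in J}\Delta_j \cup \Omega$). As written, $M_i(\pi)$ may contain $1$-types not entailing $o_i$, so the emitted (B1)/(B2) formula does not guarantee that the witness $b$ carries $o_i$; then the step in part (iii) where you ``pick a $b$ carrying $o_i$'' is unsupported, and the disjointness conjuncts no longer prevent the same $b$ from being chosen as witness for two distinct $i$'s with incompatible $\Gamma_i$'s. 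Adding $\set{o_i}$ to the clause set defining $M_i(\pi)$ fixes this, since $o_i$ is a unary literal and survives the $^\circ$ operation.
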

\begin{proof}
Let $\phi$ be given, having the shape
\begin{multline*}
\bigwedge_{h \in R} \exists \lambda_h \wedge 
{\bigwedge_{i \in S}}\forall (\mu_i  \rightarrow \exists (o_i \wedge \kappa_i \wedge \Gamma_i)) \wedge\\
{\bigwedge_{j \in T}} \forall (\nu_j  \rightarrow \forall \Delta_j) \wedge
\forall \forall \Omega  \wedge
\bigwedge_{i,i' \in S}^{i \neq i'} \forall (o_i \rightarrow \neg o_{i'}),
\end{multline*}
and recall from Sec.~\ref{sec:ftc} that, for any fluted clause set $\Gamma$,
$\Gamma^\circ$ denotes the result of saturating under mo-resolution,
and then removing any clauses involving ordinary predicates of maximal arity (in this case 2). 
Noting that each $o_i$ is a (1-literal) clause, and regarding each control formula $\kappa_i$ as a
pair of (1-literal) clauses,
let $\psi$ be the corresponding conjunction
\begin{align}
& \bigwedge_{h \in R} \exists \lambda_h
\label{eq:flT1}\\
& \bigwedge_{i \in S} \bigwedge_{J \subseteq T} \forall \big(\big(\mu_i \wedge \bigwedge_{j \in J} \nu_j\big) \rightarrow \exists\big(\kappa_i \cup \set{o_i} \cup \Gamma_i \cup \bigcup_{j \in J} \Delta_j \cup \Omega \big)^{\hspace{-0.5mm} \circ}\big)
\label{eq:flT2}\\
& \bigwedge_{J \subseteq T} 
        \forall \big(\bigwedge_{j \in J} q_j \rightarrow \forall\big(\bigcup_{j \in J} \Delta_j \cup \Omega \big)^{\hspace{-0.5mm} \circ}\big)
\label{eq:flT3}\\
& \bigwedge_{i, i' \in S}^{i \neq i'} \forall \neg (o_i \wedge o_i').
\label{eq:flT4}
\end{align}
It is immediate by the validity of mo-resolution that $\models \phi \rightarrow \psi$. We claim that any model of $\psi$ may be expanded to a model of $\phi$. For suppose
$\fB \models \psi$; we expand to a structure $\fB^+$ interpreting a signature $\Sigma^+$ which additionally features the
non-distinguished binary predicates occurring in $\phi$ as follows. Fix any $a \in B$, and let 
$J = \set{j \in J \mid \fB \models \nu_j[a]}$. For each $i \in S$, if $\fB \models \mu_i[a]$, by~\eqref{eq:flT2},
there exists $b_i \in B$ such that $\fB \models \big(\kappa_i \cup \set{o_i} \cup \Gamma_i \cup \bigcup_{j \in J} \Delta_j \cup \Omega \big)^{\hspace{-0.5mm} \circ}\big)[a,b_i]$,
and by~\eqref{eq:flT4}, these $b_i$ are all distinct. For each $i \in S$,
let $\tau_i = \ftp^\fB[a,b_i]$. Obviously, $\models \tau_i \rightarrow \kappa_i$. By Lemma~\ref{lma:resolution}, there exists a fluted type $\tau^+_i$ in the signature $\Sigma^+$, 
such that $\tau^+_i \supseteq \tau_i$ and $\tau^+_i \models \set{o_i} \cup \Gamma_i \cup \bigcup_{j \in J} \Delta_j  \cup \Omega$. Therefore,
we may assign the pair $a,b_i$ to the extensions of the predicates in $\Sigma^+ \setminus \Sigma$ in such a way that its fluted 2-type
is $\tau^+_i$. Since the various $b_i$ are distinct, no clashes arise. Performing this operation for every element $a$, we have a partially
defined structure $\fB^+$ such that, however it is completed, $\fB^+ \models {\bigwedge_{i \in S} }\forall (\mu_i \rightarrow \exists (\kappa_i \wedge \Gamma_i))$, and,
moreover, the conjuncts $\forall (\nu_j  \rightarrow \forall \Delta_j)$ (for $j \in T$) and $\forall \forall \Omega$ have not been violated.
To  complete the definition of $\fB^+$, consider any ordered pair $a,b$ for which the predicates in $\Sigma^+ \setminus \Sigma$ have not been assigned.
Let $J = \set{j \in J \mid \fB \models \nu_j[a]}$, so that, by~\eqref{eq:flT3}, $\fB \models \left(\bigcup_{j \in J} \Delta_j  \cup \Omega\right)^\circ$.
Letting $\tau = \ftp^\fB[a,b]$, by Lemma~\ref{lma:resolution} there exists a fluted type $\tau^+$ in the signature $\Sigma^+$, 
such that $\tau^+ \supseteq \tau$ and $\tau^+ \models \bigcup_{j \in J} \Delta_j  \cup \Omega$. Therefore,
we may assign the pair $a,b$ to the extensions of the predicates in $\Sigma^+ \setminus \Sigma$ in such a way that its fluted 2-type
is $\tau^+$. At the end of this process, we have $\fB^+ \models \bigwedge_{j \in T} \forall (\nu_j  \rightarrow \forall \Delta_j) \wedge \forall\forall\Omega$.
Since $\fB^+$ is an expansion of $\fB$, we certainly have $\fB^+ \models \bigwedge_{h \in R} \exists \lambda_h$, and
also $\fB^+ \models \bigwedge_{i, i' \in S}^{i \neq i'} \forall \neg (o_i \wedge o_i')$ by~\eqref{eq:flT1} and~\eqref{eq:flT4}.  

The desired set of basic formulas $\Phi$ can now be obtained from $\psi$ by simple manipulation.
The conjuncts in~\eqref{eq:flT1} and~\eqref{eq:flT4} are already basic. 
The conjuncts in~\eqref{eq:flT2} are all of the
forms $\forall (\eta \rightarrow \exists (\neq \wedge \pm T \wedge \theta ))$ or
$\forall (\eta \rightarrow \exists (= \wedge \pm T \wedge \theta))$. 
Clearly, we may eliminate occurrences of $T$ and $=$ from $\theta$, so assume that this has been done, and
$\theta$ is a quantifier-free formula of arity 1. In the former case, we replace the conjunct
with a collection of conjuncts $\forall (\pi \rightarrow \exists (\neq \wedge \theta \wedge \pm T))$, where
$\pi$ ranges over all fluted 1-types consistent with $\eta$; such conjuncts are basic, 
of the forms
{(B1)} or {(B2)}. In the latter case, remembering that
$\hat{T}$ is a unary predicate interpreted as the diagonal of $T$, we see that all such formulas are
either trivial or 
logically equivalent to a formula of the form {(B7)} ruling out a certain collection of fluted 1-types, and 
which can be computed in time bounded by an exponential function of $\sizeof{\phi}$. 
The conjuncts in~\eqref{eq:flT3} are all of the
forms $\forall (\eta \rightarrow \forall \theta)$. Conversion to a conjunction of basic formulas of the forms
{(B3)}--{(B7)} in time bounded by an exponential function of $\sizeof{\phi}$ is then completely routine,
using similar considerations. 
\end{proof}

Thus, 
we have the promised upper bound for the problem $\Sat{\FlEqTransVars{1}{2}}$.
\begin{lemma}
	The satisfiability problem for $\FlEqTransVars{1}{2}$ is in \TwoNExpTime. 
	\label{lma:FLotranstEqUpper}
\end{lemma}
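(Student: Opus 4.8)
The plan is to compose the three reductions of Lemmas~\ref{lma:flMnf}, \ref{lma:FLotranstEqToFLotranstEqSpread} and~\ref{lma:FLotranstEqToFLotranstEqMinus} with the decision procedure of Lemma~\ref{lma:FLotransmEqMinus}, being careful at every stage to let the \emph{formulas} grow exponentially while keeping the \emph{signatures} polynomial. Given an input $\FlEqTransVars{1}{2}$-sentence $\phi$, I would first apply Lemma~\ref{lma:flMnf} to compute, in polynomial time, a normal-form formula $\phi_{\mathrm{nf}}$ which is satisfiable exactly when $\phi$ is (by clauses (i) and (ii) of that lemma) and whose signature is polynomial in $\sizeof{\phi}$.

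The essential non-deterministic step is to guess the set $\Pi$ of royal fluted 1-types of a putative model of $\phi_{\mathrm{nf}}$. Since a fluted 1-type is just a choice of truth-values for the unary predicates, there are at most $2^{|\Sigma|}$ of them, so $\Pi$ is a subset of an exponential-sized set and can be written down in non-deterministic exponential time. For the guessed $\Pi$, Lemma~\ref{lma:FLotranstEqToFLotranstEqSpread} produces, in exponential time, a spread normal-form formula $\phi_{\mathrm{sp}}$ whose signature is still polynomial in $\sizeof{\phi}$, with $\models \phi_{\mathrm{sp}} \to \phi_{\mathrm{nf}}$ and with the property that whenever $\phi_{\mathrm{nf}}$ has a model in which $\Pi$ is exactly the set of royal types, $\phi_{\mathrm{sp}}$ is satisfiable. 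Applying Lemma~\ref{lma:FLotranstEqToFLotranstEqMinus} to $\phi_{\mathrm{sp}}$ then yields, again in exponential time, a set $\Phi$ of basic $\FLotranstEqMinus$-formulas over a signature $\Sigma_\Phi$ consisting of the unary predicates of $\phi_{\mathrm{sp}}$ together with $T$ (hence still polynomial in $\sizeof{\phi}$), satisfying $\models \phi_{\mathrm{sp}} \to \bigwedge \Phi$ and such that any model of $\Phi$ expands to a model of $\phi_{\mathrm{sp}}$; thus $\Phi$ is satisfiable if and only if $\phi_{\mathrm{sp}}$ is.

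Chaining these equivalences, I would argue that $\phi$ is satisfiable if and only if, for some guessed $\Pi$, the associated set $\Phi$ is satisfiable. For the forward direction one takes a model of $\phi_{\mathrm{nf}}$, lets $\Pi$ be its \emph{actual} set of royal types, and reads off satisfiability of $\Phi$ from clause (iii) of the two reduction lemmas together with $\models \phi_{\mathrm{sp}} \to \bigwedge\Phi$; conversely, satisfiability of $\Phi$ propagates upward through $\phi_{\mathrm{sp}}$ and $\phi_{\mathrm{nf}}$ to $\phi$ via the model-expansion clauses and $\models \phi_{\mathrm{sp}} \to \phi_{\mathrm{nf}}$. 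The final step is to run the non-deterministic procedure of Lemma~\ref{lma:FLotransmEqMinus} on $\Phi$, which terminates in time $g(2^{2^{g(|\Sigma_\Phi|)}} + \sizeof{\Phi})$.

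The complexity accounting is the crux. Here $|\Sigma_\Phi|$ is polynomial in $\sizeof{\phi}$ while $\sizeof{\Phi}$ is at most exponential in $\sizeof{\phi}$, so the running time of the final procedure is bounded by $g(2^{2^{\mathrm{poly}(\sizeof{\phi})}} + 2^{\mathrm{poly}(\sizeof{\phi})})$, which is doubly exponential; the exponential-sized guess of $\Pi$ is subsumed by this bound. Hence the whole computation lies in \TwoNExpTime. The step requiring the most care is exactly this signature bookkeeping: because the oracle of Lemma~\ref{lma:FLotransmEqMinus} depends \emph{doubly} exponentially on $|\Sigma_\Phi|$ but only polynomially on $\sizeof{\Phi}$, any super-polynomial blow-up of the signature in the intermediate reductions would raise the bound to a triple exponential. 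The guess of $\Pi$ is forced on us precisely because the spread-normal-form reduction must separate witnesses supplied by kings, which cannot be duplicated, from those supplied by non-kings, which can be, via the inflation construction of Lemma~\ref{lma:theGreatInflate}; fixing $\Pi$ in advance is what makes that separation available.
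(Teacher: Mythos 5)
Your proposal is correct and follows essentially the same route as the paper's own proof: normalize via Lemma~\ref{lma:flMnf}, guess the set $\Pi$ of royal fluted 1-types, compose the reductions of Lemmas~\ref{lma:FLotranstEqToFLotranstEqSpread} and~\ref{lma:FLotranstEqToFLotranstEqMinus} (keeping signatures polynomial while letting formulas grow exponentially), and then invoke the certificate-based procedure of Lemma~\ref{lma:FLotransmEqMinus}, whose running time is doubly exponential only in the signature size. The complexity accounting and the role of the guessed $\Pi$ are exactly as in the paper.
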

\begin{proof}
	Let an $\FlEqTransVars{1}{2}$-sentence $\phi$ be given. By Lemma~\ref{lma:flMnf}, we may assume without loss of 
	generality that $\phi$ is in normal form.  
	Guess a set $\Pi$ of fluted 1-types over the signature of $\phi$ and 
	apply the procedure guaranteed by Lemma~\ref{lma:FLotranstEqToFLotranstEqSpread} to obtain, in time
	bounded by an exponential function of $\sizeof{\phi}$, a spread normal-form formula $\psi$, over a signature bounded by a polynomial function of
	$\sizeof{\phi}$, such that $\models \psi \rightarrow \psi$, and, if $\phi$ has a (finite) model 
	in which the set of royal fluted 1-types is $\Pi$, then $\psi$ has a such a model too. 
    By Lemma~\ref{lma:FLotranstEqToFLotranstEqMinus} we may then 
	obtain, in time
	bounded by an exponential function of $\sizeof{\psi}$, a set $\Phi_\Pi$ of basic $\FLotranstEqMinus$ sentences,
	over the signature consisting of the unary predicates of $\psi$ together with the distinguished predicate $T$, 
	such that $\Phi_\Pi$ is satisfiable over the same domains as $\phi$, assuming that the set of royal 1-types is $\Pi$. 
	Hence it suffices to check the satisfiability of each such $\Phi_\Pi$, non-deterministically, in time  
	bounded by a doubly exponential function of $\sizeof{\phi}$.
	But this we can do by Lemma~\ref{lma:FLotransmEqMinus}.
\end{proof}
\subsection{The logic $\FlEqTransVars{1}{m}$}
\label{sec:FLotransmEq}
Finally, we show how the satisfiability problem for $\FlEqTransVars{1}{m+1}$ can be reduced to the corresponding
problem for  $\FlEqTransVars{1}{m}$, but with exponential blow-up. The following notion will be useful. Let $I$ be a finite set. A {\em cover} of $I$ is a
set $M = \set{C_1, \dots, C_\ell}$ of subsets of $I$ such that $C_1 \cup \cdots \cup C_\ell = I$; the
elements of $M$ will be referred to as {\em cells}. A {\em minimal cover} of $I$ is a
cover $M$ of $I$ such that no proper subset of $M$ is a cover of $I$. Since no minimal cover of $I$ can have
more than $I$ cells, we have $|MC(I)| \leq 2^{|I|^2}$. Denote by $MC(I)$ the set of minimal covers of $I$.
If $I$ is a set of integers, and $M$ is a minimal cover of $I$, we may assume the cells of $M$ to be enumerated
in some standard way as $C_1, \dots, C_\ell$. 

\begin{lemma}
Let $\phi$ be a normal-form $\FlEqTransVars{1}{m+1}$-formula \textup{(}$m \geq 2$\textup{)}. We can compute, in time bounded by 
an exponential function of $\sizeof{\phi}$, a 
 normal-form $\FlEqTransVars{1}{m}$-formula $\psi$ such that $\phi$ and $\psi$ are satisfiable over the same domains.
\label{lma:FLotransmEqToFLotransmEq}
\end{lemma}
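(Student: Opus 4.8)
The plan is to remove, in a single exponential step, the innermost quantifier $x_{m+1}$ together with all ordinary predicates of arity $m+1$, leaving a normal-form formula of $\FlEqTransVars{1}{m}$. The guiding observation is that every fluted $(m+1)$-clause is a disjunction of \emph{top atoms} $p(x_1,\dots,x_{m+1})$ (the ordinary predicates of arity $m+1$) and of \emph{suffix atoms} of arity at most $m$, which speak only about $(x_2,\dots,x_{m+1})$; moreover the top atoms of distinct $(m+1)$-tuples are logically independent. Consequently, once the fluted $m$-type of the suffix $(x_2,\dots,x_{m+1})$ is fixed, the question whether the top atoms of that single tuple can be chosen so as to satisfy an $(m+1)$-clause set $\Gamma$ is, by Lemma~\ref{lma:resolution} together with the soundness of mo-resolution, precisely the question whether that $m$-type is consistent with $\Gamma^\circ$. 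Since $\Gamma^\circ$ is a set of clauses of arity $\le m$ mentioning none of the top predicates, it is expressible in $\FlEqTransVars{1}{m}$ over the suffix tuple; this is the device that converts arity $m+1$ into arity $m$.

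The remaining difficulty is a mismatch of variables: the antecedents $\mu_i,\nu_j$ of the conjuncts of~\eqref{eq:flMnf} constrain the prefix $(x_1,\dots,x_m)$, whereas the summaries $\Gamma^\circ$ constrain the suffix $(x_2,\dots,x_{m+1})$, so writing the two side by side would again require $m+1$ variables. I would resolve this exactly as in the passage~\eqref{eq:flT2}--\eqref{eq:flT3}, by memoising prefix information onto the $(m-1)$-element overlap $(x_2,\dots,x_m)$. For each subset $J$ of the universal index set $T$ I introduce a fresh $(m-1)$-ary predicate $e_J$ together with the universal conjunct $\forall^m\big((\bigwedge_{j\in J}\nu_j\wedge\bigwedge_{j\notin J}\neg\nu_j)\rightarrow e_J(x_2,\dots,x_m)\big)$ (an $m$-clause, folded into the $\forall^m\Omega$ block); dually, for each $i\in S$ and $J\subseteq T$ a fresh demand predicate $d_{i,J}$ set by $\forall^m\big((\mu_i\wedge\bigwedge_{j\in J}\nu_j\wedge\bigwedge_{j\notin J}\neg\nu_j)\rightarrow d_{i,J}(x_2,\dots,x_m)\big)$. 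The formula $\psi$ then comprises these defining conjuncts together with the \emph{reconstruction} conjuncts $\forall^{m-1}\big(e_J\rightarrow\forall(\bigcup_{j\in J}\Delta_j\cup\Omega)^\circ\big)$ and $\forall^{m-1}\big(d_{i,J}\rightarrow\exists(o_i\wedge\kappa_i\wedge(\Gamma_i\cup\bigcup_{j\in J}\Delta_j\cup\Omega)^\circ)\big)$, where $o_i$ is a fresh unary marker and the clauses $\neg o_i\vee\neg o_{i'}$ ($i\neq i'$) are added to $\Omega$. All of this is in $m$-variable normal form, uses exponentially many new predicates of arity $\le m$, and is computable in exponential time (each closure $(\cdot)^\circ$ being finite).

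For the forward direction I would start from $\fA\models\phi$, mark each element serving as a requirement-$i$ witness by $o_i$, and interpret each $e_J,d_{i,J}$ as forced by its defining conjunct. The reconstruction conjuncts then hold by soundness of mo-resolution: whenever $e_J$ (resp. $d_{i,J}$) is set at $(a_2,\dots,a_m)$, some prefix realises exactly $\{\nu_j : j\in J\}$, so $\bigcup_{j\in J}\Delta_j\cup\Omega$ (resp. also $\Gamma_i$) holds on the corresponding $(m+1)$-tuples in $\fA$, whence its consequence $(\cdot)^\circ$ holds on the suffix. For the backward direction I take $\fB\models\psi$ and, on the \emph{same} domain, define the arity-$(m+1)$ predicates tuple by tuple: for $(a_1,\dots,a_{m+1})$ put $J=\{j: (a_1,\dots,a_m)\models\nu_j\}$; the defining conjunct forces $e_J(a_2,\dots,a_m)$, so the universal reconstruction conjunct makes the suffix $m$-type consistent with $(\bigcup_{j\in J}\Delta_j\cup\Omega)^\circ$, and Lemma~\ref{lma:resolution} supplies top atoms satisfying $\bigcup_{j\in J}\Delta_j\cup\Omega$. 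For a tuple designated as a requirement-$i$ witness, the demand conjunct lets me instead apply Lemma~\ref{lma:resolution} to the larger set $\Gamma_i\cup\bigcup_{j\in J}\Delta_j\cup\Omega$, so $\Gamma_i$ is met as well. Throughout, $T$ is never touched — it is a suffix atom already fixed in $\fB$ — so its transitivity is inherited for free, and the control formulas $\kappa_i$, relating only the last two variables, survive the reindexing unchanged.

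The crux, and the step I expect to be the main obstacle, is the \emph{global} consistency of this tuple-by-tuple reconstruction. A single $(m+1)$-tuple might be called on to witness several existential requirements at once, and although each $(\Gamma_i\cup\cdots)^\circ$ is individually guaranteed, the union $\Gamma_i\cup\Gamma_{i'}\cup\cdots$ need not be consistent with the suffix type. This is exactly what the markers $o_i$ prevent: the clauses $\neg o_i\vee\neg o_{i'}$ force every element — hence every witness tuple — to discharge at most one requirement, so no tuple ever carries two conflicting demands. The price is paid in the forward direction, where realising distinct witnesses for distinct requirements may require duplicating elements; I expect this to be where the real work lies, to be handled by a copying argument in the spirit of Lemma~\ref{lma:theGreatInflate} adapted to $m$-ary fluted structures (preserving all realised fluted $m$-types and the transitivity of $T$). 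Since such duplication preserves finiteness, the construction yields that $\phi$ is satisfiable iff $\psi$ is, and likewise for finite satisfiability, which is the operative meaning of ``satisfiable over the same domains'' needed for the ensuing complexity bounds.
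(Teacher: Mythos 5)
Your core devices are the right ones and match the paper's: eliminate the arity-$(m+1)$ ordinary predicates by passing to the mo-resolution closure $(\cdot)^\circ$, recover them tuple-by-tuple via Lemma~\ref{lma:resolution}, and bridge the prefix/suffix variable mismatch by memoising the set $J$ of triggered universal antecedents onto fresh $(m-1)$-ary predicates over the overlap $(x_2,\dots,x_m)$. Your predicates $e_J$ and the reconstruction conjuncts for the universal part are essentially the paper's $q_J$ and~\eqref{eq:flM5}, and the observation that $T$, $=$ and the control formulas $\kappa_i$ are inert under this translation is also how the paper proceeds.

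The genuine gap is in how you handle several existential requirements triggered by the same prefix. You spread them over distinct witnesses using \emph{unary} markers $o_i$ with global disjointness clauses $\neg o_i\vee\neg o_{i'}$, and you correctly identify the cost: a model of $\phi$ need not admit such a partition of its domain into requirement classes (the same element may serve as the $i$-witness for one prefix and the $i'$-witness for another), so the forward direction needs a duplication argument. But the copying lemma you appeal to, Lemma~\ref{lma:theGreatInflate}, is specifically a \emph{two-variable} result — it preserves fluted $2$-types and the transitivity of $T$ while cloning the non-royal part — and no analogue preserving fluted $(m+1)$-types under transitivity is established in the paper or obviously true; this is precisely the "real work" you defer, and it is not a routine adaptation. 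The paper's proof is architected to avoid it entirely: instead of forcing witnesses to be distinct per requirement, it guesses, for each prefix, a \emph{minimal cover} $M=\{C_1,\dots,C_\ell\}$ of the set $I$ of triggered existential indices, demands one witness per \emph{cell} $C_h$ satisfying $\big(\bigcup_{i\in C_h}\Gamma_i\cup\bigcup_{j\in J}\Delta_j\cup\Omega\big)^\circ$, and imposes disjointness only between cells and only at the level of $m$-ary tuple predicates $p_{I,J,M,h}$ (conjunct~\eqref{eq:flM6}), not elements. Requirements that happened to share a witness in the original model are simply grouped into the same cell, so the forward direction is a pure expansion over the same domain with no duplication — which is also what makes the clause "satisfiable over the same domains" come out for free. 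You would either need to prove the $m$-ary inflation lemma (a substantial new claim) or switch to the cover-based, tuple-level disjointness scheme; the unary $o_i$ spreading belongs only to the base case $m=2$, where Lemma~\ref{lma:theGreatInflate} is actually available.
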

\begin{proof}
Let $\phi$ be as given in \eqref{eq:flMnf}. The control formulas $\kappa_i$ occurring there
involve only binary predicates, and thus will be---so far as this proof is concerned---inert. Indeed, 
since each $\kappa_i$ is a conjunction
of two (1-literal) clauses, we can harmlessly absorb it into the respective clause set $\Gamma_i$.
Thus, we may take $\phi$ to have the shape:
\begin{equation*}
\bigwedge_{i \in S}\forall^m (\mu_i \rightarrow \exists \Gamma_i) \wedge
\bigwedge_{j \in T} \forall^m (\nu_j  \rightarrow \forall \Delta_j)  \wedge \forall^{m+1} \Omega.
\end{equation*}
We may also assume without loss of generality that the indices in the sets $S$ and $T$ are integers.
For each $I$ ($I \subseteq S$)
and each $J$ ($J \subseteq T$), let $p_{I,J}$ and $q_J$ be fresh ($m-1$)-ary predicates.
Further, for each minimal cover $M = \set{C_1, \dots, C_\ell}$ of $I$ (enumerated in the standard way),   
let $p_{I,J,M}$ be a fresh ($m-1$)-ary predicate, and for each $h$ ($1 \leq h \leq \ell$), let 
$p_{I,J,M,h}$ be a fresh $m$-ary predicate. 

Remembering that the $\Gamma_i$, $\Delta_j$, and $\Omega$ occurring in $\phi$ are sets of fluted $(m+1)$-clauses,
let $\psi$ be the conjunction of formulas
\begin{align}
& \bigwedge_{I \subseteq S} \bigwedge_{J \subseteq T} 
        \forall^m\big(\bigwedge_{i \in I} \mu_i \wedge \bigwedge_{j \in J} \nu_j\rightarrow p_{I,J}\big)
\label{eq:flM1}\\
& \bigwedge_{J \subseteq T} 
        \forall^m\big(\bigwedge_{j \in J} \nu_j\rightarrow q_{J}\big)
\label{eq:flM2}\\
& \bigwedge_{I \subseteq S} \bigwedge_{J \subseteq T} 
        \forall^m\big(p_{I,J} \rightarrow \bigvee_{M \in MC(I)} p_{I,J,M}\big)
\label{eq:flM3}\\ 
&\bigwedge_{I \subseteq S} \bigwedge_{J \subseteq T} \bigwedge_{M \in MC(I)\hspace{-2mm}}
      \hspace{-4mm}  \forall^{m-1}  \big(p_{I,J,M} \rightarrow 
\hspace{-1mm} \bigwedge_{h=1}^{|M|} \exists \big(p_{I,J,M,h} \wedge 
           \big(\hspace{-1mm} \bigcup_{i \in C_h} \Gamma_i \cup \bigcup_{j \in J} \Delta_j \cup \Omega \big)^{\hspace{-0.5mm} \circ}\big)\big)
\label{eq:flM4}\\
& \bigwedge_{I \subseteq S} \bigwedge_{J \subseteq T} 
        \forall^{m-1}\big(q_{J} \rightarrow \bigwedge_{h=1}^{|M|} \forall \big( \bigcup_{j \in J} \Delta_j \cup \Omega \big)^{\hspace{-0.5mm} \circ} \big)
\label{eq:flM5}\\
& \bigwedge_{I \subseteq S} \bigwedge_{J \subseteq T} \bigwedge_{M \in MC(I)} \bigwedge_{1 \leq h < h' \leq |M|} \hspace{-4mm}\forall^m \neg (p_{I,J,M,h} \wedge p_{I,J,M,h'}).
\label{eq:flM6}
\end{align}
Modulo re-arrangement of conjuncts, $\psi$ is a normal-form formula of $\FlEqTransVars{1}{m}$. 
It suffices therefore to show that $\phi$ and $\psi$ are satisfiable over the same domains. 

Suppose $\fA \models \phi$. We expand to a model $\fA^+ \models \psi$ as follows. 
For any ($m-1$)-tuple $\bar{a}$ and any $I \subseteq S$ and $J \subseteq T$, if there exists $a \in A$ such that
$\fA \models \mu_i[a,\bar{a}]$ for all $i \in I$  and $\fA \models \nu_j[a,\bar{a}]$ for all $j \in J$,
assign $\bar{a}$ to the extension of $p_{I,J}$, and pick some particular $a$ for which this condition
is satisfied. Since $\fA \models \phi$,
there exists a collection of distinct individuals $b_1, \dots, b_\ell$ and a minimal cover
$M= \set{C_1, \dots, C_\ell}$ of $I$ such that, for all $h$ ($1 \leq h \leq \ell$), 
$\fA \models \left(\bigcup_{i \in C_h} \Gamma_i \cup \bigcup_{j \in J} \Delta_J \cup \Omega \right)[a,\bar{a},b_h]$.
Now assign $\bar{a}$ to the extension of $p_{I,J,M}$, and for all $h$ ($1 \leq h \leq \ell$), 
assign $\bar{a},b_h$ to the extension of $\fA^+ \models p_{I,J,M,h}$.
It follows by the validity of the resolution rule, that, for all $h$ ($1 \leq h \leq \ell$), 
$\fA \models \left(\bigcup_{i \in C_h} \Gamma_i \cup \bigcup_{j \in J} \Delta_J \cup \Omega \right)^\circ[\bar{a},b_h]$,
and, by construction, 
$\fA^+ \models p_{I,J,M,h}[\bar{a},b_h]$.
Carrying out this process for all possible ($m-1$)-tuples $\bar{a}$, we see that 
$\fA^+$ verifies the formulas~\eqref{eq:flM1}, \eqref{eq:flM3} and~\eqref{eq:flM4}. 
Moreover, since the individuals $b_1, \dots, b_\ell$ are by hypothesis distinct, 
no tuple $\bar{a},b_h$ satisfies both $p_{I,J,M,h}$ and $p_{I,J,M,h'}$ for $h' \neq h$,
so that $\fA^+$ verifies the formulas~\eqref{eq:flM6}.
Similarly, 
for any ($m-1$)-tuple $\bar{a}$ and any $J \subseteq T$, if there exists $a \in A$ such that
$\fA \models \nu_j[a,\bar{a}]$ for all $j \in J$, 
assign $\bar{a}$ to the extension of $q_{J}$, and pick some particular $a$ for which this condition
is satisfied. Since $\fA \models \phi$,
for any individual $b \in A$, we have 
$\fA \models \left(\bigcup_{j \in J} \Delta_J \cup \Omega \right)[a,\bar{a},b]$, whence,
by the validity of the resolution rule, 
$\fA \models \left(\bigcup_{j \in J} \Delta_J \cup \Omega \right)^\circ[\bar{a},b]$. 
Thus, $\fA^+$ verifies the formulas~\eqref{eq:flM2} and~\eqref{eq:flM5}, whence
$\fA^+ \models \psi$, as required.

Suppose, conversely, that $\fB \models \psi$. We expand to a structure $\fB^+$ interpreting the $(m+1)$-ary predicates
of $\phi$ in such a way that $\fB^+ \models \phi$. Fix for the moment some
element $a$ and ($m-1$)-tuple of elements $\bar{a}$, and define 
$I  = \set{i \in S \mid \fB \models \mu_i[a,\bar{a}]}$ and 
$J  = \set{j \in T \mid \fB \models \nu_j[a,\bar{a}]}$. It follows from~\eqref{eq:flM1}
that $\fB \models p_{I,J}[\bar{a}]$. Indeed, from~\eqref{eq:flM3}, there 
exists a minimal cover $M = \set{C_1, \dots, C_\ell}$ of $I$ such that $\fB \models p_{I,J,M}[\bar{a}]$, whence from~\eqref{eq:flM4}, we can find elements $b_1, \dots, b_\ell$ such that, for all $h$ ($1 \leq h \leq \ell$), 
$\fB \models \left(\bigcup_{i \in C_h} \Gamma_i \cup \bigcup_{j \in J} \Delta_J \cup \Omega \right)^\circ[\bar{a},b_h]$,
and $\fB \models p_{I,J,M,h}[\bar{a},b_h]$. Letting $\tau_h= \ftp^\fB[\bar{a},b_h]$, it follows from Lemma~\ref{lma:resolution} that there exists a fluted ($m+1$)-type $\tau^+_h \supseteq \tau_h$ such
that $\models \tau^+_h \rightarrow \left(\bigcup_{i \in C_h} \Gamma_i \cup \bigcup_{j \in J} \Delta_J \cup \Omega \right)$.
From~\eqref{eq:flM6}, the $b_h$ are all distinct, so we may interpret the ($m+1$)-ary predicates of $\phi$ in $\fB^+$ 
so that $\ftp^{\fB^+}[a,\bar{a},b_h]= \tau^+_h$. Carrying out this process
for all $m$-tuples $(a,\bar{a})$, we thus ensure that $\fB^+ \models \bigwedge_{i \in S} \forall^m (\mu_i \rightarrow \exists \Gamma_i)$. Note that we have not assigned any ($m+1$)-tuples in such a way as to violate the constraints $\bigwedge_{j \in T} \forall^m (\nu_j \rightarrow \forall \Delta_j)$ or $\forall^{m+1} \Omega$. To complete
the definition of $\fB^+$, let $a,\bar{a},b$ be an ($m+1$)-tuple for which the extensions of the ($m+1$)-ary predicates have not been fixed. Again, let 
$J  = \set{j \in T \mid \fB \models \nu_j[a,\bar{a}]}$. It follows from~\eqref{eq:flM2}
that  $\fB \models q_{J}[\bar{a}]$, and thence from~\eqref{eq:flM5} that
$\fB \models \left(\bigcup_{j \in J} \Delta_J \cup \Omega \right)^\circ[\bar{a},b]$. Now let 
$\tau= \ftp^\fB[\bar{a},b]$, so that, from Lemma~\ref{lma:resolution}, there exists a fluted ($m+1$)-type $\tau^+ \supseteq \tau$ such
that $\models \tau^+ \rightarrow \left(\bigcup_{j \in J} \Delta_J \cup \Omega \right)$. 
Hence we may interpret the ($m+1$)-ary predicates of $\phi$  in $\fB^+$ 
so that $\ftp^{\fB^+}[a,\bar{a},b]= \tau^+$. 
At the end of this process, 
$\fB^+ \models \bigwedge_{j \in T} \forall^m (\nu_j \rightarrow \forall \Delta_j) \wedge \forall^{m+1} \Omega$. Thus, $\fB^+ \models \phi$.
\end{proof}

We have finally reached the goal of this section.
\begin{theorem}
The satisfiability problem for $\FlEqTransVars{1}{m}$ is in $m$-\NExpTime. 
\label{theo:FLotransmEqUpper}
\end{theorem}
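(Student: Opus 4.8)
The plan is to prove the theorem by induction on $m$, assembling the doubly-exponential base case with the variable-elimination reduction as the inductive step. For $m = 2$ the claim is exactly Lemma~\ref{lma:FLotranstEqUpper}, which places $\Sat(\FlEqTransVars{1}{2})$ in \TwoNExpTime, i.e.\ in $2$-\NExpTime. Suppose, then, as inductive hypothesis, that $\Sat(\FlEqTransVars{1}{m})$ is in $m$-\NExpTime, and let an arbitrary $\FlEqTransVars{1}{m+1}$-sentence $\phi$ of size $n$ be given. First I would put $\phi$ into the normal form~\eqref{eq:flMnf}: this is carried out in polynomial time by the evident generalisation of Lemma~\ref{lma:flMnf} to $(m+1)$ variables (the proof of \cite[Lemma~4.1]{P-HST-FLrev} for the equality-free fragment applies here verbatim), and it preserves satisfiability. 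Applying Lemma~\ref{lma:FLotransmEqToFLotransmEq} to the resulting normal-form formula then yields, in time bounded by an exponential function of $n$, a normal-form $\FlEqTransVars{1}{m}$-formula $\psi$ that is satisfiable over exactly the same domains as $\phi$; in particular $\phi$ is satisfiable if and only if $\psi$ is. Since $\psi$ is a $\FlEqTransVars{1}{m}$-sentence, I would finish by invoking the inductive hypothesis to decide the satisfiability of $\psi$.

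It remains only to verify that the composite procedure runs within $(m+1)$-\NExpTime. Writing $\exp_k$ for the $k$-fold iterated exponential, the normalisation step leaves the size of the formula polynomial in $n$, and Lemma~\ref{lma:FLotransmEqToFLotransmEq} then produces $\psi$ of size $\exp_1(\mathrm{poly}(n))$. Running the $m$-\NExpTime decision procedure of the inductive hypothesis on an input of this size takes nondeterministic time $\exp_m(\mathrm{poly}(\exp_1(\mathrm{poly}(n))))$. Since $\mathrm{poly}(\exp_1(x)) \le \exp_1(O(x))$, this bound is at most $\exp_m(\exp_1(\mathrm{poly}(n))) = \exp_{m+1}(\mathrm{poly}(n))$, which is precisely $(m+1)$-\NExpTime. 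This completes the induction and shows that $\Sat(\FlEqTransVars{1}{m})$ is in $m$-\NExpTime for all $m \ge 2$.

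The genuine mathematical content of the argument lies entirely in the two facts being combined — the doubly-exponential base case of Lemma~\ref{lma:FLotranstEqUpper} and the single-exponential variable-elimination of Lemma~\ref{lma:FLotransmEqToFLotransmEq} — so the only thing requiring care is the complexity bookkeeping. The key observation that makes the resulting tower come out to exactly height $m$, rather than higher, is that composing a single-exponential reduction with an $m$-fold-exponential-time decision procedure yields an $(m+1)$-fold-exponential-time procedure, because $\exp_m \circ \exp_1 = \exp_{m+1}$; this is what lets each variable elimination cost exactly one level of the tower. I expect the main (minor) obstacle to be confirming that the polynomial overheads incurred at each stage — from normalisation, and from feeding the output of one reduction as the input of the next — are genuinely absorbed into the $\mathrm{poly}(\cdot)$ inside the iterated exponential, and never accumulate into an additional exponential level.
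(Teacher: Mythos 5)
Your proposal is correct and follows essentially the same route as the paper: induction on $m$ with Lemma~\ref{lma:FLotranstEqUpper} as the base case and Lemma~\ref{lma:FLotransmEqToFLotransmEq} as the exponential-cost reduction step, noting that the composition of a single-exponential reduction with an $m$-fold exponential procedure stays within $(m+1)$-\NExpTime. The only difference is that you spell out the complexity bookkeeping (and the $m$-variable generalisation of the normal-form lemma) more explicitly than the paper does.
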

\begin{proof}
Let an $\FlEqTransVars{1}{m}$-sentence $\phi$ be given. By Lemma~\ref{lma:flMnf}, we may assume without loss of 
generality that $\phi$ is in normal form.  
We proceed by induction, starting with $m=2$. (The cases $m=0$ and $m=1$ are trivial.) The base case
is Lemma~\ref{lma:FLotranstEqUpper}. For the recursive case, 
Lemma~\ref{lma:FLotransmEqToFLotransmEq} reduces the original problem to the corresponding problem for $m-1$,
but with an exponential blow-up.
\end{proof}
Before moving the the next section we obtain a corollary concerning the finite satisfiability problem. 
\begin{corollary}
The finite satisfiability problem for $\FlEqTransVars{1}{m}$ is in\nb{I: parens inserted in $m+1$}\newline
$(m+1)$-\NExpTime.\label{cor:finsat1T}
\end{corollary}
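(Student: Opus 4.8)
The plan is to follow the inductive skeleton of the proof of Theorem~\ref{theo:FLotransmEqUpper} essentially verbatim, changing only the base case. The crucial observation is that the recursive step used there, Lemma~\ref{lma:FLotransmEqToFLotransmEq}, asserts that the normal-form $\FlEqTransVars{1}{m+1}$-formula $\phi$ and the computed normal-form $\FlEqTransVars{1}{m}$-formula $\psi$ are \emph{satisfiable over the same domains}; in particular they are satisfiable over the same \emph{finite} domains. Hence that lemma reduces $\FinSat(\FlEqTransVars{1}{m})$ to $\FinSat(\FlEqTransVars{1}{m-1})$ with exactly the same exponential blow-up in formula size as in the plain satisfiability case. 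After first putting the input into normal form exactly as in Theorem~\ref{theo:FLotransmEqUpper}, I would therefore reduce, by $m-2$ applications of Lemma~\ref{lma:FLotransmEqToFLotransmEq}, the finite satisfiability problem for $\FlEqTransVars{1}{m}$ to that for $\FlEqTransVars{1}{2}$.

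For the base case $m=2$ I would \emph{not} reuse Lemma~\ref{lma:FLotranstEqUpper}: the certificate machinery of Sec.~\ref{sec:FLotranstEqMinus} is tailored to \emph{possibly infinite} models (the structure $\fA(\cC)$ is in general infinite), and so it decides satisfiability rather than finite satisfiability. Instead I would simply note that $\FlEqTransVars{1}{2}$ is a sub-fragment of $\FOt$ with a single distinguished transitive relation and equality, whose finite satisfiability problem is known to be in \ThreeNExpTime~\cite{purdyTrans:ph18}; consequently $\FinSat(\FlEqTransVars{1}{2})$ is in \ThreeNExpTime as well. This is precisely the point at which finite satisfiability costs one exponential more than plain satisfiability, for which the base case was only \TwoNExpTime.

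It then remains to count tower levels. Deciding a $k$-\NExpTime problem on an input that is exponentially larger than the original takes $(k+1)$-\NExpTime in the size of the original; thus each of the $m-2$ applications of Lemma~\ref{lma:FLotransmEqToFLotransmEq} raises the tower index by one. Starting from the base value $3$ for $\FlEqTransVars{1}{2}$ (which already equals $(m+1)$ when $m=2$, where there are no reductions to perform), after $m-2$ reductions we obtain $((m-2)+3)$-\NExpTime $=(m+1)$-\NExpTime, as required. The only genuine point to verify — and the main obstacle — is that every reduction in the chain preserves finiteness of models, so that a finite model of the reduced formula yields a finite model of the original and conversely. For the recursive reductions this is immediate from the phrasing of Lemma~\ref{lma:FLotransmEqToFLotransmEq} (``satisfiable over the same domains''), and for the base case it is guaranteed by the inclusion of $\FlEqTransVars{1}{2}$ in $\FOt$ with one transitive relation together with the fact that the cited result of~\cite{purdyTrans:ph18} concerns the \emph{finite} satisfiability problem of that logic, with equality available.
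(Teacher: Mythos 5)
Your proposal is correct and follows essentially the same route as the paper: the paper's proof likewise keeps the inductive reduction of Theorem~\ref{theo:FLotransmEqUpper} unchanged (relying on Lemma~\ref{lma:FLotransmEqToFLotransmEq} preserving the domains, hence finiteness) and replaces only the base case with the \ThreeNExpTime{} bound for finite satisfiability of $\FOt$ with one transitive relation and equality from~\cite{purdyTrans:ph18}. Your explicit tower-count $(m-2)+3 = m+1$ is exactly the accounting the paper leaves implicit.
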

\begin{proof}
The proof differs from the proof of Theorem~\ref{theo:FLotransmEqUpper} only in the base case, where we apply the fact that the finite satisfiability problem for $\FOt$ with one transitive relation and equality is decidable in \ThreeNExpTime~\cite{purdyTrans:ph18}; this complexity bound obviously applies to $\FlEqTrans{1}$. 
\end{proof}

\section{Fluted Logic with more Transitive Relations}\label{sec:undecidable}
In the previous section, we considered $\FLt$ extended with a single transitive relation and equality.
In this section we consider $\FLt$ extended with more transitive relations.
Specifically, we show that the satisfiability and finite satisfiability problems for
$\FLt_=2T$
(two-variable fluted logic with two transitive relations and equality) or for $\FLt{}3T$
(two-variable fluted logic with three transitive relations but without equality), are all undecidable.

A \emph{tiling system} is a tuple $\boldsymbol{\mathcal{C}}=({\mathcal C}, H, V)$, where
$\mathcal C$ is a finite set of {\em tiles},
and
$H$, $V \subseteq {\mathcal C} \times {\mathcal C}$ are the  {\em horizontal} and {\em vertical} constraints.
A \textit{tiling} of $\N^2$ for $\boldsymbol{\mathcal{C}}$ is a function $f: \N^2 \rightarrow {\mathcal C}$, such that 
for all $X,Y \in \N$, 
$(f(X,Y), f(X+1,Y)) \in  H$ and $(f(X,Y), f(X,Y+1)) \in  V$. Intuitively, we think of $f$ as assigning 
(a copy of) some tile in $\cC$ to each point with integer coordinates in the upper-right quadrant of the plane: this assignment
must respect the horizontal and vertical constraints, understood as a list
of which tiles may be placed immediately to the right of (respectively: immediately above) which others. A tiling
is {\em periodic} if there exist $m$, $n$ such that, for all $X$ and  $Y$, $f(X+m,Y) = f(X,Y+n) = f(X,Y)$. 
Denote by 
$\N^2_{m,n}$ the finite initial segment
$[0,m-1] \times [0,n-1]$ of $\N^2$. A \textit{tiling} of $\N^2_{m,n}$ is a function
$f: \N^2_{m,n} \rightarrow {\mathcal C}$, such that 
for all $X$, $Y$ ($0 \leq X < m-1$, $0 \leq Y \leq n-1$),
$(f(X,Y), f(X+1,Y)) \in  H$ and 
for all $X$, $Y$ ($0 \leq X \leq  m-1$, $0 \leq Y < n-1$), $(f(X,Y), f(X,Y+1)) \in  V$. If $f$ is a tiling (of either $\N^2_{m,n}$ or $\N^2$), we
call the value $f(0,0)$ the {\em initial condition}, and, if $f$ is a tiling of $\N^2_{m,n}$,
we call the value $f(m-1,n-1)$ the {\em final condition}.

There are many undecidability results concerning tiling systems.
The {\em infinite tiling problem with initial condition} is the following: given a tiling system $\boldsymbol{\mathcal{C}}$ and
a tile $C_0 \in \cC$, does there exist a tiling of $\N^2$ for $\boldsymbol{\mathcal{C}}$ with initial condition $C_0$?
The {\em finite tiling problem with initial and final conditions} is the following: given a tiling system $\boldsymbol{\mathcal{C}}$ and
tiles $C_0, C_1 \in \cC$, do there exist positive $m$, $n$ and a tiling of $\N^2_{m,n}$ for $\boldsymbol{\mathcal{C}}$ with initial condition $C_0$ and final condition $C_1$? It is straightforward to show:
\begin{proposition}
The infinite tiling problem with initial condition and the finite tiling problem with initial and final conditions are both undecidable.
\label{prop:noddyTiling}
\end{proposition}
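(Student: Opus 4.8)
The plan is to reduce from the halting problem for deterministic Turing machines on the empty input, exploiting the classical correspondence between Turing-machine computations and tilings (in the style of Wang). Given a machine $M$, I would construct a tiling system $\boldsymbol{\mathcal{C}}_M = (\cC, H, V)$ in which a single row of tiles encodes one instantaneous configuration of $M$ (a tape symbol, together with a marker recording whether the head scans that cell and, if so, the current state); the horizontal constraints $H$ enforce internal consistency of a configuration (exactly one head marker per row, symbols unchanged away from the head), and the vertical constraints $V$ encode the one-step transition relation of $M$ (the contents of cell $X$ at time $Y+1$ being determined by cells $X-1,X,X+1$ at time $Y$). The initial tile $C_0$ would be chosen to force the bottom-left cell to carry the initial configuration, with the head in the start state over a blank tape.

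For the infinite tiling problem with initial condition, I would arrange that no tile is permitted to sit immediately above a cell recording a halting state; consequently a tiling of all of $\N^2$ with initial tile $C_0$ can be completed if and only if the computation of $M$ never reaches a halting configuration. Since the non-halting problem is undecidable (being the complement of the halting problem), the infinite tiling problem with initial condition is undecidable as well.

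For the finite tiling problem with initial and final conditions, I would introduce a distinguished ``accepting corner'' tile $C_1$ that may occupy a position only when $M$ has halted, and design the boundary behaviour so that a tiling of $\N^2_{m,n}$ with $f(0,0)=C_0$ and $f(m-1,n-1)=C_1$ exists for some $m,n$ precisely when $M$ halts on empty input, with $n$ bounding the running time and $m$ the tape space consumed. Because the halting problem is undecidable, this finite variant is undecidable too.

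The main obstacle, and the part demanding the most care, is verifying the backward direction of each reduction: one must check that \emph{every} legal tiling --- not merely the intended one --- genuinely encodes a computation of $M$, which amounts to confirming that $H$ and $V$ leave no room for spurious configurations and that the boundary columns and the topmost and rightmost strips behave correctly. The determinism of $M$ renders this bookkeeping routine rather than delicate, which is why the claim is ``straightforward''; nevertheless, the precise catalogue of tiles and adjacency rules, together with the treatment of the quadrant's edges, is where essentially all the work resides.
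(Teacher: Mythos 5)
Your proposal is correct and coincides with the classical Turing-machine-to-Wang-tiling reduction that the paper implicitly appeals to: Proposition~\ref{prop:noddyTiling} is stated without proof as ``straightforward,'' and reducing the non-halting problem to the infinite tiling problem with initial condition, and the halting problem to the finite tiling problem with initial and final conditions, is the standard way to discharge it. The one point to watch is that purely local horizontal constraints cannot by themselves enforce ``exactly one head marker per row''; in the standard construction this invariant is instead established for the bottom row by the initial condition and propagated upward inductively by the vertical constraints, a detail already covered by the bookkeeping you flag in your final paragraph.
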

The following result, by contrast, is deep (see~e.g.~\cite[p.~90]{purdyTrans:BGG97}). Recall that sets $A$ and $B$ are \textit{recursively inseparable} if there exists no recursive (=decidable) set $S$ such that $A \subseteq S$ and $B \cap S = \emptyset$.
\begin{proposition}
The set of tiling systems for which there exists a periodic tiling of $\N^2$ is recursively inseparable from the set 
of tiling systems for which there exists no tiling of $\N^2$.
\label{prop:insep}
\end{proposition}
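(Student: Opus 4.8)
The plan is to prove recursive inseparability by a computable reduction from a standard pair of recursively inseparable recursively enumerable sets. Fix a universal machine and set $W_0 = \{e \mid \varphi_e(e)\!\downarrow\,=0\}$ and $W_1 = \{e \mid \varphi_e(e)\!\downarrow\,=1\}$; these are disjoint and recursively inseparable. I would construct a total computable map $e \mapsto \boldsymbol{\mathcal{C}}_e$ sending each index to a tiling system such that: if $e \in W_0$ then $\boldsymbol{\mathcal{C}}_e$ admits a periodic tiling of $\N^2$, whereas if $e \in W_1$ then $\boldsymbol{\mathcal{C}}_e$ admits no tiling of $\N^2$ at all. (When $\varphi_e(e)$ diverges the behaviour of $\boldsymbol{\mathcal{C}}_e$ is irrelevant.) Granting such a reduction, the proposition follows at once: if some recursive set $S$ satisfied $P \subseteq S$ and $N \cap S = \emptyset$, where $P$ and $N$ denote the two sets in the statement, then $\{e \mid \boldsymbol{\mathcal{C}}_e \in S\}$ would be a recursive set containing $W_0$ and disjoint from $W_1$, contradicting their inseparability.

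The heart of the argument is the design of $\boldsymbol{\mathcal{C}}_e$, which simulates the computation of $\varphi_e(e)$ in the style of Berger and Robinson. Each horizontal row of a tiling is made to encode an instantaneous description of the machine, with edge colours propagating the symbol under, and the state of, the head; the horizontal constraints $H$ enforce internal consistency of a configuration, and the vertical constraints $V$ force the configuration in row $Y+1$ to be the successor, under the (deterministic) transition function, of the configuration in row $Y$. A distinguished family of start tiles pins the bottom of a simulation region to the initial configuration of $\varphi_e(e)$ on empty input. To obtain the two extreme behaviours I would arrange the halting tiles asymmetrically: reaching the output-$0$ state lets the frozen final configuration be repeated, so that the whole space--time diagram closes up into a finite block which tiles a torus $\N^2_{m,n}$ and hence tiles $\N^2$ periodically; whereas reaching the output-$1$ state produces a tile whose top colour matches no tile's bottom colour, a dead end that cannot be extended upward.

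The main obstacle --- and the reason the result is genuinely hard rather than a routine strengthening of Proposition~\ref{prop:noddyTiling} --- is forcing the computation to appear in \emph{every} tiling, with no escape. Since a tiling system containing even one self-matching background tile is always tileable, the output-$1$ case can yield \emph{no} tiling only if the tile set compels each legal tiling to exhibit arbitrarily large, correctly initialised simulation regions. This is exactly what the aperiodic hierarchical skeleton of Berger/Robinson supplies: a self-similar system of nested squares of side $2^k$ for every $k$, forced to occur in any tiling, along whose borders the machine is run for $\Theta(2^k)$ steps with the initial condition re-imposed at the corner of each square. Once this skeleton is in place, a rejecting (output-$1$) computation makes squares beyond a certain size impossible to complete, so no tiling of $\N^2$ survives, while an accepting (output-$0$) computation terminates inside some square and can be padded to a periodic filling. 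Verifying that the skeleton is forced, that it faithfully carries the simulation, and that the halting tiles behave as claimed is the lengthy technical core; I would import it essentially verbatim from the construction presented in~\cite[p.~90]{purdyTrans:BGG97}.
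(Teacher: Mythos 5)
The paper offers no proof of this proposition at all --- it simply labels the result ``deep'' and cites \cite[p.~90]{purdyTrans:BGG97} --- and your sketch is a correct outline of exactly that cited Gurevich--Koryakov-style argument (reduction from a recursively inseparable pair of r.e.\ sets via a Berger--Robinson simulation), deferring the same technical core to the same source. The one point to keep in view when you ``import verbatim'' is that the aperiodic skeleton must be designed so that it ceases to be forced once the simulated computation halts with output $0$; otherwise the skeleton itself would preclude the periodic tiling you need in that case, but this is indeed handled in the construction you cite.
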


\subsection{The case of two transitive relations}\label{sec:twotrans}

In this section we show that both the satisfiability and the finite satisfiability problems for $\FLtwotranstEq$ are undecidable. 
(Recall from Example~\ref{ex:two} that $\FLtwotrans$ admits infinity axioms.)
	\begin{figure}[thb]
		\begin{center}
			\begin{minipage}{13cm}
				\begin{minipage}{12.5cm}
					\begin{center}
						\begin{tikzpicture}[xscale=1,yscale=1]
						\clip (-0.15,-0.15) rectangle (7.3,4.3);
						\foreach \x in {-1,1,3,5,7}
						\foreach \y in {-1,1,3,5,7} \foreach \z in {0.1} 
						{
							\draw[color=red,thick,-, >=latex] (\x+\z, \y+\z) -- (\x+1-\z, \y+\z) --
							(\x+1-\z, \y+1-\z) -- (\x+\z, \y+1-\z) -- (\x+\z, \y+\z);
							\draw[color=red,thick,-, >=latex] (\x+\z, \y+\z) --
							(\x+1-\z, \y+1-\z) -- (\x+\z, \y+1-\z) -- (\x+1-\z, \y+\z);
							
						}	
						
						\foreach \x in {0,2,4,6}
						\foreach \y in {0,2,4,6} \foreach \z in {0.1} 
						{
							\draw[color=blue,thick,-, >=latex] (\x+\z, \y+\z) -- (\x+1-\z, \y+\z) --
							(\x+1-\z, \y+1-\z) -- (\x+\z, \y+1-\z) -- (\x+\z, \y+\z);
							\draw[color=blue,thick,-, >=latex] (\x+\z, \y+\z) --
							(\x+1-\z, \y+1-\z) -- (\x+\z, \y+1-\z) -- (\x+1-\z, \y+\z);
						}	
						
						\foreach \x in {1,5}
						\foreach \y in {1,5} \foreach \z in {0.15} 
						{
							\draw[color=blue,thick,->, >=latex] (\x, \y) -- (\x+1-\z, \y);
							\draw[color=blue,thick,->, >=latex] (\x, \y) -- (\x, \y+1-\z);
							\draw[color=blue,thick,->, >=latex] (\x-1, \y) -- (\x-1, \y+1-\z);
							\draw[color=blue,thick,->, >=latex] (\x, \y-1) -- (\x+1-\z, \y-1);						
						}	
						
						\foreach \x in {3,7}
						\foreach \y in {3,7} \foreach \z in {0.15} 
						{
							\draw[color=blue,thick,->, >=latex] (\x, \y) -- (\x+1-\z, \y);
							\draw[color=blue,thick,->, >=latex] (\x, \y) -- (\x, \y+1-\z);
							\draw[color=blue,thick,->, >=latex] (\x-1, \y) -- (\x-1, \y+1-\z);
							\draw[color=blue,thick,->, >=latex] (\x, \y-1) -- (\x+1-\z, \y-1);						
						}

						\foreach \x in {1,5}
						\foreach \y in {3,7} \foreach \z in {0.15} 
						{
							\draw[color=blue,thick,->, >=latex] (\x+1-\z, \y) -- (\x, \y) ;
							\draw[color=blue,thick,->, >=latex] (\x, \y+1-\z) -- (\x, \y);
							\draw[color=blue,thick,->, >=latex] (\x-1, \y+1-\z) -- (\x-1, \y);
							\draw[color=blue,thick,->, >=latex] (\x+1-\z, \y-1) -- (\x, \y-1);						
						}	
						
						\foreach \x in {3,7}
						\foreach \y in {1,5} \foreach \z in {0.15} 
						{
							\draw[color=blue,thick,->, >=latex] (\x+1-\z, \y) -- (\x+\z, \y) ;
							\draw[color=blue,thick,->, >=latex] (\x, \y+1-\z) -- (\x, \y+\z);
							\draw[color=blue,thick,->, >=latex] (\x-1, \y+1-\z) -- (\x-1, \y+\z);
							\draw[color=blue,thick,->, >=latex] (\x+1-\z, \y-1) -- (\x+\z, \y-1);						
						}	
						
						\foreach \x in {1,5}
						\foreach \y in {3,7} \foreach \z in {0.15} 
						{
							\draw[color=blue,thick,->, >=latex] (\x+1-\z, \y) -- (\x+\z, \y) ;
							\draw[color=blue,thick,->, >=latex] (\x, \y+1-\z) -- (\x, \y+\z);
							\draw[color=blue,thick,->, >=latex] (\x-1, \y+1-\z) -- (\x-1, \y+\z);
							\draw[color=blue,thick,->, >=latex] (\x+1-\z, \y-1) -- (\x+\z, \y-1);						
						}

						\foreach \x in {0,1,2,3,4,5,6,7}
						\foreach \y in {0,1,2,3,4,5,6,7}
						{
							\filldraw[fill=white] (\x, \y) circle (0.2); 
							\pgfmathtruncatemacro{\abc}{mod(\y,4)} 
							\pgfmathtruncatemacro{\bcd}{mod(\x,4)} 
							\coordinate [label=center:$_{\bcd\abc}$] (A) at (\x,\y);
						}

						
						\draw[ ->, dotted, very thick] (7.5,0) -- (8,0);
						\draw[ ->, dotted, very thick] (0,7.5) -- (0,8);
						\draw[ ->, dotted, very thick] (7.5,7.5) -- (8,8);
						
						%
						\end{tikzpicture}\end{center}
				\end{minipage}
				
			\end{minipage}		
		\end{center}
		
		\caption{Intended expansion of the $\N\times\N$ grid with two transitive relations {\color{blue}$T_1$}   and  {\color{red}$T_2$}. Edges without arrows represent connections in both direction. Nodes are marked by the indices of  the $c_{ij}$s they satisfy.  
			\label{fig:grid-two-transitive}}
	\end{figure}
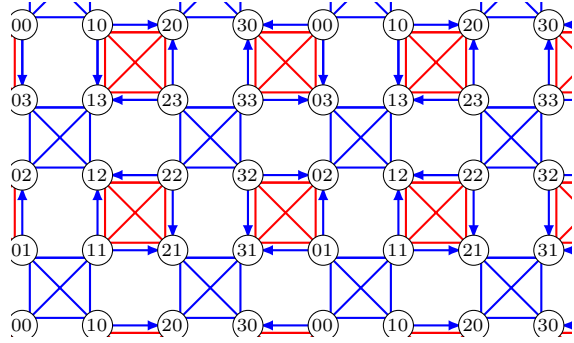
	
		Suppose the signature contains
		 two transitive relations $T_1$ and $T_2$, and additional unary predicates $c_{i,j}$ ($0\leq i,j\leq 3$) called {\em local address predicates}. We write a formula $\phi_{grid}$  capturing several properties of the intended expansion of the $\N^2$ grid  as shown in Fig.~\ref{fig:grid-two-transitive}.  
		 There, each element with coordinates $(X,Y)$ satisfies $c_{i,j}$, where $i=X\mod 4$ and $j=Y\mod 4$, and the transitive relations connect only some elements that are close in  the grid.   The formula $\phi_{grid}$ is a conjunction of the following statements~\eqref{tiling:initial}-\eqref{IfRed10}. 
	
	\medskip
\noindent		There is an initial element:
			\begin{equation}
			\exists c_{0,0}.\label{tiling:initial}
			\end{equation}

\noindent		The predicates $c_{i,j}$ enforce a partition of the universe: 
			\begin{eqnarray}
			\forall \big(\dot{\bigvee}_{0\leq i \leq 3} \dot{\bigvee}_{0\leq j \leq 3} c_{i,j}\big). 
				\label{tiling:partition}
			\end{eqnarray}
		
\noindent		Transitive paths do not connect distinct elements with the same local address:
		\begin{eqnarray}
			\bigwedge_{0\leq i,j\leq 3} \forall  (c_{i,j}\rightarrow \forall  ((T_1\vee  T_2)\wedge c_{i,j} \rightarrow =)).\label{Clique}
		\end{eqnarray}
		
\noindent Each element belongs to a 4-element $T_1$-clique:
		\begin{align}
\bigwedge_{i,j \in \set{0,2}}
	\forall \quad \big( (c_{i,j} &\rightarrow \exists  (T_1\wedge c_{i+1,j}))\wedge 	
	(c_{i+1,j} \rightarrow \exists  (T_1\wedge c_{i+1,j+1})) \wedge\nonumber\\
	(c_{i+1,j+1}& \rightarrow \exists  (T_1\wedge c_{i,j+1})) \wedge  
	(c_{i,j+1} \rightarrow \exists  (T_1\wedge c_{i,j})) \big).
	\label{CliqueBlue}
	\end{align}
	
\noindent Each element belongs to a 4-element $T_2$-clique: 
	\begin{align}
		%
\bigwedge_{i,j \in \set{1,3}}\forall \quad \big( (c_{i,j} &\rightarrow \exists  (T_2\wedge c_{i+1,j}))\wedge 	
		(c_{i+1,j} \rightarrow \exists  (T_2\wedge c_{i+1,j+1})) \wedge	\nonumber\\
	(c_{i+1,j+1}& \rightarrow \exists  (T_2\wedge c_{i,j+1})) \wedge  
		(c_{i,j+1} \rightarrow \exists  (T_2\wedge c_{ij})) \big).\label{CliqueRed}
		\end{align}
		%
		
		
\noindent	Certain pairs of elements connected by one transitive relation are also connected by the other one, specifically: 
		\begin{align}
		&\bigwedge_{i=0,2} \forall  (c_{i,i} \rightarrow \forall  ((T_1\vee  T_2)\wedge (c_{i,i-1}\vee c_{i-1,i})\rightarrow (T_1\wedge  T_2))
		\label{IfRed00}\\
		&\bigwedge_{i=1,3} \forall  (c_{i,i} \rightarrow \forall  ((T_1 \vee  T_2)\wedge (c_{i,i+1}\vee c_{i+1,i})\rightarrow (T_1\wedge  T_2))
		\label{IfRed11}\\	
		&\bigwedge_{i=0,2} \forall  (c_{i,i+1} \rightarrow \forall  ((T_1 \vee  T_2)\wedge (c_{i,i+2}\vee c_{i-1,i+1})\rightarrow (T_1\wedge  T_2))
		\label{IfRed01}\\		%
		&\bigwedge_{i=1,3} \forall  (c_{i,i-1} \rightarrow \forall  ((T_1 \vee  T_2)\wedge (c_{i,i}\vee c_{i,i-2} )\rightarrow (T_1\wedge  T_2)).
		\label{IfRed10}
		\end{align}

		%
	A model of $\phi_{grid}$ is shown in Fig.~\ref{fig:grid-two-transitive}.  Observe\nb{L explanation added} that the formulas~\eqref{CliqueBlue} and \eqref{CliqueRed} work in tandem with~\eqref{Clique}. Namely, both \eqref{CliqueBlue} and \eqref{CliqueRed} generate, for a given element $a$ of some local address $c_{i,j}$ in any model of $\phi_{grid}$ four new elements of certain local addresses such that the fourth element, say $a'$, has the same local address as the element $a$. Formula~\eqref{Clique} then implies $a=a'$, hence the element $a$ is a member of a 4-element $T_1$-clique and a member of a (distinct) 4-element $T_2$-clique; members of these cliques can be uniquely identified by their local addresses (cf.~Fig.~\ref{fig:grid-two-transitive}).
	One can also obtain finite models over a toroidal grid structure $\Z_{4m}\times \Z_{4m}$  ($m>0$) by identifying elements from columns 0 and $4m$ and from rows 0 and $4m$. 
	
We show that any model of $\phi_{grid}$ embeds the standard grid $\N^2$ in a natural way. To see this,
for all $i$, $j$ in the range $0 \leq i, j <4$, define the formulas $h_{i,j}$ and $v_{i,j}$ as follows:
\begin{align*}
h_{i,j} := & 
\begin{cases}
		T_1 \wedge  c_{i+1,j} & \text{if $i$ is even}\\
		T_2 \wedge  c_{i+1,j} & \text{otherwise}
\end{cases} & &
v_{i,j} := & 
\begin{cases}
T_1 \wedge  c_{i,j+1} & \text{if $j$ is even}\\
T_2 \wedge  c_{i,j+1} & \text{otherwise.}
\end{cases}
\end{align*}
The intuition is that, for any element $a$ satisfying $c_{i,j}$, $h_{i,j}$ will be satisfied by the pair $[a,b]$ just in case
$b$ is immediately to the right of $a$, and $v_{i,j}$ will be satisfied by the pair $[a,b]$ just in case
$b$ is immediately above $a$. (See Fig.~\ref{fig:grid-two-transitive}.) 

\begin{lemma}
In any model $\fA$, of $\phi_{grid}$, the following hold for any $i,j$ in the range $0\leq i,j <4$:
\begin{align}
& \fA \models c_{i,j}[a] \; \Rightarrow \text{ there exists $b$ s.t.~$\fA \models h_{i,j}[a,b]$ and
	                                      $a'$ s.t.~$\fA \models v_{i,j}[a,a']$}
\label{eq:propagate}\\
& \fA \models c_{i,j}[a]\wedge h_{i,j}[a,b]\wedge v_{i,j}[a,a'] \wedge v_{i+1,j}[b,b']\quad \Rightarrow \quad \fA \models h_{i,j+1}[a',b'].
\label{eq:confluence}
\end{align}
\label{lma:grid}
\end{lemma}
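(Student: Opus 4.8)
The plan is to exploit the clique structure forced by \eqref{CliqueBlue}, \eqref{CliqueRed} and \eqref{Clique}, exactly as sketched in the paragraph preceding the lemma. \textbf{Preliminary step.} First I would record, for any model $\fA$ of $\phi_{grid}$ and any $a$ with $\fA\models c_{i,j}[a]$, that $a$ lies in one $T_1$-clique realizing precisely the four addresses of the surrounding blue $2\times2$ block and in one $T_2$-clique realizing the four addresses of the surrounding red block. This follows by chasing the four existential conjuncts of \eqref{CliqueBlue} (resp.\ \eqref{CliqueRed}) from $a$ around a $4$-cycle, noting that the fourth witness shares $a$'s address, and using \eqref{Clique} to identify it with $a$; transitivity then upgrades the cycle to a genuine clique, and \eqref{Clique} makes distinct clique-members carry distinct addresses. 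The decisive consequence I would isolate is a \emph{pinning} principle: if $x,y$ lie in a common $T_r$-clique with $\fA\models c_{p,q}[y]$, then $y$ is the unique clique-member of that address; and if moreover $\fA\models T_r[x,z]\wedge c_{p,q}[z]$, then, since $T_r[y,x]$ holds inside the clique, transitivity gives $T_r[y,z]$, so \eqref{Clique} forces $z=y$. Thus any witness of a given address reachable by $T_r$ from a clique-member is already \emph{in} that clique.

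For \eqref{eq:propagate} I would argue by the parities of $i$ and $j$. Write $R_h$ for $T_1$ if $i$ is even and for $T_2$ otherwise, and $R_v$ for $T_1$ if $j$ is even and for $T_2$ otherwise; then $h_{i,j}$ demands an $R_h$-successor of address $(i+1,j)$ and $v_{i,j}$ an $R_v$-successor of address $(i,j+1)$. In each of the four parity combinations the addresses $(i,j)$ and $(i+1,j)$ lie in a common $R_h$-block, so by the preliminary step $a$'s $R_h$-clique already contains a member $b$ of address $(i+1,j)$, giving $\fA\models h_{i,j}[a,b]$; symmetrically $a$'s $R_v$-clique supplies the witness $a'$ for $v_{i,j}$.

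For \eqref{eq:confluence} I would first use pinning to locate the four corners: from $h_{i,j}[a,b]$ the witness $b$ lies in $a$'s $R_h$-clique, from $v_{i,j}[a,a']$ the witness $a'$ lies in $a$'s $R_v$-clique, and from $v_{i+1,j}[b,b']$ the witness $b'$ lies in $b$'s $R_v$-clique. When $i$ and $j$ have the \emph{same} parity the two relations coincide ($R_h=R_v$) and all four corners lie in a single clique realizing one full $2\times2$ block; since $a'$ and $b'$ are then distinct members of that clique and $b'$ already has address $(i+1,j+1)$, the edge $\fA\models h_{i,j+1}[a',b']$ is immediate.

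The genuine difficulty is the \emph{mixed} case, $i+j$ odd, where $a$'s horizontal and vertical cliques differ and the four corners are spread over two blue and two red blocks; here the edge-doubling axioms \eqref{IfRed00}--\eqref{IfRed10} are indispensable. The scheme I would follow (illustrated for $i$ even, $j$ odd, so $R_h=T_1$, $R_v=T_2$) is: let $\delta$ be the address-$(i+1,j+1)$ member of $a'$'s blue clique, so $T_1[a',\delta]$ holds and it suffices to prove $\delta=b'$. Since $a,a'$ lie in a common red clique, the appropriate instance of \eqref{IfRed00}/\eqref{IfRed01} promotes the edge between them to $T_1$; composing with the mutual $T_1$-edges inside $a$'s and $a'$'s blue cliques yields $T_1[b,\delta]$. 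A second instance (from \eqref{IfRed11}/\eqref{IfRed10}) then promotes this $T_1$-edge to $T_2$, whence $b$ and the address-$(i+1,j)$ member of $\delta$'s red clique are joined by $T_2$ and collapsed by \eqref{Clique}; this identifies $\delta$'s red clique with $b$'s, and a final appeal to \eqref{Clique} on the two address-$(i+1,j+1)$ members forces $\delta=b'$. The case $i$ odd, $j$ even is symmetric under interchanging $T_1\leftrightarrow T_2$. I expect the main obstacle to be precisely this bookkeeping: verifying that in every mixed sub-case the two edges to be doubled have exactly the address pattern and orientation matched by one of the eight instances of \eqref{IfRed00}--\eqref{IfRed10}, using clique-mutuality to supply edges in the direction the relevant axiom requires.
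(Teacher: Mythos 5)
Your proposal is correct and follows essentially the same route as the paper's proof: the same chase around the four-cycles of \eqref{CliqueBlue}/\eqref{CliqueRed} collapsed by \eqref{Clique} to get the $2\times2$ cliques, and in the mixed-parity case of \eqref{eq:confluence} the same sandwich of two applications of the doubling axioms \eqref{IfRed00}--\eqref{IfRed10} around a transitivity chain, followed by a uniqueness argument via \eqref{Clique}. Your up-front ``pinning'' lemma is just a uniform packaging of what the paper redoes inside each worked sub-case, and the per-sub-case bookkeeping you defer is precisely what the paper also leaves to ``the remaining cases are dealt with similarly.''
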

\begin{proof}
Let $a\in A$ and $\fA \models c_{i,j}[a]$.  
The existence of $b$ in~\eqref{eq:propagate} is immediate from~\eqref{CliqueBlue} for $i,j$ even, and from~\eqref{CliqueRed} for $i,j$ odd.  
Suppose $i$ is even and $j$ is odd. By the last conjunct of~\eqref{CliqueBlue}, there is  $a_1\in A$ such that
$T_1[a,a_1]\wedge c_{i,j-1}[a_1]$. By~\eqref{CliqueBlue} again, there are $a_2,a_3,a_4\in A$ such that $\fA \models   T_1[a_1,a_2]\wedge c_{i+1,j-1}[a_2]\wedge  T_1[a_2,a_3]\wedge c_{i+1,j}[a_3] \wedge  T_1[a_3,a_4]\wedge c_{i,j}[a_4]$. 
By transitivity of $T_1$, $\fA\models T_1[a,a_4]$ and by~\eqref{Clique}, $a=a_4$, so the elements $a,a_1,a_2,a_3$ form a $T_1$-clique in $\fA$, hence $T_1[a,a_3]$ holds and, indeed, $\fA\models h_{i,j}[a,a_3]$.
In the same way we show the existence of $b$ when $i$ is odd and $j$ even, and, also, the existence of $a'$. 
%
We should regard the witnesses for the formulas $\exists h_{i,j}$ and $\exists v_{i,j}$ with respect to any element $a$ are the 
horizontal and vertical neighbours, respectively, of $a$.

We now establish~\eqref{eq:confluence} proceeding separately for the possible indices $i$ and $j$.
Consider first the case $i = j= 0$, and suppose $a,a',b$ and $b'$ are elements of $\fA$ such that $\fA\models c_{0,0}[a] \wedge T_1[a,b]\wedge c_{1,0}[b]\wedge T_1[a,a'] \wedge c_{0,1}[a']\wedge  T_1[b,b']\wedge  c_{1,1}[b']$. By \eqref{CliqueBlue} $b'$ is a member of a $T_1$-clique consisting of elements of local addresses $c_{1,1}, c_{0,1}, c_{0,0}, c_{1,0}$. Since by \eqref{Clique} the relation $T_1$ does not connect distinct elements of the same local address, $a'$ belongs to the $T_1$-clique of $b'$, so $\fA \models  T_1[a',b']$, and the claim follows. 

Consider now the case $i = 3$, $j= 0$, and suppose
 $a,a',b$ and $b'$ are elements  such that
$\fA \models c_{3,0}[a] \wedge  T_1[a,a']\wedge c_{3,1}[a']\wedge  T_2[a,b] \wedge c_{0,0}[b]\wedge  T_1[b,b'] \wedge c_{0,1}[b'] $. Applying~\eqref{CliqueRed} together with~\eqref{Clique} to $b$, we see that $b$ is a member of a 4-element $T_2$-clique consisting of elements of local addresses $c_{0,0}, c_{3,0}, c_{3,3}, c_{0,3}$. By~\eqref{Clique}, $a$ is a member of this clique,\nb{I: I'm struggling to see this. L: explanation added here and a general observation added after formula (32).} whence $\fA\models  T_2[b,a]$. 
By \eqref{IfRed00}, $\fA \models  T_1[b,a]$. Moreover, $b'$ is in a $T_1$-clique of $b$, and so $\fA \models  T_1[b',b]$. 
By transitivity of $T_1$,  $\fA \models  T_1[b',a']$.  Now, by \eqref{IfRed01}, $\fA\models  T_2[b',a']$. 
By \eqref{CliqueRed}, $a'$ is a member of a $T_2$-clique that, by \eqref{Clique}, must contain $b'$. Hence $h_{3,0}[a',b']$ holds and the claim follows.
		
		
	%
The remaining cases are dealt with similarly.
\end{proof}

Lemma~\ref{lma:grid} shows that any model $\fA$ of $\phi_{grid}$ contains, in effect, a homomorphic embedding of the infinite 
grid $\N^2$. Specifically, we define a function $\iota: \N^2 \rightarrow A$ as follows.
Set $\iota(0,0)$ to be some witness for (1). By~\eqref{eq:propagate},
we may choose $\iota(1,0), \iota(2,0), \dots$ such that, for all $X \geq 0$, setting $i=X \mod 4$, we have $\fA \models h_{i,0}[\iota(X,0), \iota(X+1,0)]$; and
then, for every $X \geq 0$, we may choose $\iota(X,1), \iota(X,2), \dots$ such that for every $Y \geq 0$, setting $j=Y\mod 4$, we have
$\fA \models v_{i,j}[\iota(X,Y), \iota(X,Y+1)]$.  A simple induction on $Y$ using~\eqref{eq:confluence} then shows that, for all $X$ and $Y$,
$\fA \models h_{i,j}[\iota(X,Y), \iota(X+1,Y)]$. 

We can now map any tiling system $\boldsymbol{\mathcal{C}}$ to an $\mathcal{FL}^2_=2\mbox{\rm T}$-formula $\eta_\cC$ in such 
a way that  $\cC$ has a tiling if any only if $\eta_\cC$ is satisfiable. We simply let $\eta_\cC$ be the conjunction of
$\phi_{grid}$ with the following formulas. 

\medskip 

\noindent Each node encodes precisely one tile: 
%
\begin{eqnarray}
\forall \big(\bigvee_{C \in {\cal C}} C  \wedge \bigwedge_{C \neq D} (\neg C   \vee \neg D )\big).\label{eq:2T-tiling1}
\end{eqnarray}

\noindent Adjacent tiles respect $ H$ and $ V$:
	\begin{eqnarray}
		%
\bigwedge_{C \in {\cal C}}	\bigwedge_{0 \leq i,j < 4} \forall \Big(C \wedge c_{i,j}  \rightarrow  \forall \big((h_{i,j}  \rightarrow\!\!\! \bigvee_{C': (C,C') \in  H}\!\!  C') \wedge (v_{i,j}  \rightarrow\!\!\! \bigvee_{C': (C,C') \in  V}\!\!  C')\big)\Big).
		\label{eq:2T-tiling2}
	\end{eqnarray}
	
If $f$ is a tiling of $\N^2$ for $\boldsymbol{\cal C}$, we expand the standard grid model of $\phi_{grid}$ by taking 
any predicate $C \in \cC$ to be satisfied by $(X,Y) \in \N^2$ just in case $f(X,Y) = C$. It is a simple matter to check that 
$\eta_\cC$ is true in the resulting structure. Conversely, 
if $\fA \models \eta_{\boldsymbol{\cal C}}$, then $\fA \models \phi_{grid}$, and so 
there exists a grid embedding $\iota: \N^2 \rightarrow A$. We then define a function
$f: \N^2 \rightarrow {\cal C}$ by setting $f(X,Y)$ to be the unique tile $C \in \cC$ such that $\fA \models C[\iota(X,Y)]$,
which is well-defined by~\eqref{eq:2T-tiling1}. 
By~\eqref{eq:2T-tiling2}, 
$f$ is a tiling for $\boldsymbol{\cal C}$.

Indeed, the same argument shows that, $\eta_\cC$ has a {\em finite} model if and only if there is a {\em periodic} tiling 
of $\N^2$ for $\boldsymbol{\cal C}$. Since, as remarked above, the set of tiling systems for which there exists no tiling of the plane is 
recursively inseparable from the set of tiling systems for which there exists a periodic tiling of the plane, we obtain:
\begin{theorem}
	The satisfiability problem and the finite satisfiability problems for $\mathcal{FL}^2_=2\mbox{\rm T}$ are both undecidable.
	\label{theo:twoEqUndecidable}
\end{theorem}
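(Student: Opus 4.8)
The plan is to turn the grid construction already in place into a single computable many-one reduction and then read off undecidability from the recursive inseparability of Proposition~\ref{prop:insep}. Given a tiling system $\boldsymbol{\mathcal{C}}=(\mathcal{C},H,V)$, set $\eta_{\boldsymbol{\mathcal{C}}}$ to be the conjunction of $\phi_{grid}$ with~\eqref{eq:2T-tiling1} and~\eqref{eq:2T-tiling2}. This is plainly computable from $\boldsymbol{\mathcal{C}}$, and its signature contains only the two distinguished transitive predicates $T_1$, $T_2$, equality, and finitely many unary predicates ($c_{i,j}$ and the $C\in\mathcal{C}$), so $\eta_{\boldsymbol{\mathcal{C}}}$ is genuinely an $\FLtwotranstEq$-sentence. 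The whole argument rests on two correctness claims, both of which are established by the discussion preceding the theorem: \textup{(a)} $\eta_{\boldsymbol{\mathcal{C}}}$ is satisfiable iff $\boldsymbol{\mathcal{C}}$ admits a tiling of $\N^2$; and \textup{(b)} $\eta_{\boldsymbol{\mathcal{C}}}$ is finitely satisfiable iff $\boldsymbol{\mathcal{C}}$ admits a \emph{periodic} tiling of $\N^2$.

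For the forward direction of \textup{(a)} I would invoke Lemma~\ref{lma:grid}: its properties~\eqref{eq:propagate} and~\eqref{eq:confluence} yield the grid embedding $\iota:\N^2\to A$, from which the well-definedness clause~\eqref{eq:2T-tiling1} lets me set $f(X,Y)$ to be the unique tile at $\iota(X,Y)$, and~\eqref{eq:2T-tiling2} guarantees that $f$ respects $H$ and $V$. For the converse I would expand the standard grid model of $\phi_{grid}$ by colouring $(X,Y)$ with $f(X,Y)$ and check~\eqref{eq:2T-tiling1}--\eqref{eq:2T-tiling2} directly. Claim~\textup{(b)} is obtained by the same argument applied to finite structures: a periodic tiling yields a finite toroidal model over $\Z_{4m}\times\Z_{4m}$ (as noted after~\eqref{IfRed10}), while a finite model, through the grid embedding into a finite domain, forces the induced tile pattern to be periodic.

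Granting \textup{(a)} and \textup{(b)}, undecidability follows immediately. Write $P$ for the set of tiling systems admitting a periodic tiling of $\N^2$, and $N$ for those admitting no tiling; by Proposition~\ref{prop:insep} these sets are recursively inseparable. If $\Sat(\FLtwotranstEq)$ were decidable, then $S=\{\boldsymbol{\mathcal{C}}:\eta_{\boldsymbol{\mathcal{C}}}\text{ is satisfiable}\}$ would be recursive; but by~\textup{(a)} a periodic tiling is in particular a tiling, so $P\subseteq S$, while no tiling gives no model, so $N\cap S=\emptyset$. Thus $S$ would recursively separate $P$ from $N$, contradicting Proposition~\ref{prop:insep}. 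The case of finite satisfiability is identical, using~\textup{(b)}: the recursive set $F=\{\boldsymbol{\mathcal{C}}:\eta_{\boldsymbol{\mathcal{C}}}\text{ is finitely satisfiable}\}$ would satisfy $P\subseteq F$ and $N\cap F=\emptyset$ (a system with no tiling has no model at all, hence no finite one), again contradicting inseparability.

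The genuinely hard part of this development is not this final assembly but the construction and verification of $\phi_{grid}$ together with Lemma~\ref{lma:grid} — in particular checking that the clique constraints~\eqref{Clique}, \eqref{CliqueBlue}, \eqref{CliqueRed} interact with the ``crossing'' constraints~\eqref{IfRed00}--\eqref{IfRed10} so as to force the confluence property~\eqref{eq:confluence} uniformly across all sixteen residue classes of grid vertices, using only fluted syntax with two transitive relations and equality. Once that confluence is secured the embedding $\iota$ is coherent and the reduction is routine; the remaining subtlety worth stating carefully in the finite case is that tiling systems with an infinite but non-periodic tiling fall into neither $P$ nor $N$, which is precisely why recursive inseparability, rather than a direct reduction from an undecidable problem, is the right tool.
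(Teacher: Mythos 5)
Your proposal is correct and follows essentially the same route as the paper: the same formula $\eta_{\boldsymbol{\mathcal{C}}} = \phi_{grid} \wedge \eqref{eq:2T-tiling1} \wedge \eqref{eq:2T-tiling2}$, the same use of Lemma~\ref{lma:grid} to extract the grid embedding $\iota$, and the same appeal to Proposition~\ref{prop:insep} to handle satisfiability and finite satisfiability simultaneously. The only difference is presentational: you spell out the recursive-inseparability argument ($P \subseteq S$, $N \cap S = \emptyset$) in more detail than the paper, which simply asserts the conclusion.
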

A quick check reveals that the formula $\eta_\cC$ lies in the guarded fragment of first-order logic. Moreover,
the proof of Lemma~\ref{lma:grid}  remains valid even if $T_2$ is required to be an equivalence relation. Thus we have:
\begin{corollary}
	The satisfiability problem and the finite satisfiability problems for the intersection of $\mathcal{FL}^2_=2\mbox{\rm T}$  with the guarded fragment are both undecidable. This result continues to hold if, in place of $\mathcal{FL}^2_=2\mbox{\rm T}$,
	we have $\mathcal{FL}^2_=1\mbox{\rm T}1\mbox{\rm E}$, the two-variable fluted fragment together with identity, one transitive relation and
	one equivalence relation.
\label{cor:twoEqUndecidable}
\end{corollary}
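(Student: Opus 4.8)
The plan is to prove both assertions by re-using the \emph{single} reduction formula $\eta_\cC$ constructed in the proof of Theorem~\ref{theo:twoEqUndecidable}, first observing that it already falls within the guarded fragment, and then that its correctness survives the replacement of the transitive relation $T_2$ by an equivalence relation. Since the reduction $\boldsymbol{\mathcal{C}} \mapsto \eta_\cC$ is computable and, as shown in the proof of Theorem~\ref{theo:twoEqUndecidable}, $\eta_\cC$ is satisfiable exactly when $\boldsymbol{\mathcal{C}}$ tiles $\N^2$ and finitely satisfiable exactly when $\boldsymbol{\mathcal{C}}$ tiles $\N^2$ periodically, any restriction of the target logic that still contains $\eta_\cC$ and preserves these two equivalences will inherit undecidability of both satisfiability and finite satisfiability via the recursive inseparability of Proposition~\ref{prop:insep}.

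For the guarded-fragment claim I would inspect the conjuncts of $\eta_\cC$ one at a time. The monadic conjuncts~\eqref{tiling:partition} and~\eqref{eq:2T-tiling1}, of the shape $\forall x\,\beta(x)$ for a Boolean combination $\beta$ of unary atoms, are rendered guarded by using the equality atom as guard, writing $\forall x\,(x=x \rightarrow \beta(x))$; this is legitimate because equality is available in $\mathcal{FL}^2_=2\mbox{\rm T}$. Every two-variable conjunct (namely~\eqref{Clique}, \eqref{CliqueBlue}, \eqref{CliqueRed}, \eqref{IfRed00}--\eqref{IfRed10}, and~\eqref{eq:2T-tiling2}) is brought into guarded form by a uniform manipulation: split each disjunctive antecedent $T_1 \vee T_2$ into its two disjuncts so that the inner universal quantifier is guarded by the single binary atom $T_1$ or $T_2$, and push the remaining unary conjuncts of the antecedent (together with the atom $C$ in~\eqref{eq:2T-tiling2}) into the body of the implication; the outer universal quantifier is then guarded by the unary atom $c_{i,j}$, and the existential conjuncts in~\eqref{CliqueBlue} and~\eqref{CliqueRed} are already guarded by $T_1$, respectively $T_2$. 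Hence $\eta_\cC \in \mathcal{FL}^2_=2\mbox{\rm T} \cap \GF$, and undecidability for this intersection follows.

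For the second assertion I would retain exactly the same $\eta_\cC$ but now require $T_2$ to be interpreted as an equivalence relation. The completeness direction is immediate: the proof of Lemma~\ref{lma:grid} and the subsequent construction of the grid embedding $\iota$ appeal to $T_2$ only through its transitivity, which every equivalence relation enjoys, so any model of $\eta_\cC$ in which $T_2$ is an equivalence relation still induces a tiling. For soundness I would build the intended grid model of Fig.~\ref{fig:grid-two-transitive}, defining $T_2$ to be the equivalence relation whose classes are precisely the red four-element cliques produced by~\eqref{CliqueRed}. The point requiring verification is that this equivalence closure creates no $T_2$-edge between distinct classes: I would check that each $T_2$-edge demanded by a conjunct of $\eta_\cC$---those from~\eqref{CliqueRed}, those used in the odd-index cases of $h_{i,j}$ and $v_{i,j}$, and the reinforcement edges $T_1 \wedge T_2$ mandated by~\eqref{IfRed00}--\eqref{IfRed10}---joins two elements lying in a \emph{common} red clique, so that every required $T_2$-edge is already internal to an equivalence class. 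Reflexivity of $T_2$, the only genuine strengthening, is harmless, since~\eqref{Clique} constrains only distinct elements. The finite/periodic correspondence is preserved verbatim, so undecidability holds for $\mathcal{FL}^2_=1\mbox{\rm T}1\mbox{\rm E} \cap \GF$ as well.

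The hard part will be the soundness step of the second assertion, i.e.\ confirming that the four families of $T_2$-edges forced by $\eta_\cC$ all remain within single red cliques, so that the equivalence relation generated by those cliques introduces no unwanted connections between classes; this is the one place where the stronger interpretation of $T_2$ could in principle destroy satisfiability. By contrast, the guarded-fragment observation and the completeness direction are both routine once one notices the role of equality as a guard and the fact that the grid arguments use $T_2$ solely through transitivity.
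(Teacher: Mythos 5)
Your proposal is correct and follows essentially the same route as the paper, whose proof consists precisely of the two observations you elaborate: that $\eta_\cC$ can be checked conjunct-by-conjunct to lie in the guarded fragment (using equality as a guard for the monadic conjuncts and splitting $T_1\vee T_2$ antecedents), and that the argument of Lemma~\ref{lma:grid} uses $T_2$ only through transitivity. Your additional verification that the intended grid model survives when $T_2$ is strengthened to the equivalence relation whose classes are the red $4$-cliques (reflexivity being harmless against~\eqref{Clique}) is a point the paper leaves implicit, but it is handled correctly.
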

We conclude the section by remarking that decidability of the satisfiability and the finite satisfiability problems for $\FlTransVars{2}{m}$ remains open for every $m\geq 2$. We showed in Example~\ref{ex:two} that these two problems are distinct.
\subsection{The case of three transitive relations}\label{sec:threetrans}

In this section we show that the satisfiability problem and the finite satisfiability problem for $\FLthreetrans$ are both undecidable. 
(Note that equality is not available in this logic.)
We start by reducing the infinite tiling problem to the satisfiability problem.

%


%


We write a formula $\phi_{grid}$  capturing several properties of the intended expansion of the $\N^2$ grid  as shown in Fig.~\ref{fig:zig-zag}. 
The formula $\phi_{grid}$ comprises a large number of conjuncts. To help give an overview of the construction, we have organized these conjuncts into groups, each of which secures a particular property (or collection of properties) exhibited by its models. We use the following notational conventions. If $i$ is an integer, $i/2$ indicates integer division without remainder (e.g.,~$5/2= 2$); moreover, $\modu{k}{i}$ denotes the remainder of $i$ on division by $k$, and 
$\moduSix{i}$ (i.e.,~without the subscript) denotes $\modu{6}{i}$.

The signature of $\phi_{grid}$ comprises the unary predicates $c_{i,j}$ and $d_{i,j}$ ($0 \leq i,j \leq 5$) and
$\mbox{bt}$, $\mbox{lf}$, $\mbox{dg}$ and $\mbox{dg}^+$, together with the distinguished 
binary predicates $T_0$, $T_1$ and $T_2$. We call the $c_{i,j}$ and $d_{i,j}$
{\em local address predicates}, and
require that they
partition the universe:
\begin{eqnarray}
\forall \big(\dot{\bigvee_{0\leq i,j \leq 5}}  c_{i,j}\; \dot{\vee}\; \dot{\bigvee_{0\leq i,j \leq 5}}  d_{i,j}\big).
\label{eq:bou:partition}
\end{eqnarray}
 
%

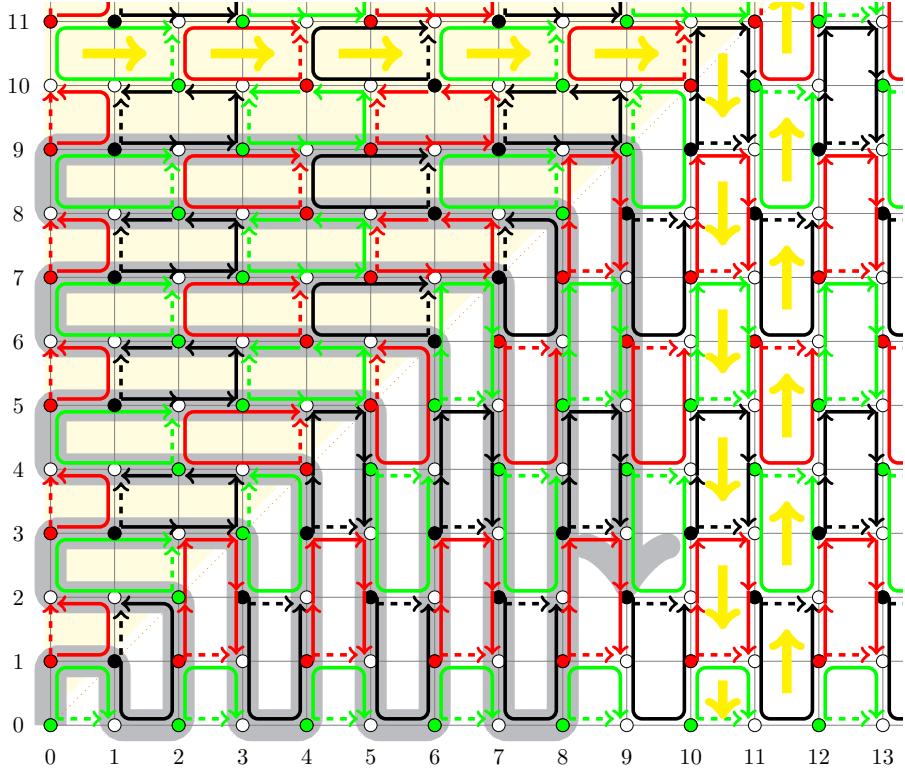
\begin{figure}[hbt]
	\begin{center}
		
		\resizebox{12.2cm}{!}{
			\begin{tikzpicture}
			
			\clip (-0.8,-0.8) rectangle (13.3,11.3);
	
	\path [fill=myYellow] (-0.1,0.1) to (14.0,14.1) to (-0.1,14.1) to (-0.1,0.1);

		
	\draw[->,ultra thick, myGray, line width=5mm, rounded corners]
	(0,0) -- (0,1) 
	-- (1,1) 
	-- (1,0) 
	-- (2,0) 
	-- (2,2)
	-- (0,2) 
	-- (0,3) 
	-- (3,3) 
	-- (3,0)
	-- (4,0) 
	-- (4,4) 
	-- (0,4)
	-- (0,5) 
	-- (5,5) 
	-- (5,0)
	-- (6,0) 
	-- (6,6) 
	-- (0,6)
	-- (0,7) 
	-- (7,7) 
	-- (7,0)
	-- (8,0) 
	-- (8,8) 
	-- (0,8)
	-- (0,9) 
	-- (9,9) 
	-- (9,2);
	

	\draw [help lines] (0,0) grid (18,18);
	\draw [orange, dotted] (0,0) -- (18,18);
	
			\foreach \x in {0,1,2,3,4,5,6,7,8,9,10,11,12,13,14,15,16,17} 
	\coordinate [label=center:\x] (A) at (\x, -0.5);
	\foreach \y in {0,1,2,3,4,5,6,7,8,9,10,11,12,13,14,15,16,17}  
	\coordinate [label=center:\y] (A) at (-0.5,\y);
	
		\foreach \x in {0,1,2,3,4,5,6,7,8,9,10,11,12,13,14,15,16,17}
	\foreach \y in {0,1,2,3,4,5,6,7,8,9,10,11,12,13,14,15,16,17}
	{
		\filldraw[fill=white] (\x, \y) circle (0.1); 
	}

\foreach \x in {1,7,13} \foreach \y in {\x}
\foreach \s in {0.1}
{	\filldraw[fill=black] (\x, \y) circle (0.1);
	\draw [->, rounded corners, ultra thick, black]  (\x+\s,\y) -- (\x+\s,\y-1+\s) -- (\x+1-\s,\y-1+\s)  -- (\x+1-\s,\y+1-\s) -- (\x+\s,\y+1-\s);
	\draw [->, dashed, ultra thick, black] (\x+\s,\y+\s) -- (\x+\s,\y+1-\s-\s);
}

\foreach \x in {3,9,15} \foreach \y in {\x}
\foreach \s in {0.1}
{	\filldraw[fill=green] (\x, \y) circle (0.1);
	\draw [->, rounded corners, ultra thick, green]  (\x+\s,\y) -- (\x+\s,\y-1+\s) -- (\x+1-\s,\y-1+\s)  -- (\x+1-\s,\y+1-\s) -- (\x+\s,\y+1-\s);
	\draw [->, dashed, ultra thick, green] (\x+\s,\y+\s) -- (\x+\s,\y+1-\s-\s);
}

\foreach \x in {5,11} \foreach \y in {\x}
\foreach \s in {0.1}
{	\filldraw[fill=red] (\x, \y) circle (0.1);
	\draw [->, rounded corners, ultra thick, red]  (\x+\s,\y) -- (\x+\s,\y-1+\s) -- (\x+1-\s,\y-1+\s)  -- (\x+1-\s,\y+1-\s) -- (\x+\s,\y+1-\s);
	\draw [->, dashed, ultra thick, red] (\x+\s,\y+\s) -- (\x+\s,\y+1-\s-\s);
}

\foreach \y in {2,8,14} \foreach  \x in {\y+1,\y+3,\y+5,\y+7,\y+9,\y+11,\y+13,\y+15,\y+17}  \foreach \s in {0.1}
{	\filldraw[fill=black] (\x, \y) circle (0.1);
	\draw [->, rounded corners, ultra thick, black]  (\x+\s,\y) -- (\x+\s,\y-2+\s) -- (\x+1-\s,\y-2+\s)  -- (\x+1-\s,\y-\s);
	\draw [->, dashed, ultra thick, black] (\x+\s,\y-\s) -- (\x+1-\s-\s,\y-\s);
}

\foreach \y in {0} \foreach  \x in {\y,\y+2,\y+4,\y+6,\y+8,\y+10,\y+12,\y+14,\y+16}  \foreach \s in {0.1}
{	\filldraw[fill=green] (\x, \y) circle (0.1);
	\draw [->, rounded corners, ultra thick, green]  (\x+\s,\y+\s) -- (\x+\s,\y+1-\s) -- (\x+1-\s,\y+1-\s)  -- (\x+1-\s,\y+\s);
	\draw [->, dashed, ultra thick, green] (\x+\s,\y+\s) -- (\x+1-\s-\s,\y+\s);
}

\foreach \y in {4,10,16} \foreach  \x in {\y+1,\y+3,\y+5,\y+7,\y+9,\y+11,\y+13,\y+15,\y+17}  \foreach \s in {0.1}
{	\filldraw[fill=green] (\x, \y) circle (0.1);
	\draw [->, rounded corners, ultra thick, green]  (\x+\s,\y) -- (\x+\s,\y-2+\s) -- (\x+1-\s,\y-2+\s)  -- (\x+1-\s,\y-\s);
	\draw [->, dashed, ultra thick, green] (\x+\s,\y-\s) -- (\x+1-\s-\s,\y-\s);
}

\foreach \y in {6,12,18} \foreach  \x in {\y+1,\y+3,\y+5,\y+7,\y+9,\y+11,\y+13,\y+15,\y+17}  \foreach \s in {0.1}
{	\filldraw[fill=red] (\x, \y) circle (0.1);
	\draw [->, rounded corners, ultra thick, red]  (\x+\s,\y) -- (\x+\s,\y-2+\s) -- (\x+1-\s,\y-2+\s)  -- (\x+1-\s,\y-\s);
	\draw [->, dashed, ultra thick, red] (\x+\s,\y-\s) -- (\x+1-\s-\s,\y-\s);
}

\foreach \y in {1,7,13} \foreach  \x in {\y+1,\y+3,\y+5,\y+7,\y+9,\y+11,\y+13,\y+15,\y+17}  \foreach \s in {0.1}
{	\filldraw[fill=red] (\x, \y) circle (0.1);
	\draw [->, rounded corners, ultra thick, red]   (\x+\s,\y) -- (\x+\s,\y+1-\s);
	\draw [->, rounded corners, ultra thick, red]	(\x+\s,\y+1-\s) --	(\x+\s,\y+2-\s);
	\draw [->, rounded corners, ultra thick, red]	(\x+\s,\y+2-\s) -- (\x+1-\s,\y+2-\s); 
	\draw [->, rounded corners, ultra thick, red]	(\x+1-\s,\y+2-\s) -- (\x+1-\s,\y+1+\s);
	\draw [->, rounded corners, ultra thick, red]	(\x+1-\s,\y+1+\s) -- (\x+1-\s,\y+\s);
	\draw [->, dashed, ultra thick, red] (\x+\s,\y+\s) -- (\x+1-\s-\s,\y+\s);
}

\foreach \y in {3,9,15} \foreach  \x in {\y+1,\y+3,\y+5,\y+7,\y+9,\y+11,\y+13,\y+15,\y+17}  \foreach \s in {0.1}
{	\filldraw[fill=black] (\x, \y) circle (0.1);
	\draw [->, rounded corners, ultra thick, black]   (\x+\s,\y) -- (\x+\s,\y+1-\s);
	\draw [->, rounded corners, ultra thick, black]	(\x+\s,\y+1-\s) --	(\x+\s,\y+2-\s);
	\draw [->, rounded corners, ultra thick, black]	(\x+\s,\y+2-\s) -- (\x+1-\s,\y+2-\s); 
	\draw [->, rounded corners, ultra thick, black]	(\x+1-\s,\y+2-\s) -- (\x+1-\s,\y+1+\s);
	\draw [->, rounded corners, ultra thick, black]	(\x+1-\s,\y+1+\s) -- (\x+1-\s,\y+\s);
	\draw [->, dashed, ultra thick, black] (\x+\s,\y+\s) -- (\x+1-\s-\s,\y+\s);
}

\foreach \y in {5,11,17} \foreach  \x in {\y+1,\y+3,\y+5,\y+7,\y+9,\y+11,\y+13,\y+15,\y+17}  \foreach \s in {0.1}
{	\filldraw[fill=green] (\x, \y) circle (0.1);
	\draw [->, rounded corners, ultra thick, green]   (\x+\s,\y) -- (\x+\s,\y+1-\s);
	\draw [->, rounded corners, ultra thick, green]	(\x+\s,\y+1-\s) --	(\x+\s,\y+2-\s);
	\draw [->, rounded corners, ultra thick, green]	(\x+\s,\y+2-\s) -- (\x+1-\s,\y+2-\s); 
	\draw [->, rounded corners, ultra thick, green]	(\x+1-\s,\y+2-\s) -- (\x+1-\s,\y+1+\s);
	\draw [->, rounded corners, ultra thick, green]	(\x+1-\s,\y+1+\s) -- (\x+1-\s,\y+\s);
	\draw [->, dashed, ultra thick, green] (\x+\s,\y+\s) -- (\x+1-\s-\s,\y+\s);
}

\foreach \x in {2,8,14} \foreach \y in {\x,\x+2,\x+4,\x+6,\x+8,\x+10,\x+12,\x+14,\x+16} \foreach \s in {0.1}
{	\filldraw[fill=green] (\x, \y) circle (0.1);
	\draw [->, rounded corners, ultra thick, green]  (\x-\s,\y+\s) -- (\x-2+\s,\y+\s) -- (\x-2+\s,\y+1-\s)  -- (\x-\s,\y+1-\s);
	\draw [->, dashed, ultra thick, green] (\x-\s,\y+\s) -- (\x-\s,\y+1-\s-\s);
}

\foreach \x in {4,10,16} \foreach \y in {\x,\x+2,\x+4,\x+6,\x+8,\x+10,\x+12,\x+14,\x+16} \foreach \s in {0.1}
{	\filldraw[fill=red] (\x, \y) circle (0.1);
	\draw [->, rounded corners, ultra thick, red]  (\x-\s,\y+\s) -- (\x-2+\s,\y+\s) -- (\x-2+\s,\y+1-\s)  -- (\x-\s,\y+1-\s);
	\draw [->, dashed, ultra thick, red] (\x-\s,\y+\s) -- (\x-\s,\y+1-\s-\s);
}

\foreach \x in {6,12,18} \foreach \y in {\x,\x+2,\x+4,\x+6,\x+8,\x+10,\x+12,\x+14,\x+16} \foreach \s in {0.1}
{	\filldraw[fill=black] (\x, \y) circle (0.1);
	\draw [->, rounded corners, ultra thick, black]  (\x-\s,\y+\s) -- (\x-2+\s,\y+\s) -- (\x-2+\s,\y+1-\s)  -- (\x-\s,\y+1-\s);
	\draw [->, dashed, ultra thick, black] (\x-\s,\y+\s) -- (\x-\s,\y+1-\s-\s);
}

\foreach \x in {1,7,13} \foreach \y in {\x+2,\x+4,\x+6,\x+8,\x+10,\x+12,\x+14,\x+16} \foreach \s in {0.1}
{	\filldraw[fill=black] (\x, \y) circle (0.1);
	\draw [->, rounded corners, ultra thick, black]  (\x+\s,\y+\s) -- (\x+1,\y+\s); 
	\draw [->, rounded corners, ultra thick, black]  (\x+1-\s,\y+\s) -- (\x+2-\s,\y+\s);
	\draw [->, rounded corners, ultra thick, black]  (\x+2-\s,\y+\s) -- (\x+2-\s,\y+1-\s); 
	\draw [->, rounded corners, ultra thick, black] (\x+2-\s,\y+1-\s) -- (\x+1+\s,\y+1-\s);
	\draw [->, rounded corners, ultra thick, black] 	(\x+1-\s,\y+1-\s) -- (\x+\s,\y+1-\s);
	\draw [->, dashed, ultra thick, black] (\x+\s,\y+\s) -- (\x+\s,\y+1-\s-\s);
}

\foreach \x in {3,9,15} \foreach \y in {\x+2,\x+4,\x+6,\x+8,\x+10,\x+12,\x+14,\x+16} \foreach \s in {0.1}
{	\filldraw[fill=green] (\x, \y) circle (0.1);
	\draw [->, rounded corners, ultra thick, green]  (\x+\s,\y+\s) -- (\x+1,\y+\s); 
	\draw [->, rounded corners, ultra thick, green]  (\x+1-\s,\y+\s) -- (\x+2-\s,\y+\s);
	\draw [->, rounded corners, ultra thick, green]  (\x+2-\s,\y+\s) -- (\x+2-\s,\y+1-\s); 
	\draw [->, rounded corners, ultra thick, green] (\x+2-\s,\y+1-\s) -- (\x+1+\s,\y+1-\s);
	\draw [->, rounded corners, ultra thick, green] 	(\x+1-\s,\y+1-\s) -- (\x+\s,\y+1-\s);
	\draw [->, dashed, ultra thick, green] (\x+\s,\y+\s) -- (\x+\s,\y+1-\s-\s);
}

\foreach \x in {5,11,17} \foreach \y in {\x+2,\x+4,\x+6,\x+8,\x+10,\x+12,\x+14,\x+16} \foreach \s in {0.1}
{	\filldraw[fill=red] (\x, \y) circle (0.1);
	\draw [->, rounded corners, ultra thick, red]  (\x+\s,\y+\s) -- (\x+1,\y+\s); 
	\draw [->, rounded corners, ultra thick, red]  (\x+1-\s,\y+\s) -- (\x+2-\s,\y+\s);
	\draw [->, rounded corners, ultra thick, red]  (\x+2-\s,\y+\s) -- (\x+2-\s,\y+1-\s); 
	\draw [->, rounded corners, ultra thick, red] (\x+2-\s,\y+1-\s) -- (\x+1+\s,\y+1-\s);
	\draw [->, rounded corners, ultra thick, red] 	(\x+1-\s,\y+1-\s) -- (\x+\s,\y+1-\s);
	\draw [->, dashed, ultra thick, red] (\x+\s,\y+\s) -- (\x+\s,\y+1-\s-\s);
}

\foreach \x in {0} \foreach  \y in {\x+1,\x+3,\x+5,\x+7,\x+9,\x+11,\x+13,\x+15,\x+17}  \foreach \s in {0.1}
{	\filldraw[fill=red] (\x, \y) circle (0.1);
	\draw [->, rounded corners, ultra thick, red]  (\x+\s,\y+\s) -- (\x+1-\s,\y+\s) -- (\x+1-\s,\y+1-\s)  -- (\x+\s,\y+1-\s);
	\draw [->, dashed, ultra thick, red] (\x,\y+\s) -- (\x,\y+1-\s);
}


\foreach \x in {10.5} \foreach  \y in {\x+2,\x,\x-2,\x-4,\x-6,\x-8}
	{\draw[->,ultra thick, Yellow, line width=1.5mm, rounded corners]
			(\x,\y) -- (\x,\y-1);
	\draw[->,ultra thick, Yellow, line width=1.5mm, rounded corners]
			(\x+1,\y-2) -- (\x+1,\y-1);
	}		 


\foreach \y in {10.5} \foreach  \x in {\y,\y-2,\y-4,\y-6,\y-8}
{\draw[->,ultra thick, Yellow, line width=1.5mm, rounded corners]
	(\x-2,\y) -- (\x-1,\y);
}		 

\foreach \x in {10.5} \foreach  \y in {0.7}
{
	\draw[->,ultra thick, Yellow, line width=1.5mm, rounded corners]
	(\x,\y) -- (\x,\y-0.6);
}

		\end{tikzpicture}
		}
		\vspace{0cm}		
	\end{center}
	
	\caption{Intended expansion of the  $\N \times \N$ grid and the boustrophedon order (thick gray path).}
	\label{fig:zig-zag}\label{fig:zig-zag-with-yellow}
	
\end{figure}

Informally, we think of an element in a structure interpreting these predicates
as having integer coordinates $(X,Y)$ in the plane such that if $Y >X$, then
its local address  is  $c_{i,j}$, where $i = \moduSix{X}$ and $j = \moduSix{Y}$,
and if $Y \leq X$, then  its local address is $d_{i,j}$, with $i$ and $j$ determined in the same way.
We call the unary predicates $\mbox{bt}$, $\mbox{lf}$, $\mbox{dg}$ and $\mbox{dg}^+$ {\em control predicates}, and
require them to interact with the local address predicates in certain ways:
\begin{equation}
\forall(\mbox{bt} \rightarrow \bigvee_{i=0}^{5} d_{i,0}) \wedge
\forall(\mbox{lf} \rightarrow \bigvee_{j=0}^{5} c_{0,j}) \wedge
\forall(\mbox{dg} \rightarrow \bigvee_{i=0}^{5} d_{i,i}) \wedge
\forall(\mbox{dg}^+ \rightarrow \bigvee_{j=0}^{5} c_{j,\moduSix{j+1}}).
\label{eq:bou:interact}
\end{equation}
Informally, we think of an element with coordinates $(X,Y)$ as satisfying $\mbox{bt}$ if $Y= 0$ (`bottom'), 
$\mbox{lf}$ if $X= 0$ and $Y >0$ (`left, but not bottom'), $\mbox{dg}$ if $Y = X$ (`diagonal'), and $\mbox{dg}^+$ if $Y = X+1$ (`super-diagonal').  
Finally, we call the binary predicates $T_0$, $T_1$ and $T_2$ {\em colours}. To aid visualization, we use
the respective synonyms  $\mbox{black}$, $\mbox{green}$ and $\mbox{red}$ for these predicates. 

We take $\phi_{grid}$ to contain conjuncts 
generating a sequence of elements $\set{a_t}_{t\geq 0}$ satisfying the $c_{i,j}$ and $d_{i,j}$ in a particular order. 
The intuition is that the elements of this sequence (each of which is assigned integer coordinates in the plane) follows the boustrophedon depicted in Fig.~\ref{fig:zig-zag} (thick grey arrow).  There is an `initial' element corresponding to the left bottom node:
\begin{equation}
\exists (d_{0,0}  \wedge \mbox{dg} \wedge \mbox{bt}).
\label{eq:bou:initial}
\end{equation}
This element has a $T_1$-successor satisfying $c_{0,1}$, $\mbox{dg}^+$ and $\mbox{lf}$:
\begin{align}
& \forall(\mbox{bt} \wedge \mbox{dg} 
\rightarrow \exists (c_{0,1} \wedge \mbox{dg}^+ \wedge \mbox{lf} \wedge T_{1}).
\label{eq:bou:dGenC}
\end{align}
Other elements satisfying $d_{i,j}$ in the sequence have successors given by the following conjuncts:
\begin{align}
& \bigwedge_{i=0,2,4} \bigwedge_{j=0}^{5} \forall  (d_{i,j}\wedge \neg \mbox{dg}  \rightarrow \exists (d_{i,\moduSix{j+1}} \wedge \neg \mbox{bt} \wedge T_{\modu{3}{j/2}} \wedge T_{\modu{3}{(j+1)/2+1}}))
\label{eq:bou:dGenA}\\
& \bigwedge_{i=1,3,5} \bigwedge_{j=0}^{5} \forall  (d_{i,j}\wedge \neg \mbox{bt}  \rightarrow
\exists (d_{i,\moduSix{j-1}} \wedge \neg \mbox{dg} 
\wedge T_{\modu{3}{j/2+1}} \wedge T_{\modu{3}{(j+1)/2-1}}))
\label{eq:bou:dGenB}\\
& \bigwedge_{i=1,3,5} \forall  (d_{i,0} \wedge \mbox{bt} \wedge \neg \mbox{dg}  \rightarrow \exists (d_{\moduSix{i+1},0} \wedge \mbox{bt} \wedge \neg \mbox{dg} \wedge T_{0}))
\label{eq:bou:dGenD}\\
& \bigwedge_{i=0,2,4} \forall(d_{i,i} \wedge \neg \mbox{bt} \wedge \mbox{dg} \rightarrow \exists
(c_{\moduSix{i-1},i} \wedge \mbox{dg}^+  \wedge \neg \mbox{lf}\wedge T_{\modu{3}{i/2-1}}  \wedge T_{\modu{3}{i/2}} ) ).
\label{eq:bou:dGenE}
\end{align}
Likewise, each element satisfying $c_{i,j}$ in the sequence has a successor given by the following conjuncts:
\begin{align}
& \bigwedge_{j=0,2,4} \bigwedge_{i=0}^{5} \forall  (c_{i,j}\wedge \neg \mbox{lf}  \rightarrow
\exists (c_{\moduSix{i-1},j} \wedge \neg \mbox{dg}^+ 
\wedge T_{\modu{3}{i/2-1}} \wedge T_{\modu{3}{(i+1)/2}}))
\label{eq:bou:cGenB}\\
& \bigwedge_{j=1,3,5} \bigwedge_{i=0}^{5} \forall  (c_{i,j}\wedge \neg \mbox{dg}^+  \rightarrow \exists (c_{\moduSix{i+1},j} \wedge \neg \mbox{lf} \wedge T_{\modu{3}{i/2+1}} \wedge T_{\modu{3}{(i+1)/2-1}}))
\label{eq:bou:cGenA}\\
& \bigwedge_{j=0,2,4} \forall  (c_{0,j} \wedge \mbox{lf} \rightarrow \exists (c_{0,j+1} \wedge \mbox{lf} 
 \wedge \neg \mbox{dg}^+ \wedge T_{1}))
\label{eq:bou:cGenD}\\
& \bigwedge_{j= 1,3,5} \forall(c_{{j-1},j} \wedge \mbox{dg}^+ \rightarrow \exists
(d_{j,j} \wedge \mbox{dg}  \wedge \neg \mbox{bt} \wedge T_{\modu{3}{(j+1)/2}} \wedge T_{\modu{3}{(j+3)/2}})).
\label{eq:bou:cGenE}
\end{align}

Starting with $a_0$ witnessing the formula~\eqref{eq:bou:initial}, we see that formulas~\eqref{eq:bou:dGenC}--\eqref{eq:bou:cGenE} generate, potentially, further elements. 
Accordingly, we call these conjuncts of $\phi_{grid}$ the {\em generation rules}.
Since the address predicates $c_{i,j}$
and 
$d_{i,j}$ 
form a partition,
at most one of these formulas has its preconditions satisfied, so we obtain a sequence $a_0, a_1, a_2, \dots$,  
satisfying the various predicates specified by those formulas. It is not obvious that the sequence $\set{a_t}$
defined in this way continues forever; but we shall show that it does. 

We give an informal explanation of how the sequence $\set{a_t}$ works. 
A good way to understand what is happening
is to suppose that there is some element $a_{t+1}$ in the sequence such that $d_{i,0}[a_{t+1}]$ (with $i$ even) and $\mbox{bt}[a_{t+1}]$.
(The formal proof below ensures that such $t$ exists; but for now we shall take this on trust.) 
Only two possible generation rules can apply: \eqref{eq:bou:dGenA} and~\eqref{eq:bou:dGenE}, depending on whether
$\mbox{dg}[a_{t+1}]$. If $\neg \mbox{dg}[a_{t+1}]$, then rule~\eqref{eq:bou:dGenA} applies and 
ensures that $d_{i,\moduSix{j+1}}[a_{t+2}]$. The\nb{I: small correction} first index in the local address remains as $i$, but the second index is incremented modulo 6.
Now the situation repeats, with the applicable generation rules being~\eqref{eq:bou:dGenA} and~\eqref{eq:bou:dGenE}. Thus, either the former
is applied forever, or we eventually generate an element $a_{t^+}$, say, such that $d_{i,j'}[a_{t^+}]$ (for some $j'$) and $\mbox{dg}[a_{t^+}]$. We will see presently
that the first of these alternatives is not possible; and on this assumption, we shall refer to the elements $a_{t+1}, \dots, a_{t^+}$ as an
{\em upward column}. The\nb{I: small correction} generation rule~\eqref{eq:bou:dGenA} ensures that each element in this sequence
is related to the next by two different colour-predicates. Let us call
these---in the order they appear in~\eqref{eq:bou:dGenA}---the {\em primary} colour and the {\em secondary} colour,
respectively. Since $d_{i,0}[a_{t+1}]$, and remembering our mnemonics {\em black}, {\em green} and {\em red} for $T_0$, $T_1$ and $T_2$, respectively,
we see that the sequences of primary and secondary colours on this upward column are
\begin{align*}
\text{black, black, green, green, red, red, \ldots}\\
\text{green, red, red, black, black, green, \ldots }
\end{align*}
repeating (as long as the column continues) with a period of six.
This is illustrated by the even-numbered columns in~ Fig.~\ref{fig:zig-zag} below the diagonal, where the primary colours are drawn to the left and the
secondaries to the right. Furthermore, rule~\eqref{eq:bou:dGenA} also ensures that the local addresses in the sequence are all $d_{i,j}$, with $i$
constant and $j$ cycling through the numbers $0, \dots, 5$.

A scan of the generation rules shows that $a_{t+1}$ can itself only have been generated by~\eqref{eq:bou:dGenD}, in which case
we have $d_{\moduSix{i-1},{0}}[a_{t}]$ and $\mbox{bt}[a_{t}]$, and indeed, by~\eqref{eq:bou:interact}, 
$\neg \mbox{dg}[a_{t}]$. Working backwards, the only we we could have generated $a_{t}$ is by~\eqref{eq:bou:dGenB}, whence 
$d_{\moduSix{i-1},1}[a_{t-1}]$ and $\neg \mbox{bt}[a_{t-1}]$. Comparing the local addresses of $a_{t}$ and $a_{t-1}$, we see
that the first index is $\moduSix{i-1}$ in both cases, but the second index has been incremented modulo 6. Let us continue to work
back. Only two possible generation rules could have yielded $a_{t-1}$: 
\eqref{eq:bou:dGenB} and~\eqref{eq:bou:cGenE}, depending on whether $\mbox{dg}[a_{t-1}]$. If $\neg \mbox{dg}[a_{t-1}]$, then
$a_{t-1}$ must have been generated by \eqref{eq:bou:dGenB}, whence $d_{\moduSix{i-1},2}[a_{t-2}]$. As
this cannot carry on for ever (for $a_0$ has local address $d_{0,0}$ and $\moduSix{i-1}$ is odd), we must have some $t^- < t$ such that 
$d_{\moduSix{i-1},j'}[a_{t^-}]$ (for some $j'$) and $\mbox{dg}[a_{t^-}]$. We refer to the subsequence $a_{t^-}, \dots, a_{t}$ as a
{\em downward column}. Each element in this sequence generates its successor via 
rule~\eqref{eq:bou:dGenB}, which ensures that the former
is related to the latter by two different colour-predicates, which we call---again in the order they appear in~\eqref{eq:bou:dGenB}---the 
{\em primary} colour and the {\em secondary} colour,
respectively. Since $d_{i,0}[a_{t}]$, we see that the sequences of primary and secondary colours on this upward again cycle through the
colours with period 6, but this time {\em ending} in the respective patterns
\begin{align*}
\text{\ldots, green, black, black, red, red, green}\\
\text{\ldots, red, red, green, green, black, black.}
\end{align*}
This is illustrated by the odd-numbered columns in~ Fig.~\ref{fig:zig-zag} below the diagonal, where the primary colours are drawn to the left and the
secondaries to the right. (To help the reader,
Table~\ref{table:3Tcolours} resolves the colour predicates in conjuncts~\eqref{eq:bou:dGenA}, \eqref{eq:bou:dGenB}, \eqref{eq:bou:cGenB} and \eqref{eq:bou:cGenA} for each $i$ and $j$.) In particular, we see that the sequence of primary colours counting forwards from $a_{t+1}$ is the
same as the sequence of secondary colours counting backwards from $a_t$. Furthermore, as we move backwards from $a_t$ to $a_{t^-}$, 
the elements all have local addresses $d_{\moduSix{i-1},j'}$, with $j'$ cycling\nb{I: $j \rightarrow j'$} through the numbers $0, \dots, 5$.

\begin{table}[tbh]
	\begin{center}
		\begin{minipage}{12cm}
			\begin{minipage}{0.45\textwidth}
				\begin{center}
					
					$	\begin{array}{c|c|c|c|c}
					&  \multicolumn{2}{c|}{\mbox{conjunct~\eqref{eq:bou:dGenA}:}} &  \multicolumn{2}{c}{\mbox{conjunct~\eqref{eq:bou:dGenB}:}}  \\
					&  \multicolumn{2}{c|}{\mbox{($i=0,2,4$)}} &  \multicolumn{2}{c}{\mbox{($i=1,3,5$)}}  \\
									j \quad & \mbox{\em pr.~c.} &  \mbox{\em sec.~c.}	&  \mbox{\em pr.~c.} & \mbox{\em sec.~c.}\\
					\hline
					
					0	&	0		&   1		& 	1	& 	2\\
					1	&	0		&   2		& 	0	& 	2\\
					2	&	1		&   2		& 	0	& 	1\\
					3	&	1		&  	0		& 	2	& 	1\\
					4	&	2		&   0		& 	2	& 	0\\
					5	&	2		&  	1		& 	1	& 	0\\

				\end{array}$
				\end{center}
				\end{minipage}
				\hspace{0.5cm}	
				\begin{minipage}{0.45\textwidth}
				\begin{center}

		$	\begin{array}{c|c|c|c|c}
			&  \multicolumn{2}{c|}{\mbox{conjunct~\eqref{eq:bou:cGenB}:}} &  \multicolumn{2}{c}{\mbox{conjunct~\eqref{eq:bou:cGenA}:}}  \\
			&  \multicolumn{2}{c|}{\mbox{($j$ even)}} &  \multicolumn{2}{c}{\mbox{($j$ odd)}}  \\
			i \quad & \mbox{\em pr.~c.} &  \mbox{\em sec.~c.}	&  \mbox{\em pr.~c.} & \mbox{\em sec.~c.}\\
			\hline
			
			0	&	2		&   0		& 	1	& 	2\\
			1	&	1		&   0		& 	1	& 	0\\
			2	&	1		&   2		& 	2	& 	0\\
			3	&	0		&  	2		& 	2	& 	1\\
			4	&	0		&   1		& 	0	& 	1\\
			5	&	2		&  	1		& 	0	& 	2\\

			\end{array}$
			\end{center}
			\end{minipage}
			\end{minipage}
			\end{center}
			
			\caption{Primary and secondary colours resolved. (Intended to help the reader.)}
			\label{table:3Tcolours}
		\end{table}
		
%

Now let us concentrate on the few elements surrounding $a_t$ and $a_{t+1}$.
We have established that $a_{t+1}$ was generated from $a_t$ by application of rule~\eqref{eq:bou:dGenD},\nb{I: small correction} so that 
$a_t$ and $a_{t+1}$ joined by $T_0$ (black).  Assuming that $t^- \leq t-2$ and $t^+ \geq t+3$, we have established that 
each element in the sequence 
$a_{t-2}, \dots, a_{t+3}$ is related to the next by $T_0$ (black), whence by transitivity, $T_0[a_{t-2}, a_{t+3}]$.
Thus we obtain a `black brick' of six elements connected in sequence by $T_0$, sitting on the bottom of the grid between
a downward column and a following upward column (see~Fig.~\ref{fig:zig-zag}).
We now add to $\phi_{grid}$ conjuncts which we refer to as {\em transfer} formulas: 
%

\begin{align}
& \bigwedge_{i=1,3,5}\bigwedge_{j=0,2,4}
\forall (d_{i,j} \rightarrow \forall ( d_{\moduSix{i+1},j} \wedge T_{\modu{3}{j/2-1}}   \rightarrow T_{\modu{3}{j/2}} ))
\label{eq:bou:dTransfer1}\\
& \bigwedge_{i=0,2,4}\bigwedge_{j=1,3,5}
\forall (d_{i,j} \rightarrow \forall  ( d_{\moduSix{i+1},j} \wedge T_{\modu{3}{j/2-1}}  \rightarrow T_{\modu{3}{j/2+1}}))
\label{eq:bou:dConvTransfer2}\\
& \bigwedge_{i=0,2,4}
\forall (d_{i,i} \wedge \mbox{dg}  \rightarrow \forall 
(c_{i,\moduSix{i+1}}\wedge  T_{\modu{3}{i/2}} \rightarrow T_{\modu{3}{i/2+1}}))
\label{eq:bou:TransferDC1}\\
& \bigwedge_{i=1,3,5}
\forall (d_{i,i} \wedge \mbox{dg}  \rightarrow \forall (c_{i,\moduSix{i+1}}\wedge  T_{\modu{3}{i/2}} \rightarrow T_{\modu{3}{i/2-1}}))
\label{eq:bou:TransferDC2}\\
& \bigwedge_{i=0,2,4}\bigwedge_{j=0,2,4}
\forall (c_{i,j} \rightarrow \forall ( c_{i,\moduSix{j+1}} \wedge T_{\modu{3}{i/2}} \rightarrow T_{\modu{3}{i/2+1}}))
\label{eq:bou:cTransfer1}\\
& \bigwedge_{i=1, 3, 5}\bigwedge_{j=1, 3, 5}
\forall (c_{i,j} \rightarrow (c_{i,\moduSix{j+1}} \wedge T_{\modu{3}{i/2}} \wedge  \rightarrow T_{\modu{3}{i/2-1}})).
\label{eq:bou:cTransfer2}
\end{align}
It follows from~\eqref{eq:bou:dTransfer1} (under the stated assumptions about the sequence $a_{t-2}, \dots,$ $a_{t+3}$),
that $T_1[a_{t-2}, a_{t+3}]$ (green). 
Now the argument repeats. 
Assuming that $t^- \leq t-4$ and $t^+ \geq t+5$, we see from the sequences of secondary colours in the 
downward column and primary colours in the upward column, that $T_1[a_{t-4},a_{t-2}]$ and  $T_1[a_{t+3},a_{t+5}]$. But we have just argued that
$T_1[a_{t-2}, a_{t+3}]$, so that, by transitivity, $T_2[a_{t-4}, a_{t+5}]$, giving us a 
`green brick' consisting of the six elements $a_{t-4}, a_{t-3}, a_{t-2}, a_{t+3}, a_{t+4}, a_{t+5}$. (Note that these are not consective in
the sequence $\set{a_t}$.) Furthermore, by~\eqref{eq:bou:dTransfer1}, $T_2[a_{t-4}, a_{t+5}]$. Continuing
this reasoning, as long as the downward and upward columns in question have at least $2\ell +1$ elements, 
we must have $T_{\modu{3}{\ell}}[a_{t-2\ell}, a_{t+1+2\ell}]$. That is, the elements $a_t$ display the pattern of `horizontal' colour links between every second element of the $(i-1)$st (downward) and $i$th (upward) columns, for $i$ non-zero and even,
as shown in Fig.~\ref{fig:zig-zag}.

Let us write $T_\diamond$ to abbreviate $T_0 \vee T_1 \vee T_2$; thus $T_\diamond[a,b]$ means that $a$ is related to $b$ by at least one of the colours. We now add to $\phi_{grid}$ conjuncts which we refer to as \textit{control formulas}: 
\begin{align}
& \bigwedge_{i=0}^5 
\forall (d_{i,i} \wedge \pm \mbox{dg} \rightarrow \forall(T_\diamond \wedge d_{\moduSix{i+1},\moduSix{i+1}} \rightarrow \pm \mbox{dg}))
\label{eq:bou:dControl1}
\\
& \bigwedge_{i= 0}^5 
\forall (d_{i,0} \wedge \pm \mbox{bt} \rightarrow \forall(T_\diamond \wedge d_{\moduSix{i+1},0} \rightarrow \pm \mbox{bt}))
\label{eq:bou:dControl2}
\\
& \bigwedge_{j=0}^{5} 
\forall (c_{\moduSix{j-1},j} \wedge \pm \mbox{dg}^+ \rightarrow \forall(T_\diamond \wedge c_{j,\moduSix{j+1}} \rightarrow \pm \mbox{dg}^+))
\label{eq:bou:cControl1}
\\
& \bigwedge_{j=0}^5
\forall (c_{0,j} \wedge \pm \mbox{lf} \rightarrow \forall(T_\diamond \wedge c_{0,\moduSix{j+1}} \rightarrow \pm \mbox{lf})).
\label{eq:bou:cControl2}
\end{align}
Here, the occurrences of $\pm$ are assumed to be resolved in the same way within a numbered display: thus, each of~\eqref{eq:bou:dControl1}--\eqref{eq:bou:cControl2} is actually a {\em pair} of formulas.
In particular, the formulas~\eqref{eq:bou:dControl1} say that, if $a$ if related to $b$ by any colour, and 
the local addresses of
$a$ and $b$ are as indicated, 
then $a$ satisfies $\mbox{dg}$ iff $b$ does.

Returning\nb{I: improved this paragraph} to our example of a downward column $a_{t^-}, \dots a_t$, followed by an 
upward column $a_{t+1}, a_{t+2}, \dots$, we observe from~\eqref{eq:bou:interact} that, since $j$ is odd and
$a_{t^-}$ by assumption satisfies $\mbox{dg}$, we may write $t^-= t-2\ell -1$ for some $\ell$.
Furthermore, since this downward column was generated by rule~\eqref{eq:bou:dGenB},
none of the elements 
$a_{t - 2\ell}, \dots,$ $a_{t}$ satisfies $\mbox{dg}$. It then follows from~\eqref{eq:bou:dControl1} and the colour
links just established  that successive elements 
$a_{t+2}, \dots, a_{t+2\ell+2}$ also do not satisfy $\mbox{dg}$, and indeed that the upward column extends 
at least to the point $a_{t+2\ell+3}$. But since $a_{t^-}= a_{t-2\ell-1}$ by assumption satisfies $\mbox{dg}$, 
it follows from~\eqref{eq:bou:dControl1} and the colour
links just established  that $a_{t+2\ell+3}$ does as well. Thus, the upward column ends precisely at the point $a_{t^+}= a_{t+2\ell+3}$.
Again, this is illustrated by adjacent columns below the diagonal in Fig.~\ref{fig:zig-zag}.

Similar reasoning applies to {\em rightward rows} (subsequences of $\set{a_t}$ in which elements
satisfy $c_{i,j}$ with $i$ fixed and odd, and with $j$ cycling through the indices 0, \dots, 5, as well
as {\em leftward rows}, defined similarly. 
Using the same argument as for the $d_{i,j}$, we see that, 
if there is a leftward row of length $2\ell+1$ ending in $a_{t-1}$ (where, by assumption, 
all elements satisfy $c_{i,j}$ with $i$ taking a common, even value), then there is a corresponding rightward row of length $2\ell+2$, and starting with $a_{t}$.
Moreover, elements in these rows are connected by vertical colour links as shown in Fig.~\ref{fig:zig-zag}, and 
none of the elements $a_{t}, \dots, a_{t + 2\ell +1}$ satisfies $\mbox{dg}^+$ (but $a_{t + 2\ell +2}$ does).   

Finally, we consider what happens at the end of an upward column (an element $a_{t^+}$ satisfying $\mbox{dg}$, and hence
$d_{i,i}$ with $i$ even). At that point~\eqref{eq:bou:dGenE} ensures that $a_{t^++1}$ satisfies 
$c_{\moduSix{i-1},i}$ and $\mbox{dg}^+$. Moreover, $T_{i/2}[a_{t^+},a_{t^++1}]$.  
Now consider the element $a_{t^-}$ at the start of the previous downward column. We have
already argued that $a_{t^-}$ is related to $a_{t^+}$ by $T_{i/2}$. But then, by transitivity,
$T_{i/2}[a_{t^-},a_{t^++1}]$, and hence, by~\eqref{eq:bou:TransferDC2}, $T_{\modu{3}{i/2-1}}[a_{t^-},a_{t^++1}]$. 
This allows us to coordinate the elements of the rightward row ending in $a_{t^--1}$ with the
leftward row beginning from $a_{t^++2}$. Similar remarks apply to columns.\nb{I: small corrections}

This concludes the informal presentation of the formula $\phi_{grid}$. Let us take stock. The generation rules~\eqref{eq:bou:initial}--\eqref{eq:bou:cGenE} generate a sequence of elements $\set{a_t}_{t\geq 0}$
satisfying certain local address predicates and control predicates. Quite independently, we define the boustrophedon curve
$\set{(X_t,Y_t)}_{t\geq 0}$ shown in Fig.~\ref{fig:zig-zag}. The transfer formulas~\eqref{eq:bou:dTransfer1}--\eqref{eq:bou:cTransfer2} 
and control formulas~\eqref{eq:bou:dControl1}--\eqref{eq:bou:cControl2} then ensure that the predicates
satisfied by each element $a_t$ are appropriate to the corresponding pair of coordinates $(X_t,Y_t)$. In particular,
the local address predicates tell us whether we are above or below the diagonal, and give the values $X_t$ and $Y_t$
modulo 6; and the control predicates tell us whether $(X_t,Y_t)$ lies on the bottom row, the left column, the diagonal or the super-diagonal. This is done by ensuring that (geometrically) neighbouring points are connected by colours as indicated in Fig.~\ref{fig:zig-zag}.

Let us now turn to the formal proof. Denote by $\varsigma(t)=(X_t, Y_t)$ the coordinates of the $t$\/\/\/th point
on the boustrophedon shown in Fig.~\ref{fig:zig-zag}, starting with $\varsigma(0)=(X_0,Y_0) = (0,0)$. We 
would like to show that, for each point in the sequence $\set{a_t}$,
the following properties are satisfied.
\begin{enumerate}[(P1)]
\item[(P1)] If $X_t < Y_t$, then $c_{i,j}[a_t]$, where $i= \moduSix{X_t}$ and $j= \moduSix{Y_t}$; 
if $X_t \geq Y_y$, then $d_{i,j}[a_t]$, where $i= \moduSix{X_t}$ and $j= \moduSix{Y_t}$.
\item[(P2)] We have: $\mbox{lf}[a_t]$ if and only if $X_t = 0$ and $Y_t>0$; $\mbox{bt}[a_t]$ if and only if $Y_t = 0$; $\mbox{dg}[a_t]$ if and only if $X_t = Y_t$; and $\mbox{dg}^+[a_t]$ if and only if $Y_t = X_t+1$.
\item[(P3)] If $s < t$ and the points $(X_s,Y_s)$ and $(X_t,Y_t)$ are connected by an arrow in Fig.~\ref{fig:zig-zag} of
colour $T_k$, then $T_k[a_s,a_t]$.
\end{enumerate}

\begin{lemma}
	Suppose $\fA \models \phi_{grid}$, and let the sequence $a_0, a_1, \ldots$ be constructed as described above. Then
	\mbox{(P1)}--\mbox{(P3)} hold for all $t\geq 0$.
\label{lem:boustrophedon}
\end{lemma}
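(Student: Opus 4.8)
The plan is to prove (P1)--(P3) simultaneously by induction on $t$, driven by a case analysis that matches the generation rules~\eqref{eq:bou:initial}--\eqref{eq:bou:cGenE} against the segments of the boustrophedon $\varsigma$. The first point to secure is that the construction of $\set{a_t}$ is well-defined and deterministic: by the partition~\eqref{eq:bou:partition} the element $a_t$ has a unique local address, and the antecedents of the generation rules are mutually exclusive (they are separated by the parity of the relevant index together with the flags $\mbox{bt}$, $\mbox{dg}$), so at most one generation rule applies at $a_t$; the inductive hypothesis for $a_t$ then guarantees that exactly the rule appropriate to the move $\varsigma(t)\to\varsigma(t+1)$ fires, so the sequence never terminates. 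Because the flags $\mbox{dg}$ and $\mbox{dg}^+$ of a freshly generated element are \emph{not} fixed by the generation rule alone (they depend on the coordinate and are pinned down only by the control formulas together with the colour links), the three properties are genuinely interdependent; I would therefore organise the induction \emph{segment by segment} --- one whole column or row of the boustrophedon at a time --- establishing (P1)--(P3) for an entire segment using the preceding one.

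For the parts of the statement that follow directly from the generation rules, I would argue as follows. The boustrophedon alternates between columns on or below the diagonal (addresses $d_{i,j}$) and rows strictly above it (addresses $c_{i,j}$), turning at the bottom edge, the left edge, the diagonal and the super-diagonal. An upward even column is governed by~\eqref{eq:bou:dGenA} (second index incremented modulo $6$, $\neg\mbox{bt}$ preserved); a downward odd column by~\eqref{eq:bou:dGenB}; the bottom turn by~\eqref{eq:bou:dGenD}; the diagonal turn from a $d$-column to a $c$-row by~\eqref{eq:bou:dGenE}; and the symmetric rules~\eqref{eq:bou:cGenB}, \eqref{eq:bou:cGenA}, \eqref{eq:bou:cGenD}, \eqref{eq:bou:cGenE} handle leftward rows, rightward rows, the left-edge turn and the super-diagonal turn, with the base step $\varsigma(0)\to\varsigma(1)$ supplied by~\eqref{eq:bou:initial} and~\eqref{eq:bou:dGenC}. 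In each case one checks that the index arithmetic modulo $6$ in the consequent matches the coordinate increment prescribed by $\varsigma$ (this gives (P1)), that the $\mbox{bt}$/$\mbox{lf}$ flags update in accordance with~\eqref{eq:bou:interact}, and that the two colours $T_{\modu{3}{\cdot}}$ asserted joining $a_t$ to its witness are exactly the primary and secondary colours of Table~\ref{table:3Tcolours} (this gives the short-range links of (P3)).

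The crux is the remaining, interdependent content: the \emph{long-range} (horizontal and vertical) colour links of (P3), the $\mbox{dg}$/$\mbox{dg}^+$ flags of (P2), and the fact that each column/row terminates precisely at the diagonal. For a downward column $a_{t^-},\dots,a_t$ immediately followed by an upward column $a_{t+1},a_{t+2},\dots$, I would prove by a nested induction on $\ell$ that $T_{\modu{3}{\ell}}[a_{t-2\ell},\,a_{t+1+2\ell}]$ for every $\ell$ up to which both columns extend, following the ``brick'' argument: the base case $\ell=0$ is the black link $T_0[a_t,a_{t+1}]$ from~\eqref{eq:bou:dGenD}, and the inductive step splices the previously established link with the consecutive colour links on the two columns, pushes them across by the transfer formulas~\eqref{eq:bou:dTransfer1} and~\eqref{eq:bou:dConvTransfer2}, and closes the triangle by transitivity of the relevant $T_k$. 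The control formulas~\eqref{eq:bou:dControl1}--\eqref{eq:bou:cControl2} then propagate $\mbox{dg}$, $\mbox{bt}$, $\mbox{dg}^+$, $\mbox{lf}$ along exactly these long-range links, which is what forces the upward column to have the same length as the downward one and to stop at the diagonal (yielding the $\mbox{dg}$/$\mbox{dg}^+$ assertions of (P2)); the transition formulas~\eqref{eq:bou:TransferDC1} and~\eqref{eq:bou:TransferDC2} coordinate the end of an upward column with the start of the following row across the diagonal, and symmetrically for rows.

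The main obstacle I anticipate is the colour bookkeeping in (P3): keeping the indices $\modu{3}{\cdot}$ aligned as the primary and secondary colours rotate with period six down each column and along each row (as recorded in Table~\ref{table:3Tcolours}), and checking that at each of the four kinds of turn the transfer and transition formulas graft the long-range links of one segment onto those of the next with the correct colour. The geometry is transparent from Fig.~\ref{fig:zig-zag}, but the modular arithmetic has to be verified segment by segment, and the simultaneous reliance on transitivity and on the control formulas to equalise adjacent column/row lengths is where the argument is most delicate.
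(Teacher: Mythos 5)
Your proposal is correct and follows essentially the same route as the paper's proof: a simultaneous induction establishing (P1)--(P3), with the generation rules supplying the local addresses and short-range colour links, transitivity plus the transfer formulas~\eqref{eq:bou:dTransfer1}--\eqref{eq:bou:cTransfer2} supplying the long-range links of (P3), and the control formulas~\eqref{eq:bou:dControl1}--\eqref{eq:bou:cControl2} propagated along those links to pin down the $\mbox{dg}$/$\mbox{bt}$/$\mbox{dg}^+$/$\mbox{lf}$ flags of (P2). The only difference is bookkeeping --- the paper runs a single pointwise induction on $t$ (reaching back to specific earlier indices $s$, $s'$ via the inductive hypothesis) where you run a segment-wise induction with a nested induction on $\ell$ for the brick links --- and both treat the remaining turn/row cases at the same level of ``the other cases are similar.''
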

\begin{proof}
By induction on $t$. For $t=0$, all statements in (P1)--(P3)
are either trivial or immediate from~\ref{eq:bou:initial}. Furthermore,
the only generation rule that applies in this case is~\eqref{eq:bou:dGenC}, in which case (P1)--(P3) are immediately secured 
for $t=1$. Suppose, then $t \geq 1$, and that (P1)--(P3) hold for all values up to $t$; we show that they hold for $t+1$. 
We proceed by cases, depending on whether $a_t$ satisfies either $d_{i,j}$ or $c_{i,j}$, and whether $i$ (respectively, $j$)
is odd or even. We give details for the case where $a_t$ satisfies $d_{i,j}$ and $i$ is even. The other cases are similar.

Assume first that $a_t$ does not satisfy $\mbox{dg}$. The generation rule that applies in this case is~\eqref{eq:bou:dGenA}, in 
whence $a_{t+1}$ satisfies $d_{i,\moduSix{j+1}}$ but not $\mbox{bt}$.
Now, by IH (P2), $X_t\neq Y_t$, hence by IH (P1): $X_t > Y_t$, with $X_t$ even. By the construction of the boustrophedon, then, $X_{t+1} = X_t$ and $Y_{t+1} = Y_t+1$ whence $X_{t+1} > 0$ and $X_{t+1} \geq Y_{t+1}$.  This immediately
secures all the conditions in (P1)--(P2) except for the condition that $\mbox{dg}[a_{t+1}]$ if and only if $X_{t+1} =Y_{t+1}$, which we must
establish. In addition, we must establish (P3).

We begin with the latter. That $T_{j/2}[a_{t},a_{t+1}]$ and $T_{\modu{3}{j/2+1}}[a_{t},a_{t+1}]$ is immediate from the generation rule~\eqref{eq:bou:dGenA}. Consulting Fig.~\ref{fig:zig-zag}, it remains only to show 
that, if $j$ is odd, and $s < t$ is such that $X_s = X_{t+1}-1$ and $Y_s = Y_{t+1}$, then $T_{j/2}[a_s,a_{t+1}]$ and $T_{\modu{3}{j/2+1}}[a_s,a_{t+1}]$. Now, using IH (P3), we see from Fig.~\ref{fig:zig-zag}
$T_{j/2}[a_s,a_{s+1}]$, $T_{j/2}[a_{s+1},a_{s+2}]$, $T_{j/2}[a_{s+2},a_{t-1}]$ and $T_{j/2}[a_{t-1},a_{t}]$; 
and we have just established that $T_{j/2}[a_{t},a_{t+1}]$. By transitivity of $T_{j/2}$, then, $T_{j/2}[a_s,a_{t+1}]$;
and by the transfer formula~\eqref{eq:bou:dTransfer1}, $T_{\modu{3}{j/2+1}}[a_s,a_{t+1}]$. Thus (P3) is established.
Returning to the missing condition in (P2), if $j$ is even, then, by IH (P1), so is $Y_t$; similarly, since $i$ is
even so is $X_t$. Thus $X_{t+1} = X_t \neq Y_{t+1} = Y_t+1$, and moreover, by~\eqref{eq:bou:interact}, $\neg \mbox{dg}[a_{t+1}]$, since $d_{i,j+1}[a_{t+1}]$. Thus, we may assume that $j$ is odd. 
But now let $s' <t-1$ be such that  
$X_{s'} = X_{t}-1$ and $Y_{s'} = Y_{t}$. By inspection of Fig.~\ref{fig:zig-zag} and applying IH (P3), we see
that $T_{j/2}[a_{s'-1},a_{s'}]$, $T_{j/2}[a_{s'},a_{t-1}]$,
and $T_{j/2}[a_{t-1},a_{t}]$; and, we have just established that 
$T_{j/2}[a_{t},a_{t+1}]$. By transitivity, then, $T_{j/2}[a_{s'-1},a_{t+1}]$. But
by IH (P2), $\mbox{dg}[a_{s'-1}] \Leftrightarrow X_{s'-1} = Y_{s'-1}$, and by the choice of $s'$,
$X_{s'-1} = Y_{s'-1} \Leftrightarrow X_{t+1} = Y_{t+1}$. Furthermore, 
having established that $T_{j/2}[a_{s'-1},a_{t+1}]$, it follows  
by the control formula~\eqref{eq:bou:dControl1} that $\mbox{dg}[a_{s'-1}] \Leftrightarrow \mbox{dg}[a_{t+1}]$.
Thus, $\mbox{dg}[a_{t+1}] \Leftrightarrow X_{t+1} = Y_{t+1}$ as required.

We assumed above that $a_t$ does not satisfy $\mbox{dg}$; now suppose that it does. The generation rule that applies in this case is~\eqref{eq:bou:dGenE}, and
(P1)--(P2) follow instantly. To establish (P3), referring to Fig.~\ref{fig:zig-zag}, 
we observe first that the generation rule itself ensures that $a_t$ is connected to $a_{t+1}$ by $T_{\modu{3}{i/2-1}}$ and $T_{\modu{3}{i/2}}$.
It remains to show that, if $s <t$ is such that $X_s= X_t-1$ and $Y_s= Y_t-1$, then  
$a_s$ is connected to $a_{t+1}$ by $T_{\modu{3}{i/2-1}}$ and $T_{\modu{3}{i/2+1}}$. By IH (P3), the successive
pairs in the sequence $a_s$, $a_{s+1}$, $a_{t-2}$, $a_{t-1}$, $a_{t}$ are connected by 
$T_{\modu{3}{i/2-1}}$; and we have just established that
$T_{\modu{3}{i/2-1}}[a_t,a_{t+1}]$. By transitivity, $T_{\modu{3}{i/2-1}}[a_s,a_{t+1}]$.
Since $d_{i-1,i-1}[a_s]$, and $c_{i-1,i}[a_{t+1}]$, it follows from the transfer formula~\eqref{eq:bou:TransferDC2}
that $T_{\modu{3}{i/2+1}}[a_s,a_{t+1}]$ as required.
\end{proof}

Lemma~\ref{lem:boustrophedon} justifies us in picturing the sequence $a_0, a_1, \ldots$ as laid out in Fig.~\ref{fig:zig-zag}, but it does not tell us that the elements of this sequence are distinct. However, we shall show that, in fact, $\phi_{grid}$\nb{I: was `this formula'} is an axiom of infinity.
As a preliminary, consider the rectangles into which the upper-right quadrant of the plane is divided by the black, green and red lines in Fig.~\ref{fig:zig-zag}. We refer to these rectangles as {\em bricks}. 
Each brick consists of four or six points in the plain, with the former kind confined to the left-hand and bottom edges; 
moreover, the bricks 
form a natural sequence following the boustrophedon. Since every point 
$\varsigma(t)=(X_t, Y_t)$ is associated with an element $a_t$ in some model of $\phi_{grid}$, we can think of bricks as
the set of associated elements.  And by inspection of Fig.~\ref{fig:zig-zag}, we see that for any brick $B$, 
there exists $k$ ($0 \leq k < 3$) such that, for all elements $a_s, a_t \in B$ with $s <t$, we have $T_k[a_s,a_t]$. In other words, each brick has a \textit{colour}, and, furthermore, an \textit{orientation} induced by the ordering of points on the boustrophedon. We call the bricks below
the diagonal having their left-hand margins in even columns  {\em downward-pointing}, while those 
below the diagonal having their left-hand margins in odd columns are {\em upward-pointing}; similarly for \textit{leftward-} and \textit{rightward-pointing} bricks above the diagonal, depicted by yellow arrows in Figure~\ref{fig:zig-zag}. Of course, while the elements of $B$ lie in order as the periphery of $B$ 
is traversed, they are not in general consecutive in the sequence $\set{a_t}$. In the light of the above discussion, the following are evident.
\begin{enumerate}[(E1)]
\item  Every element satisfying $d_{i,j}$ except for $a_0$
lies on at least one upward-pointing brick and at least one downward pointing brick.
\item  The colour and orientation of a brick $B$ is determined entirely by the local addresses of its elements;
hence two elements with the same local address lie on bricks with the same set of colours/orientations. 
\item In particular, if $B$ is a 6-element upward-pointing brick and its first element is a 
non-diagonal element, then that element  has local address $d_{i,j}$ ($i$ odd, $j$ even), while the last element
has local address $d_{\moduSix{i+1},j}$, and the colour of $B$ is $T_{\modu{3}{j/2-1}}$.
\item The first element of each brick $B$ is related to all the others by the colour of $B$,
and all the elements but the last are related to the last element by the colour of $B$.
\end{enumerate}


In the proof of the following lemma, recall that $\varsigma(t) = (X_t,Y_t)$, the $t$-th point in the  
boustrophedon.
\begin{lemma}
Suppose $\fA \models \phi_{grid}$, and let the sequence $\set{a_t}$ be as just constructed. Then the elements
of this sequence are all distinct.\label{lem:3Tinfinity}
\end{lemma}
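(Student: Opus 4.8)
The plan is to deduce distinctness from the injectivity of the boustrophedon together with Lemma~\ref{lem:boustrophedon}. Since the curve $\varsigma\colon \N \to \N^2$ visits each lattice point exactly once, the map $t \mapsto \varsigma(t)$ is injective, so it suffices to prove the implication $a_s = a_t \Rightarrow \varsigma(s) = \varsigma(t)$; the conclusion then follows because the domain $\N$ of $t \mapsto a_t$ is infinite, whence $\fA$ is infinite and $\phi_{grid}$ is an axiom of infinity. I would argue by contradiction, choosing the least $t$ for which $a_t = a_s$ for some $s < t$; then $a_0, \dots, a_{t-1}$ are pairwise distinct. By (P1) and (P2) the identity $a_s = a_t$ forces $a_s$ and $a_t$ to carry the same local address and the same control predicates, so $X_s \equiv X_t$ and $Y_s \equiv Y_t \pmod 6$, the two points lie on the same side of the diagonal, and they agree on each of $\mbox{bt}$, $\mbox{lf}$, $\mbox{dg}$, $\mbox{dg}^+$. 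Thus $\varsigma(s) \neq \varsigma(t)$ can only mean that the two points differ by a nonzero multiple of $6$ in one coordinate while lying in the same boundary stratum.

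Next I would run the generation process backwards one step. For $t \geq 1$ the element $a_t$ is produced from $a_{t-1}$ by the unique generation rule whose precondition its predecessor satisfies, and this rule is determined solely by the address and control predicates of the source. Since $a_s = a_t$ carries the same address and control predicates, the same rule produces $a_s$ from $a_{s-1}$, whence $a_{s-1}$ and $a_{t-1}$ share a local address and each is joined to $c := a_s = a_t$ by the same primary colour $T_k$ (and the same secondary colour), by (P3). By minimality of $t$ the elements $a_{s-1}$ and $a_{t-1}$ are distinct, so $c$ acquires two distinct in-neighbours of a common address under $T_k$. The degenerate case $s = 0$ is disposed of directly: then $a_s = a_0$ satisfies $\mbox{dg} \wedge \mbox{bt}$ by~\eqref{eq:bou:initial}, so by (P2) $a_t$ does too, forcing $\varsigma(t) = (0,0) = \varsigma(0)$ and hence $t = 0$, a contradiction.

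The heart of the proof is to convert this configuration into a forbidden colour edge. Using (E1) and (E2), the bricks through $a_{s-1}$ and through $a_{t-1}$ carry identical colours and orientations; using (E4) and the transitivity of the three colours, I would propagate the two incoming edges along the appropriate brick chains back towards the diagonal (for interior elements) or towards the bottom row / left column (for boundary elements). The outcome is a single $T_\diamond$-edge whose endpoints carry consecutive diagonal addresses $d_{i,i}$ and $d_{\moduSix{i+1},\moduSix{i+1}}$ (respectively consecutive bottom addresses $d_{i,0}, d_{\moduSix{i+1},0}$, or left-column addresses $c_{0,j}, c_{0,\moduSix{j+1}}$), but whose endpoints sit at different distances from that boundary: by (P2) one of them satisfies the relevant control predicate $\mbox{dg}$ (resp.\ $\mbox{bt}$, $\mbox{lf}$) and the other does not. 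This contradicts the control formula~\eqref{eq:bou:dControl1} (resp.~\eqref{eq:bou:dControl2},~\eqref{eq:bou:cControl2}), which forces the two endpoints of such a $T_\diamond$-edge to agree on that predicate.

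I expect the main obstacle to be precisely this last step: organizing the case analysis — below versus above the diagonal, and interior versus on a boundary — and checking, in each case, that the transitive closure of the brick edges really does deliver a boundary-crossing $T_\diamond$-edge rather than a harmless one. This bookkeeping is where the primary/secondary colour assignments recorded in Table~\ref{table:3Tcolours} and the geometry of Fig.~\ref{fig:zig-zag} must be used carefully; fortunately the symmetry between rows and columns, and between the $c$- and $d$-regions, reduces the number of genuinely distinct cases. Once the contradiction is secured in every case, no least offending $t$ can exist, the map $t \mapsto a_t$ is injective, and the lemma follows.
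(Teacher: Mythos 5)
Your overall strategy coincides with the paper's: the identification $a_s=a_t$ (say $t<s$) yields, via (E4) and transitivity, a shortcut colour edge between two positions carrying the same local address, and this edge is then propagated through successive bricks using transitivity and the transfer formula~\eqref{eq:bou:dTransfer1} until it crosses a boundary of the grid, where a control formula is violated. However, the decisive step --- which you yourself flag as ``the main obstacle'' --- is left open, and the organizing principle you propose for it (``interior versus boundary'') is not the one that makes the argument work. The correct case split, in the $d_{i,j}$ case, is on whether $Y_s=Y_t$. If $Y_s=Y_t$ (hence $X_s\neq X_t$), one must propagate \emph{upward} towards the diagonal: the two columns reach the diagonal after $X_s-Y_s\neq X_t-Y_t$ brick-transfer rounds, so eventually exactly one endpoint of the propagated $T_\diamond$-edge satisfies $\mbox{dg}$, contradicting~\eqref{eq:bou:dControl1}; propagating downward would be useless, since both chains reach the bottom row after the same number of steps and no mismatch arises. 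Dually, if $Y_s\neq Y_t$ one must propagate \emph{downward} to the bottom row and invoke~\eqref{eq:bou:dControl2}; going upward need not help, since $X_s-Y_s$ and $X_t-Y_t$ may coincide. Without fixing the direction by this criterion, the ``forbidden edge'' you aim for may simply fail to be forbidden.

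Two further points. First, in the $Y_s=Y_t$ case the paper needs a preliminary claim --- that neither $a_s$ nor $a_t$ lies on a brick containing an element satisfying $\mbox{dg}$ --- before the upward propagation can be set up with full upward-pointing bricks; your sketch omits this. Second, your backwards-generation step producing two distinct in-neighbours $a_{s-1},a_{t-1}$ of $c=a_s=a_t$ is a detour that does not feed into the propagation: the argument starts from the two bricks through $c$ determined by the positions $\varsigma(s)$ and $\varsigma(t)$, not from predecessors in the sequence. Moreover, the inference ``the target's predicates determine the generation rule, hence the source's predicates'' conflates source and target; it can be repaired by scanning the generation rules, but as written it is a gap.
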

\begin{proof}
Assume for contradiction that $a_s = a_t$ with $t < s$. 
We consider the case where $a_s = a_t$ satisfies some $d_{i,j}$; 
the case for elements satisfying some $c_{i,j}$ is handled similarly. 

Assume first that 
$Y_s = Y_t$.  Since $t <s$, and, $a_s$ has the same local address as $a_t$ (since they are identical), 
we must have $X_t < X_s$ and therefore, by (P1), $X_t < X_s - 5$.
As a preliminary, we claim that, if $a_s$ lies on a brick $B$
and $a_t$ on a brick $D$, then no element of either $B$ or $D$ can satisfy $\mbox{dg}$. For if $B$ has an
element $a_{s'}$ such that $\mbox{dg}[a_{s'}]$, then
$X_{t} < X_s - 5 \leq X_{s'} - 4 =  Y_{s'} - 4 \leq  Y_s - 2 =  Y_t - 2 < Y_t$ contradicting (P1) and the fact that 
$a_s= a_t$ satisfies some predicate $d_{i,j}$. In particular, $a_s = a_t$ itself does not satisfy $\mbox{dg}$. 
If, on the other hand, 
$D$ has an element $a_{t'}$ satisfying $\mbox{dg}$, then, by inspection of Fig~\ref{fig:zig-zag}, there is such
a $t'$ satisfying $t' > t$. Letting $s'= s+ (t' - t)$, we see that since the sequences $a_s, \dots a_{s'}$ and
$a_t, \dots a_{t'}$ are the same (and thus have the same local addresses), whence
$X_s, \dots X_{s'}$ and
$X_t, \dots X_{t'}$ move in the same way, so that $X_{t} < X_{s} -5$ implies $X_{t'} < X_{s'} -5$.
Thus, recalling that $\mbox{dg}[a_{s'}]$ implies $X_{t'} = Y_{t'}$, and that $Y_s = Y_t$ by assumption,
we have $X_{s'} > X_{t'} + 5 = Y_{t'} + 5 \geq Y_{t}+3 = Y_s+3 \geq Y_{s'}+1 > Y_{s'}$, contradicting the supposition
that $a_{t'} = a_{s'}$ satisfies $\mbox{dg}$. This proves the claim that neither $a_s$ nor $a_t$ lie on any brick containing
a diagonal element.
%

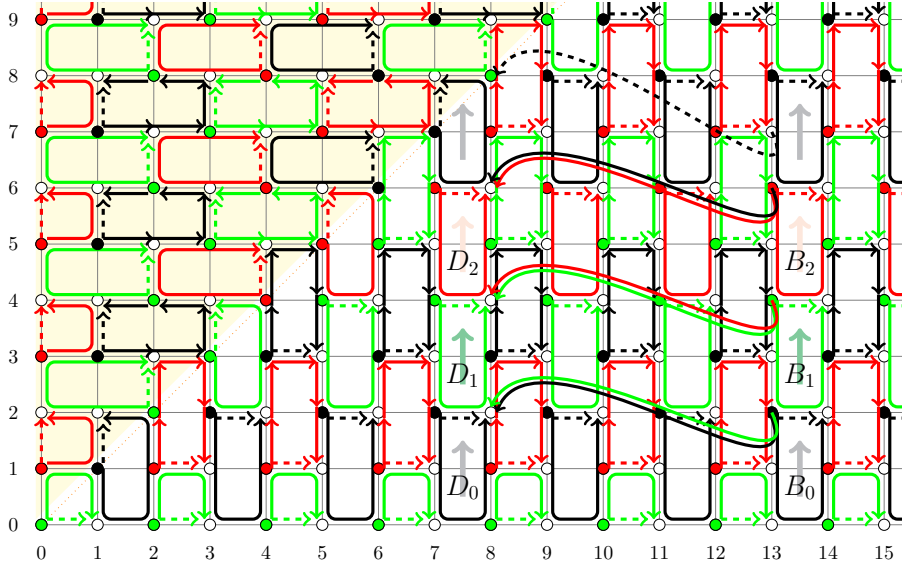
\begin{figure}[hbt]
	\begin{center}
		
		\resizebox{12.2cm}{!}{
			\begin{tikzpicture}
			
			\clip (-0.8,-0.8) rectangle (15.3,9.3);
	
	\path [fill=myYellow] (-0.1,0.1) to (14.0,14.1) to (-0.1,14.1) to (-0.1,0.1);

%
%

	\draw [help lines] (0,0) grid (18,18);
	\draw [orange, dotted] (0,0) -- (18,18);
	
			\foreach \x in {0,1,2,3,4,5,6,7,8,9,10,11,12,13,14,15,16,17} 
	\coordinate [label=center:\x] (A) at (\x, -0.5);
	\foreach \y in {0,1,2,3,4,5,6,7,8,9,10,11,12,13,14,15,16,17}  
	\coordinate [label=center:\y] (A) at (-0.5,\y);
	
		\foreach \x in {0,1,2,3,4,5,6,7,8,9,10,11,12,13,14,15,16,17}
	\foreach \y in {0,1,2,3,4,5,6,7,8,9,10,11,12,13,14,15,16,17}
	{
		\filldraw[fill=white] (\x, \y) circle (0.1); 
	}

\foreach \x in {1,7,13} \foreach \y in {\x}
\foreach \s in {0.1}
{	\filldraw[fill=black] (\x, \y) circle (0.1);
	\draw [->, rounded corners, ultra thick, black]  (\x+\s,\y) -- (\x+\s,\y-1+\s) -- (\x+1-\s,\y-1+\s)  -- (\x+1-\s,\y+1-\s) -- (\x+\s,\y+1-\s);
	\draw [->, dashed, ultra thick, black] (\x+\s,\y+\s) -- (\x+\s,\y+1-\s-\s);
}

\foreach \x in {3,9,15} \foreach \y in {\x}
\foreach \s in {0.1}
{	\filldraw[fill=green] (\x, \y) circle (0.1);
	\draw [->, rounded corners, ultra thick, green]  (\x+\s,\y) -- (\x+\s,\y-1+\s) -- (\x+1-\s,\y-1+\s)  -- (\x+1-\s,\y+1-\s) -- (\x+\s,\y+1-\s);
	\draw [->, dashed, ultra thick, green] (\x+\s,\y+\s) -- (\x+\s,\y+1-\s-\s);
}

\foreach \x in {5,11} \foreach \y in {\x}
\foreach \s in {0.1}
{	\filldraw[fill=red] (\x, \y) circle (0.1);
	\draw [->, rounded corners, ultra thick, red]  (\x+\s,\y) -- (\x+\s,\y-1+\s) -- (\x+1-\s,\y-1+\s)  -- (\x+1-\s,\y+1-\s) -- (\x+\s,\y+1-\s);
	\draw [->, dashed, ultra thick, red] (\x+\s,\y+\s) -- (\x+\s,\y+1-\s-\s);
}

\foreach \y in {2,8,14} \foreach  \x in {\y+1,\y+3,\y+5,\y+7,\y+9,\y+11,\y+13,\y+15,\y+17}  \foreach \s in {0.1}
{	\filldraw[fill=black] (\x, \y) circle (0.1);
	\draw [->, rounded corners, ultra thick, black]  (\x+\s,\y) -- (\x+\s,\y-2+\s) -- (\x+1-\s,\y-2+\s)  -- (\x+1-\s,\y-\s);
	\draw [->, dashed, ultra thick, black] (\x+\s,\y-\s) -- (\x+1-\s-\s,\y-\s);
}

\foreach \y in {0} \foreach  \x in {\y,\y+2,\y+4,\y+6,\y+8,\y+10,\y+12,\y+14,\y+16}  \foreach \s in {0.1}
{	\filldraw[fill=green] (\x, \y) circle (0.1);
	\draw [->, rounded corners, ultra thick, green]  (\x+\s,\y+\s) -- (\x+\s,\y+1-\s) -- (\x+1-\s,\y+1-\s)  -- (\x+1-\s,\y+\s);
	\draw [->, dashed, ultra thick, green] (\x+\s,\y+\s) -- (\x+1-\s-\s,\y+\s);
}

\foreach \y in {4,10,16} \foreach  \x in {\y+1,\y+3,\y+5,\y+7,\y+9,\y+11,\y+13,\y+15,\y+17}  \foreach \s in {0.1}
{	\filldraw[fill=green] (\x, \y) circle (0.1);
	\draw [->, rounded corners, ultra thick, green]  (\x+\s,\y) -- (\x+\s,\y-2+\s) -- (\x+1-\s,\y-2+\s)  -- (\x+1-\s,\y-\s);
	\draw [->, dashed, ultra thick, green] (\x+\s,\y-\s) -- (\x+1-\s-\s,\y-\s);
}

\foreach \y in {6,12,18} \foreach  \x in {\y+1,\y+3,\y+5,\y+7,\y+9,\y+11,\y+13,\y+15,\y+17}  \foreach \s in {0.1}
{	\filldraw[fill=red] (\x, \y) circle (0.1);
	\draw [->, rounded corners, ultra thick, red]  (\x+\s,\y) -- (\x+\s,\y-2+\s) -- (\x+1-\s,\y-2+\s)  -- (\x+1-\s,\y-\s);
	\draw [->, dashed, ultra thick, red] (\x+\s,\y-\s) -- (\x+1-\s-\s,\y-\s);
}

\foreach \y in {1,7,13} \foreach  \x in {\y+1,\y+3,\y+5,\y+7,\y+9,\y+11,\y+13,\y+15,\y+17}  \foreach \s in {0.1}
{	\filldraw[fill=red] (\x, \y) circle (0.1);
	\draw [->, rounded corners, ultra thick, red]   (\x+\s,\y) -- (\x+\s,\y+1-\s);
	\draw [->, rounded corners, ultra thick, red]	(\x+\s,\y+1-\s) --	(\x+\s,\y+2-\s);
	\draw [->, rounded corners, ultra thick, red]	(\x+\s,\y+2-\s) -- (\x+1-\s,\y+2-\s); 
	\draw [->, rounded corners, ultra thick, red]	(\x+1-\s,\y+2-\s) -- (\x+1-\s,\y+1+\s);
	\draw [->, rounded corners, ultra thick, red]	(\x+1-\s,\y+1+\s) -- (\x+1-\s,\y+\s);
	\draw [->, dashed, ultra thick, red] (\x+\s,\y+\s) -- (\x+1-\s-\s,\y+\s);
}

\foreach \y in {3,9,15} \foreach  \x in {\y+1,\y+3,\y+5,\y+7,\y+9,\y+11,\y+13,\y+15,\y+17}  \foreach \s in {0.1}
{	\filldraw[fill=black] (\x, \y) circle (0.1);
	\draw [->, rounded corners, ultra thick, black]   (\x+\s,\y) -- (\x+\s,\y+1-\s);
	\draw [->, rounded corners, ultra thick, black]	(\x+\s,\y+1-\s) --	(\x+\s,\y+2-\s);
	\draw [->, rounded corners, ultra thick, black]	(\x+\s,\y+2-\s) -- (\x+1-\s,\y+2-\s); 
	\draw [->, rounded corners, ultra thick, black]	(\x+1-\s,\y+2-\s) -- (\x+1-\s,\y+1+\s);
	\draw [->, rounded corners, ultra thick, black]	(\x+1-\s,\y+1+\s) -- (\x+1-\s,\y+\s);
	\draw [->, dashed, ultra thick, black] (\x+\s,\y+\s) -- (\x+1-\s-\s,\y+\s);
}

\foreach \y in {5,11,17} \foreach  \x in {\y+1,\y+3,\y+5,\y+7,\y+9,\y+11,\y+13,\y+15,\y+17}  \foreach \s in {0.1}
{	\filldraw[fill=green] (\x, \y) circle (0.1);
	\draw [->, rounded corners, ultra thick, green]   (\x+\s,\y) -- (\x+\s,\y+1-\s);
	\draw [->, rounded corners, ultra thick, green]	(\x+\s,\y+1-\s) --	(\x+\s,\y+2-\s);
	\draw [->, rounded corners, ultra thick, green]	(\x+\s,\y+2-\s) -- (\x+1-\s,\y+2-\s); 
	\draw [->, rounded corners, ultra thick, green]	(\x+1-\s,\y+2-\s) -- (\x+1-\s,\y+1+\s);
	\draw [->, rounded corners, ultra thick, green]	(\x+1-\s,\y+1+\s) -- (\x+1-\s,\y+\s);
	\draw [->, dashed, ultra thick, green] (\x+\s,\y+\s) -- (\x+1-\s-\s,\y+\s);
}

\foreach \x in {2,8,14} \foreach \y in {\x,\x+2,\x+4,\x+6,\x+8,\x+10,\x+12,\x+14,\x+16} \foreach \s in {0.1}
{	\filldraw[fill=green] (\x, \y) circle (0.1);
	\draw [->, rounded corners, ultra thick, green]  (\x-\s,\y+\s) -- (\x-2+\s,\y+\s) -- (\x-2+\s,\y+1-\s)  -- (\x-\s,\y+1-\s);
	\draw [->, dashed, ultra thick, green] (\x-\s,\y+\s) -- (\x-\s,\y+1-\s-\s);
}

\foreach \x in {4,10,16} \foreach \y in {\x,\x+2,\x+4,\x+6,\x+8,\x+10,\x+12,\x+14,\x+16} \foreach \s in {0.1}
{	\filldraw[fill=red] (\x, \y) circle (0.1);
	\draw [->, rounded corners, ultra thick, red]  (\x-\s,\y+\s) -- (\x-2+\s,\y+\s) -- (\x-2+\s,\y+1-\s)  -- (\x-\s,\y+1-\s);
	\draw [->, dashed, ultra thick, red] (\x-\s,\y+\s) -- (\x-\s,\y+1-\s-\s);
}

\foreach \x in {6,12,18} \foreach \y in {\x,\x+2,\x+4,\x+6,\x+8,\x+10,\x+12,\x+14,\x+16} \foreach \s in {0.1}
{	\filldraw[fill=black] (\x, \y) circle (0.1);
	\draw [->, rounded corners, ultra thick, black]  (\x-\s,\y+\s) -- (\x-2+\s,\y+\s) -- (\x-2+\s,\y+1-\s)  -- (\x-\s,\y+1-\s);
	\draw [->, dashed, ultra thick, black] (\x-\s,\y+\s) -- (\x-\s,\y+1-\s-\s);
}

\foreach \x in {1,7,13} \foreach \y in {\x+2,\x+4,\x+6,\x+8,\x+10,\x+12,\x+14,\x+16} \foreach \s in {0.1}
{	\filldraw[fill=black] (\x, \y) circle (0.1);
	\draw [->, rounded corners, ultra thick, black]  (\x+\s,\y+\s) -- (\x+1,\y+\s); 
	\draw [->, rounded corners, ultra thick, black]  (\x+1-\s,\y+\s) -- (\x+2-\s,\y+\s);
	\draw [->, rounded corners, ultra thick, black]  (\x+2-\s,\y+\s) -- (\x+2-\s,\y+1-\s); 
	\draw [->, rounded corners, ultra thick, black] (\x+2-\s,\y+1-\s) -- (\x+1+\s,\y+1-\s);
	\draw [->, rounded corners, ultra thick, black] 	(\x+1-\s,\y+1-\s) -- (\x+\s,\y+1-\s);
	\draw [->, dashed, ultra thick, black] (\x+\s,\y+\s) -- (\x+\s,\y+1-\s-\s);
}

\foreach \x in {3,9,15} \foreach \y in {\x+2,\x+4,\x+6,\x+8,\x+10,\x+12,\x+14,\x+16} \foreach \s in {0.1}
{	\filldraw[fill=green] (\x, \y) circle (0.1);
	\draw [->, rounded corners, ultra thick, green]  (\x+\s,\y+\s) -- (\x+1,\y+\s); 
	\draw [->, rounded corners, ultra thick, green]  (\x+1-\s,\y+\s) -- (\x+2-\s,\y+\s);
	\draw [->, rounded corners, ultra thick, green]  (\x+2-\s,\y+\s) -- (\x+2-\s,\y+1-\s); 
	\draw [->, rounded corners, ultra thick, green] (\x+2-\s,\y+1-\s) -- (\x+1+\s,\y+1-\s);
	\draw [->, rounded corners, ultra thick, green] 	(\x+1-\s,\y+1-\s) -- (\x+\s,\y+1-\s);
	\draw [->, dashed, ultra thick, green] (\x+\s,\y+\s) -- (\x+\s,\y+1-\s-\s);
}

\foreach \x in {5,11,17} \foreach \y in {\x+2,\x+4,\x+6,\x+8,\x+10,\x+12,\x+14,\x+16} \foreach \s in {0.1}
{	\filldraw[fill=red] (\x, \y) circle (0.1);
	\draw [->, rounded corners, ultra thick, red]  (\x+\s,\y+\s) -- (\x+1,\y+\s); 
	\draw [->, rounded corners, ultra thick, red]  (\x+1-\s,\y+\s) -- (\x+2-\s,\y+\s);
	\draw [->, rounded corners, ultra thick, red]  (\x+2-\s,\y+\s) -- (\x+2-\s,\y+1-\s); 
	\draw [->, rounded corners, ultra thick, red] (\x+2-\s,\y+1-\s) -- (\x+1+\s,\y+1-\s);
	\draw [->, rounded corners, ultra thick, red] 	(\x+1-\s,\y+1-\s) -- (\x+\s,\y+1-\s);
	\draw [->, dashed, ultra thick, red] (\x+\s,\y+\s) -- (\x+\s,\y+1-\s-\s);
}

\foreach \x in {0} \foreach  \y in {\x+1,\x+3,\x+5,\x+7,\x+9,\x+11,\x+13,\x+15,\x+17}  \foreach \s in {0.1}
{	\filldraw[fill=red] (\x, \y) circle (0.1);
	\draw [->, rounded corners, ultra thick, red]  (\x+\s,\y+\s) -- (\x+1-\s,\y+\s) -- (\x+1-\s,\y+1-\s)  -- (\x+\s,\y+1-\s);
	\draw [->, dashed, ultra thick, red] (\x,\y+\s) -- (\x,\y+1-\s);
}


\foreach \x in {7.5,13.5} \foreach  \y in {0.5,6.5}
	{
	\draw[->,ultra thick, myGray, line width=1mm, rounded corners]
			(\x,\y) -- (\x,\y+1);
	}		 

\foreach \x in {7.5,13.5} \foreach  \y in {2.5}
{
	\draw[->,ultra thick, myGreen, line width=1mm, rounded corners]
	(\x,\y) -- (\x,\y+1);
}		 

\foreach \x in {7.5,13.5} \foreach  \y in {4.5}
{
	\draw[->,ultra thick, myMelon, line width=1mm, rounded corners]
	(\x,\y) -- (\x,\y+1);
}

\coordinate [label=below:\Large{$B_0$}] (A) at (13.5, 1);
\coordinate [label=below:\Large{$B_1$}] (A) at (13.5, 3);
\coordinate [label=below:\Large{$B_2$}] (A) at (13.5, 5);
\coordinate [label=below:\Large{$D_0$}] (A) at (7.5, 1);
\coordinate [label=below:\Large{$D_1$}] (A) at (7.5, 3);
\coordinate [label=below:\Large{$D_2$}] (A) at (7.5, 5);

\draw [->,ultra thick, black] (13,1.9) to [out=290,in=70] (8.1,2.0);	
\draw [->,ultra thick, green] (13,2) to [out=295,in=65] (8.0,2.1);	

\draw [->,ultra thick, green] (13,3.9) to [out=290,in=70] (8.1,4);	
\draw [->,ultra thick, red] (13,4) to [out=295,in=65] (8,4.1);	

\draw [->,ultra thick, red] (13,5.9) to [out=290,in=70] (8.1,6);	
\draw [->,ultra thick, black] (13,6) to [out=295,in=65] (8,6.1);	

\draw [->,ultra thick, dashed, black] (13,7) to [out=295,in=65] (8.1,8);	
			
		\end{tikzpicture}
		}
		\vspace{0cm}		
	\end{center}
	
	\caption{Proof of Lemma~\ref{lem:3Tinfinity}: $\varsigma(s_0)=(13,2)$, $\varsigma(t_0)=(8,2)$.  $T_0[a_{s_0},a_{t_0}]$ implies $T_1[a_{s_0},a_{t_0}]$ implies $T_1[a_{s_1},a_{t_1}]$ implies $T_2[a_{s_1},a_{t_1}]$ implies $T_2[a_{s_2},a_{t_2}]$ implies $T_0[a_{s_2},a_{t_2}]$. The black edge from $(13,7)$ to $(8,8)$ yields the desired contradiction.}
\label{fig:zig-zag-Example1}
\end{figure}

This claim having been established, we proceed to derive the promised contradiction. To make the proof easier, we 
suggest the reader follows with reference to the example $\varsigma(s)=(14,1)$ and $\varsigma(t)=(8,1)$
(see~Fig.~\ref{fig:zig-zag-Example1}).
Let $B_0$ and $D_0$ be the upward-pointing bricks containing, respectively, $a_s$ and $a_t$, and having the same
colour, say $T_{k_0}$. Let $a_{s_0}$ be the first element on the brick $B_0$, and $a_{t_0}$---the last element on the brick $D_0$, in our example, $\varsigma(s_0)=(13,2)$ and $\varsigma(t_0)=(8,2)$.    By (E4), $T_{k_0}[a_{s_0},a_{t_0}]$, i.e.~$a_{s_0}$ is connected to $a_{t_0}$ by an edge of some colour, $T_{k_0}$---in our example, black. The transfer formula~\eqref{eq:bou:dTransfer1} implies that $T_{\modu{3}{k_0+1}}[a_{s_0},a_{t_0}]$, in our case green. Now, write $k_1 = \modu{3}{k_0 +1}$, and let  $B_1$ and $D_1$ be the upward-pointing bricks of colour
$T_{k_1}$ (in our case, green), containing, respectively, $a_{s_0}$ and $a_{t_0}$. Let $a_{s_1}$ be the first element on the brick $B_1$, and $a_{t_1}$---the last element on the brick $D_1$, i.e.~$\varsigma(s_1)=(13,4)$ and $\varsigma(t_1)=(8,4)$. Again, by (E4), $a_{s_1}$ is connected to $a_{t_1}$ by a $T_{k_1}$-edge (green), hence by 
\eqref{eq:bou:dTransfer1}, also by an edge of colour $T_{k_2}$, where $k_2=\modu{3}{k_1+1}$ (red). 

Now the reasoning simply repeats, until either the brick above $B_\ell$ or the brick above $D_\ell$
contains an element satisfying $\mbox{dg}$. In particular, in our example, we consider   
$B_2$ and $D_2$---the red upward-pointing bricks containing, respectively, $a_{s_1}$ and $a_{t_1}$, and we let $a_{s_2}$ be the first element on the brick $B_2$, and $a_{t_2}$---the last element on the brick $D_2$. So, $\varsigma(s_2)=(13,6)$ and $\varsigma(s_2)=(8,6)$. Again, by (E4), $a_{s_2}$ is connected to $a_{t_2}$ by a red edge, hence by 
\eqref{eq:bou:dTransfer1}, also by a black one. Now the black brick above $D_2$ contains diagonal elements (i.e.~$l=2$); in particular, $\mbox{dg}[a_{t_2+2}]$, where $\varsigma(t_2+2)=(8,8)$. 

Recall that we are
assuming that $Y_s = Y_t$. By (E2), we have $Y_{s_0} = Y_{t_0}$, and, since we have been following the 
two columns of the boustrophedon upward, $Y_{s_\ell} = Y_{t_\ell}$. Moreover, since $t < s$, we have $X_{t_\ell} < X_{s_\ell}$, and indeed, $X_{t_\ell} < X_{s_\ell} - 5$. 
So, indeed, the process stops when the brick above $D_\ell$ contains an element satisfying $\mbox{dg}$ and, then,   
we necessarily
have 
$\mbox{dg}[a_{t_{\ell}+2}]$.
We have already established that
$d_{i_0, \moduSix{j_0+ 2\ell}}[a_{s_\ell}]$, $d_{\moduSix{i_0+1}, \moduSix{j_0+ 2\ell}}[a_{t_\ell}]$ and
$T_{k_0+l+1}[a_{s_\ell}, a_{t_\ell}]$ (black). By inspection of Fig.~\ref{fig:zig-zag}, we see that 
$T_{k_0+l+1}[a_{s_\ell-1}, a_{s_\ell}]$, and, indeed, 
$T_{k_0+l+1}[a_{t_\ell}, a_{t_\ell+2}]$. By transitivity, therefore
$T_{k_0+l+1}[a_{s_\ell-1}, a_{t_\ell+2}]$. 
On the other hand, since $X_{s_\ell-1}>Y_{s_\ell-1}$, {(P2)} implies that $a_{s_\ell-1}$  does not satisfy $\mbox{dg}$ .
But then we have
$d_{i_0, \moduSix{j_0+ 2\ell+1}}[a_{s_\ell-1}]$,
$d_{\moduSix{i_0+1}, \moduSix{j_0+ 2\ell+2}}[a_{t_\ell+2}]$
and $T_{k+l+1}[a_{s_\ell-1}, a_{t_\ell+2}]$, which, in the presence of~\eqref{eq:bou:interact},
violates the control formula~\eqref{eq:bou:dControl1}. 
In  our case,  $\varsigma(s_2-1)=(13,7)$ and we have $d_{1,1}[a_{s_2-1}]$,  $d_{2,2}[a_{t_2+2}]$,  $T_0[a_{s_2-1},a_{t_2+2}]$, $\neg \mbox{dg}[a_{s_2-1}]$ and $\mbox{dg}[a_{t_2+2}]$.

This deals with the case $Y_s = Y_t$. If $Y_s \neq Y_t$, then we 
let $B_0$ be any {\em downward}-pointing brick containing $a_s$, $T_{k}$ be the colour of $B_0$, and
$D_0$ the 
downward-pointing brick containing $a_t$ and having the same colour as $D_0$. Again, we let $s_0$ be the first element on 
$B_0$ and $t_0$ be the last element on $D_0$, following the preceding bricks $B_1, B_2, \dots$ and $D_1, D_2, \dots$.
This time, however, we will be moving \textit{down} the columns until we reach $B_\ell$ and $D_\ell$ such that
one of the elements $a_{s_\ell -1}$ or $a_{t_\ell + 1}$ satisfies $\mbox{bt}$. Now, the assumption that  
$Y_s \neq Y_t$ implies that at most one of $a_{s_\ell -1}$ and $a_{t_\ell + 1}$ satisfies $\mbox{bt}$, which
yields a violation of the control formula~\eqref{eq:bou:dControl2} using parallel reasoning to the upward case.
The process is depicted in 
Figure~\ref{fig:zig-zag-Example2} for one particular case. 
\end{proof}
%

\begin{figure}[hbt]
	\begin{center}
		
		\resizebox{12.2cm}{!}{
			\begin{tikzpicture}
			
			\clip (-0.8,-0.8) rectangle (13.3,11.3);
	
	\path [fill=myYellow] (-0.1,0.1) to (14.0,14.1) to (-0.1,14.1) to (-0.1,0.1);

	\draw [help lines] (0,0) grid (18,18);
	\draw [orange, dotted] (0,0) -- (18,18);
	
			\foreach \x in {0,1,2,3,4,5,6,7,8,9,10,11,12,13,14,15,16,17} 
	\coordinate [label=center:\x] (A) at (\x, -0.5);
	\foreach \y in {0,1,2,3,4,5,6,7,8,9,10,11,12,13,14,15,16,17}  
	\coordinate [label=center:\y] (A) at (-0.5,\y);
	
		\foreach \x in {0,1,2,3,4,5,6,7,8,9,10,11,12,13,14,15,16,17}
	\foreach \y in {0,1,2,3,4,5,6,7,8,9,10,11,12,13,14,15,16,17}
	{
		\filldraw[fill=white] (\x, \y) circle (0.1); 
	}

\foreach \x in {1,7,13} \foreach \y in {\x}
\foreach \s in {0.1}
{	\filldraw[fill=black] (\x, \y) circle (0.1);
	\draw [->, rounded corners, ultra thick, black]  (\x+\s,\y) -- (\x+\s,\y-1+\s) -- (\x+1-\s,\y-1+\s)  -- (\x+1-\s,\y+1-\s) -- (\x+\s,\y+1-\s);
	\draw [->, dashed, ultra thick, black] (\x+\s,\y+\s) -- (\x+\s,\y+1-\s-\s);
}

\foreach \x in {3,9,15} \foreach \y in {\x}
\foreach \s in {0.1}
{	\filldraw[fill=green] (\x, \y) circle (0.1);
	\draw [->, rounded corners, ultra thick, green]  (\x+\s,\y) -- (\x+\s,\y-1+\s) -- (\x+1-\s,\y-1+\s)  -- (\x+1-\s,\y+1-\s) -- (\x+\s,\y+1-\s);
	\draw [->, dashed, ultra thick, green] (\x+\s,\y+\s) -- (\x+\s,\y+1-\s-\s);
}

\foreach \x in {5,11} \foreach \y in {\x}
\foreach \s in {0.1}
{	\filldraw[fill=red] (\x, \y) circle (0.1);
	\draw [->, rounded corners, ultra thick, red]  (\x+\s,\y) -- (\x+\s,\y-1+\s) -- (\x+1-\s,\y-1+\s)  -- (\x+1-\s,\y+1-\s) -- (\x+\s,\y+1-\s);
	\draw [->, dashed, ultra thick, red] (\x+\s,\y+\s) -- (\x+\s,\y+1-\s-\s);
}

\foreach \y in {2,8,14} \foreach  \x in {\y+1,\y+3,\y+5,\y+7,\y+9,\y+11,\y+13,\y+15,\y+17}  \foreach \s in {0.1}
{	\filldraw[fill=black] (\x, \y) circle (0.1);
	\draw [->, rounded corners, ultra thick, black]  (\x+\s,\y) -- (\x+\s,\y-2+\s) -- (\x+1-\s,\y-2+\s)  -- (\x+1-\s,\y-\s);
	\draw [->, dashed, ultra thick, black] (\x+\s,\y-\s) -- (\x+1-\s-\s,\y-\s);
}

\foreach \y in {0} \foreach  \x in {\y,\y+2,\y+4,\y+6,\y+8,\y+10,\y+12,\y+14,\y+16}  \foreach \s in {0.1}
{	\filldraw[fill=green] (\x, \y) circle (0.1);
	\draw [->, rounded corners, ultra thick, green]  (\x+\s,\y+\s) -- (\x+\s,\y+1-\s) -- (\x+1-\s,\y+1-\s)  -- (\x+1-\s,\y+\s);
	\draw [->, dashed, ultra thick, green] (\x+\s,\y+\s) -- (\x+1-\s-\s,\y+\s);
}

\foreach \y in {4,10,16} \foreach  \x in {\y+1,\y+3,\y+5,\y+7,\y+9,\y+11,\y+13,\y+15,\y+17}  \foreach \s in {0.1}
{	\filldraw[fill=green] (\x, \y) circle (0.1);
	\draw [->, rounded corners, ultra thick, green]  (\x+\s,\y) -- (\x+\s,\y-2+\s) -- (\x+1-\s,\y-2+\s)  -- (\x+1-\s,\y-\s);
	\draw [->, dashed, ultra thick, green] (\x+\s,\y-\s) -- (\x+1-\s-\s,\y-\s);
}

\foreach \y in {6,12,18} \foreach  \x in {\y+1,\y+3,\y+5,\y+7,\y+9,\y+11,\y+13,\y+15,\y+17}  \foreach \s in {0.1}
{	\filldraw[fill=red] (\x, \y) circle (0.1);
	\draw [->, rounded corners, ultra thick, red]  (\x+\s,\y) -- (\x+\s,\y-2+\s) -- (\x+1-\s,\y-2+\s)  -- (\x+1-\s,\y-\s);
	\draw [->, dashed, ultra thick, red] (\x+\s,\y-\s) -- (\x+1-\s-\s,\y-\s);
}

\foreach \y in {1,7,13} \foreach  \x in {\y+1,\y+3,\y+5,\y+7,\y+9,\y+11,\y+13,\y+15,\y+17}  \foreach \s in {0.1}
{	\filldraw[fill=red] (\x, \y) circle (0.1);
	\draw [->, rounded corners, ultra thick, red]   (\x+\s,\y) -- (\x+\s,\y+1-\s);
	\draw [->, rounded corners, ultra thick, red]	(\x+\s,\y+1-\s) --	(\x+\s,\y+2-\s);
	\draw [->, rounded corners, ultra thick, red]	(\x+\s,\y+2-\s) -- (\x+1-\s,\y+2-\s); 
	\draw [->, rounded corners, ultra thick, red]	(\x+1-\s,\y+2-\s) -- (\x+1-\s,\y+1+\s);
	\draw [->, rounded corners, ultra thick, red]	(\x+1-\s,\y+1+\s) -- (\x+1-\s,\y+\s);
	\draw [->, dashed, ultra thick, red] (\x+\s,\y+\s) -- (\x+1-\s-\s,\y+\s);
}

\foreach \y in {3,9,15} \foreach  \x in {\y+1,\y+3,\y+5,\y+7,\y+9,\y+11,\y+13,\y+15,\y+17}  \foreach \s in {0.1}
{	\filldraw[fill=black] (\x, \y) circle (0.1);
	\draw [->, rounded corners, ultra thick, black]   (\x+\s,\y) -- (\x+\s,\y+1-\s);
	\draw [->, rounded corners, ultra thick, black]	(\x+\s,\y+1-\s) --	(\x+\s,\y+2-\s);
	\draw [->, rounded corners, ultra thick, black]	(\x+\s,\y+2-\s) -- (\x+1-\s,\y+2-\s); 
	\draw [->, rounded corners, ultra thick, black]	(\x+1-\s,\y+2-\s) -- (\x+1-\s,\y+1+\s);
	\draw [->, rounded corners, ultra thick, black]	(\x+1-\s,\y+1+\s) -- (\x+1-\s,\y+\s);
	\draw [->, dashed, ultra thick, black] (\x+\s,\y+\s) -- (\x+1-\s-\s,\y+\s);
}

\foreach \y in {5,11,17} \foreach  \x in {\y+1,\y+3,\y+5,\y+7,\y+9,\y+11,\y+13,\y+15,\y+17}  \foreach \s in {0.1}
{	\filldraw[fill=green] (\x, \y) circle (0.1);
	\draw [->, rounded corners, ultra thick, green]   (\x+\s,\y) -- (\x+\s,\y+1-\s);
	\draw [->, rounded corners, ultra thick, green]	(\x+\s,\y+1-\s) --	(\x+\s,\y+2-\s);
	\draw [->, rounded corners, ultra thick, green]	(\x+\s,\y+2-\s) -- (\x+1-\s,\y+2-\s); 
	\draw [->, rounded corners, ultra thick, green]	(\x+1-\s,\y+2-\s) -- (\x+1-\s,\y+1+\s);
	\draw [->, rounded corners, ultra thick, green]	(\x+1-\s,\y+1+\s) -- (\x+1-\s,\y+\s);
	\draw [->, dashed, ultra thick, green] (\x+\s,\y+\s) -- (\x+1-\s-\s,\y+\s);
}

\foreach \x in {2,8,14} \foreach \y in {\x,\x+2,\x+4,\x+6,\x+8,\x+10,\x+12,\x+14,\x+16} \foreach \s in {0.1}
{	\filldraw[fill=green] (\x, \y) circle (0.1);
	\draw [->, rounded corners, ultra thick, green]  (\x-\s,\y+\s) -- (\x-2+\s,\y+\s) -- (\x-2+\s,\y+1-\s)  -- (\x-\s,\y+1-\s);
	\draw [->, dashed, ultra thick, green] (\x-\s,\y+\s) -- (\x-\s,\y+1-\s-\s);
}

\foreach \x in {4,10,16} \foreach \y in {\x,\x+2,\x+4,\x+6,\x+8,\x+10,\x+12,\x+14,\x+16} \foreach \s in {0.1}
{	\filldraw[fill=red] (\x, \y) circle (0.1);
	\draw [->, rounded corners, ultra thick, red]  (\x-\s,\y+\s) -- (\x-2+\s,\y+\s) -- (\x-2+\s,\y+1-\s)  -- (\x-\s,\y+1-\s);
	\draw [->, dashed, ultra thick, red] (\x-\s,\y+\s) -- (\x-\s,\y+1-\s-\s);
}

\foreach \x in {6,12,18} \foreach \y in {\x,\x+2,\x+4,\x+6,\x+8,\x+10,\x+12,\x+14,\x+16} \foreach \s in {0.1}
{	\filldraw[fill=black] (\x, \y) circle (0.1);
	\draw [->, rounded corners, ultra thick, black]  (\x-\s,\y+\s) -- (\x-2+\s,\y+\s) -- (\x-2+\s,\y+1-\s)  -- (\x-\s,\y+1-\s);
	\draw [->, dashed, ultra thick, black] (\x-\s,\y+\s) -- (\x-\s,\y+1-\s-\s);
}

\foreach \x in {1,7,13} \foreach \y in {\x+2,\x+4,\x+6,\x+8,\x+10,\x+12,\x+14,\x+16} \foreach \s in {0.1}
{	\filldraw[fill=black] (\x, \y) circle (0.1);
	\draw [->, rounded corners, ultra thick, black]  (\x+\s,\y+\s) -- (\x+1,\y+\s); 
	\draw [->, rounded corners, ultra thick, black]  (\x+1-\s,\y+\s) -- (\x+2-\s,\y+\s);
	\draw [->, rounded corners, ultra thick, black]  (\x+2-\s,\y+\s) -- (\x+2-\s,\y+1-\s); 
	\draw [->, rounded corners, ultra thick, black] (\x+2-\s,\y+1-\s) -- (\x+1+\s,\y+1-\s);
	\draw [->, rounded corners, ultra thick, black] 	(\x+1-\s,\y+1-\s) -- (\x+\s,\y+1-\s);
	\draw [->, dashed, ultra thick, black] (\x+\s,\y+\s) -- (\x+\s,\y+1-\s-\s);
}

\foreach \x in {3,9,15} \foreach \y in {\x+2,\x+4,\x+6,\x+8,\x+10,\x+12,\x+14,\x+16} \foreach \s in {0.1}
{	\filldraw[fill=green] (\x, \y) circle (0.1);
	\draw [->, rounded corners, ultra thick, green]  (\x+\s,\y+\s) -- (\x+1,\y+\s); 
	\draw [->, rounded corners, ultra thick, green]  (\x+1-\s,\y+\s) -- (\x+2-\s,\y+\s);
	\draw [->, rounded corners, ultra thick, green]  (\x+2-\s,\y+\s) -- (\x+2-\s,\y+1-\s); 
	\draw [->, rounded corners, ultra thick, green] (\x+2-\s,\y+1-\s) -- (\x+1+\s,\y+1-\s);
	\draw [->, rounded corners, ultra thick, green] 	(\x+1-\s,\y+1-\s) -- (\x+\s,\y+1-\s);
	\draw [->, dashed, ultra thick, green] (\x+\s,\y+\s) -- (\x+\s,\y+1-\s-\s);
}

\foreach \x in {5,11,17} \foreach \y in {\x+2,\x+4,\x+6,\x+8,\x+10,\x+12,\x+14,\x+16} \foreach \s in {0.1}
{	\filldraw[fill=red] (\x, \y) circle (0.1);
	\draw [->, rounded corners, ultra thick, red]  (\x+\s,\y+\s) -- (\x+1,\y+\s); 
	\draw [->, rounded corners, ultra thick, red]  (\x+1-\s,\y+\s) -- (\x+2-\s,\y+\s);
	\draw [->, rounded corners, ultra thick, red]  (\x+2-\s,\y+\s) -- (\x+2-\s,\y+1-\s); 
	\draw [->, rounded corners, ultra thick, red] (\x+2-\s,\y+1-\s) -- (\x+1+\s,\y+1-\s);
	\draw [->, rounded corners, ultra thick, red] 	(\x+1-\s,\y+1-\s) -- (\x+\s,\y+1-\s);
	\draw [->, dashed, ultra thick, red] (\x+\s,\y+\s) -- (\x+\s,\y+1-\s-\s);
}

\foreach \x in {0} \foreach  \y in {\x+1,\x+3,\x+5,\x+7,\x+9,\x+11,\x+13,\x+15,\x+17}  \foreach \s in {0.1}
{	\filldraw[fill=red] (\x, \y) circle (0.1);
	\draw [->, rounded corners, ultra thick, red]  (\x+\s,\y+\s) -- (\x+1-\s,\y+\s) -- (\x+1-\s,\y+1-\s)  -- (\x+\s,\y+1-\s);
	\draw [->, dashed, ultra thick, red] (\x,\y+\s) -- (\x,\y+1-\s);
}


\foreach \x in {4.5,10.5} \foreach  \y in {\x}
{
	\draw[->,ultra thick, myGray, line width=1mm, rounded corners]
	(\x,\y) -- (\x,\y-1);
}		 

\foreach \x in {4.5,10.5} \foreach  \y in {\x-2}
{
	\draw[->,ultra thick, myMelon, line width=1mm, rounded corners]
	(\x,\y) -- (\x,\y-1);
}		 

\foreach \x in {10.5} \foreach  \y in {\x-4}
{
	\draw[->,ultra thick, myGreen, line width=1mm, rounded corners]
	(\x,\y) -- (\x,\y-1);
}		 

\foreach \x in {4.5} \foreach  \y in {0.7}
{
	\draw[->,ultra thick, myGreen, line width=1mm, rounded corners]
	(\x,\y) -- (\x,\y-0.6);
}

\coordinate [label=above:\Large{$B_0$}] (A) at (10.5, 10);
\coordinate [label=above:\Large{$B_1$}] (A) at (10.5, 8);
\coordinate [label=above:\Large{$D_0$}] (A) at (4.5, 4);
\coordinate [label=above:\Large{$D_1$}] (A) at (4.5, 2);

\draw [->,ultra thick, black] (10,9) to 
(5,3);	
\draw [->,ultra thick, red] (10,8.9) to 
(5.1,3.0);	

\draw [->,ultra thick, red] (10,7) to 
(5,1);	
\draw [->,ultra thick, green] (10,6.9) to 
(5.1,1);	


\draw [->,ultra thick, dashed, green] (10,6) to 
(5,0);	

		\end{tikzpicture}
		}
		\vspace{0cm}		
	\end{center}
	
	\caption{Proof of Lemma~\ref{lem:3Tinfinity}: $\varsigma(s_0)=(10,9)$, $\varsigma(t_0)=(5,3)$.  $T_0[a_{s_0},a_{t_0}]$ implies $T_2[a_{s_0},a_{t_0}]$ implies $T_2[a_{s_1},a_{t_1}]$ implies $T_1[a_{s_1},a_{t_1}]$. $\varsigma(s_1-1)=(10,6)$, $\varsigma(t_1+1)=(5,0)$ and the green edge from $(10,6)$ to $(5,0)$ yields the desired contradiction with~\eqref{eq:bou:dControl2}.}
	\label{fig:zig-zag-Example2}
\end{figure}
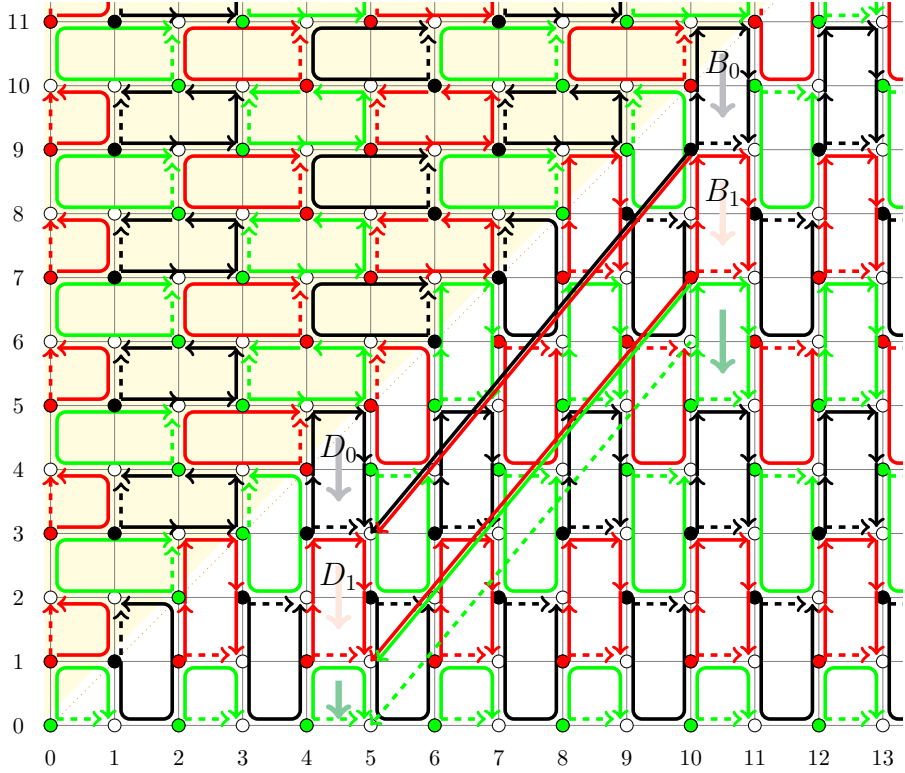

	
Equipped with Lemma~\ref{lem:3Tinfinity} we can now define a natural embedding $\iota$ of $\N^2$ into any model $\fA$ of $\phi_{grid}$ setting  $\iota(X,Y)=a_t$, where $a_t$ is the element of the infinite sequence as defined above such that $\varsigma(t)=(X,Y)$. In view of the above discussion it is easy to see that $\iota$ has the following properties: 
\begin{enumerate}[(H1)]
	\item  If $X\geq Y$ then $T_\diamond[\iota(X,Y),\iota(X+1,Y)]$. Moreover, if $X$ is even then 
	$T_\diamond[\iota(X,Y),\iota(X,Y+1)]$, and if $X$ is odd then $T_\diamond[\iota(X,Y+1),\iota(X,Y)]$.
	\item  If $X< Y$ then $T_\diamond[\iota(X,Y),\iota(X,Y+1)]$. Moreover, if $Y$ is even then 
	\mbox{$T_\diamond[\iota(X+1,Y),\iota(X,Y)]$}, and if $Y$ is odd then $T_\diamond[\iota(X,Y),\iota(X+1,Y)]$.
\end{enumerate}
The above observation allows us to write formulas that properly assign tiles to elements of the model of $\phi_{grid}$. We do this with a formula  $\phi_{tile}$,
which again features several conjuncts. 
The first conjunct is straightforward. We require that each node encodes precisely one tile and the initial element satisfies the initial tiling condition by adding to $\phi_{tile}$ the formula:
	\begin{equation}
	\forall \big(\bigvee_{C \in {\cal C}} C  \wedge \bigwedge_{C \neq D} (\neg C   \vee \neg D )\wedge (\mbox{lf}\wedge \mbox{dg}\rightarrow C_0) \big).\label{tiling3T:partition}
	\end{equation}
The next formulas ensure that adjacent tiles respect the constraints $H$ and $V$. To ensure that the horizontal constraints  are satisfied 
we add to $\phi_{tile}$ the following conjuncts for every $C\in\cal C$:
\begin{align}
\bigwedge_{0\leq i,j\leq 5}\forall (C \wedge d_{ij} \rightarrow 
\forall (T_\diamond\wedge d_{\moduSix{i+1},j}  &\rightarrow \bigvee_{C': (C,C') \in H}\!\!\!\!\!  C'))\label{tiling3T:horD}\\
\bigwedge_{0\leq i\leq 5}\bigwedge_{j=1,3,5}\forall (C \wedge c_{ij}  \rightarrow 
\forall (T_\diamond\wedge (c_{\moduSix{i+1},j}\vee d_{\moduSix{i+1},j}) & \rightarrow \bigvee_{C': (C,C') \in H}\!\!\!\!\!  C'))\label{tiling3T:horCRight}\\
\bigwedge_{0\leq i\leq 5}\bigwedge_{j=0,2,4}\forall (C \wedge (c_{i,j}\vee d_{i,j} )  \rightarrow 
\forall (T_\diamond\wedge c_{\moduSix{i-1},j} & \rightarrow \bigvee_{C': (C',C) \in H}\!\!\!\!\!  C')).\label{tiling3T:horCLeft}
\end{align}	
A similar group of conjuncts is added to handle the vertical constraints. Again, we add to $\phi_{tile}$ the following conjuncts for every $C\in\cal C$: 
\begin{align}
\bigwedge_{0\leq i,j\leq 5}\forall (C \wedge (c_{i,j}\vee d_{i,j}) & \rightarrow 
\forall (T_\diamond\wedge c_{i,\moduSix{j+1}}  \rightarrow \bigvee_{C': (C,C') \in V}\!\!\!\!\!  C'))\label{tiling3T:verC}\\
\bigwedge_{i=0,2,4}\bigwedge_{0\leq j\leq 5}\forall (C \wedge d_{i,j} & \rightarrow 
\forall (T_\diamond\wedge d_{i,\moduSix{j+1}}  \rightarrow  \bigvee_{C': (C,C') \in V} \!\!\!\!\! C'))\label{tiling3T:verDup}\\
\bigwedge_{i=1,3,5}\bigwedge_{0\leq j\leq 5}\forall (C \wedge d_{i,j} & \rightarrow 
\forall (T_\diamond\wedge d_{i,\moduSix{j-1}}  \rightarrow  \bigvee_{C': (C',C) \in V}\!\!\!\!\!  C')).\label{tiling3T:verDdown}
\end{align}	
This completes the definition of the formula $\phi_{tile}$. 
Finally, let $\eta_\cC$ be the conjunction of $\phi_{grid}$ and $\phi_{tile}$. We show that 
		$\eta_\cC$ is satisfiable  iff  $\boldsymbol{\cal C}$ tiles  $\N^2$.
Namely, if  $\boldsymbol{\cal C}$ tiles  $\N^2$ then  to show that $\eta_\cC$ is satisfiable we  expand our intended model $\fG$ for $\phi_{grid}$ assigning to every element of the grid a unique $C\in {\cal C}$ given by the tiling.   
		
		Now, let $\fA\models \eta_{\boldsymbol{\cal C}}$. Since $\fA\models \phi_{grid}$ consider 
		the embedding $\iota$  of the standard $\N^2$ grid into $\fA$ defined above. 
		We define a tiling of the $\N^2$ grid assigning to every node $(X,Y)\in \N^2$ the unique tile $C$ such that $\fA \models C(\iota(X,Y))$. Formula~\eqref{tiling3T:partition} ensures that this is well defined and satisfies the initial condition. Formulas~\eqref{tiling3T:horD}-\eqref{tiling3T:horCLeft}
ensure that the horizontal  constraints are satisfied and formulas~\eqref{tiling3T:verC}-\eqref{tiling3T:verDdown}
ensure that the vertical  constraints are satisfied.
Hence, we have the following

\begin{theorem}
	The satisfiability problem for $\FLthreetrans$, the two-variable fluted fragment with three transitive relations, is undecidable. 
	\label{th:three}
\end{theorem}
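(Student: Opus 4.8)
The plan is to reduce the infinite tiling problem with initial condition, shown undecidable in Proposition~\ref{prop:noddyTiling}, to $\Sat(\FLthreetrans)$. Given a tiling system $\boldsymbol{\cal C} = ({\cal C}, H, V)$ together with an initial tile $C_0$, the reduction computes the sentence $\eta_\cC = \phi_{grid} \wedge \phi_{tile}$ assembled above. The first step is to confirm that $\eta_\cC$ is genuinely an $\FLthreetrans$-sentence: every displayed conjunct uses only the three distinguished binary predicates $T_0$, $T_1$, $T_2$ (never equality), quantifies one variable at a time on the right-most column, and has its atoms correctly lined up, so flutedness together with the two-variable and three-transitive-relation restrictions are all respected. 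Since $\eta_\cC$ is manifestly computable from $\boldsymbol{\cal C}$ and $C_0$, it suffices to establish that $\eta_\cC$ is satisfiable if and only if $\boldsymbol{\cal C}$ admits a tiling of $\N^2$ with initial condition $C_0$.

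For the forward implication I would construct the intended model $\fG$ directly on the domain $\N^2$: interpret the local address predicates $c_{i,j}$, $d_{i,j}$ according to the coordinates $(X,Y)$ taken modulo $6$ (using $c$ above and $d$ on or below the diagonal), set the control predicates $\mbox{bt}, \mbox{lf}, \mbox{dg}, \mbox{dg}^+$ according to the geometric conditions $Y=0$, $X=0<Y$, $X=Y$ and $Y=X+1$, and take each colour $T_k$ to be the transitive closure of the arrows of that colour drawn in Fig.~\ref{fig:zig-zag}. Finally, interpret each tile predicate $C$ as $\set{(X,Y) \mid f(X,Y)=C}$ for the given tiling $f$. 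One then checks, conjunct by conjunct, that $\fG \models \phi_{grid}$ and, using that $f$ respects $H$ and $V$ and that $f(0,0)=C_0$, that $\fG \models \phi_{tile}$; these verifications are routine given how the formulas were designed.

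For the converse, suppose $\fA \models \eta_\cC$. Then $\fA \models \phi_{grid}$, so by Lemma~\ref{lem:boustrophedon} the boustrophedon sequence $\set{a_t}$ satisfies (P1)--(P3), and by Lemma~\ref{lem:3Tinfinity} its elements are pairwise distinct; this yields the embedding $\iota : \N^2 \rightarrow A$ with the colour-connectivity properties (H1)--(H2). I would then define $f(X,Y)$ to be the unique tile $C \in {\cal C}$ with $\fA \models C[\iota(X,Y)]$, which is well defined by the partition clause of~\eqref{tiling3T:partition}, and whose value at $(0,0)$ equals $C_0$ because $\iota(0,0)$ satisfies $\mbox{lf} \wedge \mbox{dg}$. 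Feeding the colour links supplied by (H1)--(H2) into the horizontal constraints~\eqref{tiling3T:horD}--\eqref{tiling3T:horCLeft} and the vertical constraints~\eqref{tiling3T:verC}--\eqref{tiling3T:verDdown} shows that $f$ respects $H$ and $V$, so $f$ is a tiling of $\N^2$ for $\boldsymbol{\cal C}$ with initial condition $C_0$.

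The genuinely hard part of the development is already discharged in Lemmas~\ref{lem:boustrophedon} and~\ref{lem:3Tinfinity}, where one must show that the interplay of the generation, transfer and control formulas forces an infinite grid despite the absence of equality. Relative to those lemmas the theorem itself is a bookkeeping exercise; the only point demanding care is to match, for each pair of adjacent grid points, the specific colour asserted by (H1)--(H2) against the colour guard $T_\diamond$ and the local-address guards appearing in the tiling conjuncts, so that horizontal and vertical adjacencies are transmitted to the tile constraints without mismatch.
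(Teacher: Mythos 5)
Your proposal is correct and follows essentially the same route as the paper: a reduction from the infinite tiling problem with initial condition, using the intended grid expansion of $\N^2$ for one direction and, for the converse, the embedding $\iota$ furnished by Lemmas~\ref{lem:boustrophedon} and~\ref{lem:3Tinfinity} together with the colour-connectivity properties (H1)--(H2), feeding these into the tiling conjuncts exactly as the paper does. The only point worth flagging is that $\iota(0,0)$ satisfies $\mbox{bt}\wedge\mbox{dg}$ rather than $\mbox{lf}\wedge\mbox{dg}$ (by (P2) no element satisfies both $\mbox{lf}$ and $\mbox{dg}$), so the initial-condition clause of~\eqref{tiling3T:partition} should be read with $\mbox{bt}$ in place of $\mbox{lf}$ --- an infelicity you inherit from the displayed formula rather than introduce.
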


We remark that since the formula $\phi_{grid}$ is an axiom of infinity,  we cannot get simultaneously undecidability of the finite satisfiability problem applying Proposition~\ref{prop:insep}. To prove the latter we reduce from the finite tiling problem. We proceed as follows. First, we 
modify the formula $\phi_{grid}$ so that it no longer constructs an infinite chain of witnesses but the process is allowed to stop whenever the boustrophedon meets an element on the bottom row. In other words, the chain of witnesses corresponds to a square domain $\N^2_{2n,2n}$, for some $n\geq 1$.

Denote the modified formula $\phi_{sgrid}$. It contains some conjuncts taken directly $\phi_{grid}$, some that are modified versions
of conjuncts in $\phi_{grid}$, and some that are new. First of all, we employ an additional control predicate $\mbox{rt}$ intended to mark the rightmost  column of the square domain. This is secured by adding the following new conjunct to $\phi_{sgrid}$ (complementing the formula~\eqref{eq:bou:interact}):
\begin{equation}
\forall(\mbox{rt} \rightarrow \bigvee_{i=0,2,4}\bigvee_{j=0}^{5} d_{i,j})
\label{eq:bou:interact:right}
\end{equation}
and the following new control formula:
\begin{align}
& \bigwedge_{i=0}^5 \bigwedge_{j=0}^5 
\forall (d_{i,j} \wedge \pm \mbox{rt} \rightarrow \forall(T_\diamond \wedge d_{i,\moduSix{j-1}} \rightarrow \pm \mbox{rt})).
\label{eq:bou:ControlRight}
\end{align}
In $\phi_{sgrid}$ we modify the formula~\eqref{eq:bou:initial} by ensuring that the initial element does not satisfy $\mbox{rt}$ as follows:
\begin{equation}
\exists (d_{0,0}  \wedge \mbox{dg} \wedge \mbox{bt} \wedge \neg \mbox{rt}).
\label{eq:bou:initialRight}
\end{equation}
Finally, we modify the formula~\eqref{eq:bou:dGenE}; now we require a new witness only for bottom elements that are not on the rightmost column, writing: 
\begin{align}
& \bigwedge_{i=1,3,5} \forall  (d_{i,0} \wedge \mbox{bt} \wedge \neg \mbox{dg}  \wedge \neg \mbox{rt} \rightarrow \exists (d_{\moduSix{i+1},0} \wedge \mbox{bt} \wedge \neg \mbox{dg} \wedge T_{0}))
\label{eq:bou:dGenDRight}. 
\end{align}
Remaining conjuncts of $\phi_{grid}$ constitute conjuncts of $\phi_{sgrid}$ without modification.

Observe that $\phi_{sgrid}$ has finite models:  if a witness $a_t$ of the conjunct~\eqref{eq:bou:cGenE} happens to satisfy $\mbox{rt}$ then the following witnesses $a_{t'}$ with $t'>t$, corresponding to a downward column in the model, also satisfy $\mbox{rt}$ due to the control formula~\eqref{eq:bou:ControlRight}. As argued earlier, the chain of witnesses eventually reaches an element $a_{t''}$ satisfying $\mbox{bt}$, and this is where no new witnesses are required due to the modified conjunct~\eqref{eq:bou:dGenDRight}. 
Moreover in every finite model of $\phi_{sgrid}$ one can embed a square grid $\N^2_{2n,2n}$ similarly as we did before embedding the $\N^2$ grid in models of $\phi_{grid}$. 

In order to complete the reduction of  the finite tiling problem we need one more conjunct ensuring the final condition:
\begin{equation}
\forall (\mbox{dg}\wedge \mbox{rt}\rightarrow C_1).\label{tiling3T:partition:finite}
\end{equation}

It should be now straightforward to check that the conjunction of \eqref{tiling3T:partition:finite} with $\phi_{sgrid}\wedge \phi_{tile}$ is {\em finitely} satisfiable iff  $\boldsymbol{\cal C}$ tiles  $\N^2_{2n,2n}$, for some $n\geq 1$. Hence, we have the following:

\begin{theorem}
	The finite satisfiability problem for $\FLthreetrans$, the two-variable fluted fragment with three transitive relations, is undecidable. 
\label{th:threeFinsat}
\end{theorem}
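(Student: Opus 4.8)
The plan is to reduce from the \emph{finite tiling problem with initial and final conditions}, which is undecidable by Proposition~\ref{prop:noddyTiling}. The strategy parallels the proof of Theorem~\ref{th:three}, except that I must replace the infinity-forcing behaviour of $\phi_{grid}$ with a formula whose models encode \emph{finite} square grids $\N^2_{2n,2n}$. Almost all of the machinery behind Lemmas~\ref{lem:boustrophedon} and~\ref{lem:3Tinfinity} will be reused essentially verbatim; the only change is to arrange that the boustrophedon is permitted---but not forced---to terminate once it returns to the bottom row inside a designated rightmost column.

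First I would introduce a fresh control predicate $\mbox{rt}$, intended to hold exactly of those elements lying in the last column of the intended $2n\times 2n$ grid. To make this marking consistent, I add the interaction axiom~\eqref{eq:bou:interact:right}, which confines $\mbox{rt}$ to local addresses $d_{i,j}$ with $i$ even, together with the control formula~\eqref{eq:bou:ControlRight}, which propagates the $\mbox{rt}$-label (and its negation) along colour links down a single column---mirroring the role played by~\eqref{eq:bou:dControl1} and~\eqref{eq:bou:dControl2} for the predicates $\mbox{dg}$ and $\mbox{bt}$. The initial element is required not to satisfy $\mbox{rt}$ by~\eqref{eq:bou:initialRight}, and, crucially, the generation rule~\eqref{eq:bou:dGenE} is weakened to~\eqref{eq:bou:dGenDRight}, so that a bottom element demands a rightward successor only when it does \emph{not} satisfy $\mbox{rt}$. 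All remaining conjuncts of $\phi_{grid}$ are retained unchanged, yielding $\phi_{sgrid}$.

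The heart of the argument is to verify two facts about $\phi_{sgrid}$. On the one hand, $\phi_{sgrid}$ admits finite models: once the boustrophedon reaches an $\mbox{rt}$-marked diagonal element via~\eqref{eq:bou:cGenE}, the control formula~\eqref{eq:bou:ControlRight} forces the entire ensuing downward column to carry $\mbox{rt}$, and the modified rule~\eqref{eq:bou:dGenDRight} then permits the chain of witnesses to halt upon reaching $\mbox{bt}$, so no further existential commitments arise and the sequence $\set{a_t}$ can be closed up into a grid over $\N^2_{2n,2n}$. On the other hand, every finite model still \emph{embeds} such a square grid: the proofs of properties \mbox{(P1)}--\mbox{(P3)} and of the distinctness of the witnesses go through exactly as in Lemma~\ref{lem:3Tinfinity}, now bounded by the rightmost column rather than continuing to infinity, so one obtains an embedding $\iota\colon\N^2_{2n,2n}\to A$ enjoying the analogues of the neighbour-connectivity properties \mbox{(H1)}--\mbox{(H2)}.

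Finally, I would take $\eta_\cC$ to be the conjunction of $\phi_{sgrid}$, the tiling formula $\phi_{tile}$, and the final-condition conjunct~\eqref{tiling3T:partition:finite}, which stipulates that the unique $\mbox{dg}\wedge\mbox{rt}$ element (the top-right corner) carries the tile $C_1$. Combining the finite-model construction with the embedding, one checks that $\eta_\cC$ is finitely satisfiable if and only if there exists $n\geq 1$ and a tiling of $\N^2_{2n,2n}$ for $\boldsymbol{\cal C}$ with initial condition $C_0$ and final condition $C_1$. The main obstacle, as in Theorem~\ref{th:three}, is the bookkeeping inherited from Lemma~\ref{lem:3Tinfinity}: I must confirm that the transfer and control formulas still rule out a premature collapse of the grid, and that the new termination condition interacts correctly with the colour-cycling along bricks, so that the model closes precisely at the intended boundary rather than wrapping around or looping back on itself.
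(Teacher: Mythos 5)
Your proposal is correct and follows essentially the same route as the paper: the same fresh control predicate $\mbox{rt}$ confined to even columns via~\eqref{eq:bou:interact:right}, propagated down columns by~\eqref{eq:bou:ControlRight}, excluded at the origin by~\eqref{eq:bou:initialRight}, with the bottom-row generation rule weakened to~\eqref{eq:bou:dGenDRight} and the final-condition conjunct~\eqref{tiling3T:partition:finite} added, all feeding a reduction from the finite tiling problem with initial and final conditions. The only cosmetic point is that the rule being weakened is the bottom-row rightward step~\eqref{eq:bou:dGenD} rather than~\eqref{eq:bou:dGenE} (a mislabelling the paper's own prose also contains); your description of its semantics is the correct one.
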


We complete this section noticing that all formulas used in the proofs of Theorems \ref{th:three} and \ref{th:threeFinsat} are either guarded or can be rewritten as guarded. Furthermore, in the proof it would suffice to assume that $T_0$, $T_1$ and $T_2$ are interpreted as equivalence relations. Hence, we can strengthen the above theorem as follows.
\begin{corollary}
	The  (finite) satisfiability problem for the intersection of the fluted fragment with the two-variable guarded fragment is undecidable in the presence of three transitive relations \textup{(}or three equivalence relations\textup{)}. \label{cor:three}
\end{corollary}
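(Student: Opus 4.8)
The plan is to show that every formula occurring in Section~\ref{sec:threetrans}---namely $\phi_{grid}$, $\phi_{tile}$, and their finite-case modifications $\phi_{sgrid}$ and \eqref{tiling3T:partition:finite}---can be taken to lie in the two-variable guarded fragment, and that the entire reduction remains correct when the distinguished relations $T_0,T_1,T_2$ are required to be \emph{equivalence} relations rather than merely transitive. Since every equivalence relation is transitive, a single reduction then witnesses undecidability of the intersection of the fluted fragment with the two-variable guarded fragment over both the transitive and the equivalence-relation classes, giving both clauses of the statement; the finite case is obtained identically from $\phi_{sgrid}$, yielding the Theorem~\ref{th:threeFinsat}-strength half.

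For guardedness I would proceed conjunct by conjunct. Every existential generation rule \eqref{eq:bou:initial}--\eqref{eq:bou:cGenE} has the shape $\forall x_1(\gamma(x_1)\to\exists x_2(T_k(x_1,x_2)\wedge\rho(x_1,x_2)))$, guarded inside by the binary atom $T_k(x_1,x_2)$ and outside by the unary local-address atom $\gamma$. Every two-variable universal conjunct---the transfer formulas \eqref{eq:bou:dTransfer1}--\eqref{eq:bou:cTransfer2}, the control formulas \eqref{eq:bou:dControl1}--\eqref{eq:bou:cControl2}, and the tiling conjuncts \eqref{tiling3T:horD}--\eqref{tiling3T:verDdown}---acquires, after expanding $T_\diamond$ as $T_0\vee T_1\vee T_2$ and distributing the quantifier over the disjunction, the guarded form $\bigwedge_{k}\forall x_1(\gamma(x_1)\to\forall x_2(T_k(x_1,x_2)\to(\delta(x_2)\to\epsilon(x_1,x_2))))$, with outer guard $\gamma$ and inner guard $T_k(x_1,x_2)$; this rewriting preserves flutedness, since the argument order is untouched. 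The only genuinely single-variable conjuncts are the partitions \eqref{eq:bou:partition}, \eqref{tiling3T:partition} and the interaction axioms \eqref{eq:bou:interact}. Their implicational parts (``at most one local address'', the clauses $\mbox{bt}\to\bigvee d_{i,0}$, the tile-disjointness half, and the initial-condition clause $\mbox{lf}\wedge\mbox{dg}\to C_0$) are already guarded by a unary atom. The remaining exhaustiveness parts---every element has \emph{some} local address, resp.\ \emph{some} tile---admit no single-atom guard, but can be folded into the matrices of the existential generation rules and of the initial axiom \eqref{eq:bou:initial} (a guarded existential permits an arbitrary quantifier-free matrix, and these rules already name a specific address). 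This suffices because the correctness argument only inspects elements in the image of the grid embedding $\iota$, each of which is generated and hence acquires a unique address and a tile in this way.

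For the equivalence-relation claim the backward direction is free: any equivalence-relation model of $\eta_\cC$ is a fortiori a transitive-relation model, so Lemmas~\ref{lem:boustrophedon} and~\ref{lem:3Tinfinity} together with the tile-extraction argument apply verbatim and produce a tiling of $\N^2$ (resp.\ of $\N^2_{2n,2n}$ in the finite case). The real work is the forward direction: from a tiling I would take the standard grid model $\fG$ of $\phi_{grid}\wedge\phi_{tile}$, replace each $T_k^\fG$ by its reflexive--symmetric--transitive closure $E_k$, and verify that the resulting structure still satisfies every conjunct. Positive conjuncts survive because we only add edges, so the content to be checked is that the pairs newly introduced by the closure never match the distinct-address antecedent of a transfer or control conjunct unless the consequent already holds.

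I expect this last verification to be the main obstacle. It reduces to a geometric analysis of Fig.~\ref{fig:zig-zag}: that the bricks of any fixed colour $T_k$ are pairwise disjoint as point-sets (so symmetric--transitive closure merges nothing across distinct bricks and the classes of $E_k$ are exactly the individual bricks), together with the observation that within a single brick all elements share the same $\mbox{dg}$, $\mbox{bt}$, $\mbox{dg}^+$ and $\mbox{lf}$ status, whence the now-symmetric intra-brick edges satisfy every control and tiling implication, and the reflexive loops are harmless since the triggering antecedents demand two \emph{distinct} local addresses. Carrying this out amounts to inspecting, for each of the conjuncts \eqref{eq:bou:dTransfer1}--\eqref{tiling3T:verDdown}, the finitely many address patterns that can appear on a brick and confirming the consequent. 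Finally, since all predicates have arity at most two, the formulas lie in the two-variable guarded fragment, and the same construction applied to $\phi_{sgrid}$ supplies the finite-satisfiability half, completing the corollary.
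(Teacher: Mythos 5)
Your proposal takes the same route as the paper: the paper's entire argument for this corollary is the observation that every conjunct of $\phi_{grid}$, $\phi_{tile}$ and $\phi_{sgrid}$ is guarded or rewritable as guarded, and that the construction still works when $T_0,T_1,T_2$ are equivalence relations. Your conjunct-by-conjunct guardedness check (including the honest treatment of the unguardable exhaustiveness disjunctions, which you absorb into the matrices of the guarded existential generation rules) is in fact more careful than the paper's one-sentence claim, and your observation that the backward direction is free because equivalence relations are transitive is exactly right.

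One supporting claim in your forward-direction verification is false as stated: it is \emph{not} true that within a single brick all elements share the same $\mbox{dg}$, $\mbox{bt}$, $\mbox{dg}^+$ and $\mbox{lf}$ status. The bricks straddling the diagonal (e.g.\ the black brick on the points $(1,0),(2,0),(1,1),(2,1),(2,2),(1,2)$ in Fig.~\ref{fig:zig-zag}) contain two diagonal elements, two bottom elements and a super-diagonal element simultaneously, and the bottom-row bricks mix $\mbox{bt}$- and non-$\mbox{bt}$-elements. So you cannot conclude wholesale that the symmetrized intra-brick edges satisfy the control formulas. What saves the argument is precisely the finer check you defer to: each control (and transfer, and tiling) conjunct fires only on pairs with a \emph{specific ordered pair of local addresses}, each brick contains at most one element per address, and for the (at most one) matching ordered pair in a brick the consequent holds in the intended grid, while the reversed pair never matches any antecedent pattern because the address patterns are all of the ``successor'' form $(\,\cdot_{i,j},\cdot_{i+1,j})$ etc.\ and $i+2\not\equiv i \pmod 6$. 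You should therefore drop the uniform-status claim and rest the verification entirely on that per-address-pattern inspection, together with your (correct) observations that same-coloured bricks are pairwise disjoint point-sets and that reflexive loops are harmless because every triggering antecedent requires two distinct local addresses.
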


\section{Conclusions}
In this paper, we considered the logics
$\FlTransVars{k}{m}$  and $\FlEqTransVars{k}{m}$,  the $m$-variable fluted fragment in the presence of (equality and) $k$ transitive relations. We showed that the satisfiability problem for $\FlEqTransVars{1}{m}$ is in $m$-\NExpTime, and indeed
that the corresponding finite
satisfiability problem is in ($m+1$)-\NExpTime. (It seems probable that this latter bound, at least, can be improved.) Together with known
lower bounds on the $m$-variable fluted fragment, it follows that the satisfaibility and finite satisfiability problems for 
$\FlEqTrans{1}$, the fluted fragment with equality and a single transitive relation, are both \Tower-complete. (This extends the result 
of~\cite{P-HST-FLrev}, which establishes the same complexity for the fluted fragment without equality or any transitive relations.)
We also showed, however, that decidability is easily lost when additional transitive relations are added: even the two-variable fluted fragments 
$\FlEqTransVars{2}{2}$ (two transitive relations plus equality) and $\FlEqTransVars{3}{2}$ 
(three transitive relations, but without equality) have undecidable satisfiability and finite satisfiability problems. 

It is open whether the satisfiability or finite satisfiability problems for $\FlTrans{2}$  (\textit{two} transitive relations, but without equality) are decidable.
We point out that Lemma~\ref{lma:FLotransmEqToFLotransmEq} in Section~\ref{sec:onetrans} could be generalized to normal-form formulas of $\FlTransVars{2}{m+1}$ (defined in the natural way). Hence,  the (finite) satisfiability problem for  $\FlTransVars{2}{m}$ ($m >2$) is decidable if and only if the corresponding problem $\FlTransVars{2}{2}$  is. Unfortunately neither the method of Sec.~\ref{sec:onetrans} (to show
decidability) nor that of Sec.~\ref{sec:undecidable} (to show
undecidability) appears to apply here. The barrier in the former case is that pairs of elements can be related by both of the transitive
relations, $T_1$ and $T_2$, via \textit{distinct} $T_1$- and $T_2$-chains, so that simple certificates of the kind employed for $\FLotranstEqMinus$ do not guarantee the existence of models. The barrier in the latter case is that the grid construction has to build models featuring transitive paths of {\em bounded} length, and this seems not to be achievable with just two transitive relations. 

\bigskip
{\bf Acknowledgements.} This work is supported by the Polish National Science Centre grant 2018/31/B/ST6/03662.
\bibliography{purdyTrans}

	\end{document}